%@+leo-ver=4-thin
%@+node:gcross.20090405101642.3:@thin CodeQuest.tex
%@@language latex

%@<< Prelude >>
%@+node:gcross.20090405101642.4:<< Prelude >>
\documentclass[twocolumn,showpacs,preprintnumbers,amsmath,amssymb,nofootinbib,pra,floatfix]{revtex4-1}

\usepackage{mathrsfs,amsthm,clrscode,comment,booktabs,graphicx,subfig}

\newtheorem{theorem}{Theorem}
\newtheorem{proposition}{Proposition}
\newtheorem{lemma}{Lemma}
\newtheorem{corollary}{Corollary}

\newenvironment{definition}[1][Definition]{\begin{trivlist}
\item[\hskip \labelsep {\bfseries #1}]}{\end{trivlist}}

\newenvironment{remark}[1][Remark]{\begin{trivlist}
\item[\hskip \labelsep {\bfseries #1}]}{\end{trivlist}}
%@-node:gcross.20090405101642.4:<< Prelude >>
%@nl

%@<< Macros >>
%@+node:gcross.20091029172439.1256:<< Macros >>
\newcommand{\lst}{\vec}
\newcommand{\set}{\tilde}

\newcommand{\genfun}{\tilde{\mathcal{G}}}
\newcommand{\pauligroup}{{\set{\mathfrak{P}}}}
\newcommand{\powerset}{\set{\mathcal{P}}}
\newcommand{\centralizer}{\set{\mathcal{C}}}
\newcommand{\pseudoproduct}{\set\Pi}
\newcommand{\unpack}{\set U}
\newcommand{\optimizer}{\lst{\mathcal{O}}}

%@+leo-ver=4
%@+node:@file /Users/cog/Documents/macros.tex
%@@language latex

\newcommand{\half}{\frac{1}{2}}

\newcommand{\om}{\omega}

\newcommand{\unit}[1]{\,\text{#1}}

\renewcommand{\min}{\unit{min}}

\renewcommand{\max}{\text{max}}

\renewcommand{\choose}[2]{\left(\begin{matrix}#1\\#2\end{matrix}\right)}

\newcommand\cancel{\bgroup \markoverwith{---}\ULon}

\newcommand{\paren}[1]{\left(#1\right)}

      % good slash for lower case
      % " upper
    % " fat stuff (e.g., M)

%%% Local Variables: 
%%% mode: latex
%%% TeX-master: t
%%% End: 
%@-node:@file /Users/cog/Documents/macros.tex
%@-leo

%@-node:gcross.20091029172439.1256:<< Macros >>
%@nl

\begin{document}

%@+others
%@+node:gcross.20090513124712.1:Title Page
\title{Automated Searching for Quantum Subsystem Codes}

\author{Gregory M. Crosswhite}
\affiliation{Department of Physics, University of Washington, Seattle, 98195}
\author{Dave Bacon}
\affiliation{Department of Computer Science \& Engineering, University of Washington, Seattle, 98195}
\affiliation{Department of Physics, University of Washington, Seattle, 98195}

%\pacs{03.67.-a}

\email{gcross@phys.washington.edu, dabacon@cs.washington.edu}

\begin{abstract}
Quantum error correction allows for faulty quantum systems to behave in an effectively error free manner.  One important class of techniques for quantum error correction is the class of \emph{quantum subsystem codes}, which are relevant both to active quantum error correcting schemes as well as to the design of self-correcting quantum memories.
Previous approaches for investigating these codes have focused on applying theoretical analysis to look for interesting codes and to investigate their properties.  In this paper we present an alternative approach that uses \emph{computational} analysis to accomplish the same goals.  Specifically, we present an algorithm that computes the optimal quantum subsystem code that can be implemented given an arbitrary set of measurement operators that are tensor products of Pauli operators.  We then demonstrate the utility of this algorithm by performing a systematic investigation of the quantum subsystem codes that exist in the setting where the interactions are limited to 2-body interactions between neighbors on lattices derived from the convex uniform tilings of the plane.
\end{abstract}

\maketitle

\newpage

%\tableofcontents
%@-node:gcross.20090513124712.1:Title Page
%@+node:gcross.20100425115407.1343:Prelude

Quantum computers are a technological possibility because there exist methods for building these computers out of physical components that fail to operate in an error-free manner.  The theory behind achieving this makes up the field of quantum error correction~\cite{Shor:95a,Steane:96a,Steane:96b,Steane:96c,Knill:97a,Gottesman:97a} and fault-tolerant quantum computing~\cite{Shor:96a,Aharonov:97a,Knill:98a,Knill:98b,Preskill:98a,Aliferis:05a}. Of particular note is the threshold theorem for fault-tolerant quantum computing~\cite{Aharonov:97a,Knill:98a,Knill:98b,Aliferis:05a}.  This theorem says that if a quantum system decoheres slowly enough, and sufficiently precise control is maintained over the system, then effectively arbitrary error-free quantum computations can be performed.  The way that this is achieved is through the use of quantum information which is encoded across multiple quantum subsystems into a quantum error correcting code.

Different quantum codes have different advantages and disadvantages for implementation in a fault-tolerant device~\cite{Cross:07a}.  In this paper we undertake a study of an important class of quantum codes, quantum stabilizer subsystem codes~\cite{Poulin:05a,Kribs:05a,Kribs:05b,Kribs:06a} generated by measurements that are tensor products of Pauli operators.  Part of the significance of this class of codes is that they can be used to implement \emph{passive} fault tolerance by turning the measurement operators into interaction terms forming a Hamiltonian that provide energetic protection against errors;  the first example of such an approach was the toric code and related models due to Kitaev~\cite{Kitaev:97c,Kitaev:03a}, and a plethora of related approaches have now been investigated~\cite{Barnes:00a,Bacon:01b,Jordan:05a,Weinstein:05b,Bacon:06a,Bacon:08b,Nayak:08a,Bombin:09a,Chesi:10a}.

Previous approaches for studying quantum subsystem codes have focused on using theoretical analysis to find and investigate new quantum subsystem codes.  While powerful, theoretical analysis has some disadvantages:  it is limited to the `cleverness' of the analyst, and it can be prohibitively expensive to perform systematic searches of large parameter spaces to pick out the gems in the dust.  In this paper, we present an alternative approach that uses \emph{computational} analysis to accomplish the same goals.  The advantage of this approach is that one becomes limited by the power of the computer rather than the brain of analyst\footnote{Of course, this is also the main \emph{disadvantage} of this approach.}.

In this paper we develop an algorithm that computes the optimal subsystem code for a given set of measurements consisting of tensors products of Pauli operators.  In the process of doing this we also develop a formalism that allows us to prove that the algorithm is correct and that the code it compute is indeed the optimal code for the given measurements.  We also prove bounds on the running time of the algorithm that show that the algorithm terminates (relatively) quickly when the optimal code is not very robust to errors.  Because of this property, the algorithm can be applied to sift through a class of possible measurements to determine which (if any) result in a robust code.

To demonstrate the use of this algorithm, we focus on classes of measurement operators where each measurement is limited in action to two qubits --- that is, to operators taking the form $P_i \cdot Q_j$, where $P_i$ and $Q_j$ are Pauli operators acting on respectively the $i^{\text{th}}$ and $j^{\text{th}}$ qubit of the system;  examples of previous subsystem codes that have been constructed with this structure are the quantum compass model subsystem code~\cite{Bacon:06a} (including generalizations~\cite{Bacon:06b,Bravyi:10a}) and topological subsystem codes~\cite{Bombin:10a}.  In particular we focus on systems where the measurement operators only couple qubits that are neighboring on a periodic lattice arising from the convex uniform tilings of the plane.  We perform a systematic study of the codes on lattices arising from nine of the eleven such tilings, and present the results of this search.
%@nonl
%@-node:gcross.20100425115407.1343:Prelude
%@+node:gcross.20100910155955.1360:List of Tables
\listoftables
%@-node:gcross.20100910155955.1360:List of Tables
%@+node:gcross.20100425115407.1344:Table of Contents
\tableofcontents
%@-node:gcross.20100425115407.1344:Table of Contents
%@+node:gcross.20090405101642.5:Introduction
\section{Introduction}

We begin by a brief review of the notion of quantum error correcting codes and in particular the subsystem stabilizer codes~\cite{Poulin:05a}.

In quantum computation we seek to reliably store and manipulate quantum information.  Unfortunately, real quantum systems are open systems that couple to their environment and quickly lose their coherence through the process of decoherence.  Even more troubling, when one wishes to manipulate quantum information one can only do this with a fixed precision.  While considerable progress has been made in finding systems with long coherence times, inevitably current quantum computers will fail before they achieve anything close to the amount of computation needed, for example, to break a public key cryptosystem~\cite{Shor:94a}.  However it turns out that one can generally repair damage to quantum information as long as one knows the form that the damage took.  Furthermore one can build a `trap' --- that is to say, a
\emph{quantum code} --- that tricks nature into giving up the information about what damage has occurred to the quantum system.

The nature of codes is that they separate the space in which our computation lives from the space in which the physical information is
stored; that is to say, although we design our quantum circuits to operate on some Hilbert space of qubits $\mathscr{C}$, each of these qubits
does \emph{not} directly correspond to a physical qubit, but rather there is some isomorphism that relates the entire Hilbert space $\mathscr{C}$
to the Hilbert space of physical qubits, $\mathscr{P}$.  To distinguish between these two Hilbert spaces, we call the Hilbert space of qubits in whose terms the computation is expressed the \emph{computational space} (or \emph{logical space}), and the space of qubits which have physically been built the
\emph{physical space}.  Merely building an isomorphism between these two spaces is not enough to allow us to correct errors.  For one thing, we need to add extra qubits to the computational space that contain a record of the damage that we can read out; thus, we shall say that the full computational space is $\mathscr{C}:=\mathscr{R}\times\mathscr{Q}$, where the qubits that live in $\mathscr{R}$ have the role of keeping a
record of the errors that have been introduced by the environment, and the qubits that live in $\mathscr{Q}$ are the qubits in whose terms our quantum
algorithm is expressed.  

We have to pick a strategy for reading out the information in $\mathscr{R}$ about the errors that have occurred on our system.  One natural choice is to perform a single-qubit Pauli $Z$ operator measurement on each qubit on $\mathscr{R}$.  In order to build the `trap' element into our system, we need to ensure that whenever nature strikes at the physical space $\mathscr{P}$ and produces errors in a form that we intend to correct, this action must be isomorphic to a strike on the computational space that leaves a \emph{measurable} record in $\mathscr{R}$.  For our choice of measuring Pauli $Z$ errors, these are errors that are isomorphic to any operator that \emph{anti-commutes} with a $Z$ operator of at least one of the qubits in $\mathscr{R}$.  Note that although we speak of measuring the qubits in $\mathscr{R}$, the measurement operator of interest in $\mathscr{R}$ is mapped to an operator in the physical space $\mathscr{P}$; this isomorphic operator is referred to as a \emph{stabilizer}, and the full set of operators on $\mathscr{P}$ which are isomorphic to our chosen measurement operators on $\mathscr{R}$ are referred to as the \emph{stabilizers} of the code.

Up to this point, the formalism we have described is known as \emph{stabilizer codes}~\cite{Gottesman:96a,Gottesman:97a,Calderbank:97a,Calderbank:97b} and its essential characteristic is that in determining the syndrome of the physical error, one makes a measurement of all of the qubits in $\mathscr{R}$. What if, however, we relaxed this constraint and only measured some of the qubits in $\mathscr{R}$?  That is to say, what if we split the qubits in $\mathscr{R}$ into two categories: \emph{stabilizer
qubits} whose states we care about and which we measure to obtain an error syndrome, and \emph{gauge
qubits} whose states we do not care about.  (The latter get their name
from the fact that they provide a `gauge' degree of freedom, i.e. a
degree of freedom that is irrelevant to us.)  Then we would have that
$\mathscr{R}=\mathscr{S}\times \mathscr{G}$, where $\mathscr{S}$ is
the space in which the stabilizer qubits live, and $\mathscr{G}$ is
the space in which the so-called gauge qubits live; such a scheme is
known as a \emph{stabilizer subsystem code}~\cite{Poulin:05a}.  In this case, we shall use the term
\emph{stabilizers} to denote the set of operators in $\mathscr{P}$ which are isomorphic to our chosen measurement operators of interest in $\mathscr{S}$.

At first there might not seem to be an advantage to this approach, since it essentially means adding qubits to our code that are
`wasted'; however, in practice subsystem codes have many advantages.  The first advantage is that since we do not care about what happens to the gauge qubits, some quantum errors on the system will neither result in detectable errors nor destroy the information in the logical qubits~\cite{Poulin:05a,Kribs:05a,Kribs:05b,Nielsen:05a,Kribs:06a,Bacon:06a}.  A second advantage is that we no longer need our error-correcting measurements on the physical system to commute with each other, as long as they all commute with the stabilizers and logical qubit operators, since then the fact that they do not commute only affects the gauge qubits, which we do not care about~\cite{Aliferis:07a}.  This sometimes allows one to effectively measure a stabilizer which is a non-trivial $k$-qubit measurement by using a series of two qubit measurements~\cite{Aliferis:07a}.  The individual measurements in this series do not commute (so they cannot be simultaneously measured), however the stabilizer syndrome can nonetheless be reconstructed from these measurements.  A third advantage arises from the fact that subsystem codes often require {\em fewer} measurements to diagnose errors than similar non-subsystem codes, which results in improved performance~\cite{Aliferis:07a,Cross:07a};  counterintuitively, turning stabilizer codes into subsystem stabilizer codes often results in higher thresholds for fault-tolerant quantum computing.  Finally, subsystem codes can often be implemented in a more local manner than non-subsystem codes as exemplified by the quantum compass model code~\cite{Bacon:06a,Aliferis:07a}.

There are now many examples of stabilizer subsystem codes in the literature.  One of the first non-trivial subsystem codes to be described is a code related to the quantum compass model in two-dimensions~\cite{Bacon:01a,Dorier:05a,Bacon:06a}.  In the quantum model one considers a Hamiltonian on a two-dimensional square lattice where nearest horizontal neighbors couple the $x$ component of their spins and nearest vertical neighbors couple the $z$ component of their spin, so that the Hamiltonian is given by
\begin{equation}
H=-\Delta \sum_{i,j} (X_{i,j} X_{i+1,j} +Z_{i,j} Z_{i,j+1}),
\end{equation}
where $P_{i,j}$ represents the Pauli operator $P$ acting on qubit at location $(i,j)$.  This model is interesting for a few reasons.  The first is that the energy levels of this system can be best thought of as elements of a quantum error correcting subsystem code.  The second reason is that the model provides some amount of protection from quantum errors because errors are energetically unfavored\footnote{Unfortunately, in this particular system the protection vanishes as the size of the lattice goes to infinity~\cite{Dorier:05a}, but for small lattice sizes there is some protection from errors due to the energy level structure of the system~\cite{Bacon:01a}.}.  Many other examples of systems which have energy protecting properties are also known: the most famous being Kitaev's toric code in two and four-spatial dimensions~\cite{Kitaev:97c,Kitaev:03a,Dennis:02a}.  The study of such systems is still in its infancy and one central question is whether there exist Hamiltonians with reasonable physical parameters (such as existing in three or fewer spatial dimensions and involving 2-body interactions~\cite{Bravyi:09a,Bravyi:10b}) whose physics enact quantum error correction on the system when the system is in contact with a thermal reservoir; such systems are called \emph{self-correcting} quantum computers~\cite{Bacon:06a,Bombin:09a}.  In this paper we will talk about quantum subsystem codes from the perspective of active error correction where error syndromes are identified through carefully engineered measurements, but it shall be understood that this formalism can equivalently be seen from the perspective of passive error correction where errors are guarded against by carefully engineered interactions.  That is, measurement operators in the active error correction picture are equivalent to interactions in the passive error correction picture.

Because we ultimately want to build a system implementing our measurements, physical considerations typically constrain our measurements to be \emph{local}, which means that they can be expressed in the physical space as a tensor product of single-qubit Pauli operators --- i.e, for each measurement operator $o$ we have that
$o := \bigotimes_i P_i$ where $P_i$ is the pauli operator $P$ acting on the $i^{\text{th}}$ qubit.  An important question then is which sets of local measurements give rise to useful quantum error correcting subsystem codes.

Approaches to answering this question typically involve applying theoretical analysis with varying degrees of cleverness.  In this paper we present an alternative approach.  In section~\ref{sec:algorithm}, we present an algorithm which for every set of local measurement operators computes a quantum subsystem code that arises from the algebra of these operators\footnote{The code that we find is almost never unique, since among other transformations one can multiple every gauge and logical qubit operator by an element from the stablizers and end up with an equivalent code.}.  Along the way we develop a formalism that allows us to prove not only that this algorithm is correct, but also that the code that it computes is \emph{optimal} in the sense that there exists no other code arising from the same set of measurements for which the distance of any of the logical qubits has been increased.  This property makes this algorithm useful for analyzing the properties of codes arising from measurements that are too overwhelming to analyze by hand.  

We shall also show that an important property of this algorithm is that it terminates (relatively) quickly when the distance of the code is small, which allows it to be used not only to solve for individual codes, but also to search through entire classes of sets of measurements to see if any have high-distance qubits.  Motivated by previous results demonstrating the utility of codes implemented using systems on a lattice, we undertake a systematic investigation of codes where the measurement operators are restricted to the 2-body interactions arising from the edges of periodic lattices derived from the 11 regular tilings.  In section~\ref{sec:lattice} we discuss our approach for applying the algorithm to perform a systematic search for codes that can be implemented on these tilings, and we the present numerical results that we obtained. In section~\ref{sec:conclusion} we present our conclusions.
%@nonl
%@+node:gcross.20100908141301.1352:Notation
\subsection{Notation}

In this paper we adopt the following conventions for notation:

\begin{itemize}
\item \emph{sets} are denoted by a variable with a tilde, e.g. $\tilde A$;
\item \emph{sequences} are denoted by a variable with an arrow, e.g. $\vec A$;
\item \emph{operators} and \emph{integers} are denoted by using lower-case letters, e.g. $o$ and $i$;
\item \emph{collections} of \emph{operators} and \emph{pairs of operators} are denoted by using upper-case letters with either a tilde or an arrow above them, e.g. $\tilde O$ and $\vec O$;
\item \emph{collections} of \emph{integers} are denoted by using lower-case letters with either a tilde or an arrow above them, e.g. $\tilde k$ and $\vec k$;  and
\item \emph{collections} of \emph{other kinds} of objects are typically denoted by capital letters in a fancy script.
\end{itemize}
%@-node:gcross.20100908141301.1352:Notation
%@-node:gcross.20090405101642.5:Introduction
%@+node:gcross.20090423002455.2:Theory
\section{Theory} \label{sec:algorithm}
%@+node:gcross.20090423002455.3:Stabilizers and gauge qubits
\subsection{Construction of the subsystem code}

\begin{remark}
This subsection describes by way of a constructive proof how to compute given a set of measurement operators the quantum code that can be implemented by these operators.  For a listing of pseudo-code that implements the algorithm described in this proof, see Table \ref{table:algo1} near the end of this subsection.
\end{remark}
%@+node:gcross.20090520163423.14:Introduce main theorem
Although conceptually a subsystem code is an isomorphism $T$ such that  $\mathscr{P}\approx^T \mathscr{S}\times\mathscr{G}\times\mathscr{Q}$ --- that is, an isomorphism between the \emph{physical} space of qubits and the \emph{computational} space of qubits in whose terms our computation is actually expressed --- we do not need to actually construct this isomorphism in order to be able to use the code.  Since all of our work will be done on the physical system anyway, it suffices to know the operators in the physical space $\mathscr{P}$ that are isomorphic to the qubit measurement operators of interest in the computational space $\mathscr{S}\times\mathscr{G}\times\mathscr{Q}$, and it is exactly the operators on $\mathscr{P}$ that the algorithm we present shall compute\footnote{If one really wanted to, one could explicitly construct the isomorphism $\mathscr{T}$ from these operators by computing the unitary operator which simultaneously diagonalizes a the maximal subset of commuting measurements from this set of operators on $\mathscr{P}$, but in practice this is not particularly useful.}.

When one wants to define a qubit in terms of its measurement operators, it suffices to define two operators that anti-commute with each other but which commute with all of the others measurement operators that have been defined, since this gives us the $X$ and $Z$ measurements on the qubit which are sufficient to generate the full $Pauli$ group (minus phases).  Since working with such pairs of operators shall be a common theme in this algorithm, we shall introduce the following definition in order to simplify the language used to describe them.

\begin{definition} A pair of operators is a \emph{conjugal pair in relation to the set} $\set X$ when each of the operators in the pair commutes with every operator in $\set X$ except for its \emph{conjugal partner} --- that is, the other operator in the conjugal pair --- should its conjugal partner be a member of $\set X$.
\label{conjugal-pair-definition}
\end{definition}

Note that we have explicitly not required that the operators in the conjugal pair be members of $\set X$ in order to be a conjugal pair in relation to it.  However, should both operators be members of $\set X$, then neither operator can belong to a different conjugal pair with respect to $\set X$, since in that case there would be an operator in $\set X$ (namely, its original conjugal partner) with which it anti-commutes that was not its conjugal partner in the new pair, leading to a contradiction.

For convenience, we introduce the following additional definitions:

\begin{definition}

\begin{enumerate}
\item $\pauligroup$ is the group of Pauli operators --- that is, the group of tensor products of the (unnormalized) Pauli matrices --- acting on the physical space $\mathscr{P}$, \emph{modulo phases};
\item $\powerset(\set{S})$ is the power set of $\set S$, i.e. the set of all subsets of $\set S$; and
\item $\centralizer_\mathfrak{G}(\set S)$ is the centralizer of $\set S$, that is the subgroup of elements in $\mathfrak{G}$ which commute with $\set S$;
\item the function $\genfun:\powerset(\pauligroup)\to\powerset(\pauligroup)$ is defined such that $\genfun(\set S)$ is the set of all possible products of operators in $\set S$ --- that is, it is the set \emph{generated} by $\set S$.
\end{enumerate}

\end{definition}

We now introduce the main theorem of this subsection.

\begin{theorem} \label{theorem-SG} Suppose we are given a sequence of Pauli operators, $\lst O$.  Then there exist sets of Pauli operators $\set S\subseteq\pauligroup$, $\set G\subseteq\pauligroup$, and $\set L\subseteq\pauligroup$ such that
\begin{enumerate}
\item each of the operators in $\set S \cup \set G \cup \set L$ is independent from the rest --- i.e., no operator in this (unioned) set can be written as a product of other operators in the set;
\item each operator in $\set L \cup \set G$ is a member of a conjugal pair in relation to $\set S \cup \set G \cup \set L$;
\item $\genfun(\set S \cup \set G)=\genfun\paren{\{\lst O_i\}}$;\footnote{Here we use the notation $\{\vec{O}_i\}$ to refer to the set of elements in the sequence $\vec{O}$.}
\item and $\genfun(\set S \cup \set G \cup \set L)=\centralizer_\pauligroup(\set S )$
\end{enumerate}
\end{theorem}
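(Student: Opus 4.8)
The plan is to recast the statement as a problem in symplectic linear algebra over $\mathbb{F}_2$. Working modulo phases, $\pauligroup$ is an elementary abelian $2$-group, hence an $\mathbb{F}_2$-vector space of dimension $2n$ (for $n$ physical qubits) in which group multiplication is vector addition and \emph{independent} means linearly independent. The only extra structure is the commutation relation: assign to each pair of operators $\ip{a}{b}\in\{0,1\}$, equal to $0$ when $a,b$ commute and $1$ when they anticommute. This $\ip{\cdot}{\cdot}$ is a non-degenerate alternating (symplectic) form, and under this dictionary a \emph{conjugal pair in relation to} $\set X$ becomes a hyperbolic pair $(a,b)$ with $\ip{a}{b}=1$ that is orthogonal to every element of $\set X$ save its partner, while $\centralizer_\pauligroup(\set S)$ becomes the symplectic orthogonal complement $\genfun(\set S)^{\perp}$.

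With this dictionary the theorem is a structure theorem for the subspace $V:=\genfun(\{\lst O_i\})$ together with its complement. First I would form the \emph{radical} $R:=V\cap V^{\perp}$, the set of operators in $V$ that commute with all of $V$; these will be the stabilizers, so I take $\set S$ to be any independent generating set of $R$. The quotient $V/R$ inherits a \emph{non-degenerate} symplectic form --- which is precisely why one quotients by the radical --- so a symplectic Gram--Schmidt procedure produces a basis of mutually orthogonal hyperbolic pairs; lifting these to representatives in $V$ gives $\set G$. By construction $\genfun(\set S\cup\set G)=V=\genfun(\{\lst O_i\})$, which is property (3), and the gauge pairs are conjugal pairs in relation to $\set S\cup\set G$ since the stabilizers lie in the radical and distinct hyperbolic pairs are orthogonal.

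For the logical operators I would pass to the centralizer $\centralizer_\pauligroup(\set S)=R^{\perp}$. Because $R$ is isotropic we have the chain $R\subseteq V\subseteq R^{\perp}$, and the form induced on $R^{\perp}/R$ is non-degenerate. The image of $V$ here is $V/R$, which is itself non-degenerate (its radical is exactly $R$), so it admits a symplectic complement $W$ with $R^{\perp}/R=(V/R)\oplus W$, again non-degenerate. Choosing a hyperbolic basis of $W$ and lifting it into $R^{\perp}$ yields $\set L$. Spanning arguments then give all four properties at once: $\set S\subseteq R$, the gauge pairs spanning $V/R$, and the logical pairs spanning $W$ are complementary, so $\set S\cup\set G\cup\set L$ is independent (property 1); every gauge and logical generator meets exactly its dual partner under $\ip{\cdot}{\cdot}$ and is orthogonal to everything else, since $R$ is in the radical and $V/R\perp W$ (property 2); and $\genfun(\set S\cup\set G\cup\set L)$ surjects onto $R^{\perp}/R$ while containing $R$, hence equals $R^{\perp}=\centralizer_\pauligroup(\set S)$ (property 4).

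The main obstacle is the construction of $\set L$, specifically establishing cleanly that the logical generators can be chosen to commute with \emph{both} the stabilizers and the gauge operators while anticommuting only within their own pairs. This is exactly what the non-degeneracy of the form on $R^{\perp}/R$ and the existence of the symplectic complement $W$ buy us, so verifying that $V/R$ is a genuinely symplectic (not merely isotropic) subspace is the technical heart of the argument. The constructive version promised in the opening remark is the algorithmic reading of this decomposition: a symplectic Gram--Schmidt that repeatedly extracts an anticommuting pair, multiplies it out of the remaining operators so they commute with it, and sorts the residue into stabilizers versus further pairs; the delicate bookkeeping is ensuring that each such multiplication preserves the generated group and never reintroduces an anticommutation with an already-finished pair.
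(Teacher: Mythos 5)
Your proposal is correct, and it proves the theorem as stated; but it takes a noticeably different route from the paper's. You work at the level of the $\mathbb{F}_2$-symplectic structure theory: identify the stabilizers with the radical $R=V\cap V^{\perp}$ of $V=\genfun(\{\lst O_i\})$, get the gauge pairs from a hyperbolic basis of the non-degenerate quotient $V/R$, and get the logicals from a hyperbolic basis of the symplectic complement of $V/R$ inside $R^{\perp}/R$ --- invoking as known facts the non-degeneracy of the induced form on $R^{\perp}/R$ and the existence of symplectic complements of non-degenerate subspaces. The paper never states this decomposition abstractly; instead it proves everything by explicit induction on the operator sequence (Proposition \ref{proposition-SG} is your symplectic Gram--Schmidt, run one operator at a time), and --- crucially for its purposes --- its Gaussian elimination (Proposition \ref{make-independent-using-elimination}) records, for each generator, a single-qubit pivot $P_{k_i}^{[p(i)]}$ with which that generator alone anti-commutes; the logical operators are then built concretely in Proposition \ref{construction-of-logicals} from the $X_j,Z_j$ on the unused qubit positions, corrected by products of pivots, and finally multiplied by gauge partners to restore commutation with all of $\set G$. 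Your approach buys brevity and conceptual clarity (the four properties fall out of one diagram of subspaces), at the cost of leaving the hyperbolic-basis extractions and the symplectic complement as black boxes; the paper's approach is longer and more bookkeeping-heavy but is self-contained and reads off directly as the pseudo-code in Tables \ref{table:algo1} and \ref{table:gaussian-elimination}, which is the point of the exercise there. One small remark: the step you flag as the ``technical heart'' --- that $V/R$ is genuinely symplectic rather than merely isotropic --- is actually immediate, since the radical of the restricted form on $V$ is by definition $R$ and you have quotiented it away; the real delicacy, as you note at the end, is in the lifting and correction bookkeeping, which is exactly where the paper spends its effort.
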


\begin{remark}
This theorem follows, at least implicitly, from prior work on stabilizer codes~\cite{Gottesman:97a}, the definitions of stabilizer subsystem codes given by Poulin~\cite{Poulin:05a}, and the constructive approach to finding such codes as exemplified in~\cite{Bacon:06a}.  Because we wish to be constructive, however, we will present a full proof of this theorem and show how it gives rise to an algorithm for finding sets of Pauli operators which satisfy Theorem~\ref{theorem-SG}.  To be explicit, we note that $\set S$ will be a set of stabilizers (or equivalently, generators for the stabilizer group), $\set G$ will be a set of gauge qubit operators, and $\set L$ will be a set of logical qubit operators (i.e., those on which the computation is performed).

The main work in the proof of this theorem will be performed by proving several related propositions.  First we shall show how the set $\set G$ and a sequence $\lst S$ are constructed from the sequence of operators $\lst O$.  Since we want our stabilizers to form an independent set of operators, we shall then show that through a Gaussian elimination procedure it is possible to extract a list of independent operators from a sequence $\lst S$ resulting in a set $\set S$.  Finally, we shall show how using this same Gaussian elimination procedure we can transform a subset of the operators of $\set S\cup\set G$ into a form that makes it trivial to compute the logical qubit operators $\set L$.
\end{remark}
%@nonl
%@-node:gcross.20090520163423.14:Introduce main theorem
%@+node:gcross.20090519160701.3:Construction of sequences
\begin{proposition} \label{proposition-SG} Suppose that we are given a sequence of Pauli operators $\lst O\subseteq \pauligroup$.  Then there exists a sequence of Pauli operators $\lst S\subseteq\pauligroup$ and a set of Pauli operators $\set G\subseteq\pauligroup$ such that
\begin{enumerate}
\item all of the operators in $\lst S$ commute with each other and also all of the operators in $\lst G$; \label{stabs-commute-with-G}
\item each operator in $\set G$ is a member of a \emph{conjugal pair} (Definition \ref{conjugal-pair-definition}) in relation to $\{\lst S_i\} \cup \set G $ \label{conjugal-pairs-commute-with-SAG}; and
\item $\genfun\paren{\{\lst S_i\}\cup \set G}=\genfun\paren{\{\lst O_i\}}$ \label{SAG-spans-all}.
\end{enumerate}
\end{proposition}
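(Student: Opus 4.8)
The plan is to build $\lst S$ and $\set G$ incrementally, processing the operators $\lst O_1,\lst O_2,\ldots$ one at a time, and to prove by induction on the number of operators already processed that the three claimed properties hold for the prefix handled so far. The conceptual backbone is that, modulo phases, $\pauligroup$ is a symplectic vector space over $\mathbb{F}_2$ in which two operators ``commute'' or ``anticommute'' according to the value of a symplectic form; the conjugal pairs in $\set G$ will play the role of hyperbolic pairs, while the operators of $\lst S$ will generate the radical of this form restricted to $\genfun(\{\lst O_i\})$. The construction is thus a symplectic Gram--Schmidt process carried out multiplicatively, and it is exactly the kind of procedure that can be transcribed into the pseudo-code promised after the theorem.

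The invariant to maintain is that, for the operators processed so far, (i) every operator of $\lst S$ commutes with the others in $\lst S$ and with every operator of $\set G$, (ii) $\set G$ is partitioned into conjugal pairs in relation to $\{\lst S_i\}\cup\set G$, and (iii) $\genfun(\{\lst S_i\}\cup\set G)$ equals the group generated by the operators processed so far. The base case ($\lst S,\set G$ empty) is trivial. Given a new operator $o$, I would first \emph{reduce $o$ against $\set G$}: for each conjugal pair $(g,g')$ in $\set G$, replace $o$ by $o\cdot(g')^{\epsilon}\cdot g^{\epsilon'}$, where $\epsilon$ (resp.\ $\epsilon'$) records whether $o$ anticommutes with $g$ (resp.\ $g'$). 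Since $g$ and $g'$ anticommute with each other but each commutes with the rest of $\set G$ and with all of $\lst S$, multiplying by $g'$ toggles the commutation of $o$ with $g$ while leaving its commutation with $g'$ and with every other pair untouched, and symmetrically for $g$. Sweeping through all pairs therefore yields a reduced operator $\tilde o$ that commutes with all of $\set G$; and because we only multiplied by elements already in the generated group, $\tilde o$ and $o$ extend it identically.

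Next I would compare $\tilde o$ with $\{\lst S_i\}$. If $\tilde o$ commutes with every operator of $\lst S$, then appending $\tilde o$ to $\lst S$ preserves all three invariants and this step is finished. Otherwise $\tilde o$ anticommutes with at least one operator of $\lst S$; fix the first such $s$. For every \emph{other} operator $s_j$ of $\lst S$ that also anticommutes with $\tilde o$, replace $s_j$ by $s_j\cdot s$: since both $s$ and $s_j$ commute with the remainder of $\lst S$ and with $\set G$, the product does too, while its commutation with $\tilde o$ is toggled to commuting. After this cleanup $\tilde o$ anticommutes with $s$ alone among $\{\lst S_i\}$ and still commutes with all of $\set G$, so I would remove $s$ from $\lst S$ and adjoin $(s,\tilde o)$ to $\set G$ as a new conjugal pair.

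The heart of the argument, and the step demanding the most care, is verifying that this single promotion simultaneously restores every invariant: that $(s,\tilde o)$ is a conjugal pair in relation to the \emph{updated} set $\{\lst S_i\}\cup\set G$, and that each pre-existing pair of $\set G$ remains a conjugal pair once $s$ and $\tilde o$ are included. Both facts follow from the commutation relations already arranged ($s$ and $\tilde o$ commute with all remaining stabilizers and with every old pair, and anticommute only with each other), but one must recheck each clause of Definition~\ref{conjugal-pair-definition} against the enlarged set rather than the old one. Tracking condition~\ref{SAG-spans-all} is then routine, as every manipulation multiplies an operator by others already present, so that by induction $\genfun(\{\lst S_i\}\cup\set G)=\genfun(\{\lst O_i\})$ once all operators have been processed.
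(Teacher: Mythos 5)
Your proposal is correct and follows essentially the same route as the paper's proof: induct on the operators, reduce each new operator against the existing conjugal pairs by multiplying with the partners of whatever it anti-commutes with, then either append it to $\lst S$ or promote it together with an anti-commuting stabilizer $s$ into a new conjugal pair after multiplying the remaining anti-commuting stabilizers by $s$. The symplectic Gram--Schmidt framing is a nice gloss but the underlying steps, including the final verification of the conjugal-pair conditions against the enlarged set, coincide with the paper's argument.
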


\begin{proof}
Proof by induction.  For the base case, note that if $\lst O$ is empty then $\lst S:=\emptyset$ and $\set G:=\emptyset$ trivially satisfy all properties.

Now assume that the proposition holds for a sequence of length $n-1$, and consider a sequence of operators $\lst O$ of length $n$.  By the inductive hypothesis, we know that there is a sequence $\lst S'$ and a set $\set G'$ satisfying the properties above for the subsequence of $\lst O$ consisting of the first $n-1$ operators.  Let $o:=\lst O_n\cdot \prod_{g\in \set G, \{\lst O_n,g\}=0} \text{conj}_{\set G}(g)$ --- that is, the product of $\lst O_n$ with the conjugal partner of every operator in $\set G$ with which $\lst O_n$ anti-commutes.  This definition guarantees that $o$ commutes with every operator in $\set G$;  furthermore, we can obtain $\lst O_n$ back from $o$ since every operator in $\set G$ squares to the identity and thus $\lst O_n=o\cdot \prod_{g\in \set G, \{\lst O_n,g\}=0} \text{conj}_{\set G}(o)$; therefore we conclude that $\genfun\paren{\{\lst S'_i\} \cup \set G' \cup \{o\}}=\genfun\paren{\{\lst O_i\}}$.

If $o$ commutes with every operator in $\lst S'$, then set
$$\lst S_i :=
\begin{cases}
\lst S'_i & i \le n-1 \\
o & i = n
\end{cases}
$$
and $\set G := \set G'$, and we are done.  Otherwise, let $s$ be some operator in $\lst S'$ that anti-commutes with $o$, $\set G:=\set G'\cup \{s,o\}$
\footnote{Observe that neither $o$ nor $s$ can be present in $\set G'$ since they commute with every operator in $\set G'$, so the new set $\set G:=\set G'\cup \{s,o\}$ gives us a strictly larger set.  This fact is irrelevant far as the proof is concerned, but it has the important consequence that a computer code implementing the algorithm described by this proof can append $s$ and $o$ to a list of gauge operators and assume that this list continues to form a set (i.e., a sequence without duplicates) without having to explicitly check for this.}, $\lst S_i'' := f(\lst S'_i)$, and $\lst S$ be the subsequence of $\lst S''$ with the identity operators removed, where
$$
f(s') :=
\begin{cases}
s'\cdot s & \{s',o\}=0\\
s' & \text{otherwise}
\end{cases}.
$$
Observe that by this definition, all of the operators in $\lst S$ commute with every operator in $\set G$, so property \ref{stabs-commute-with-G} is satisfied.  Since the only difference between $\set G'$ and $\set G$ is the addition of $s$ and $o$, which form a conjugal pair with respect to $\{\lst S_i\} \cup \set G$, we conclude that property \ref{conjugal-pairs-commute-with-SAG} is satisfied.
Lastly, since $s\in \set G$, we can form any operator in $\lst S'$ with products of operators in $\lst S$ and $\set G$, so therefore $\genfun\paren{\{\lst S_i\} \cup \set G}=\genfun\paren{\{\lst S'_i\} \cup G' \cup \{s,o\}}=\genfun\paren{\{\lst O_i\}}$, and so the final property is satisfied.

We conclude by noting that since all of the operators in $\lst S$ and $\set G$ were formed from products of operators in $\lst O$, which are Pauli operators (i.e., members of the group $\pauligroup$), they are Pauli operators themselves.
\end{proof}
%@-node:gcross.20090519160701.3:Construction of sequences
%@+node:gcross.20090519160701.4:Making them independent
\begin{remark}
A consequence of not requiring independence of the operators in $\lst O$ is that the operators $\lst S$ given by Proposition \ref{proposition-SG} are not necessarily independent.  Happily, since all of these operators can be expressed as tensor products of Pauli operators, we can construct a set of independent operators by performing an analog of Gaussian elimination.
\end{remark}

\begin{proposition}
\label{make-independent-using-elimination}
Suppose that we have been given a sequence of Pauli operators which commute with each other, $\lst R$.  Then there exists
\begin{enumerate}
\item a sequence $\lst S$ of $n$ independent operators such that $\genfun\paren{\{\lst S_i\}}=\genfun\paren{\{\lst R_i\}}$,
\item a sequence of $n$ integers without duplicates in the inclusive range $1\dots n$,
\item and a map $p:\{1\dots n\} \to \{0,1\}$ such that $\lst S_i$ is the only operator in $\lst S$ that anti-commutes with $P_{k_i}^{[p(i)]}$, where $P_k^{[0]}:=X_k$ and $P_k^{[1]}:=Z_k$.
\end{enumerate}
\end{proposition}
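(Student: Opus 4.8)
The plan is to translate the problem into binary linear algebra over $\mathbb{F}_2$ and then run a symplectic version of Gaussian elimination. First I would represent each Pauli operator (modulo phase) by its \emph{symplectic vector}: an operator acting as $\prod_k X_k^{a_k} Z_k^{b_k}$ is recorded as the pair of bit-strings $(\vec a \mid \vec b)$. Under this encoding the group product becomes vector addition, so $\genfun(\{\lst R_i\})$ is exactly the $\mathbb{F}_2$-span of the rows $\{(\vec a^{(i)}\mid \vec b^{(i)})\}$, and the independence required in item~1 becomes ordinary linear independence. The key translation of the commutation data is that $(\vec a\mid\vec b)$ anti-commutes with $X_k$ exactly when $b_k=1$ and with $Z_k$ exactly when $a_k=1$; thus ``$\lst S_i$ is the unique operator anti-commuting with $P_{k_i}^{[p(i)]}$'' says precisely that, in the matrix whose rows are the $\lst S_i$, row $i$ is the \emph{only} row carrying a $1$ in a designated column (a $Z$-slot of qubit $k_i$ if $p(i)=0$, an $X$-slot if $p(i)=1$).

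With this dictionary the statement is essentially the assertion that the row space has a basis in reduced row-echelon form. So I would perform Gaussian elimination on the row matrix to extract a maximal independent subset $\lst S$ of size $n$ with the same span, which already gives item~1 and the equality of generated groups. Plain elimination hands us one distinguished pivot column per row, yielding candidate values of $p(i)$ and a candidate index --- but it only guarantees the pivot \emph{columns} are distinct, and two columns can belong to the same qubit (its $X$-slot and its $Z$-slot). Item~2 demands the stronger conclusion that the qubit indices $k_i$ are distinct, and this is the step I expect to be the real obstacle.

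This is exactly where the hypothesis that the operators in $\lst R$ \emph{commute} must be used: the span is then \emph{isotropic} for the symplectic form $(\vec a\mid\vec b)\cdot(\vec a'\mid\vec b') = \vec a\cdot\vec b' + \vec b\cdot\vec a'$, which forces $n$ to be at most the number of qubits, leaving room to place all pivots on distinct qubits. Concretely I would put the matrix into a standard form in two passes: eliminate the $X$-part to obtain $r$ rows with $X$-pivots on qubits $1,\dots,r$ while the remaining $n-r$ rows become pure-$Z$ operators, then eliminate the $Z$-part of those pure-$Z$ rows. The commutation relation between a pure-$Z$ row and an $X$-pivoted row shows that the $Z$-components of the pure-$Z$ rows on qubits $1,\dots,r$ are \emph{determined} by their components on the remaining qubits; hence the pure-$Z$ rows remain independent after the first $r$ qubits are deleted, and their $Z$-pivots can be chosen among qubits $r+1,\dots,n$. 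A final cleanup pass multiplies each $X$-pivoted row by whichever pure-$Z$ rows are needed to cancel its $Z$-component on qubits $r+1,\dots,n$; since the multipliers are pure $Z$ this preserves every $X$-pivot.

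After these passes the pivots of the two blocks give the map $p$ --- setting $p(i)=1$ on the $X$-pivoted rows, which uniquely anti-commute with $Z_{k_i}$, and $p(i)=0$ on the pure-$Z$ rows, which uniquely anti-commute with $X_{k_i}$ --- together with distinct pivot qubits $k_1,\dots,k_n$, establishing items~2 and~3. The two uniqueness claims follow from the echelon structure of each block, the fact that pure-$Z$ rows carry no $X$-part, and the fact that after the cleanup the $X$-pivoted rows carry no $Z$-part on the second block's pivot qubits. Finally I would dispatch the degenerate cases (an empty sequence, or $\lst R$ generating only the identity) trivially, and note that each $\lst S_i$, being a product of the $\lst R_i$, is again a genuine Pauli operator.
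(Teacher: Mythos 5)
Your proof is correct, and it reaches the proposition by a genuinely different route than the paper. The paper proves the statement by induction on the length of $\lst R$: each new operator is reduced against the already-chosen pivots $P_{k_i}^{[p(i)]}$, and the crux is a contradiction argument showing that a nonzero reduced operator must anti-commute with some single-qubit Pauli on a \emph{fresh} qubit --- otherwise it would be a product of the existing pivots and hence anti-commute with some earlier $\lst S'_i$, contradicting the fact that all these operators commute. Commutativity thus enters as a pivot-existence lemma inside a one-pass, online elimination (which is what the pseudo-code in Table~II actually implements). You instead pass to the symplectic picture over $\mathbb{F}_2$ and build a two-block standard form: X-pivots first, then Z-pivots for the pure-$Z$ residue, with commutativity entering as isotropy of the span --- your observation that the $Z$-components of a pure-$Z$ row on the X-pivot qubits are determined by its components elsewhere is exactly what guarantees the second block retains full rank on the remaining qubits, and the final pure-$Z$ cleanup pass correctly secures both uniqueness claims. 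The two arguments are morally the same fact about isotropic subspaces, but yours yields the cleaner structural statement (a stabilizer standard form with all X-pivots segregated from all Z-pivots, and the bound $n\le N$ falling out for free), while the paper's incremental version is the one that maps directly onto an algorithm that can absorb operators one at a time. One cosmetic note: your pivot-qubit ranges written as $r+1,\dots,n$ should refer to qubit indices up to the number of physical qubits $N$ rather than $n$ (the statement itself has the same slip), but this does not affect the argument.
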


\begin{proof}
Proof by induction.  For the base case, we observe that if $\lst R$ is empty, then the trivial sequences $\lst S:=\emptyset$ and $\lst k :=\emptyset$ and the trivial function $p:\emptyset\to\emptyset$ satisfy the requirements.

Now suppose that we know the proposition holds for sequences of length $N-1$, and we are given a sequence $\lst S$ of length $N$.  By our inductive hypothesis, we can apply the proposition to the first $N-1$ operators in $\lst R$ obtain sequences $\lst S'$ and $\lst k'$ of length $n-1$\footnote{Note that $n\ne N$ in general, since some of the first $N-1$ operators might not have been independent.}, and a map $p':\{1\dots n-1\}\to \{0,1\}$ which all satisfy the respective properties of the theorem.  Let $$s:=\lst R_N\cdot \prod_{i=1\dots n-1, \,\,\left\{\lst R_N,P_{k'_i}^{[p(i)]}\right\}=0} \lst S'_i.$$  We know that $s$ commutes with every operator in $\lst S'$ because both $s$ and every operator in $\lst S'$ are equal to products of operators in $\lst R$, which all commute with each other.  Furthermore, since $s$ is a product of $\lst R_N$ and a factor of $\lst S'_i$ for every $i$ such that $\lst R_N$ and $P_{k'_i}^{[p'(i)]}$ anti-commute, and we know that $\lst S_i'$ is the only operator in $\lst S'$ that anti-commutes with $P_{k'_i}^{[p'(i)]}$ for $i=1\dots n-1$, it is therefore the case that $s$ commutes with every member of the set $\{P_{k'_i}^{[p'(i)]}\}_{i=1\dots n-1}$.  Finally, since $s$ is a product of $\lst R_N$ and operators in $\lst S'$, we can obtain $\lst R_N$ entirely from products of operators in $\{\lst S'_i\} \cup \{s\}$, and so $\genfun\paren{\{\lst S'_i\} \cup \{s\}}=\genfun\paren{\{\lst R_i\}}$.

If $s$ is the identity operator, then let $\lst S:=\lst S'$ and $p:=p$ and we are done.  Otherwise, we shall now show that there must exist integers $j\in\{1,\dots,N\}\backslash\{\lst k'_i\}$ and $l\in\{0,1\}$ such that $s$ anti-commutes with $P_{j}^{[l]}$, by demonstrating that if this were not the case then $s$ would have to anti-commute with some element in $\lst S'$, leading to a contradiction.

Assume that $s$ commutes with every operator in the set $\left\{P_j^{[l]}:\quad j\in\{1,\dots,N\}\backslash\{\lst k'_i\}, \quad l\in\{0,1\}\right\}.$  Recalling that $s$ is a member of the Pauli group and thus a tensor product of single-particle Pauli spin matrices, and also that $s$ commutes with every member of the set $\{P_{\lst k'_i}^{[p'(i)]}\}_{i=1\dots n-1}$, we see therefore that $s$ must be a product of elements from this set --- that is, there is some subset $\emptyset \ne \set F \subseteq \{P_{\lst k'_i}^{[p'(i)]}\}_{i=1\dots n-1}$ such that $s=\prod_{o\in \set F} o$.  However, from our inductive hypothesis we know that for every operator $f\in\set F$ there is an operator $s'\in\lst S'$ that anti-commutes with $f$ but commutes with the operators in $\set F\backslash\{f\}$.  Since $s$ is therefore a product of a single operator that anti-commutes with $s'$ and more operators that commute with $s'$, we conclude that $s$ and $s'$ anti-commute, which contradicts our earlier conclusion that $s$ commutes with every operator in $\lst S'$.

Now that we have shown that there exist integers $j\in\{1,\dots,N\}\backslash\{\lst k'_i\}$ and $l\in\{0,1\}$ such that $s$ anti-commutes with $P_{j}^{[l]}$, in terms of these integers we define
$$
\begin{aligned}
\lst S_i &:= 
\begin{cases}
\begin{cases}
\lst S'_i \cdot s & \{\lst S_i',P_j^{[l]}\}=0 \\
\lst S'_i & \text{otherwise}
\end{cases} & 1\le i\le n-1 \\
S' & i=n
\end{cases}, \\
\lst k_i &:=
\begin{cases}
\lst k'_i & 1 \le i \le n-1 \\
j & i=n
\end{cases},\quad \text{and} \\
p(i) &:=
\begin{cases}
p'(i) & 1 \le i \le n-1\\
l & i=n
\end{cases},
\end{aligned}
$$ and we are done.
\end{proof}
%@-node:gcross.20090519160701.4:Making them independent
%@+node:gcross.20090527164539.1:Construct the logical operators
\begin{remark}
Proposition \ref{make-independent-using-elimination} is good for more than computing an independent set of generators from a commuting list of operators;  it is also the key ingredient in computing the logical qubit operators.
\end{remark}

\begin{proposition}
\label{construction-of-logicals}
Suppose that we have been given the objects described in 1-3 of Proposition \ref{make-independent-using-elimination}.  Let $\set S := \{\vec S_i\}_i.$  Then there exists a set of operators $\set L$ such that
\begin{enumerate}
\item \label{L-are-independent} the operators in $\set S\cup\set L$ are independent;
\item \label{L-are-conjugal-pairs} every operator in $\set L$ is a member of a conjugal pair with respect to $\set S\cup\set L$;
\item \label{L-completes-the-generators} $\genfun\paren{\set S\cup\set L}=\centralizer_\pauligroup(\set S)$ --- that is, the set generated by $\set S\cup\set L$ is equal to the set of Pauli operators that commute with $\set S$.
\end{enumerate}
\end{proposition}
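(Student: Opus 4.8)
The plan is to construct $\set L$ explicitly by ``dressing'' the single-qubit operators on the qubits that are \emph{not} used as pivots, reusing verbatim the conjugal-partner trick from the proof of Proposition~\ref{proposition-SG}. Write $d_i := P_{k_i}^{[p(i)]}$ for $i=1\dots n$. By property~3 of Proposition~\ref{make-independent-using-elimination}, $\vec S_i$ is the unique operator of $\set S$ that anti-commutes with $d_i$, so $\{\vec S_i,d_i\}=0$ while $\vec S_{i'}$ commutes with $d_i$ whenever $i'\ne i$; and since $\vec k$ has no duplicates the $d_i$ act on distinct qubits, hence all mutually commute. Let $N$ be the number of physical qubits (so each $\vec k_i\in\{1,\dots,N\}$) and call a qubit $j$ \emph{free} when $j\notin\{\vec k_i\}$, of which there are $N-n$. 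For each free qubit $j$ I define the dressed operators $\hat X_j := X_j \prod_{i:\,\{X_j,\vec S_i\}=0} d_i$ and $\hat Z_j := Z_j \prod_{i:\,\{Z_j,\vec S_i\}=0} d_i$, and set $\set L := \{\,\hat X_j,\hat Z_j : j\ \text{free}\,\}$.

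The first step is to check that every dressed operator lies in $\centralizer_\pauligroup(\set S)$. For fixed $i_0$, the operator $\vec S_{i_0}$ anti-commutes with the leading factor $X_j$ exactly when $\{X_j,\vec S_{i_0}\}=0$, and within the product it anti-commutes with precisely one further factor, namely $d_{i_0}$, and only when that same condition holds; the two anti-commutations therefore occur together or not at all and always cancel, so $\hat X_j$ (and identically $\hat Z_j$) commutes with $\vec S_{i_0}$. This is exactly the parity argument used to build $o$ in Proposition~\ref{proposition-SG}, and it establishes one containment of property~\ref{L-completes-the-generators}, namely $\genfun(\set S\cup\set L)\subseteq\centralizer_\pauligroup(\set S)$.

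Next I would verify the conjugal-pair structure (property~\ref{L-are-conjugal-pairs}) and independence (property~\ref{L-are-independent}) by direct commutation bookkeeping. Since all $d_i$ mutually commute and act only on pivot qubits, $\hat X_j$ and $\hat Z_j$ differ on the free qubit $j$ only by $X_j$ versus $Z_j$ and hence anti-commute, while for $j\ne j'$ the operators $\hat X_j,\hat Z_j$ and $\hat X_{j'},\hat Z_{j'}$ overlap only on pivot qubits, where every factor commutes; combined with the previous paragraph this shows each element of $\set L$ commutes with all of $\set S\cup\set L$ except its partner, so it is a member of a conjugal pair with respect to $\set S\cup\set L$. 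For independence I would test a putative relation $\prod_{i\in I}\vec S_i\cdot\prod_{j}\hat X_j^{a_j}\hat Z_j^{b_j}=\mathrm{id}$ against $d_{i_0}$ (every dressed operator commutes with $d_{i_0}$, while only $\vec S_{i_0}$ fails to, forcing $I=\emptyset$) and then against $Z_{j_0}$ and $X_{j_0}$ on each free qubit (isolating $\hat X_{j_0}$ and $\hat Z_{j_0}$ respectively, forcing every $a_j=b_j=0$).

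The main obstacle, and the one step that is not a routine commutation check, is the reverse containment of property~\ref{L-completes-the-generators}: that $\set S\cup\set L$ \emph{generates} the entire centralizer rather than a proper subgroup. I would settle this by a dimension count over $\mathbb F_2$. Regarding $\pauligroup$ as a $2N$-dimensional symplectic space with commutation playing the role of the symplectic form, an independent commuting (isotropic) set of size $n$ has centralizer of dimension $2N-n$. The set $\set S\cup\set L$ has already been shown to be independent and to sit inside this centralizer, and its cardinality is $n+2(N-n)=2N-n$; an independent subset whose size equals the dimension of the ambient space is a basis, so it generates the whole of $\centralizer_\pauligroup(\set S)$. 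The single fact imported here is the evaluation $\dim\centralizer_\pauligroup(\set S)=2N-|\set S|$ for independent commuting $\set S$, which is precisely where the symplectic structure of the Pauli group does the essential work.
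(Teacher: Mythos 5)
Your proof is correct and follows essentially the same route as the paper's: you build the same dressed operators (the paper's $\lst A_i$ and $\lst B_i$ are exactly your $\hat Z_j$ and $\hat X_j$ on the non-pivot qubits), verify commutation, conjugality, and independence by the same parity bookkeeping, and close with the same $2N-n$ counting argument for $\centralizer_\pauligroup(\set S)$. The only cosmetic difference is that the paper justifies $\dim\centralizer_\pauligroup(\set S)=2N-n$ via an automorphism that simultaneously diagonalizes $\set S$, whereas you invoke the symplectic structure of the Pauli group directly; these are the same fact.
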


\begin{proof}
Recalling that $n$ is the number of elements in $\lst S$ (and $\set S$), let $\lst l$ be some sequential ordering of $\{1 \dots N\}\backslash\{\vec k_i\}_i$, and then let $\set L:=\{\lst A_i\}_i\cup\{\lst B_i\}_i$ where
$$
\begin{aligned}
\lst A_i &:= P_{\lst l_i}^{[1]}\cdot \prod_{j=1\dots n\atop \{P_{l_i}^{[1]},\lst S_j\}=0} P_{\lst k_j}^{[s(j)]},\\
\lst B_i &:= P_{\lst l_i}^{[0]}\cdot \prod_{j=1\dots n\atop \{P_{l_i}^{[0]},\lst S_j\}=0} P_{\lst k_j}^{[s(j)]}.\\
\end{aligned}
$$

To see that property \ref{L-are-independent} is satisfied, observe the following.  First, the operators in $\set L$ are independent from the operators in $\set S$ since none of them is the identity operator and they all commute with every operator in $\{P_{\lst k_i}^{[s(i)]}\}_{i=1 \dots n}$.  Second, they are independent from each other since for every $i=1 \dots |\lst l|$ we have that $\vec A_i$ is the only operator that anti-commutes with $P_{\lst l_i}^{[0]}$ and $\lst B_i$ is the only operator that anti-commutes with $P_{\lst l_i}^{[1]}$.  Thus we conclude that all of the operators in $\set S\cup\set L$ are independent.

Next, to see that property \ref{L-are-conjugal-pairs} holds, observe that for every choice of operators $\lst A_i$ and $\lst S_j$ we have (by intentional construction) that $\lst S_j$ either anti-commutes with two of the operators in the product forming $\lst A_i$ or none at all, and so $[\lst S_i,\lst A_j]=0$ for all $i=1\dots n$ and $j=1\dots |\lst l|$;  by the same reasoning we see also that $[\lst S_i,\lst B_j]=0$ for all $i=1\dots n$ and $j=1\dots |\lst l|$.  Furthermore, each operator $\lst A_i$ commutes with every operator in $\set L$ except for its conjugal partner $\lst B_i$, since the only factor in $\lst A_i$ that could anti-commute with a factor contained within another operator in $\set L$ is $P_{l_i}^{[1]}$, and $\lst B_i$ is the only operator in $\set L$ that contains a factor $P_{l_i}^{[0]}$ that anti-commutes with $\lst X_{l_i}$;  reversing this argument, we also see that $\lst B_i$ commutes with every operator in $\set L$ except for $\lst A_i$.  Thus, every operator in $\set L$ is a member of a conjugal pair with respect to $\set L\cup\set S$.

Finally, to see that property \ref{L-completes-the-generators} holds, observe that since the operators in $\set S$ commute they can therefore be simultaneously diagonalized, which means that there is an automorphism on $\pauligroup$ that takes $\lst S_i\mapsto P_i^{[1]}$ for every $i=1 \dots n$.  The only operators that commute with every such $P_i^{[1]}$ are those which do not contain any factor of $P_i^{[0]}$ for $i=1 \dots n$, and so $\centralizer_\pauligroup\paren{\{P_i^{[0]}\}_{i=1\dots n}} = \genfun\paren{\{P_i^{[1]}\}_{i=1 \dots n}\cup \{P_i^{[l]}\}_{i=n+1 \dots N, \,\, l=0,1}}$, which has $2N-n$ generators.  Since the automorphism preserves the number of generators in the centralizer, we thus conclude that $\centralizer_\pauligroup(\set S)$ has exactly $2N-n$ generators.  Since $\set S\cup\set L$ contains independent operators which commute with every member of $\set S$, and furthermore $|\set S\cup\set L|=2N-n$, we thus conclude that $\genfun\paren{\set S\cup\set L}=\centralizer_\pauligroup(\set S)$.
\end{proof}
%@nonl
%@-node:gcross.20090527164539.1:Construct the logical operators
%@+node:gcross.20090520163423.15:Prove main theorem
With these building blocks in place, we now prove the main theorem:

\begin{proof}[Proof of Theorem \ref{theorem-SG}]
By Proposition \ref{proposition-SG}, we know that there exists a list of operators $\lst S$ and a set of independent operators $\set G$ satisfying the properties that are listed there.  By Proposition \ref{make-independent-using-elimination}, we know that there is an independent set of operators $\set S$ that generate the same subgroup as $\lst S$.  

Now let $\set F$ be a maximal subset of commuting operators in $\set G$ --- i.e., for each conjugal pair in $\set G$ take one of the two operators --- and then let $\set O := \set F \cup \set S$.  Since all of the operators in $\set O$ commute, we apply Proposition \ref{make-independent-using-elimination} again to conclude the existence of the objects listed there, and then we immediately apply Proposition \ref{construction-of-logicals} to show that a set $\set M$ exists with the properties listed there.  We are not done yet, however, since there might be operators in $\set G$ with which operators in $\set M$ anti-commute, so we let
$$\set L := \{m\cdot\prod_{f\in \set F\atop \{M,\text{conj}_{\set G}(f)\}=0} f :\quad m \in\set M\}$$
where $\text{conj}_{\set G}(F)$ is the conjugal partner of $F$ in the set $\set G$.  This guarantees that the operators in $\set L$ commute with every operator in $\set S\cup\set G$, and so we are done.
\end{proof}
%@-node:gcross.20090520163423.15:Prove main theorem
%@+node:gcross.20090511123440.5:Pseudo-code
\begin{remark}
A pseudo-code representation of the algorithm described by Theorem \ref{theorem-SG} is given in Table \ref{table:algo1}.
\end{remark}

%@+others
%@+node:gcross.20100909203748.1343:Compute-Subsystem-Code
\begin{table}
\begin{codebox}
\Procname{$\proc{Compute-Subsystem-Code}(\lst O)$}
\li $\lst S \gets []$
\li $\lst G\gets []$
\li \For $o \gets \lst O$ \label{li:csg-main-loop-start}
\li \Do
\li      \For $(g_X, g_Z) \gets \lst G$ %\Comment i.e., iterate over conjugal pairs in $\lst G$
\li      \Do
\li        \kw{if}  $\func{anti}(o,g_X)$ \kw{then} $o \gets o \cdot g_Z$
\li        \kw{if} $\func{anti}(o,g_Z)$ \kw{then} $o \gets o \cdot g_X$
          \End 
\li      \kw{if} \text{$o$ is identity} \kw{then} \Goto \ref{li:csg-main-loop-start}
\li      \For $s \gets \lst S$
\li      \Do
\li        \kw{if} $\func{anti}(o,s)$ \kw{then} \Goto \ref{li:make-into-gauge}
          \End
\li      \Goto \ref{li:csg-main-loop-start}
\li      $\lst G \gets \lst G \cup [(o,s)]$ \label{li:make-into-gauge}
\li      $i \gets 1$ 
\li      \For $s' \gets \lst S$ \label{li:kill-redundant-stabs}
\li      \Do
\li            \kw{if} $s'=s$ \kw{then} \Goto \ref{li:kill-redundant-stabs}
\li            \If $\func{anti}(s',o)$
\li            \Then
\li                  $\lst S[i] \gets s' \cdot s$
\li            \Else
\li                  $\lst S[i] \gets s$
                 \End
\li            $i \gets i+1$
             \End
\li        delete $\lst S[i\dots|\lst S|]$
      \End
\li $\lst I \gets []$
\li $\lst P \gets []$
\li \kw{call} $\proc{Gaussian-Elimination}(\lst S,1,\lst I,\lst P)$ \emph{(Table \ref{table:gaussian-elimination})}
\li $\lst T \gets \lst S \cup [g_X | (g_X,g_Z) \in \lst G]$
\li \kw{call} $\proc{Gaussian-Elimination}(\lst T,|\lst S|+1,\lst I,\lst P)$ \emph{(Table \ref{table:gaussian-elimination})}
\li $\lst L \gets []$
\li \For $i\gets 1$ \To $\text{number of physical qubits}$ \label{li:compute-logicals-loop}
\li \Do
\li     \kw{if} $i\in\lst I$ \kw{then} \Goto \ref{li:compute-logicals-loop}
\li     $l_X \gets X_i$
\li     $l_Z \gets Z_i$
\li     \For $(j,p,t)\gets (\lst I,\lst P,\lst T)$
\li     \Do
\li        \If $p=0$
\li        \Then
\li            \kw{if} $\func{anti}(t,X_j)$ \kw{then} $l_X \gets l_X \cdot Z_j$
\li            \kw{if} $\func{anti}(t,Z_j)$ \kw{then} $l_Z \gets l_Z \cdot Z_j$
\li        \Else
\li            \kw{if} $\func{anti}(t,X_j)$ \kw{then} $l_X \gets l_X \cdot X_j$
\li            \kw{if} $\func{anti}(t,Z_j)$ \kw{then} $l_Z \gets l_Z \cdot X_j$
             \End
          \End 
\li     \For $(g_X,g_Z)\in\lst G$
\li     \Do
\li         \kw{if} $\func{anti}(l_X,g_Z)$ \kw{then} $l_X \gets l_X \cdot g_X$
\li         \kw{if} $\func{anti}(l_Z,g_Z)$ \kw{then} $l_Z \gets l_Z \cdot g_X$
          \End
      \End
\li \Return $(\lst S,\lst G,\lst L)$
\end{codebox}
\caption[Algorithm \proc{Compute-Subsystem-Code}]{Algorithm which computes the subsystem code generated by a given list of measurement operators $\lst O$.  The subroutine \textsc{Gaussian-Elimination} is listed in Table \ref{table:gaussian-elimination}.} \label{table:algo1}
\end{table}
%@-node:gcross.20100909203748.1343:Compute-Subsystem-Code
%@+node:gcross.20100909203748.1344:Gaussian-Elimination
\begin{table}
\begin{codebox}
\Procname{$\proc{Gaussian-Elimination}(\lst S,i,\lst I,\lst P)$}
\li \While $i < |\lst S|$ \label{li:while-loop}
\li \Do
\li   $s \gets \lst S[i]$
\li   \For $j \gets 0$ \kw{to} $i-1$
\li   \Do
\li     $(n,z)\gets (\lst I[j],\lst P[j])$
\li     \If $z = 0$
\li     \Then
\li       \If $\func{anti}(s,X_n)$
\li       \Then $s \gets s \cdot \lst S[j]$
          \End
\li     \Else
\li       \If $\func{anti}(s,Z_n)$
\li       \Then $s \gets s \cdot \lst S[j]$
          \End
        \End
      \End
\li   \If $s$ is identity
\li   \Then
\li     delete $\lst S[i]$
\li     \Goto \ref{li:while-loop}
      \End
\li   \For $n \gets 0$ \kw{to} number of physical qubits \label{li:next-physical-qubit}
\li   \Do
\li     \kw{if} $n\in\lst I$ \kw{then} \Goto \ref{li:next-physical-qubit}
\li     \If $\func{anti}(s,X_n)$
\li     \Then
\li         $z\gets 0$
\li         \Goto \ref{li:found-the-pauli}
        \End
\li     \If $\func{anti}(s,Z_n)$
\li     \Then
\li         $z\gets 1$
\li         \Goto \ref{li:found-the-pauli}
        \End
      \End
\li   \If $z = 0$ \label{li:found-the-pauli}
\li   \Then
\li     \For $j \gets 0$ \kw{to} $i-1$
\li     \Do
\li       \If $\func{anti}(\lst S[j],X_n)$
\li       \Then $\lst S[j] \gets \lst S[j] \cdot s$
          \End
        \End
\li   \Else
\li     \For $j \gets 0$ \kw{to} $i-1$
\li     \Do
\li       \If $\func{anti}(\lst S[j],Z_n)$
\li       \Then $\lst S[j] \gets \lst S[j] \cdot s$
          \End
        \End
      \End
\li   append $n$ to $\lst I$
\li   append $z$ to $\lst P$
\li   $\lst S[i] \gets s$
\li   $i \gets i + 1$
    \End
\end{codebox}
\caption[Algorithm \proc{Gaussian-Elimination}]{Subroutine which performs a procedure analgous to Gaussian-elimination on $\vec S$ to distill a set of independent operators from a possible dependent set of operators.  This subroutine is called by the main subsystem code algorithm in Table \ref{table:algo1}.}
\label{table:gaussian-elimination}
\end{table}
%@-node:gcross.20100909203748.1344:Gaussian-Elimination
%@-others
%@-node:gcross.20090511123440.5:Pseudo-code
%@-node:gcross.20090423002455.3:Stabilizers and gauge qubits
%@+node:gcross.20100422160536.1366:Optimization
\subsection{Optimization of the logical qubits}

\begin{remark}
A pseudo-code representation of the algoritm that will be described in this section is presented in Table \ref{table:algorithm-optimization}.
\end{remark}

%@<< Main body >>
%@+node:gcross.20100422160536.1376:<< Main body >>
%@+others
%@+node:gcross.20091221145013.1274:Lemma:  Recombining generators
In general there are multiple sets of operators that satisfy the properties of \ref{theorem-SG}, as is illustrated by the following Lemma:

\begin{lemma}
\label{combining-pairs}
Given conjugal pairs $Q:=(a,b)$ and $R:=(c,d)$ in relation to some set $\set X$ such that either $a\ne c$ or $b\ne d$, we have that
\begin{enumerate}
\item the pairs $Q':=(a\cdot c,b)$ and $R':=(c,d\cdot b)$ are conjugal pairs with respect to $\set X \backslash \{Q,R\} \cup \{Q',R'\}$; and
\item $\genfun\paren{\{a,b,c,d\}}=\genfun\paren{\{a\cdot c,b,c,d\cdot b\}}$.
\end{enumerate}
\end{lemma}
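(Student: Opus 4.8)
The plan is to reduce the entire lemma to the $\mathbb{Z}_2$-bilinearity of the Pauli commutation relation. Working in $\pauligroup$ (Pauli operators modulo phase), any two elements either commute or anti-commute, and for a fixed operator the map sending $g$ to its commutation bit is a homomorphism: writing $\{g,h\}=0$ for anti-commutation, $g\cdot h$ anti-commutes with $k$ exactly when an odd number of $g,h$ do. I would record this fact once at the outset, since both halves of the lemma then follow essentially mechanically.

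First I would fix the commutation data implied by the hypotheses. Since $Q=(a,b)$ and $R=(c,d)$ are conjugal pairs in relation to $\set X$, I have $\{a,b\}=0$ and $\{c,d\}=0$, and each of $a,b$ (resp.\ $c,d$) commutes with every element of $\set X$ other than its own partner. The four operators are understood to be members of $\set X$ (consistent with their removal in the conclusion), so because the two pairs are distinct ($a\ne c$ or $b\ne d$) and, as established in the remark following the definition of a conjugal pair, distinct conjugal pairs drawn from $\set X$ must be disjoint, the operators $a,b,c,d$ are distinct elements of $\set X$. Consequently the conjugal-pair property forces $[a,c]=[a,d]=[b,c]=[b,d]=0$, and all of $a,b,c,d$ commute with every remaining $x\in\set X\setminus\{a,b,c,d\}$. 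Pinning down exactly this cross-commutation table, and confirming the distinctness that licenses it from the definition's conditional ``should its partner be a member'' clause, is the one genuinely delicate step; after that the verifications are pure arithmetic.

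For part 1, let $\set Y := (\set X\setminus\{a,b,c,d\})\cup\{a\cdot c,\,b,\,c,\,d\cdot b\}$. Using bilinearity I would check each required relation in turn: $a\cdot c$ anti-commutes with $b$ (a single anti-commuting factor, via $\{a,b\}=0$) and $c$ anti-commutes with $d\cdot b$ (via $\{c,d\}=0$), so $Q'$ and $R'$ do have anti-commuting partners; then $a\cdot c$ commutes with $c$, with $d\cdot b$ (the four relevant bits $\{a,d\},\{a,b\},\{c,d\},\{c,b\}$ sum to $1+1\equiv 0 \bmod 2$), and with every untouched $x\in\set X$; and symmetrically for $b$, $c$, and $d\cdot b$. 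Each is a two-line parity count, and together they establish that $Q'=(a\cdot c,b)$ and $R'=(c,d\cdot b)$ are conjugal pairs in relation to $\set Y$.

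For part 2, I would use that every element of $\pauligroup$ is its own inverse modulo phase. Then $a\cdot c$ and $d\cdot b$ lie in $\genfun(\{a,b,c,d\})$, while $b$ and $c$ are already generators, giving $\genfun(\{a\cdot c,b,c,d\cdot b\})\subseteq\genfun(\{a,b,c,d\})$; conversely $a=(a\cdot c)\cdot c$ and $d=(d\cdot b)\cdot b$ recover $a$ and $d$ from the second set, giving the reverse inclusion. Hence $\genfun(\{a,b,c,d\})=\genfun(\{a\cdot c,b,c,d\cdot b\})$, which completes the lemma.
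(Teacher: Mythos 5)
Your proposal is correct and follows essentially the same route as the paper's own proof: both verify the new commutation relations directly from the cross-commutation table $[a,c]=[a,d]=[b,c]=[b,d]=\{a,b\}=\{c,d\}=0$ and then use the fact that Pauli operators square to the identity to get the equality of generated sets. The only difference is that you spell out the justification for the cross-commutation table (distinctness of the pairs plus the bilinearity of the commutation bit), which the paper simply asserts.
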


\begin{proof}
\begin{enumerate}
\item Since $[a,c]=[a,d]=[b,c]=[b,d]=\{a,b\}=\{c,d\}=0$, we see therefore that $[a\cdot c,c]=[a\cdot c,d\cdot b]=[b,c]=[b,d\cdot b]=\{a\cdot c,b\}=\{c,d\cdot b\}=0$.  Furthermore, since $a$, $b$, $c$ and $d$ commute with every operator in $\set X\backslash \{Q,R\}$, so do $a\cdot c$ and $d\cdot b$.
\item Since $b$ and $c$ are Pauli operators and thus square to the identity, we have that $a\cdot c\cdot c=a$ and $d\cdot b\cdot b=d$, and so $\genfun\paren{\{a,b,c,d\}}=\genfun\paren{\{a\cdot c,b,c,d\cdot b\}}$.
\end{enumerate}
\end{proof}
%@-node:gcross.20091221145013.1274:Lemma:  Recombining generators
%@+node:gcross.20091221145013.1275:Definition: Undetectable error
As a result of this lemma, we see that we can take pairs of arbitrary conjugal pairs from sets $\set G$ and $\set L$ of Theorem \ref{theorem-SG} and replace them with different pairs per the recipe in Lemma \ref{combining-pairs} such that the properties of the theorem still hold.  So given that these sets are not unique, the natural question is:  What is the best choice of $\set G$ and $\set L$?  To answer this, we observe that another criteria we would like for our code to satisfy is that it be as robust to errors as possible;  in particular, we seek to maximize the difficulty of \emph{undetectable errors}, which is defined as follows:

\begin{definition}
Given a set $\set S\subseteq\pauligroup$ and operators $l\in\pauligroup$ and $e\in\centralizer_\pauligroup(\set S)$ which anti-commute (i.e., $\{l,e\}=0$), we say that $e$ is an \emph{undetectable error with respect to} $\set S$ \emph{acting on} $l$.
\end{definition}
%@nonl
%@-node:gcross.20091221145013.1275:Definition: Undetectable error
%@+node:gcross.20091221145013.1276:Definition: Weight
We assume that the `difficulty' of an interaction between our physical system and its environment is related to the number of physical qubits in our system that are participating in the interaction.  Thus, the natural metric for measuring the relative difficulty of an error is given by its weight, which recall is defined as follows:

\begin{definition}
Given an operator $p\in\pauligroup$---which recalls means that $p$ is the tensor product of single-qubit Pauli unnormalized spin matrices---the \emph{weight} of $p$ is the number of single-qubit operators in the product which are non-trivial (i.e., not the identity).  So for example, the weight of $I\otimes I\otimes I$ is 0, the weight of $I\otimes Z\otimes I\otimes X$ is 2, and the weight of $Z\otimes X\otimes Y$ is 3.
\end{definition}
%@nonl
%@-node:gcross.20091221145013.1276:Definition: Weight
%@+node:gcross.20091221145013.1277:Definition: Additional notation
For convenience, we introduce the following additional notation:

\begin{definition}
$\quad$

\begin{itemize}
\item the function $\Pi:\powerset(\pauligroup)\to\pauligroup$ is defined such that $\Pi(\set X):=\prod_{x\in \set X} x$ --- that is, it is the product of the operators in $\set X$.
\item assuming we have a set of independent operators, $\set Q$, the function $\set G_{\set Q}:\genfun(\set Q)\to\powerset(\set Q)$ is defined (uniquely) such that for every $q\in\set Q$ we have that $q=\prod_{o\in\set G_{\set Q}(q)} o$;
\item the function $w:\pauligroup\to \mathscr{N}$ is defined such that $w(o)$ gives the weight of $o$;
\item the function $e_{\set S}:\centralizer_\pauligroup\paren{\set S}\to \paren{\powerset\circ\centralizer_\pauligroup}\paren{\set S}$ is defined such that $\set e_{\set S}(l)$ is the set of minimizers of $w$ over the set $\left\{o: o\in \centralizer_\pauligroup\paren{\set S}, \{o,l\}=0\right\}$ --- that is, it gives the undetectable errors with respect to $\set S$ acting on $l$ that are of minimum weight;
\item the function $\om_{\set S}:\centralizer_\pauligroup\paren{\set S}\to\mathscr{N}$ is defined such that $\om_{\set S}:=w(o)$ for an arbitrarily chosen $o\in\circ \set e_{\set S}$ --- note that function is well-defined since all operators in the set $\set e_{\set S}$ have the same weight;
\item the function $m_{\set S}:\centralizer_\pauligroup\paren{\set S}\times \centralizer_\pauligroup\paren{\set S} \to \mathscr{N}$ is defined such that $m_{\set S}(l,l'):=\min \{\om_{\set S}(l),\om_{\set S}(l')\}$ --- that is, it gives the smaller of the weights of the smallest weight errors acting on respectively $\set L$ and $\set L'$;
\item the function $\lst M_{\set S}$ is defined such that $\lst M_{\set S}\paren{\lst P}$ is the sequence of $|\lst P|$ integers such that $\lst M_{\set S}\paren{\lst P}_i := m_{\set S}\paren{\lst P_i}$;
\item the functions $p_1$ and $p_2$ are defined such that, given $(a,b):=x$, we have that $p_1(x):=a$ and $p_2(x):=b$.
\item the function $\set U:\powerset(\pauligroup\times\pauligroup)\to\powerset(\pauligroup)$ is defined (for convenience) such that $\set U\paren{\set X}:=\bigcup_{x\in\set X} \{p_1(x),p_2(x)\}$ --- that is, it `unpacks' a set of pairs of operators into a set of operators;  in an abuse of notation, we shall also allow $\set U$ to apply to sequences, so that $\set U(\lst P) := \set U\paren{\{\lst P_i\}_i}$, and to individual pairs, so that if $X$ is a single pair then $\set U(X) := \set U(\{X\})$;
\item finally, a \emph{choice of qubits stabilized by $\set S$}, $\lst P$, is a sequence of pairs of operators from the Pauli group such that 
\begin{enumerate}
\item no operator in $\set U(\lst P)$ appears in more than one pair in $\lst P$;
\item $\set U(\lst P)\subseteq \centralizer_\pauligroup(\set S)$
\item every pair in $\lst P$ is a conjugal pair with respect to $\set S \cup \set U(\lst P)$;
\item $(\om_{\set S}\circ p_1)(\lst P_i)=m_{\set S}(\lst P_i)$; and
\item $\lst M_{\set S}(\lst P)$ is an ordered sequence.
\end{enumerate}
\end{itemize}

\end{definition}
%@nonl
%@-node:gcross.20091221145013.1277:Definition: Additional notation
%@+node:gcross.20091221145013.1278:Definition: Total ordering => Optimal choice of qubits
Given the notation above, we now precisely define what we mean by the ``best choice'' of logical qubits.

\begin{definition}
An \emph{optimal choice of qubits stabilized by $\set S$} is any choice of qubits, $\lst P$ stabilized by $\set S$, such that given any other choice of qubits, $\lst P'$, that is also stabilized by $\set S$ and which satisfies $(\genfun\circ \set U)(\lst P)=(\genfun\circ \set U)(\lst P')$, we have that $\lst M(\lst P)_i \ge \lst M(\lst P')_i$ for all $1\le i \le |\lst P|=|\lst P'|$.\footnote{Note that since $\lst P$ and $\lst P'$ are sequences of conjugal pairs without duplicates they are therefore independent, and so if $(\genfun\circ \set U)(\lst P)=(\genfun\circ \set U)(\lst P')$ then we know automatically that $|\lst P|=|\lst P'|$.}
\end{definition}
%@-node:gcross.20091221145013.1278:Definition: Total ordering => Optimal choice of qubits
%@+node:gcross.20100422160536.1370:Definition: Optimization procedure
We now present an algorithm for computing the optimal choice of logical qubits from a set of input qubits.  The key insight upon which the algorithm is built is that undetectable errors acting on the space of logical qubits can never be eliminated entirely, so there will always be \emph{some} operator on which they act.  Thus, the goal of the optimization procedure is not to eliminate errors, but rather to \emph{contain} them, so that they act on as few operators as possible.

The optimization algorithm works by starting with an empty (and therefore automatically optimal) choice of qubits and a set of `unoptimized' qubits, and making progress by gradually moving qubits from the unoptimized set into the choice in such a way that preserves the optimality of the choice.  The trick is that we want to delay as long as possible moving a qubit into the choice, until we have had every chance to improve it.  Thus, we additionally keep track of a subset of pairs in the choice whose second members have yet to be used to contain an error, and then use them as much as possible to fix errors.  That is, at every step in the algorithm, we scan for the minimal weight undetectable error acting on the set of operators consisting of both the second member of the pairs in this subset and all of the operators in the unoptimized set of qubits.  If the minimal weight error acts on an operator in the first category, then we remove the pair from the subset and use this operator to fix this error wherever it occurs in both the second members of pairs in the subset and the unoptimized qubits.  Otherwise, we pull out a qubit from the unoptimized set on which the error acts, use the first member in the pair to fix the error in the qubits remaining in the unoptimized subset, add the pair to the subset of qubits whose second members have yet to be used to contain an error, and then add it to the end of the choice.  At this point our choice turns out to still be optimal because if there had been a way to make the qubit we just added any better by recombining it with other qubits in the choice then we would have already done so by now.

This procedure is presented formally by means of the following inductive definition.

\begin{definition}
Let the function $\optimizer$ be a map from a tuple of the form tuple $(\set S,\set L)$ to a sequence of tuples each of the form $(\set Q,\lst P,\lst s)$, where
\begin{itemize}
\item $\set S$ is a set of commuting Pauli operators;
\item $\set L$ is a set of Pauli operators that are conjugal in relation to $\set L\cup\set S$;
\item $\set Q$ is a set of pairs of Pauli operators;
\item $\lst P$ is a sequence of pairs of Pauli operators; and
\item $\lst s$ is a sequence of integers from $\{0,1\}$ with the same length as $\lst P$.
\end{itemize}

The sequence is defined inductively.  For convenience, we let the first index of this sequence be zero, and define $\optimizer(\set S,\set L)_0=(\set L',\lst\emptyset,\lst\emptyset)$, where $\set L'$ is the set of pairs such that $\set U(\set L')$ and no operator appears in more than one pair in $\set L'$, and $\lst\emptyset$ is the empty sequence.  Now assume that $\optimizer(\set S,\set L)_i$ is defined and that $\optimizer(\set S,\set L)_i=(\set Q,\lst P,\lst s)$.  If $\set Q$ is the empty set, then $\optimizer(\set S,\set L)_i$ is the last element of the sequence so that $|\optimizer(\set S,\set L)|=i+1$.  Otherwise, $\optimizer(\set S,\set L)_{i+1}:=(\set Q',\lst P',\lst s'),$ where $\set Q'$, $\lst P'$ and $\lst s'$ are defined as follows.

Let $\set R:=\{p_2(\lst P_i):i\in\{1\dots |\lst P|\},\lst s'=1\}$ and $\set X:=\set U(\set Q)\cup\set R$.  Note that since $\set Q\ne\emptyset$ that therefore $\set X\ne\emptyset$.  Let $h$\footnote{An observant reader may have noticed that we do not specify exactly how one goes about computing $h$.  This was an intentional omission since the details are quite technical and fortunately they are irrelevent for proving that this algorithm works correctly as long as we can assume that $h$ \emph{can} be computed.  Thus, the discussion of how to compute $h$ will be deferred until Section \ref{subsubsection:running time analysis} when we analyze bounds on the running time of the algorithm.} be the minimal weight error with respect to $\set S$ acting on any operator in $\set X$.  There are two cases: either $h$ acts on some operator in $\set R$, or it doesn't and so must act on some operator in $\set U(\set Q)$.  The definition of $\set Q'$, $\lst P'$ and $\lst a'$ depends on which of these cases holds.

\begin{center}
\textbf{Case 1:} $h$ acts on some operator in $\set R$
\end{center}

Let $k$ be the smallest index such that $h$ acts on $p_2(\lst P_k)$, and $o:=p_2(\lst P_k)$.  Define
$$
f(x) :=
\begin{cases}
x & [h,x] = 0 \\
x\cdot o & \{h,x\} = 0, \\
\end{cases}
$$
and
$$
g(a,b) :=
\begin{cases}
I & [h,a] = 0 \\
b & \{h,b\} = 0, \\
\end{cases}
$$
Let $s':=p_1(\lst P_k)\cdot \alpha \cdot \beta$ where
$$\alpha := \Pi\paren{\{g(p_2(\lst P_i),p_1(\lst P_i)\}:i\in\{k+1\dots |\lst P|\},\lst a_i=0}$$
and
$$\beta := \Pi\paren{\{g(a,b)\cdot g(b,a):(a,b)\in\set Q\}}.$$

Then we define
$$
\begin{aligned}
\set Q'&:= \{(f(a),f(b)):(a,b)\in\set Q\},\\
\lst P'_i&:=
\begin{cases}
\lst P_i & i < k \\
(a',o) & i = k \\
\lst P_i & i > k, \lst a_i=0 \\
\lst P_i & i > k, [h,p_2(\lst P_i)]=0 \\
(p_1(\lst P_i),p_2(\lst P_i)\cdot o) & \text{otherwise} \\
\end{cases}\\
\lst s'_i&:=
\begin{cases}
\lst s_i & i\ne k\\
0 & i =k
\end{cases}
\end{aligned}
$$

\begin{center}
\textbf{Case 2:} $h$ does not act on some operator in $\set R$
\end{center}

Let $q\in\set Q$ be a pair such that $h$ acts on one of its members, and without loss of generality assume that $h$ acts on the first member since otherwise we can swap the members of the pair.  Let $o:=p_1(q)$.
Define
$$
f(x) :=
\begin{cases}
x & [h,x] = 0 \\
x\cdot o & \{h,x\} = 0, \\
\end{cases}
$$
and
$$
g(a,b) :=
\begin{cases}
I & [h,a] = 0 \\
b & \{h,b\} = 0, \\
\end{cases}
$$
Let $$b'':=p_2(q)\cdot\Pi\paren{\{g(a,b)\cdot g(b,a):(a,b)\in\set Q\backslash\{q\}\}},$$ and
$$b':=
\begin{cases}
b'' & [h,b'']=0 \\
b''\cdot o & \{h,b''\}=0 \\
\end{cases}
$$

Then we define
$$
\begin{aligned}
\set Q'&:= \{(f(a),f(b)):(a,b)\in\set Q\backslash\{q\}\},\\
\lst P'_i&:=
\begin{cases}
\lst P_i & i \le |\lst P| \\
(o,b') & i = |\lst P|+1
\end{cases} \\
\lst s'_i&:=
\begin{cases}
\lst s_i & i \le |\lst s|\\
1 & i = |\lst s|+1
\end{cases}
\end{aligned}
$$
\end{definition}

Table \ref{table:algorithm-optimization} contains a listing of pseudo-code that uses the above algorithm to compute the optimal choice of qubits.  For the sake of completeness, it includes additional steps that pertain to the details of how the minimal weight operator is computed, which will be discussed in more detail in Section \ref{subsubsection:running time analysis}.

%@-node:gcross.20100422160536.1370:Definition: Optimization procedure
%@+node:gcross.20100422160536.1380:Pseudo-code
%@+node:gcross.20100422160536.1381:Optimize-Logical-Qubits
\begin{table}
\begin{codebox}
\Procname{$\proc{Optimize-Logical-Qubits}(\lst S,\lst G,\lst L)$}
\li $N \gets |\lst L|$
\li $k \gets 0$
\li $\lst m \gets [\const{true}]*|\lst L|$
\li \kw{nested function} $\proc{Query-Function}(o)$
\li \Do
\li     \For $i \gets 0$ \To $k$, $(l_X,l_Z) \gets \lst L[i]$
\li     \Do
\li         \If $\lst m[i]$ and $\func{anti}(o,l_Z)$
\li         \Then
\li           \Return $(\const{true},(1,i))$
            \End
        \End
\li     \For $i \gets k+1$ \To $N$, $(l_X,l_Z) \gets \lst L[i]$
\li     \Do
\li         \If $\func{anti}(o,l_X)$ or $\func{anti}(o,l_Z)$
\li         \Then
\li           \Return $(\const{true},(2,i))$
            \End
        \End
\li     \Return $(\const{false},\const{undefined})$
    \End
\li $\lst O \gets \func{copy}(\lst S)$
\li \For $(g_X, g_Z) \gets \lst G$
\li \Do
\li     append $g_X$ and $g_Z$ to $\lst O$
    \End
\li \For $(l_X, l_Z) \gets \lst L$
\li \Do
\li     append $l_X$ and $l_Z$ to $\lst O$
    \End
\li $\lst P \gets \proc{Compute-Pseudogenerators}(\lst O)$ \emph{(Table \ref{table:compute-pseudo-generators})}
\li \While $k < N$
\li \Do
\li     $(e,(c,l))\gets$ $\proc{Find-Weight-Minimizer}$ 
        \Indentmore
        \Indentmore
\zi         $(\proc{Query-Function},\lst P)$ \emph{(Table \ref{table:find-weight-minimizer})}
        \End
        \End
\li     $(q_X,q_Z)\gets \lst L[l]$
\li     \If $c = 1$
\li     \Then
\li         $\lst m[l] \gets \const{false}$
\li         \For $i\gets l+1$ \To $k$,  $(l_X,l_Z) \gets \lst L[i]$
\li         \Do
\li             \If $m[i]$ and $\{e,l_Z\}$
\li             \Then
\li                 $q_X \gets q_X \cdot l_X$
\li                 $\lst L[i] \gets (l_X,l_Z\cdot q_Z)$
                \End
            \End
\li         \kw{call} $\proc{Fix-Logical-Qubits}$ 
            \Indentmore
\zi         $(\lst L,k,e,q_Z,q_X)$ \emph{(Table \ref{table:fix-logical-qubits})}
            \End
\li         $L[l] \gets (q_X,q_Z)$
\li     \Else
\li         \If $\func{commute}(e,q_X)$
\li         \Then swap $q_X$ and $q_Z$
            \End
\li         $(q_X,q_Z) \gets \lst L[l]$
\li         $\lst L[l] \gets \lst L[k]$
\li         $k \gets k+1$
\li         \kw{call} $\proc{Fix-Logical-Qubits}$
            \Indentmore
\zi         $(\lst L,k,e,q_X,q_Z)$ \emph{(Table \ref{table:fix-logical-qubits})}
            \End
\li         $L[k] \gets (q_X,q_Z)$
        \End
    \End
\end{codebox}
\caption[Algorithm \proc{Optimize-Logical-Qubits}]{Algorithm which optimizes the logical qubits for a given subsystem code.} \label{table:algorithm-optimization}
\end{table}
%@-node:gcross.20100422160536.1381:Optimize-Logical-Qubits
%@+node:gcross.20100422160536.1382:Fix-Logical-Qubits
\begin{table}
\begin{codebox}
\Procname{$\proc{Fix-Logical-Qubits}(\lst L,k,e,q_A,q_B)$}
\li \For $i\gets k$ \To $|\lst L|$,  $(l_X,l_Z) \gets \lst L[i]$
\li \Do
\li     \If $\func{anti}(e,l_Z)$ and $\func{anti}(e,l_X)$
\li     \Then
\li         $q_A \gets q_A \cdot l_X \cdot l_Z$
\li         $\lst L[i] \gets (l_X\cdot q_B,l_Z\cdot q_B)$
\li     \ElseIf $\func{anti}(e,l_Z)$
\li     \Then
\li         $q_A \gets q_A \cdot l_X$
\li         $\lst L[i] \gets (l_X,l_Z\cdot q_B)$
\li     \ElseIf $\func{anti}(e,l_X)$
\li     \Then
\li         $q_A \gets q_A \cdot l_Z$
\li         $\lst L[i] \gets (l_X\cdot q_B,l_Z)$
        \End
        \End
\li \If $\proc{anti}(e,q_A)$
\li \Then
        $q_A \gets q_A \cdot q_B$
    \End
\li \Return $(q_A,q_B)$
    \End
\end{codebox}
\caption[Algorithm \proc{Fix-Logical-Qubits}]{Algorithm which `fixes' a subset of the logical qubits so that they are robust to a given error.} \label{table:fix-logical-qubits}
\end{table}
%@-node:gcross.20100422160536.1382:Fix-Logical-Qubits
%@-node:gcross.20100422160536.1380:Pseudo-code
%@+node:gcross.20100422160536.1374:Definition: Running time
In addition to proving that the above algorithm successfully constructs an optimal choice of qubits, we shall also provide a bound on its running time.  In order to do this, we first need to precisely define what we mean by the running time for the purposes of this section.

\begin{definition}
We say that a computation can be performed \emph{in time $x$} if the computation requires taking $x$ products of Pauli operators.
\end{definition}

Of course, the number of products of Pauli operators is not the only metric that could serve as the gauge for the running time, but it suffices for our purposes.
%@-node:gcross.20100422160536.1374:Definition: Running time
%@+node:gcross.20100422160536.1372:Introduce Theorem
We now present the main result of this section.

\begin{theorem}
\label{theorem:optimization procedure}
Suppose we are given
\begin{itemize}
\item a set of commuting Pauli operators, $\set S$, acting on $N$ physical qubits;
\item a set of pairs, $\set L\subseteq\centralizer_\pauligroup(\set S)$, conjugal with respect to $\set U(\set Q)\cup\set S$;
\item and a set of Pauli operators $\set C$ such that $\genfun(\set C)=\centralizer_\pauligroup(\set S)$;
\end{itemize}
then $\optimizer(\set S,\set L)$ is a sequence of finite length, and if $(\set Q,\lst P,\lst s)$ is the last element in the sequence then $\lst P$ is an optimal choice of qubits such that $\genfun(\lst P)=\genfun(\set L)$, and furthermore it can be computed in a time that is in the set $O\paren{|\set C|^2+|\lst L|^2d3^d\choose{N}{d}}$\footnote{A function $f$ is said to be in the set $O(g)$ if $f$ is asymptotically bounded by some fixed constant times $g$;  formally $f\in O(g)$ if and only if there exists constants $c$ and $x_0$ such that $f(x)<c g(x)$ for all $x>x_0$.}, where $d:=\lst M(\lst P)_{|\lst P|}$\footnote{Recall that $\lst M(\lst P)_{|\lst P|}$ is the distance of the best qubit in the (optimized) code.}.
\end{theorem}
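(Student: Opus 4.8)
The plan is to treat the inductively-defined sequence $\optimizer(\set S,\set L)$ as a loop, carry a handful of invariants through it, bound its length by a monovariant, read the conclusion off the invariants at the terminal index, and finally count Pauli products for the running time. Concretely, I would maintain for each triple $\optimizer(\set S,\set L)_i=(\set Q,\lst P,\lst s)$ the invariants: (i) \emph{validity} --- $\set U(\lst P)\cup\set U(\set Q)\subseteq\centralizer_\pauligroup(\set S)$, no operator lies in two pairs, and every pair is conjugal with respect to $\set S\cup\set U(\lst P)\cup\set U(\set Q)$; (ii) \emph{group preservation} --- $\genfun\paren{\set U(\lst P)\cup\set U(\set Q)}=\genfun(\set L)$; (iii) \emph{normal form of the committed prefix} --- $\lst M_{\set S}(\lst P)$ is non-decreasing and $(\om_{\set S}\circ p_1)(\lst P_j)=m_{\set S}(\lst P_j)$ for each $j$; and (iv) \emph{optimality of the prefix} --- $\lst P$ extends, using only recombinations of the material in $\set Q$, to an optimal choice of qubits for the full problem. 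Invariants (i) and (ii) are preserved in both cases once one observes that the maps $f$, $g$ and the products $\alpha$, $\beta$ are merely iterated instances of the elementary recombination of Lemma \ref{combining-pairs}, which preserves both conjugacy and the generated group; these checks are mechanical and I would relegate them to direct computation.

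The engine behind (iii) and (iv) is a monotonicity fact: the weights $w(h)$ of the errors selected at successive steps are non-decreasing. The reason is that after each step every operator in the next active set $\set R\cup\set U(\set Q)$ has been forced to commute with the current $h$ (this is precisely what $f$, $g$ and the $\beta$ factor arrange), while each such operator is a product of operators from the previous active set; hence any error acting on a new active operator already acted on an old one, so its weight is at least $w(h)$. Two consequences follow at once. First, whenever Case~2 commits a pair $(o,b')$ with $h$ acting on $o$, we get $\om_{\set S}(o)=w(h)$ (any error on the active operator $o$ has weight $\ge w(h)$, and $h$ realizes $w(h)$) and $\om_{\set S}(b')\ge w(h)$ (since $b'$ is a product of active operators), so $(\om_{\set S}\circ p_1)=m_{\set S}$ on the new pair, giving (iii). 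Second, the recorded distances are appended in non-decreasing order.

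The heart of the proof, and the step I expect to be the main obstacle, is preserving optimality (iv) across a Case~2 step; I would run a ``greedy stays ahead'' exchange argument. Fix an optimal choice $\lst P^{*}$ extending the current prefix $\lst P$; I must produce an optimal choice agreeing with the prefix after the new pair of distance $w(h)$ is appended. The crux is that the minimum distance among the remaining qubits of $\lst P^{*}$ is \emph{exactly} $w(h)$: it is at most $w(h)$ because $h$ is a weight-$w(h)$ element of $\centralizer_\pauligroup(\set S)$ anti-commuting with a generator of the remaining space $\genfun(\set U(\set Q))$, and it is at least $w(h)$ by the monotonicity fact. Having located a remaining qubit of distance $w(h)$, Lemma \ref{combining-pairs} lets me recombine $\lst P^{*}$ so that this qubit becomes precisely the committed pair, yielding an optimal extension of the enlarged prefix --- the formal content of ``if the qubit could have been improved we would already have done so.'' Case~1 is easier, since it commits no new pair: by Lemma \ref{combining-pairs} the generated group and all conjugal relations are preserved, and a short argument (again from the monotonicity of error weights) shows every committed distance is left unchanged, so the optimal extension survives. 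At the terminal index $\set Q=\emptyset$, whence (ii) gives $\genfun(\lst P)=\genfun(\set L)$, invariants (i) and (iii) certify that $\lst P$ is a choice of qubits stabilized by $\set S$, and (iv) collapses to the statement that $\lst P$ is itself optimal.

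For finiteness and timing I would use the monovariant $\mu:=2|\set Q|+|\{j:\lst s_j=1\}|$, a non-negative integer that drops by exactly one at every step (Case~2 lowers $|\set Q|$ by one while creating one flag; Case~1 clears one flag), so the sequence terminates and has length $O(|\lst L|)$. For the running time, the preprocessing $\proc{Compute-Pseudogenerators}$ on $\set C$ costs $O(|\set C|^2)$ products. The loop runs $O(|\lst L|)$ times; each iteration calls $\proc{Find-Weight-Minimizer}$, which searches errors of increasing weight and so enumerates at most $O\paren{d\,3^{d}\choose{N}{d}}$ candidates before finding one of weight $w(h)\le d$ --- the search never exceeds the final, largest distance $d=\lst M(\lst P)_{|\lst P|}$ --- with an $O(|\lst L|)$ query per candidate and an $O(|\lst L|)$ call to $\proc{Fix-Logical-Qubits}$. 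Multiplying yields the claimed $O\paren{|\set C|^2+|\lst L|^2 d\,3^{d}\choose{N}{d}}$ and exhibits why the cost is governed by the distance of the best qubit.
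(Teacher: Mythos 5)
Your skeleton (induction over the steps of $\optimizer$, the monotonicity of the selected error weights $w(h)$, the termination monovariant, and the Brouwer--Zimmermann timing count) matches the paper's, and those pieces are sound. But your proof of optimality takes a genuinely different route --- a ``greedy stays ahead'' exchange argument maintaining that the committed prefix extends to an optimal choice --- and the exchange step, which you yourself identify as the main obstacle, has a real gap. Having located a remaining qubit of $\lst P^{*}$ with distance $w(h)$, you invoke Lemma \ref{combining-pairs} to ``recombine $\lst P^{*}$ so that this qubit becomes precisely the committed pair.'' Lemma \ref{combining-pairs} only guarantees that the recombined pairs remain conjugal and generate the same group; it says nothing about distances. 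The symplectic Gaussian elimination needed to install $(o,b')$ as a basis pair multiplies the \emph{other} remaining pairs by factors drawn from the pair being installed, and by Lemma \ref{combinations-can't-make-things-worse} such a product is only bounded \emph{below} by the minimum of its factors: an operator of distance $d_j>w(h)$ multiplied by one of distance $w(h)$ can easily drop to distance $w(h)$ (it does whenever the higher-distance factor commutes with the witnessing error). If that happens, the rearranged $\lst P^{**}$ is strictly worse than $\lst P^{*}$ at some position, so it is not an optimal extension of the enlarged prefix, and the induction breaks. A symmetric worry affects Case~1: the prefix itself is modified there (both members of $\lst P_k$ and second members of later committed pairs), so ``the optimal extension survives'' requires both correspondingly modifying $\lst P^{*}$ and ruling out that the committed first member is accidentally \emph{improved} by the multiplication, which would destroy the ordering invariant; Lemma \ref{lesser-operator-wins} only covers the strict-inequality case.

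This is precisely the difficulty the paper's machinery is built to absorb, by a different decomposition: instead of exchanging against a hypothetical optimum, the paper establishes a checkable \emph{certificate} of optimality (the unimprovable-set condition of Theorem \ref{theorem:optimality-condition}), proves its sufficiency by a counting argument on conjugal pairs via Gaussian elimination (Proposition \ref{bound-on-recombinations} and Lemmas \ref{single-pair-rearrangement}--\ref{elimination-to-create-subset}, which show one cannot manufacture more high-distance qubits than there are high-distance first-member generators), and then verifies in Proposition \ref{proposition:properties of the algorithm} --- using Lemmas \ref{lemma:move-it-over}, \ref{lemma:replacing element with product preserves unimprovability} and \ref{lemma:recombining extension elements preserves extension} --- that the algorithm's careful choice of $b'$ (arranged so that $h$ acts on the committed operator and on nothing else in the new active set) keeps the certificate intact at every step. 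To repair your argument you would need an analogue of that control: a proof that the recombination of $\lst P^{*}$ can always be carried out without degrading any other remaining qubit, which in substance amounts to re-deriving the unimprovability condition. The remainder of your proposal --- the monotonicity lemma, the monovariant $2|\set Q|+|\{j:\lst s_j=1\}|$, and the $O\paren{|\set C|^2+|\lst L|^2 d\,3^{d}\choose{N}{d}}$ accounting --- is essentially the paper's own reasoning and is fine.
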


The proof of this Theorem is rather technical and shall be split into several subsections.  First we shall prove the existence of a condition that suffices to prove that a choice of logical qubits is optimal.   Second we shall prove that the algorithm above constructs a choice satisfying this condition.  Third we shall prove that the running time of the algorithm has the claimed bound.  Finally we shall tie these results together to prove the Theorem above.
%@-node:gcross.20100422160536.1372:Introduce Theorem
%@-others
%@-node:gcross.20100422160536.1376:<< Main body >>
%@nl

%@+others
%@+node:gcross.20090511123440.3:Optimality condition
\subsubsection{Optimality condition}

\label{optimal-generators}

%@+others
%@+node:gcross.20100128212359.1294:Definition: Unimprovable set
How do we know that a choice of qubits is optimal?  Intuitively, it should be sufficient to prove that a choice of qubits is optimal if we can show that there is no way that we can recombine qubits in the choice to form one or more qubits that are more robust than their component factors --- that is, there is no way that any qubit can be ``improved'' by its involvement in such a product.  This condition is stated formally in the following definition of \emph{unimprovable sets}:

\begin{definition}
An \emph{unimprovable set with respect to $\set S$} is a set of Pauli operators, $\set O$, such that for any subset, $\set X\subseteq \set O$, we have that $(\om_{\set S}\circ\prod)(\set X) = \min_{x\in\set X}\om_{\set S}(x)$.  We say that an unimprovable set $\set O$ \emph{extends to $\lst Q$} if for all subsets $\set X \subseteq \set O\cup\set Q$ such that $\set X\cap \set O \ne \emptyset$ we have that $(\om_{\set S}\circ\prod)(\set X) \le \min_{x\in\set X\cap \set O}\om_{\set S}(x)$.
\end{definition}
%@-node:gcross.20100128212359.1294:Definition: Unimprovable set
%@+node:gcross.20100125183648.1309:Introduce Theorem
The following Theorem is the main result of this subsection that proves that this condition is indeed sufficient to show that a choice of logical qubits is optimal.

\begin{theorem}
\label{theorem:optimality-condition}
If $\lst P$ is a choice of $N$ logical qubits stabilized by $\set S$ such that $\{p_1(\lst P_i)\}_i$ is an unimprovable set with respect to $\set S$ that extends to $\set U(\lst P)$, then $\lst P$ is an optimal choice of qubits.
\end{theorem}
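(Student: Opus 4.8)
The plan is to prove optimality by reducing the pointwise comparison of the sorted difficulty sequences to a single ``threshold counting'' inequality, and then to establish that inequality using the symplectic structure of the Pauli group together with the extends hypothesis. Let $\lst P'$ be any competing choice of qubits stabilized by $\set S$ with $(\genfun\circ\set U)(\lst P')=(\genfun\circ\set U)(\lst P)=:\mathcal{G}$ (these have equal length $N$ since independent conjugal-pair sequences generating the same group have the same cardinality), and for a weight threshold $w$ let $N_w(\lst P)$ be the number of qubits with $m_{\set S}(\lst P_i)\ge w$, and similarly $N_w(\lst P')$. Both $\lst M_{\set S}(\lst P)$ and $\lst M_{\set S}(\lst P')$ are sorted (property 5 of a choice of qubits), so by the elementary fact that for increasing sequences of equal length $x_i\ge y_i$ for all $i$ is equivalent to $|\{i:x_i\ge w\}|\ge|\{i:y_i\ge w\}|$ for all $w$, it suffices to prove $N_w(\lst P)\ge N_w(\lst P')$ for every $w$.

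First I would set up the linear-algebra framework. Regarding $\mathcal{G}$ as a binary vector space under operator multiplication, the commutator pairing (with value $0$ for commuting and $1$ for anti-commuting operators) is a symplectic form, and since the pairs $\lst P_i=(a_i,b_i)$ are independent conjugal pairs (properties 1 and 3) the collection $\{a_i,b_i\}$ is a symplectic basis, so this form is nondegenerate on $\mathcal{G}$. Define $V_w:=\{l\in\mathcal{G}:\om_{\set S}(l)\ge w\}$. Because $\om_{\set S}(l)\ge w$ says exactly that $l$ commutes with every element of $\centralizer_\pauligroup(\set S)$ of weight strictly less than $w$, and the operators commuting with a fixed collection are closed under multiplication, $V_w$ is a subspace of $\mathcal{G}$. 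By property 4 we have $m_{\set S}(\lst P_i)=\om_{\set S}(a_i)\le\om_{\set S}(b_i)$, so a qubit of $\lst P$ has difficulty $\ge w$ iff $a_i\in V_w$, and then automatically $b_i\in V_w$; hence $N_w(\lst P)=|I_w|$ where $I_w:=\{i:a_i\in V_w\}$, and the full pair $(a_i,b_i)$ lies in $V_w$ precisely for $i\in I_w$.

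The heart of the argument is to pin down the structure of $V_w$ from the extends hypothesis. Writing a general $g\in\mathcal{G}$ in the symplectic basis as $g=\prod_{i\in A}a_i\prod_{j\in B}b_j$, the extends property applied to $\set X=\{a_i:i\in A\}\cup\{b_j:j\in B\}$ (whose intersection with $\set O=\{a_i\}$ is nonempty exactly when $A\ne\emptyset$) gives $\om_{\set S}(g)\le\min_{i\in A}\om_{\set S}(a_i)$; thus $g\in V_w$ with $A\ne\emptyset$ forces $A\subseteq I_w$. Factoring out the part supported on indices in $I_w$, which already lies in $V_w$, yields the decomposition $V_w=\genfun(\{a_i,b_i:i\in I_w\})\oplus W$, where $W\subseteq\genfun(\{b_j:j\notin I_w\})$ is isotropic (the $b_j$ commute) and orthogonal to the first summand (disjoint indices). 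Hence the radical of the form restricted to $V_w$ is exactly $W$, so the form has rank $2|I_w|$ on $V_w$. I note in passing that taking $A=\emptyset$ is never needed, and that extends applied to subsets of $\set O$ alone already reproduces the unimprovable equality (using the general bound $\om_{\set S}(\prod\set X)\ge\min_{x\in\set X}\om_{\set S}(x)$), so the unimprovable hypothesis is subsumed by extends.

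Finally I would assemble the bound. Any qubit of $\lst P'$ of difficulty $\ge w$ contributes, via property 4 and the conjugality of its pair, two vectors $a'_l,b'_l\in V_w$ spanning a hyperbolic plane, and distinct such qubits give mutually orthogonal planes; these $N_w(\lst P')$ planes therefore span a nondegenerate subspace of $V_w$ of dimension $2N_w(\lst P')$. A nondegenerate subspace meets the radical $W$ trivially, so it injects into $V_w/W$ and its dimension is at most the rank $2|I_w|$, giving $N_w(\lst P')\le|I_w|=N_w(\lst P)$. Since this holds for every $w$, the reduction in the first paragraph finishes the proof. The main obstacle is the structural claim about $V_w$: establishing the clean direct-sum decomposition and the resulting radical and rank computation is exactly where the extends hypothesis is indispensable, since it is what rules out ``mixed'' hyperbolic pairs (such as one built from $a_i+b_j$ with $i\in I_w$ and $j\notin I_w$) that a competitor could otherwise exploit to pack more than $|I_w|$ high-difficulty qubits into $V_w$.
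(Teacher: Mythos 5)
Your proof is correct, but it takes a genuinely different route from the paper's. The paper argues by contradiction: assuming a competing choice $\lst P'$ beats $\lst P$ at some index $k$, it forms the tail $\set C:=\{\lst P'_i: i\ge k\}$ and the set $\set A$ of high-difficulty first members of $\lst P$, uses the extends hypothesis to show every conjugal pair in $\genfun(\set C)$ must contain a generator from $\set A$, and then invokes a purely combinatorial counting result (Proposition \ref{bound-on-recombinations}, proved via an explicit Gaussian-elimination procedure on conjugal pairs through Lemmas \ref{single-pair-rearrangement}--\ref{elimination-to-create-subset}) to conclude $|\set C|\le|\set A|$, a contradiction. You instead reduce the pointwise comparison of the sorted sequences to a threshold count $N_w(\lst P)\ge N_w(\lst P')$, realize $\genfun(\set U(\lst P))$ as a symplectic $\mathbb{F}_2$-space with $\{a_i,b_i\}$ as a symplectic basis, show the threshold set $V_w$ is a subspace whose radical is contained in the span of the low-difficulty $b_j$'s (this is where the extends hypothesis enters, exactly paralleling the paper's use of it), and bound the competitor by the standard fact that a nondegenerate subspace cannot exceed the rank of the form. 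All the individual steps check out: the subspace property of $V_w$ follows from the centralizer characterization of $\om_{\set S}(l)\ge w$; the direct-sum decomposition and rank $2|I_w|$ follow from the extends bound together with disjointness of the pair members; and the competitor's high-difficulty pairs do span mutually orthogonal hyperbolic planes inside $V_w$. What your approach buys is brevity and a cleaner structural statement (the filtration of the logical group by $V_w$ and its rank), at the cost of importing the symplectic-form formalism; what the paper's approach buys is a self-contained, elimination-based argument in the same constructive idiom as the rest of its proofs, essentially replaying the rank bound by hand. The two arguments are morally equivalent --- Proposition \ref{bound-on-recombinations} is, in effect, the statement that a nondegenerate subspace injects into the quotient by the radical --- but yours is not the paper's proof.
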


\begin{remark}
The intuition behind the proof of this Theorem is that because the set of first members of pairs is unimprovable and extends to the set of all members of pairs, we know that no qubit can be ``improved'' by recombining it with one or more other qubits.  Thus, the only way one could construct a better choice would by forming $n+k$ independent qubits from products of $n$ qubits (where $k>0$), which intuitively should be impossible.  Thus, we conclude that it is not possible for there to be a choice of qubits generated by the same qubits in this choice that is ``better'' than this choice.

To assist us in proving this Theorem, we shall first prove a number of useful Lemmas and Propositions.  We start with a simple Lemma that proves that taking a product of operators results in an operator that is no ``worse'' (with respect to its robustness to errors) than the worst operator in the product.
\end{remark}
%@nonl
%@-node:gcross.20100125183648.1309:Introduce Theorem
%@+node:gcross.20100125183648.1306:Lemma: Combinations can't make things worse
\begin{lemma}
\label{combinations-can't-make-things-worse}
For any set of operators $\set O$, we have that $(\om_{\set S}\circ \prod)(\set O) \ge \text{min}_{o\in \set O}\,\om_{\set S}\paren{o}$.
\end{lemma}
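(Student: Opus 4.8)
The plan is to invoke the defining property of $\om_{\set S}$ directly: for any $l\in\centralizer_\pauligroup(\set S)$, the quantity $\om_{\set S}(l)$ is the minimum weight over all operators in $\centralizer_\pauligroup(\set S)$ that anti-commute with $l$. First I would write $p := \prod_{o\in\set O} o$ for the product, and let $e$ be a minimum-weight undetectable error acting on $p$, so that $e\in\centralizer_\pauligroup(\set S)$, $\{e,p\}=0$, and $w(e)=\om_{\set S}(p)=(\om_{\set S}\circ\prod)(\set O)$.

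The crux is a parity argument for Pauli operators. Since $e$ and every $o\in\set O$ lie in $\pauligroup$, any two of them either commute or anti-commute, and therefore $e$ anti-commutes with the product $p=\prod_{o\in\set O}o$ precisely when $e$ anti-commutes with an odd number of the factors $o$. Because $\{e,p\}=0$, that count is odd and in particular nonzero, so there must exist at least one operator $o^\ast\in\set O$ with $\{e,o^\ast\}=0$.

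I would then observe that this same $e$ is itself an undetectable error with respect to $\set S$ acting on $o^\ast$, since $e\in\centralizer_\pauligroup(\set S)$ and $\{e,o^\ast\}=0$. Hence $\om_{\set S}(o^\ast)\le w(e)$, as $\om_{\set S}(o^\ast)$ is the minimum weight over all such errors. Chaining the inequalities yields $\text{min}_{o\in\set O}\,\om_{\set S}\paren{o}\le\om_{\set S}(o^\ast)\le w(e)=(\om_{\set S}\circ\prod)(\set O)$, which is exactly the claimed bound.

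There is no serious obstacle in this argument; the only step requiring care is the parity observation, namely the standard fact that a Pauli operator anti-commutes with a product of Paulis if and only if it anti-commutes with an odd number of the individual factors. This is precisely what guarantees that the weight-minimizing error $e$ for the product $p$ cannot commute with all of the factors simultaneously, and so must anti-commute with at least one $o^\ast\in\set O$ on which it then acts as a (no-larger-weight) undetectable error.
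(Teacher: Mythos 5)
Your proof is correct and follows essentially the same route as the paper's (much terser) argument: a minimum-weight undetectable error on the product must anti-commute with an odd number of factors, hence acts on at least one $o\in\set O$, giving the bound. You have simply made the parity step and the inequality chain explicit.
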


\begin{proof}[Proof of Lemma]
Any undetectable error with respect to $\set S$ acting on $(\om_{\set S}\circ \prod)(\set O)$ must also act at least one of the operators in $\set O$ since otherwise it cannot anti-commute with the product. \end{proof}
%@-node:gcross.20100125183648.1306:Lemma: Combinations can't make things worse
%@+node:gcross.20100125183648.1304:Lemma: The lesser operator wins
\begin{remark}
In general, taking products of operators might result in an operator that is better than the worst operator in the product because errors will cancel each other out --- i.e., if two operators in the product anti-commute with an error then their product commutes with the error.  Thus, it is useful to state a condition under which we can be certain that this will not happen, so that the product is exactly as bad as the worst operator, which we do in the following Lemma.
\end{remark}

\begin{lemma}
\label{lesser-operator-wins}
Suppose we are given two operators $a,b\in\pauligroup$ such that $\om_{\set S}(a)<\om_{\set S}(b)$;  then $\om_{\set S}(a\cdot b) = \om_{\set S}(a)$.
\end{lemma}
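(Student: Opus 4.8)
The plan is to establish the two inequalities $\om_{\set S}(a\cdot b)\ge \om_{\set S}(a)$ and $\om_{\set S}(a\cdot b)\le \om_{\set S}(a)$ separately and then combine them. First I would dispatch the lower bound immediately from the preceding Lemma~\ref{combinations-can't-make-things-worse}: applying it to the set $\{a,b\}$ gives $\om_{\set S}(a\cdot b)\ge \min\{\om_{\set S}(a),\om_{\set S}(b)\}$, and since $\om_{\set S}(a)<\om_{\set S}(b)$ by hypothesis, this minimum is exactly $\om_{\set S}(a)$. Thus the lower bound is free, and all of the real content lies in the reverse inequality.

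To prove $\om_{\set S}(a\cdot b)\le \om_{\set S}(a)$, I would exhibit an explicit undetectable error acting on $a\cdot b$ whose weight is at most $\om_{\set S}(a)$. Let $e\in\set e_{\set S}(a)$ be a minimum-weight undetectable error acting on $a$, so that $e\in\centralizer_\pauligroup(\set S)$, $\{e,a\}=0$, and $w(e)=\om_{\set S}(a)$. The crux of the argument is the claim that $e$ must commute with $b$. I would argue this by contradiction: suppose instead that $\{e,b\}=0$. Then $e$ is an element of $\centralizer_\pauligroup(\set S)$ that anti-commutes with $b$, so by definition $e$ is an undetectable error with respect to $\set S$ acting on $b$, whence $\om_{\set S}(b)\le w(e)=\om_{\set S}(a)$. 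This contradicts the hypothesis $\om_{\set S}(a)<\om_{\set S}(b)$. I expect this to be the only nontrivial step, and it is precisely here that the strict inequality is essential.

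Having established $[e,b]=0$, I would invoke the elementary fact that a Pauli operator anti-commutes with a product exactly when it anti-commutes with an odd number of the factors: since $\{e,a\}=0$ and $[e,b]=0$, it follows that $\{e,a\cdot b\}=0$. As $e$ still lies in $\centralizer_\pauligroup(\set S)$, it is therefore an undetectable error with respect to $\set S$ acting on $a\cdot b$, so $\om_{\set S}(a\cdot b)\le w(e)=\om_{\set S}(a)$. Combining this with the lower bound from the first paragraph yields $\om_{\set S}(a\cdot b)=\om_{\set S}(a)$, completing the proof.
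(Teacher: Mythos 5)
Your proof is correct and follows essentially the same route as the paper's: the lower bound comes from Lemma~\ref{combinations-can't-make-things-worse}, and the upper bound comes from exhibiting a minimum-weight undetectable error on $a$ that must commute with $b$ (else $\om_{\set S}(b)\le\om_{\set S}(a)$, contradicting the strict inequality) and hence acts on $a\cdot b$. Your write-up is in fact slightly more explicit than the paper's, which asserts without elaboration that such an error ``acts on $a$ but not on $b$.''
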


\begin{proof}[Proof of Lemma]
Since $\om_{\set S}(a)<\om_{\set S}(b)$, there must be an undetectable error with respect to $\set S$ that acts on $a$ but not on $b$;  thus, it must anti-commute with and hence act on the product $a\cdot b$, so that $\om_{\set S}(a\cdot b)\le \om_{\set S}(a)$.  Since $\om_{\set S}(a\cdot b)\ge \om_{\set S}(a)$ by Lemma \ref{lesser-operator-wins}, we conclude that $\om_{\set S}(a\cdot b) = \om_{\set S}(a)$.
\end{proof}
%@nonl
%@-node:gcross.20100125183648.1304:Lemma: The lesser operator wins
%@+node:gcross.20100125183648.1319:Proposition: Bound on recombinations
\begin{remark}
Intuitively we should expect that it is not possible to take $n$ qubits and recombine them to form $n+k$ independent qubits where $k>0$.  To state this intuition in other terms, suppose we are given a set of conjugal pairs $\set C$ that are generated from some other set of conjugal pairs $\set D$.  We know that every pair in $\set C$ must have a member that includes a factor that is a first member of a pair in $\set D$ (since otherwise the members of the pair cannot anti-commute), so let $\set A$ be the set of first members of pairs in $\set D$.  Our intuition then tells us that $|\set C|\le |\set A|=|\set D|$.  The following Proposition states this fact formally:
\end{remark}
%@+node:gcross.20100125183648.1320:Statement
\begin{proposition}
\label{bound-on-recombinations}
Suppose we are given
\begin{enumerate}
\item sets of independent Pauli operators $\set Q$ and $\set S$;
\item a non-empty set of conjugal pairs, $\set C$, with respect to $\set U(\set C) \cup \set S$, such that $U(\set C)\subseteq \genfun(\set Q)$; and
\item a set $\set A$ of independent Pauli operators with the property that for any conjugal pair $X:=(a,b)$ such that $\{a,b\}\in\genfun(\set C)$, we must have that $\set G_{\set Q}(X) \cap \set A \ne \emptyset$.
\end{enumerate}
Then $|\set C|\le|\set A|$.
\end{proposition}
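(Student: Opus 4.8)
The plan is to recast the whole statement as a problem in symplectic linear algebra over $\mathbb{F}_2$. Working modulo phases, $\pauligroup$ on $N$ qubits is isomorphic as a group to the additive group $\mathbb{F}_2^{2N}$, with operator multiplication corresponding to vector addition, the identity to $0$, independence of operators to $\mathbb{F}_2$-linear independence, and $\genfun(\set X)$ to the linear span of $\set X$. Commutation is encoded by the standard symplectic form $\ip{\cdot}{\cdot}$: two operators commute iff $\ip{u}{v}=0$ and anti-commute iff $\ip{u}{v}=1$. Since $\set Q$ is independent it is a basis of $\genfun(\set Q)$, and for $x\in\genfun(\set Q)$ the set $\set G_{\set Q}(x)$ is exactly the support of $x$ in this basis. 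Because $\set G_{\set Q}(X)\subseteq\set Q$ always, the hypothesis on $\set A$ depends only on $\set A\cap\set Q$ while the conclusion only strengthens as $\set A$ shrinks, so I may assume $\set A\subseteq\set Q$.

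Next I would identify the symplectic geometry carried by $\set C$. Writing the pairs as $(a_i,b_i)$ for $i=1\dots|\set C|$, the fact that $\set C$ is a set of conjugal pairs with respect to $\set U(\set C)\cup\set S$ means precisely that $\ip{a_i}{b_j}=\delta_{ij}$ and $\ip{a_i}{a_j}=\ip{b_i}{b_j}=0$; i.e.\ the symplectic Gram matrix of $\set U(\set C)$ in the ordering $a_1,b_1,a_2,b_2,\dots$ is the standard (block-diagonal, hence invertible) symplectic matrix. An invertible Gram matrix forces linear independence (pairing a putative dependence $\sum c_iv_i=0$ with each basis vector yields $Gc=0$, and inverting $G$ kills all $c_i$), so $\set U(\set C)$ is a symplectic basis and $W:=\genfun(\set U(\set C))$ is a non-degenerate symplectic subspace of dimension $2|\set C|$.

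I would then introduce the coordinate projection $\pi:W\to\mathbb{F}_2^{\set A}$ recording the $\set A$-components of a vector in the basis $\set Q$, so that $\pi(w)=0$ exactly when $\set G_{\set Q}(w)\cap\set A=\emptyset$, and set $W_0:=\ker\pi=\{w\in W:\set G_{\set Q}(w)\cap\set A=\emptyset\}$. The key step is to show $W_0$ is isotropic. Suppose instead $u,v\in W_0$ anti-commute. Then $(u,v)$ is a conjugal pair (with respect to the set $\{u,v\}$) with $u,v\in W=\genfun(\set U(\set C))$, yet $(\set G_{\set Q}(u)\cup\set G_{\set Q}(v))\cap\set A=\emptyset$, contradicting the defining property of $\set A$ (hypothesis~3). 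Hence any two elements of $W_0$ commute, i.e.\ $W_0\subseteq W_0^{\perp}$ inside $W$.

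Finally I would conclude by dimension counting. Rank–nullity for $\pi$ gives $2|\set C|=\dim W=\dim W_0+\dim\pi(W)$. Since $\pi(W)\subseteq\mathbb{F}_2^{\set A}$ we have $\dim\pi(W)\le|\set A|$, and since $W_0$ is isotropic in the non-degenerate symplectic space $W$ of dimension $2|\set C|$ we get $\dim W_0\le|\set C|$ (the standard Lagrangian bound, valid over any field: isotropy gives $W_0\subseteq W_0^{\perp}$ and non-degeneracy gives $\dim W_0+\dim W_0^{\perp}=2|\set C|$). Combining, $2|\set C|\le|\set C|+|\set A|$, i.e.\ $|\set C|\le|\set A|$, as desired. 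I expect the only real obstacle to be expository: translating the group/commutation language cleanly into the $\mathbb{F}_2$ symplectic picture and correctly reading the mildly abused notation (interpreting $\{a,b\}\in\genfun(\set C)$ as $a,b\in\genfun(\set U(\set C))$ and $\set G_{\set Q}(X)$ as $\set G_{\set Q}(a)\cup\set G_{\set Q}(b)$), after which every step is a routine fact of symplectic linear algebra.
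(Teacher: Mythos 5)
Your proof is correct, but it takes a genuinely different route from the paper. The paper argues constructively: through a chain of lemmas (Lemmas \ref{single-pair-rearrangement}--\ref{elimination-to-create-subset}) it performs an analogue of Gaussian elimination on the conjugal pairs, repeatedly recombining them so that each selected generator $a\in\set A$ survives only in the first member of a single pair, and by induction extracts a subset $\set O\subseteq\set A$ with $|\set O|=|\set C|$, whence $|\set C|\le|\set A|$. You instead pass to the $\mathbb{F}_2$-symplectic picture: the conjugal-pair conditions make the Gram matrix of $\set U(\set C)$ the standard invertible symplectic matrix, so $W:=\genfun\paren{\set U(\set C)}$ is a non-degenerate symplectic subspace of dimension $2|\set C|$; hypothesis~3 says exactly that the kernel $W_0$ of the coordinate projection onto the $\set A$-components (after the harmless reduction to $\set A\subseteq\set Q$) contains no anti-commuting pair, i.e.\ is isotropic, so $\dim W_0\le|\set C|$; rank--nullity then gives $2|\set C|\le|\set C|+|\set A|$. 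Both arguments are sound, and your reading of the mildly abused hypothesis (that it quantifies over all anti-commuting pairs $a,b\in\genfun\paren{\set U(\set C)}$) matches how the paper itself verifies that hypothesis in the proof of Theorem \ref{theorem:optimality-condition}, where only anti-commutation of $a$ and $b$ is used. What the two approaches buy is different: yours is shorter, non-inductive, and turns the informal intuition (``$n$ qubits cannot be recombined into $n+k$ independent qubits'') into a one-line dimension count via the Lagrangian bound; the paper's elimination machinery is longer but constructive and is reused stylistically throughout the algorithmic development, which is presumably why the authors chose it.
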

\begin{remark}
The basic idea behind the proof of this Proposition is that an analogue of Gaussian elimination can be used on the conjugal pairs to eliminate the presence of members of $\set A$ from them;  when we are done with this process, we can see that unless $|\set C|\le |\set A|$ we would have eliminated \emph{all} members of $\set A$ from some of the qubits, which contradicts the assumptions of this Proposition.

The formal proof is somewhat technical and so we first introduce several Lemmas.  First we prove a small helper Lemma that shows that it is possible to take a conjugal pair in which a given generator appears and rearrange it so that the generator only appears in the first member of the pair.
\end{remark}
%@-node:gcross.20100125183648.1320:Statement
%@+node:gcross.20100125183648.1321:Single pair rearrangement.
\begin{lemma}
\label{single-pair-rearrangement}
Let $A:=(a,b)$ with $\{a,b\}\subseteq\genfun(\set Q)$ be a conjugal pair with respect to some set $\set S$, and $o$ be some Pauli operator such that $o\in\set G_{\set Q}(A)$.  Then there exists a pair $B:=(c,d)$ such that
\begin{enumerate}
\item $\{c,d\}\subseteq\genfun(\set Q)$;
\item $o\in \set G_{\set Q}(c)$;
\item $o\notin \set G_{\set Q}(d)$;
\item $(c,d)$ is a conjugal pair with respect to $\paren{\set S\backslash \{a,b\}}\cup\{c,d\}$; and
\item $(\genfun\circ\set U)(B) = (\genfun\circ\set U)(A)$.
\end{enumerate}
\end{lemma}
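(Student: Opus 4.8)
The plan is to prove this by a short case analysis on how the generator $o$ is distributed between the two members of the pair $A=(a,b)$. First I would record the one algebraic fact that does all the work: because $\set Q$ is independent and we work modulo phases, the decomposition map $\set G_{\set Q}$ is a group isomorphism from $\genfun(\set Q)$ (under operator multiplication) onto the power set of $\set Q$ under symmetric difference, so that $\set G_{\set Q}(x\cdot y)=\set G_{\set Q}(x)\,\triangle\,\set G_{\set Q}(y)$ for all $x,y\in\genfun(\set Q)$. I read $\set G_{\set Q}(A)$ as $\set G_{\set Q}(a)\cup\set G_{\set Q}(b)$ (consistent with its use in Proposition~\ref{bound-on-recombinations}), so the hypothesis $o\in\set G_{\set Q}(A)$ says that $o$ occurs in the decomposition of $a$, of $b$, or of both.

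In the two easy cases the desired pair already exists among $\{(a,b),(b,a)\}$. If $o\in\set G_{\set Q}(a)$ but $o\notin\set G_{\set Q}(b)$, take $(c,d):=(a,b)$; if $o\in\set G_{\set Q}(b)$ but $o\notin\set G_{\set Q}(a)$, take $(c,d):=(b,a)$, using that swapping the members of a conjugal pair leaves it a conjugal pair and leaves $(\genfun\circ\set U)$ of the pair unchanged. The only substantive case is $o\in\set G_{\set Q}(a)\cap\set G_{\set Q}(b)$, where I would set $c:=a$ and $d:=a\cdot b$. Properties~1--3 and~5 are then immediate from the symmetric-difference fact: $c,d\in\genfun(\set Q)$; $o\in\set G_{\set Q}(c)=\set G_{\set Q}(a)$; $o\notin\set G_{\set Q}(a\cdot b)=\set G_{\set Q}(a)\,\triangle\,\set G_{\set Q}(b)$ since $o$ lies in both of these sets; and $\genfun(\set U(B))=\genfun(\set U(A))$ because $b=a\cdot(a\cdot b)$ recovers $b$ (every operator squares to the identity).

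The main obstacle, and the only place real checking is needed, is property~4: that $(a,\,a\cdot b)$ is conjugal with respect to $\paren{\set S\backslash\{a,b\}}\cup\{a,a\cdot b\}$. Here I would use that $a$ and $b$ each commute with every element of $\set S\backslash\{a,b\}$ (since in the original pair neither operator's conjugal partner lies outside $\{a,b\}$), so both $a$ and the product $a\cdot b$ commute with all of $\set S\backslash\{a,b\}$. It then remains to check that $a$ and $a\cdot b$ anti-commute, which follows from $\{a,b\}=0$ together with $a^2=I$: one computes $a\cdot(a\cdot b)=b$ while $(a\cdot b)\cdot a=-b$, so the two operators anti-commute and are thus exactly each other's (and only) conjugal partners in the new set. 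This establishes all five properties in the combining case, and hence completes the Lemma.
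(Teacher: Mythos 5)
Your proof is correct and follows essentially the same route as the paper's: the identical three-way case split on whether $o$ lies in $\set G_{\set Q}(a)$, $\set G_{\set Q}(b)$, or both, with $(c,d):=(a,a\cdot b)$ in the last case (the paper writes $b\cdot a$, which is the same element of $\pauligroup$ modulo phases). Your added detail on the symmetric-difference structure of $\set G_{\set Q}$ and the explicit anti-commutation check only make explicit what the paper leaves implicit.
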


\begin{proof}
Let
$$
(c,d):=
\begin{cases}
\paren{a,b} & o\in \set G_{\set Q}(a), o\notin \set G_{\set Q}(b) \\
\paren{b,a} &  o\notin \set G_{\set Q}(a), o\in \set G_{\set Q}(b) \\
\paren{a,b\cdot a} & o\in \set G_{\set Q}(a), o\in \set G_{\set Q}(b) \\
\end{cases}
$$
Note that in any of the above cases, properties 1-3 are satisfied by construction, property 4 is satisfied because $c$ and $d$ are products of $a$ and $b$ which commute with every element in $\set S\backslash \{a,b\}$ and $\{c,d\}=0$, and finally property 5 is satisfied because $\{a,b\}\subseteq \genfun\paren{\{c,d\}}$ and $\{c,d\}\subseteq \genfun\paren{\{a,b\}}$.
\end{proof}
%@nonl
%@-node:gcross.20100125183648.1321:Single pair rearrangement.
%@+node:gcross.20100125183648.1322:Directed Gaussian elimination
\begin{remark}
This next Lemma contains the heart of this Proposition by introducing an analogue to a directed Gaussian elimination procedure.  Specifically, it shows that if we have a generator $a\in\set A$ that appears in one or more conjugal pairs, then we can take products of the conjugal pairs to eliminate it from appearing anywhere except in the first member of a single pair.
\end{remark}

\begin{lemma}
\label{directed-gaussian-elimination-of-logicals}
In the context of Proposition \ref{bound-on-recombinations}, suppose we are given an element $a\in \set A$ with the property that there exists a pair $Y''\in\set C$ such that $a\in\set G(Y'')$. Then there exists a conjugal pair $Y$ and set of conjugal pairs $\set D$, all with respect to $\set U\paren{\{Y\}\cup \set D} \cup \set S$, such that
\begin{enumerate}
\item $|\set D| = |\set C|-1$
\item $(\genfun\circ\set U)(\{Y\}\cup \set D)=(\genfun\circ\set U)(\set C)$;
\item $a\in (\set G_{\set Q}\circ p_1)(Y)$ but $a\notin (\set G_{\set Q}\circ p_2)(Y)$;
\item $a\notin \bigcup_{D\in \set D} \set G_{\set Q}(D)$; and
\item for every conjugal pair $O\in\set D$, we have that $\set G_{\set Q}(O) \cap \set A\backslash \{a\} \ne \emptyset$.
\end{enumerate}
\end{lemma}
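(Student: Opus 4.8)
The plan is to realize the claim as a directed Gaussian elimination in which the generator $a$ plays the role of the pivot variable, using Lemma \ref{single-pair-rearrangement} to confine $a$ to first members and Lemma \ref{combining-pairs} to carry out the eliminations without destroying the conjugal structure. First I would isolate a pivot: by hypothesis there is a pair $Y''\in\set C$ with $a\in\set G_{\set Q}(Y'')$, so applying Lemma \ref{single-pair-rearrangement} (with its ``$o$'' taken to be $a$ and its ambient set taken to be $\set U(\set C)\cup\set S$) yields a conjugal pair $Y=(c,d)$ generating the same subgroup as $Y''$, with $a\in\set G_{\set Q}(c)$ and $a\notin\set G_{\set Q}(d)$. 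This $Y$ is the distinguished pair and its first member $c$ is the pivot operator; property 3 is secured here, and afterwards I only need to check it is preserved.

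Next I would sweep through the remaining pairs $\set C\backslash\{Y''\}$ one at a time to clear $a$ from each. For a pair $X$, I first apply Lemma \ref{single-pair-rearrangement} again to rearrange it so that $a$ appears only in its first member $e$ and not in its second member $f$; then, if $a\in\set G_{\set Q}(e)$, I invoke Lemma \ref{combining-pairs} with $Q:=X$ and $R$ the current pivot, replacing $X$ by $(e\cdot c,f)$ while simultaneously updating the pivot's second member to $d\cdot f$. Because $a$ lies in both $e$ and $c$ it cancels from the new first member, and because it lies in neither $f$ nor (by invariant) the pivot's second member it does not reappear. The crucial point --- and what I expect to be the main obstacle --- is that one cannot simply multiply $e$ by $c$ in isolation: since $c$ anticommutes with $d$, the naive product $e\cdot c$ would anticommute with $d$ and break the conjugal structure. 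Lemma \ref{combining-pairs} supplies exactly the compensating move ($d\mapsto d\cdot f$) that keeps every pair conjugal with respect to the updated set, while the prior rearrangement via Lemma \ref{single-pair-rearrangement} guarantees that the compensating factor $f$ is free of $a$, so the pivot's second member stays $a$-free throughout the sweep. Since these combinations never alter the pivot's first member $c$, it remains a valid pivot, and at the end $a$ has been eliminated from every member of every pair except $c$, giving property 4.

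Throughout the construction $|\set D|=|\set C|-1$ holds automatically, since each step only rewrites the members of existing pairs rather than creating or destroying them; this gives property 1. Likewise $(\genfun\circ\set U)(\{Y\}\cup\set D)=(\genfun\circ\set U)(\set C)$ follows because every rewrite preserves the generated group, by property 5 of Lemma \ref{single-pair-rearrangement} and property 2 of Lemma \ref{combining-pairs}; this gives property 2. Property 5 I would deduce at the very end: each final pair $O\in\set D$ is a conjugal pair with respect to $\set U(\{Y\}\cup\set D)\cup\set S$ whose members lie in $\genfun(\set U(\set C))$, so hypothesis 3 of Proposition \ref{bound-on-recombinations} forces $\set G_{\set Q}(O)\cap\set A\ne\emptyset$; since property 4 has already removed $a$ from $\set G_{\set Q}(O)$, this nonempty intersection must in fact meet $\set A\backslash\{a\}$.

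To make the sweep rigorous I would organize it as an induction on $|\set C|$, peeling off and clearing one pair at a time while maintaining two invariants: that all the current pairs together form a conjugal system with respect to the ambient set $\set U(\,\cdot\,)\cup\set S$, and that the pivot's second member never contains $a$. The first invariant is exactly what Lemma \ref{combining-pairs} (property 1) and Lemma \ref{single-pair-rearrangement} (property 4) propagate at each step, and the second is what allows the compensating factors to accumulate harmlessly; commutation with $\set S$ itself is never in doubt because every operator produced is a product of operators that already commute with $\set S$.
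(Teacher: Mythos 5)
Your proposal is correct and follows essentially the same route as the paper's proof: both isolate a pivot pair via Lemma \ref{single-pair-rearrangement}, then inductively clear $a$ from each remaining pair by multiplying its first member by the pivot's first member and compensating in the pivot's second member exactly as in Lemma \ref{combining-pairs}, with property 5 recovered at the end from hypothesis 3 of Proposition \ref{bound-on-recombinations}. The paper phrases the sweep as a recursion on $|\set C|$ rather than an invariant-maintaining loop, but the two are the same argument.
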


\begin{proof}
Proof by induction on the size of $\set C$.  If $\set C=\{Y''\}$, then apply Lemma \ref{single-pair-rearrangement} letting $o:=a$, $A:=Y''$, and $Y:=B$, and we see that we have a pair $Y$ which is conjugal with respect to $\{Y\}\cup\set S$ and also such that $(\genfun\circ\set U)(\{Y\})=(\genfun\circ\set U)(\{Y''\})$.  Let $\set D:=\emptyset$, and we see that the remaining properties hold trivially, so we are done.

Now let us assume that this lemma has been proven for the case where $|\set C|=n-1$, and we are given a set $\set C$ with $n$ elements.  Take any $X''\in\set C\backslash\{Y''\}$, and apply the lemma to $\set C\backslash \{X''\}$, $\set A$, $a$ and $Y''$ to obtain the objects $Y'$ and $\set D'$ described in this Lemma without the primes.  If $a\notin\set G(X'')$, then by the assumptions of the Lemma we know that $\set G(X'') \cap \set A\backslash \{a\} \ne \emptyset$, so let $Y:=Y'$ and $\set D:=\set D'\cup\{X''\}$, and we are done.

Otherwise, apply Lemma \ref{single-pair-rearrangement}, setting $A:=X''$, $o:=a$, and $X':=B$, and let $X:=\paren{p_1(X')\cdot p_1(Y'),p_2(X')}$ and $Y:=\paren{p_1(Y'),p_2(Y')\cdot p_2(X')}$.  Note $X'$ and $Y'$ are conjugal pairs with respect to $\set U\paren{\{X',Y'\}\cup\set D}\cup\set S$ and $\{X',Y'\}\cap \paren{\set D\cup\set S}=\emptyset$, and so by Lemma \ref{combining-pairs} we conclude that $X$ and $Y$ are conjugal pairs with respect to $\set U\paren{\{X,Y\}\cup\set D}\cup\set S$, and also that $(\genfun\circ\set U)(\{X,Y\})=(\genfun\circ\set U)(\{X',Y'\})$;  since $X'$ was obtained from applying Lemma \ref{single-pair-rearrangement} to $X''$ and $a$, we furthermore conclude that $(\genfun\circ\set U)(\{X,Y\})=(\genfun\circ\set U)(\{X'',Y'\})$.  Since $X'$ was obtained as a result of Lemma \ref{single-pair-rearrangement}, we know that $a\in (\set G \circ p_1)(X')$ but $a\notin (\set G_{\set Q} \circ p_2)(X')$, and we also know from the earlier recursive application of this Lemma that $a\in (\set G \circ p_1)(Y')$ but $a\notin (\set G_{\set Q} \circ p_2)(Y')$.  Thus, we observe that by construction, $a\in (\set G_{\set Q} \circ p_1)(Y)$, and $a\notin \paren{(\set G_{\set Q} \circ p_2)(Y) \cup \set G_{\set Q}(X)}$.

Let $\set D:=\{X\}\cup\set D'$, and observe that $|\set D|=|\set D'|+1=|\set C\backslash \{X''\}|-1+1=|\set C|-1$, and also that $(\genfun\circ\set U)(\{Y\}\cup\set D)=(\genfun\circ\set U)(\{X,Y\}\cup\set D')=(\genfun\circ\set U)(\{X''\}\cup(\{Y'\}\cup\set D'))=(\genfun\circ\set U)(\{X''\}\cup(\set C'\backslash\{X''\}))=(\genfun\circ\set U)(\set C)$.  Furthermore, by the earlier recursive application of this Lemma we know that $a\notin\set G_{\set Q}(O)$ for every $O\in \set D'$, so since we have also established that $a\notin \set G_{\set Q}(X)$, we conclude that $a\notin\set G_{\set Q}(O)$ for every $O\in \set D$;  since also know that every such $O$ must also satisfy $\set G_{\set Q}(O)\cap \set A \ne \emptyset$, we conclude that every such $O$ satisfies $\set G_{\set Q}(O) \cap A\backslash\{a\}\ne\emptyset$.
\end{proof}
%@-node:gcross.20100125183648.1322:Directed Gaussian elimination
%@+node:gcross.20100125183648.1323:Undirected Gaussian elimination
\begin{remark}
This next Lemma provides the small but important result that we can always find a generator $a$ that appears somewhere in the conjugal pairs;  this has the consequence that we can now perform \emph{undirected} Gaussian elimination (in contrast to the \emph{directed} Gaussian elimination procedure described in the previous Lemma) by picking an arbitrary generator to eliminate rather than specifying a particular generator up-front.
\end{remark}

\begin{lemma}
\label{undirected-gaussian-elimination-of-logicals}
In the context of Proposition \ref{bound-on-recombinations}, there exists a Pauli operator $a$ satisfying the assumption of Lemma \ref{directed-gaussian-elimination-of-logicals}.
\end{lemma}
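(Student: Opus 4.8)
The plan is to derive this existence statement as an immediate consequence of the third hypothesis of Proposition \ref{bound-on-recombinations}, applied to any single pair drawn from $\set C$.

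First I would observe that $\set C$ is non-empty (this is part of the second hypothesis of the Proposition), so I can fix an arbitrary conjugal pair $X := (a,b) \in \set C$. Because $X \in \set C$, its two members $a$ and $b$ lie in $\set U(\set C)$; hence by the second hypothesis they lie in $\genfun(\set Q)$, and trivially they lie in $\genfun(\set C)$ as well, so that $\{a,b\} \subseteq \genfun(\set C)$. Moreover, since $X$ is a conjugal pair with respect to $\set U(\set C) \cup \set S$, its members anti-commute with each other and commute with every operator of $\set S$, so $X$ is a genuine conjugal pair of exactly the kind to which the third hypothesis applies.

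Next I would invoke that third hypothesis of Proposition \ref{bound-on-recombinations}: since $X$ is a conjugal pair with $\{a,b\} \subseteq \genfun(\set C)$, we must have $\set G_{\set Q}(X) \cap \set A \ne \emptyset$. Choosing any $a' \in \set G_{\set Q}(X) \cap \set A$ then produces an element of $\set A$ that appears in the $\set Q$-decomposition of one of the members of the pair $X \in \set C$. Setting $Y'' := X$, this is precisely the hypothesis required by Lemma \ref{directed-gaussian-elimination-of-logicals}, so we are done.

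There is no real obstacle in this argument; it is essentially a one-line application of a hypothesis that was built into the Proposition for exactly this purpose. The only points to verify carefully are notational: that membership of a pair in $\set C$ automatically certifies both that its members lie in $\genfun(\set C)$ and that it qualifies as a conjugal pair in the sense used by the third hypothesis, and that the overloaded notation $\set G_{\set Q}(X)$ for a pair $X=(a,b)$ is read as the union $\set G_{\set Q}(a)\cup\set G_{\set Q}(b)$, so that ``$a'$ appears in $X$'' means it appears in the decomposition of at least one member.
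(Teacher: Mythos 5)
Your proposal is correct and follows essentially the same route as the paper's proof: pick an arbitrary pair in $\set C$, apply the third hypothesis of Proposition \ref{bound-on-recombinations} to conclude that its $\set G_{\set Q}$-decomposition meets $\set A$, and hand that element together with the pair to Lemma \ref{directed-gaussian-elimination-of-logicals}. The extra notational checks you flag are sensible but do not change the argument.
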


\begin{proof}
Take any pair $Y'\in\set C$.  By the assumptions of Proposition \ref{bound-on-recombinations}, we know that $\set G(Y')\cap \set A \ne \emptyset$, which implies that there exists an element $a\in A$ such that either $a\in (\set G_{\set Q}\circ p_1)(Y')$ or $a\in (\set G_{\set Q}\circ p_2)(Y')$.  The existence of $Y$ and $\set D$ then follow immediately from the application of Lemma \ref{directed-gaussian-elimination-of-logicals}.
\end{proof}
%@nonl
%@-node:gcross.20100125183648.1323:Undirected Gaussian elimination
%@+node:gcross.20100125183648.1324:Elimination to create subset
\begin{remark}
This final Lemma (inside the proof of Proposition \ref{bound-on-recombinations}) shows using Gaussian elimination that there must be a number of generators from $\set A$ present in the pairs in $\set C$ that is equal to the size of $\set C$, since otherwise we could recombine the pairs in $\set C$ to obtain a pair that includes no generator from $\set A$, contradicting the assumptions of Proposition \ref{bound-on-recombinations}.
\end{remark}
\begin{lemma}
\label{elimination-to-create-subset}
In the context of Proposition \ref{bound-on-recombinations}, there exists a set of conjugal pairs, $\set X$, with respect to $\set X\cup\set S$, and a subset of operators, $\set O\subseteq \set A$, such that
\begin{enumerate}
\item $|\set X|=|\set O|=|\set C|$;
\item $(\genfun\circ\set U)(\set X)=(\genfun\circ\set U)(\set C)$; and
\item for every $o\in\set O$, there is a conjugal pair $Y\in\set X$ such that $o\in\set G_{\set Q}(Y)$;
\end{enumerate}
\end{lemma}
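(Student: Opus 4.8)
The plan is to prove this by induction on $|\set C|$, stripping away one conjugal pair and one element of $\set A$ at each step using the elimination machinery of Lemmas \ref{directed-gaussian-elimination-of-logicals} and \ref{undirected-gaussian-elimination-of-logicals}. This Lemma is precisely what finishes Proposition \ref{bound-on-recombinations}: once we produce $\set O\subseteq\set A$ with $|\set O|=|\set C|$, we immediately conclude $|\set C|=|\set O|\le|\set A|$.

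For the base case $|\set C|=1$, write $\set C=\{Y\}$. Lemma \ref{undirected-gaussian-elimination-of-logicals} supplies a generator $a\in\set A$ with $a\in\set G_{\set Q}(Y)$, and taking $\set X:=\{Y\}$, $\set O:=\{a\}$ satisfies all three conclusions trivially. For the inductive step, I would first invoke Lemma \ref{undirected-gaussian-elimination-of-logicals} to obtain a generator $a\in\set A$ appearing in some pair of $\set C$, then feed it to Lemma \ref{directed-gaussian-elimination-of-logicals} to isolate a single pair $Y$ with $a\in(\set G_{\set Q}\circ p_1)(Y)$ together with a set $\set D$ of $|\set C|-1$ conjugal pairs from which $a$ has been purged entirely (conclusion 4 of that Lemma), satisfying $(\genfun\circ\set U)\paren{\{Y\}\cup\set D}=(\genfun\circ\set U)(\set C)$. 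Since $|\set C|\ge 2$ here, $\set D$ is non-empty, and the inductive hypothesis will be applied to the data $\paren{\set D,\set A\backslash\{a\}}$.

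The main obstacle is verifying that $\paren{\set D,\set A\backslash\{a\}}$ still satisfies the hypotheses of Proposition \ref{bound-on-recombinations}, in particular the covering hypothesis that every conjugal pair $(c,d)$ with $\{c,d\}\subseteq\genfun(\set U(\set D))$ has $\paren{\set G_{\set Q}(c)\cup\set G_{\set Q}(d)}\cap\paren{\set A\backslash\{a\}}\ne\emptyset$. The plan is: because $\genfun(\set U(\set D))\subseteq\genfun(\set U(\set C))$, such a pair is also generated by $\set C$, so the original covering hypothesis yields some $a'\in\set A$ in its $\set Q$-decomposition, and it remains only to rule out $a'=a$. This is exactly where conclusion 4 of Lemma \ref{directed-gaussian-elimination-of-logicals} is used: since $\set Q$ is independent, the decomposition map $\set G_{\set Q}$ is additive under products (symmetric difference of subsets of $\set Q$), so because $a$ is absent from $\set G_{\set Q}(D)$ for every $D\in\set D$, it is absent from the decomposition of any product of members of pairs in $\set D$; hence $a\notin\set G_{\set Q}(c)\cup\set G_{\set Q}(d)$ and thus $a'\in\set A\backslash\{a\}$.

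The inductive hypothesis then yields $\set X'$ and $\set O'\subseteq\set A\backslash\{a\}$ for $\set D$, and I would set $\set X:=\{Y\}\cup\set X'$ and $\set O:=\{a\}\cup\set O'$. Checking the three conclusions is routine: the counts give $|\set X|=|\set O|=|\set C|$ once one notes $Y\notin\set X'$ and $a\notin\set O'$ (members of $\set X'$ are products of $\set U(\set D)$, which do not involve $a$); conclusion 2 follows by adjoining $\set U(Y)$ to both sides of $(\genfun\circ\set U)(\set X')=(\genfun\circ\set U)(\set D)$ and then applying conclusion 2 of Lemma \ref{directed-gaussian-elimination-of-logicals}; and conclusion 3 holds by construction, with $a$ covered by $Y$ and each $o\in\set O'$ covered by its pair in $\set X'$. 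The one remaining point requiring care is that $\set X$ is genuinely a set of conjugal pairs with respect to $\set X\cup\set S$: each member of $Y$ commutes with all of $\set U(\set D)\cup\set S$, hence with every member of $\set X'$ (these being products of $\set U(\set D)$) and with $\set S$, so adjoining $Y$ to $\set X'$ introduces no new anticommutation and leaves every pair conjugal.
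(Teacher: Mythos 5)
Your proposal is correct and follows essentially the same route as the paper's proof: induction on $|\set C|$, peeling off a pair $Y$ and a generator $a$ via the undirected/directed Gaussian elimination lemmas, recursing on $\paren{\set D,\set A\backslash\{a\}}$, and assembling $\set X:=\{Y\}\cup\set X'$, $\set O:=\{a\}\cup\set O'$. The only difference is that you explicitly verify that the recursive call's covering hypothesis still holds for $\paren{\set D,\set A\backslash\{a\}}$ (using the additivity of $\set G_{\set Q}$ under products), a step the paper leaves implicit; this is a welcome extra check rather than a change of approach.
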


\begin{proof}
Proof by induction.  If $\set C$ is empty, then the empty sets trivially satisfy this Lemma.

Now suppose that we have proven this Lemma for $|\set C|=N-1$, and assume we have been given sets $\set C$ and $\set A$ such that $|\set C|=N$.  Applying Lemma \ref{undirected-gaussian-elimination-of-logicals} to $\set C$ and $\set A$ we obtain the conjugal pair $Y$,  the set of conjugal pairs $\set D$, and the element $a$ described in the conclusions of that Lemma.  Apply this Lemma recursively to the respective sets $\set D$ and $\set A\backslash\{a\}$, we obtain the sets $\set X'$ and $\set O'$ described (without the primes) in this Lemma; let $\set X := \set X'\cup\{Y\}$ and $\set O:=\set O'\cup\{a\}$.  Note that $\set X$ is a set of conjugal pairs with respect to $\set X\cup\set S$ since $\set X'$ is a set of conjugal pairs with respect to $\set X'\cup\set S$, and we know that the operators in $Y$ commute with every operator in every pair in $\set X'$ since they commute with every operator in $(\genfun\circ\set U)(\set D)=(\genfun\circ\set U)(\set X')$.

First, observe that $|\set O|=|\set O'|+1$ since $a\notin \set O'$.  Furthermore, $a\notin \bigcup_{x\in \set U(\set X')} \set G(x)$ since $a\notin \bigcup_{x\in \set U(\set D)} \set G_{\set Q}(x)$ by Lemma \ref{undirected-gaussian-elimination-of-logicals} and $(\genfun\circ\set U)(\set D)=(\genfun\circ\set U)(\set X')$ by recursive application of this Corollary.  Thus, $|\set X|=|\set X'|+1$ since $Y\notin X'$ as $a\in\set G(Y)$, and $|\set X|=|\set O|=|\set D|+1=|\set C|-1+1=|\set C|$.

Second, observe that since $(\genfun\circ\set U)(\set X')=(\genfun\circ\set U)(\set D)=(\genfun\circ\set U)(\set C)$ by recursive application of this Lemma and $(\genfun\circ\set U)(\{Y\}\cup\set D)=(\genfun\circ\set U)(\set C)$ by Lemma \ref{undirected-gaussian-elimination-of-logicals}, we conclude that $(\genfun\circ\set U)(\set X) = (\genfun\circ\set U)(\{Y\}\cup\set X') = (\genfun\circ\set U)(\{Y\}\cup\set D) = (\genfun\circ\set U)(\set C)$.

Finally, observe that for every $o\in\set O$ we either have that $o=a$, in which case $Y\in\set X$ and $o\in\set G_{\set Q}(Y)$, or $o\in \set A\backslash\{a\}$, in which case by recursive application of this Lemma we know that there is an operator $Z\in \set X'\subseteq \set X$ such that $o\in\set G_{\set Q}(Z)$.
\end{proof}
%@nonl
%@-node:gcross.20100125183648.1324:Elimination to create subset
%@+node:gcross.20100125183648.1325:Proof
\begin{remark}
With the preceding Lemmas having performed the heavy lifting, the proof of Proposition \ref{bound-on-recombinations} is quite simple.
\end{remark}

\begin{proof}[Proof of Proposition \ref{bound-on-recombinations}]
Proof by contradiction.  By Lemma \ref{elimination-to-create-subset}, there would have to exist a subset $\set O\subseteq\set A$ such that $|\set C|=|\set O|>|\set A|$, which is impossible.
\end{proof}
%@-node:gcross.20100125183648.1325:Proof
%@-node:gcross.20100125183648.1319:Proposition: Bound on recombinations
%@+node:gcross.20100125183648.1326:Proof of Theorem
\begin{remark}
With the preceding Lemmas and Propositions, we now have all of the tools that we need to prove Theorem \ref{theorem:optimality-condition}.  Again, the idea behind this proof is that because the first members of pairs in the choice are contained in an unimprovable set, one cannot take products of the qubits in the choice in order to improve them;  thus, the only way one could construct a better choice would by forming $n+k$ independent qubits from products of $n$ qubits (where $k>0$), which is disallowed by the result of Proposition \ref{bound-on-recombinations}.  Hence, there can be no better choice.
\end{remark}

\begin{proof}[Proof of Theorem \ref{theorem:optimality-condition}]
Proof by contradiction.  Let $\lst P'$ be some choice of qubits stabilized by $\set S$ such that $(\genfun\circ\set U)(\lst P)=(\genfun\circ\set U)(\lst P')$ (which automatically implies that $|\lst P|=|\lst P'|$) and there exists some integer $k$ such that $M_{\set S}(\lst P')_k > M_{\set S}(\lst P)_k$;  in particular, let $k$ be the smallest such integer, and let $\set C:=\{\lst P'_i : i \ge k\}$.  Let $l$ be the smallest integer such that $\lst M_{\set S}(\lst P)_l\ge \lst M_{\set S}(\lst P')_k$ or $|\lst P|+1$ if there is no such integer, and let $\set A := \{p_1(\lst P_i) : i \ge l\}$; note that since $\lst M_{\set S}(\lst P')_k > \lst M_{\set S}(\lst P)_k$ we must have $l>k$, and hence $|\set C| > |\set A|$.

Take any conjugal pair $O:=(a,b)$ such that $\{a,b\}\in\genfun(\set C)$.  Since $a$ and $b$ anti-commute, it must be the case that $\{p_1(\lst P_i)\}_i\cap \set G_{\set U(\lst P)}(O)\ne\emptyset$, because if every operator in $\set G_{\set U(\lst P)}(O)$ were the second member of a pair in $\lst P$ then $a$ and $b$ would commute.  Let $c$ be a choice of $a$ or $b$ such that $\{p_1(\lst P_i)\}_i\cap \set G_{\set U(\lst P)}(c)\ne\emptyset$.  By Lemma \ref{combinations-can't-make-things-worse} we know that $\lst M_{\set S}(\lst P')_k\le\om_{\set S}(c)$ since $c\in\genfun(\set C)$.  By the assumption of this Theorem that $\{p_1(\lst P_i)\}_i$ is an unimprovable set that extends to $\set U(\lst P)$, we know that $\lst M_{\set S}(\lst P')_k \le \om_{\set S}(c)\le\min \{\om_{\set S}(x):x\in\{p_1(\lst P_i)\}_i\cap \set G_{\set U(\lst P)}(c)\}$.  From these bounds we conclude that $\{p_1(\lst P_i)\}_i\cap \set G_{\set U(\lst P)}(c)\subseteq \set A$, and since $\{p_1(\lst P_i)\}_i\cap \set G_{\set U(\lst P)}(c)\ne\emptyset$ we see therefore that $\set G_{\set U(\lst P)}(c)\cap\set A\ne\emptyset$ and so $\set G_{\set U(\lst P)}(O)\cap\set A\ne\emptyset$.

We have now demonstrated that for every pair $O:=(a,b)$ such that $\{a,b\}\in\set \genfun(\set C)$, we must have $\set G(O)\cap\set A \ne\emptyset$.  Observe that this means that sets $\set C$ and $\set A$ match the descriptions in Proposition \ref{bound-on-recombinations} (letting set $\set Q:=\{\set P_i\}_i$), and thus we see that it is impossible for $|\set C|>|\set A|$, and so we have a contradiction.  We thus conclude that no such choice $\lst P'$ can exist.
\end{proof}
%@nonl
%@-node:gcross.20100125183648.1326:Proof of Theorem
%@-others
%@-node:gcross.20090511123440.3:Optimality condition
%@+node:gcross.20100422160536.1375:Correctness of the algorithm
\subsubsection{Correctness of the algorithm}

%@+others
%@+node:gcross.20100422160536.1420:Introduce Theorem
We now prove that this algorithm is correct --- that is, that it terminates and outputs an optimal choice of logical qubits.  We do so by proving the following theorem, which is the main result of this section.

\begin{theorem}
\label{theorem:algorithm is correct}
Given a set of commuting Pauli operators $\set S$ and a set of pairs $\set L\subseteq\centralizer_\pauligroup(\set S)$ conjugal in relation to $\set U(\set L)\cup\set S$, the sequence $\optimizer(\set S,\set L)$ is finite and if $(\set Q,\lst P,\lst s)$ is the last element then $\set Q=\emptyset$ and $\lst P$ is an optimal choice of logical qubits stabilized by $\set S$ such that $\genfun(\lst P)=\genfun(\set L)$.
\end{theorem}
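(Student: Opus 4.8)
The plan is to reduce the theorem to the optimality condition of Theorem~\ref{theorem:optimality-condition} and to verify that condition, together with the remaining claims, by carrying a family of loop invariants through the inductive definition of $\optimizer(\set S,\set L)$. Writing $(\set Q,\lst P,\lst s):=\optimizer(\set S,\set L)_i$, I would maintain at every index $i$: (i) the generated group is conserved, $(\genfun\circ\set U)\paren{\set U(\set Q)\cup\set U(\lst P)}=\genfun(\set L)$; (ii) every pair in $\set Q$ and in $\lst P$ is conjugal with respect to $\set S\cup\set U(\set Q)\cup\set U(\lst P)$, and no operator appears in two pairs; and (iii) a ``diagonalization'' invariant asserting the existence of witnessing errors $e_1,\dots,e_{|\lst P|}$ such that $e_j$ anti-commutes with $p_1(\lst P_j)$ with $w(e_j)=\om_{\set S}(p_1(\lst P_j))$, while $e_j$ commutes with $p_1(\lst P_{j'})$ for $j'\ne j$ and with every operator presently in $\set X:=\set U(\set Q)\cup\set R$ (and, in the appropriate bookkeeping governed by the flags $\lst s$, with the relevant second members). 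Granting these invariants, all three conclusions of the theorem fall out.

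For finiteness I would use the termination measure suggested by the two cases. Case~2 strictly shrinks $\set Q$ ($|\set Q'|=|\set Q|-1$) while appending to $\lst P$ a pair carrying flag $1$; Case~1 leaves $|\set Q|$ fixed but flips exactly one flag $\lst s_k$ from $1$ to $0$. Since a flag becomes $1$ only when a pair is appended in Case~2, each Case~1 step consumes a flag produced by some earlier Case~2 step, so the number of Case~1 steps is at most the number of Case~2 steps, which is at most the initial size $|\set L|$. Hence the sequence has length at most $2|\set L|+1$. Because the inductive definition declares the sequence finished precisely when $\set Q=\emptyset$, finiteness forces the last tuple to have $\set Q=\emptyset$; invariant~(i) then collapses to $(\genfun\circ\set U)(\lst P)=\genfun(\set L)$, i.e. $\genfun(\lst P)=\genfun(\set L)$.

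Invariants~(i) and~(ii) I would prove by induction on $i$. In both cases the map $f$ and the rewrites of the pairs amount to multiplying members of one conjugal pair by the distinguished operator $o$ (a member of another pair on which $h$ acts); these are exactly the moves sanctioned by Lemma~\ref{combining-pairs}, so they preserve both the generated group and the conjugal structure, and the footnoted observation that $o$ is never already present keeps the pairs disjoint. The auxiliary products $\alpha$, $\beta$, $b''$, $b'$ are engineered so that the rewritten operators commute where they must, which I would check by testing $o$ and $h$ against each factor $g(a,b)$; this is routine bookkeeping I would not expand. These invariants give conditions 1--3 of a choice of qubits; condition 4, that $\om_{\set S}\!\circ p_1$ is minimal on each pair, comes from the swap performed in Case~2 to make $h$ act on the first member; and condition 5, that $\lst M_{\set S}(\lst P)$ is ordered, follows from the \emph{monotonicity} $w(h)$ is non-decreasing in $i$. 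That monotonicity is itself a consequence of Lemma~\ref{combinations-can't-make-things-worse}: each fixing step multiplies every affected active operator by $o$, where $\om_{\set S}(o)=w(h)=\min_{x\in\set X}\om_{\set S}(x)$, so no active operator's $\om_{\set S}$-value can drop below $w(h)$, forcing the next minimum to be at least $w(h)$.

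The crux is invariant~(iii), which is designed so that at termination it is exactly the hypothesis of Theorem~\ref{theorem:optimality-condition}: with $\set Q=\emptyset$, the witnessing errors $e_j$ show that $\{p_1(\lst P_j)\}_j$ is unimprovable and extends to $\set U(\lst P)$. Indeed, for any subset $\set X\subseteq\set U(\lst P)$ meeting the first members, let $j^\ast$ index the smallest-weight first member present; then $e_{j^\ast}$ acts on $p_1(\lst P_{j^\ast})$ and commutes with every other member of $\set X$, hence acts on $\prod\set X$, giving $(\om_{\set S}\circ\prod)(\set X)\le w(e_{j^\ast})=\min_{x\in\set X\cap\set O}\om_{\set S}(x)$, and Lemma~\ref{combinations-can't-make-things-worse} supplies the reverse inequality for the unimprovable case. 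Theorem~\ref{theorem:optimality-condition} then yields optimality. The main obstacle is establishing that each step \emph{preserves}~(iii). When Case~2 commits a new first member $o$ with witness $h=e_{\mathrm{new}}$, the fixing of all remaining active operators and of the new second member $b'$ by products with $o$ is precisely what makes $e_{\mathrm{new}}$ commute with everything still active; but one must simultaneously check that each previously established $e_j$ survives the rewrites, i.e. that $e_j$ commutes with the modifying operator $o$. This is a nested induction: at the step when $e_j$ was fixed, $o$ was active (in its then-current form) and so commuted with $e_j$, and every subsequent modification of $o$ multiplies it only by operators that $e_j$ likewise commutes with. I expect the most delicate case to be the Case~1 fixing step, where an already-committed second member is used to absorb $h$ across $\set R$ and $\set U(\set Q)$ at once: here one must verify that this does not reintroduce a lower-weight error acting on any committed first member, using Lemma~\ref{lesser-operator-wins} to control the weights under the products with $o$ and the flag bookkeeping to keep the second-member interactions consistent with the extends-to clause. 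Once~(iii) is shown to hold at termination, Proposition~\ref{bound-on-recombinations} (invoked inside Theorem~\ref{theorem:optimality-condition}) closes the argument and $\lst P$ is an optimal choice of qubits, completing the proof.
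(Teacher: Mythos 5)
Your overall architecture matches the paper's: the same termination measure (Case~2 shrinks $\set Q$ and plants a flag, Case~1 consumes a flag, so the sequence has length $O(|\set L|)$ and must end with $\set Q=\emptyset$), the same group-preservation and conjugality invariants justified via Lemma~\ref{combining-pairs}, and the same final reduction to Theorem~\ref{theorem:optimality-condition}. Those parts are sound. The gap is in your invariant~(iii), which is the crux and is stated too strongly to be maintainable. You require each witness $e_j$ to commute with $p_1(\lst P_{j'})$ for every $j'\ne j$ and (implicitly, for the ``extends to $\set U(\lst P)$'' half) with the second members as well. But the algorithm never re-fixes operators that have already been retired into $\set O=\unpack(\lst P)\backslash\set X$: when a new minimal-weight error $h$ is found, only the still-active operators (the pairs in $\set Q$ and the flagged second members) are multiplied by $o$ to commute with $h$. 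The already-committed first members, and the second members whose flags were flipped to $0$ at an earlier Case~1 step, are not touched and are not part of the query, so the new witness may anti-commute with them. Your repair argument (``$e_j$ survives the rewrites because every modifying operator was active at step $j$ and hence commutes with $e_j$'') correctly shows that $e_j$ commutes with everything constructed \emph{after} step $j$, but it says nothing about operators retired \emph{before} step $j$, and your diagonalization argument for the extends-to clause genuinely needs $e_{j^*}$ to commute with earlier unflagged second members appearing in the subset --- which is exactly what fails.

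The paper sidesteps this by never introducing explicit witnesses at all: the invariant carried through Proposition~\ref{proposition:properties of the algorithm} is the set-level statement that $\set O$ is an unimprovable set extending to $\set X$, together with the weight separation $\max_{o\in\set O}\om_{\set S}(o)\le\min_{x\in\set X}\om_{\set S}(x)$. The step of moving a newly fixed operator into $\set O$ is handled by Lemma~\ref{lemma:move-it-over}, whose hypothesis only demands that $h$ avoid the \emph{remaining extension set} $\set X'$, not the retired operators; the subsets of $\set O'\cup\set X'$ in which $h$ also acts on old elements of $\set O$ are disposed of by a case analysis on whether $\min_{a\in\set Z\cap\set O}\om_{\set S}(a)$ exceeds $\om_{\set S}(o)$, using the old extends-to property rather than any commutation of $h$ with $\set O$. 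To close your proof you would either have to import that weight-based case analysis (at which point your invariant collapses into the paper's) or weaken~(iii) to ``$e_j$ commutes with everything active at or after step $j$'' and then supply a separate argument for subsets containing retired second members; as written, the claimed invariant is false at the step it is introduced.
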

%@-node:gcross.20100422160536.1420:Introduce Theorem
%@+node:gcross.20100422160536.1413:Lemma: An element can be moved from the extension to the unimprovable set
\begin{remark}
Before proving this Theorem, we shall first prove several related Lemmas and Propositions.

Our ultimate goal is to expand the unimprovable set so that it includes at least the first member of every conjugal pair in the set of logical qubit operators, since this means that we have satisfied the optimality condition.  Thus, we want to be able to add operators to this set while preserving the property of being an unimprovable set.

The following Lemma shows that if we have an operator $o$ in a set $\set X$ to which some unimprovable set extends, then if the smallest weight undetectable error acting on $o$ acts on no other operator in $\set X$ then we may move $o$ to the unimprovable set to obtain a new unimprovable set that extends to $\set X\slash \{o\}$.  The intuition here is that because said error acts only on $o$, it cannot be canceled by multiplying $o$ by other operators, and so it is an ``unimprovable'' operator that can be included in our unimprovable set.
\end{remark}

\begin{lemma}
\label{lemma:move-it-over}
If $\set O$ is an unimprovable set with respect to $\set S$ that extends to $\set X:=\{o\}\cup\set X'$, and there exists an undetectable error, $h$, of weight $\om_{\set S}(o)$ that acts on $o$ but not on any operator in $\set X'$, then $\set O':=\set O\cup\{o\}$ is an unimprovable set with respect to $\set S$ that extends to $\set X'$.
\end{lemma}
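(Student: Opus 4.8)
The statement packages two claims: that $\set O'=\set O\cup\{o\}$ is unimprovable with respect to $\set S$, and that it extends to $\set X'$. I would first dispose of the degenerate case $o\in\set O$, in which $\set O'=\set O$ and both conclusions are immediate, since extension to $\set X$ restricts trivially to extension to the subset $\set X'\subseteq\set X$. So assume $o\notin\set O$. The organizing observation is that $\set O'\cup\set X'=\set O\cup\{o\}\cup\set X'=\set O\cup\set X$, which lets me fold both claims into a single inequality: for every $\set W\subseteq\set O\cup\set X$ with $\set W\cap\set O'\neq\emptyset$, I would show $(\om_{\set S}\circ\prod)(\set W)\le\min_{w\in\set W\cap\set O'}\om_{\set S}(w)$. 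This is precisely the assertion that $\set O'$ extends to $\set X'$; and applied to a (nonempty) set $\set W=\set Y\subseteq\set O'$ and combined with the lower bound of Lemma \ref{combinations-can't-make-things-worse}, it forces the equality $(\om_{\set S}\circ\prod)(\set Y)=\min_{y\in\set Y}\om_{\set S}(y)$, which is unimprovability of $\set O'$. Hence the whole lemma reduces to this one inequality.

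To prove the inequality I would split on whether $o\in\set W$. If $o\notin\set W$, then $\set W\cap\set O'=\set W\cap\set O\neq\emptyset$ and $\set W\subseteq\set O\cup\set X$, so the hypothesis that $\set O$ extends to $\set X$ yields the bound directly. If $o\in\set W$, write $\set W=\{o\}\cup\set V$ with $\set V\subseteq\set O\cup\set X'$, and compare $\om_{\set S}(o)$ with $\mu:=\min_{z\in\set V\cap\set O}\om_{\set S}(z)$ (read as $+\infty$ if $\set V\cap\set O=\emptyset$). When $\mu\le\om_{\set S}(o)$ the minimum over $\set W\cap\set O'$ is attained inside $\set O$, and since $\set W\cap\set O\neq\emptyset$ I can again invoke extension of $\set O$ to $\set X$ to get $(\om_{\set S}\circ\prod)(\set W)\le\mu$. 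The remaining case $\om_{\set S}(o)<\mu$ is the substantive one, and is where the error $h$ enters.

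In that last case the target bound is $(\om_{\set S}\circ\prod)(\set W)\le\om_{\set S}(o)=w(h)$, so it suffices to exhibit $h$ as an undetectable error with respect to $\set S$ acting on $\prod\set W=o\cdot\prod\set V$. Since $h\in\centralizer_\pauligroup(\set S)$ and anti-commutes with $o$, I only need $h$ to commute with $\prod\set V$, which I would get by showing it commutes with every element of $\set V$ separately. The elements of $\set V\cap\set X'$ are handled by hypothesis, as $h$ acts on no operator of $\set X'$. For $z\in\set V\cap\set O$ I argue by the weight bound: were $h$ to anti-commute with $z$, it would be an undetectable error acting on $z$ of weight $w(h)=\om_{\set S}(o)<\mu\le\om_{\set S}(z)$, contradicting the minimality defining $\om_{\set S}(z)$; hence $h$ commutes with $z$. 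Because Pauli operators pairwise commute or anti-commute, $h$ then commutes with $\prod\set V$ and anti-commutes with the full product $o\cdot\prod\set V$, witnessing $(\om_{\set S}\circ\prod)(\set W)\le w(h)=\om_{\set S}(o)$ as required. The hard part is exactly this parity/weight step: I expect the main obstacle to be arguing cleanly that $h$ cannot anti-commute with the strictly heavier $\set O$-factors and, by hypothesis, anti-commutes with none of the $\set X'$-factors, so that its anti-commutation with $o$ survives intact into the product.
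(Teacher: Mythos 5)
Your proof is correct and follows essentially the same route as the paper's: the same case split on whether $o$ appears in the subset and on whether some $\set O$-element of the subset is at least as weak as $o$, with the error $h$ surviving into the product because it commutes with the $\set X'$-factors by hypothesis and with the strictly more robust $\set O$-factors by the minimality defining $\om_{\set S}$. Your write-up is in fact slightly more explicit than the paper's in justifying why no other factor can anti-commute with $h$, and in noting that unimprovability of $\set O'$ follows from the extension inequality together with Lemma \ref{combinations-can't-make-things-worse}.
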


\begin{proof}
Take any subset $\set R\subseteq \set O'\cup\set X'$ such that $\set R\cap\set O'\ne\emptyset$.  We need to show that $\om_{\set S}\paren{r} \le \min_{a\in\set R\cap\set O'}\om_{\set S}(a)$ where $r:=\Pi(\set R)$.

First consider the case where $o\notin \set R$;  in this case we have that $\set R\subseteq \set O\cup\set X$ such that $\set R\cap\set O=\set R\cap\set O'\ne\emptyset$, and so since $\set O$ extends to $\set X$ we conclude that $\om_{\set S}\paren{r} \le \min_{a\in\set R\cap\set O}\om_{\set S}(a)= \min_{a\in\set R\cap\set O'}\om_{\set S}(a)$.

Now consider the case where $\set R\cap\set O'=\{o\}$.  In this case, by the assumptions of this Lemma, we know $h$ acts on $o$ but not on any operator in $\set X'$ which implies that $h$ acts on $r$ and so $\om_{\set S}(r) \le w(h)=\om_{\set S}(o)=\min_{a\in\set R\cap\set O'}\om_{\set S}(a)$.

Finally we consider the remaining case where $\{o\}\subset\set R\cap\set O'$. Let $\set Z := \set R\backslash\{o\}\ne\emptyset$.  Since $\set O$ extends to $\set X$ and $\set R\cap\set O=\set Z\cap\set O\ne\emptyset$, we know that $\om_{\set S}(x)\le \min_{a\in\set Z\cap\set O}\om_{\set S}(y)=:d$.  If $d \le \om_{\set S}(o)$, then $\om_{\set S}(r)\le d = \min_{a\in\set R\cap\set O'}\om_{\set S}(a)$, and we are done.  Otherwise, since $d> \om_{\set S}(o)$ we know that $h$ acts on $r$ since there can be no other operator in $\set Z$ that anti-commutes with $h$, and so $\om_{\set S}(r) \le w(h)\le\om_{\set S}(o)=\min_{a\in\set R\cap\set O'}\om_{\set S}(a)$
\end{proof}
%@nonl
%@-node:gcross.20100422160536.1413:Lemma: An element can be moved from the extension to the unimprovable set
%@+node:gcross.20100422160536.1408:Lemma: Replacing element with product preserves unimprovability
\begin{remark}
In Case 1 of the algorithm we take an element that is a member of an unimprovable set and replace it with the product of this element times some elements in the set to which the unimprovable set extends.  We want to show that this preserves the unimprovability of the set, and this is done in the following Lemma.
\end{remark}

\begin{lemma}
\label{lemma:replacing element with product preserves unimprovability}
Suppose we are given an unimprovable set $\set O$ with respect to $\set S$ that extends to $\set X$.  Let $o$ be any element in $\set O$, and $\set Z\subseteq \set O\cup\set X$ such that $o\in \set Z$.  Let $o':=\Pi(\set Z)$ and $\set O' := \paren{\set O\backslash\{o\}}\cup\{o'\}$.  If $\om_{\set S}(o')=\om_{\set S}(o)$, then $\set O'$ is also an unimprovable set with respect to $\set S$ that extends to $\set X$.
\end{lemma}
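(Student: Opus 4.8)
The plan is to prove a single inequality from which both conclusions follow at once: for every subset $\set Y\subseteq\set O'\cup\set X$ with $\set Y\cap\set O'\ne\emptyset$, I would show $\om_{\set S}(\Pi(\set Y))\le\min_{x\in\set Y\cap\set O'}\om_{\set S}(x)$. Restricting this to subsets $\set Y\subseteq\set O'$ and combining with Lemma~\ref{combinations-can't-make-things-worse} (which supplies the reverse inequality $\ge$) shows $\set O'$ is unimprovable, while the inequality in full is precisely the statement that $\set O'$ extends to $\set X$. Before the case analysis I would record one preliminary fact. Since $\set O$ extends to $\set X$ and $o\in\set Z\cap\set O\ne\emptyset$ with $\set Z\subseteq\set O\cup\set X$, the extension property of $\set O$ gives $\om_{\set S}(o')=(\om_{\set S}\circ\Pi)(\set Z)\le\min_{x\in\set Z\cap\set O}\om_{\set S}(x)\le\om_{\set S}(o)$; together with the hypothesis $\om_{\set S}(o')=\om_{\set S}(o)$ this forces $\om_{\set S}(o)=\min_{x\in\set Z\cap\set O}\om_{\set S}(x)$, i.e.\ $o$ is a minimum-weight element of $\set Z\cap\set O$.

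Now fix such a $\set Y$ and set $d:=\min_{x\in\set Y\cap\set O'}\om_{\set S}(x)$. If $o'\notin\set Y$, then $\set Y\subseteq(\set O\backslash\{o\})\cup\set X\subseteq\set O\cup\set X$ and $\set Y\cap\set O\supseteq\set Y\cap\set O'\ne\emptyset$, so the extension property of $\set O$ gives $\om_{\set S}(\Pi(\set Y))\le\min_{x\in\set Y\cap\set O}\om_{\set S}(x)\le d$ and we are done. The interesting case is $o'\in\set Y$, which I would attack by substituting $o'=\Pi(\set Z)$ to pull the product back into $\set O\cup\set X$. Writing $\set Y'':=\set Y\backslash\{o'\}$, and using that $\pauligroup$ is abelian modulo phases with every Pauli squaring to the identity (so repeated factors cancel), one gets $\Pi(\set Y)=\Pi(\set Y'')\cdot\Pi(\set Z)=\Pi(\set V)$ with $\set V:=\set Y''\,\triangle\,\set Z\subseteq\set O\cup\set X$.

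I would then split according to whether $d<\om_{\set S}(o')$ or $d=\om_{\set S}(o')$ (note $o'\in\set Y\cap\set O'$, so $d\le\om_{\set S}(o')$ always). If $d<\om_{\set S}(o')$, the minimiser is some $x_0\in\set Y''\cap(\set O\backslash\{o\})\subseteq\set Y''\cap\set O$ with $\om_{\set S}(x_0)=d$; applying the extension property of $\set O$ to $\set Y''$ gives $a:=\Pi(\set Y'')$ with $\om_{\set S}(a)\le\om_{\set S}(x_0)=d<\om_{\set S}(o')$, whence Lemma~\ref{lesser-operator-wins} yields $\om_{\set S}(\Pi(\set Y))=\om_{\set S}(a\cdot o')=\om_{\set S}(a)\le d$. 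If instead $d=\om_{\set S}(o')=\om_{\set S}(o)$, I would work with the reduced product $\Pi(\set V)$: because $o$ is a minimum-weight element of $\set Z\cap\set O$ that survives into $\set V$ (it lies in $\set Z$ but not in $\set Y''$), we have $o\in\set V\cap\set O\ne\emptyset$, and the extension property of $\set O$ gives $\om_{\set S}(\Pi(\set Y))=\om_{\set S}(\Pi(\set V))\le\min_{x\in\set V\cap\set O}\om_{\set S}(x)\le\om_{\set S}(o)=d$.

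The main obstacle is the bookkeeping in this last sub-case: guaranteeing that a minimum-weight witness from $\set O$ really survives the cancellation in $\set V=\set Y''\,\triangle\,\set Z$. The natural witness is $o$ itself, and it survives precisely when $o\notin\set Y''$ — that is, when $o$ is not also an element of $\set X$ that happens to lie in $\set Y$. This holds in every invocation of the lemma, where the unimprovable set $\set O$ (the secured first members of pairs) and the set $\set X$ to which it extends are disjoint; I would therefore carry the harmless standing hypothesis $\set O\cap\set X=\emptyset$, under which $o\notin\set Y''$ is automatic and the argument closes. Everything else reduces to the extension property of $\set O$ together with Lemmas~\ref{combinations-can't-make-things-worse} and~\ref{lesser-operator-wins}, so this survival argument is the only genuinely delicate point.
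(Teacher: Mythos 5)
Your proof is correct and follows essentially the same route as the paper's: an arbitrary subset $\set R\subseteq\set O'\cup\set X$, a case split on whether $o'\in\set R$, and the substitution $o'=\Pi(\set Z)$ via a symmetric difference to pull the product back into $\set O\cup\set X$ and invoke the extension property of $\set O$ — the only divergence is the final bookkeeping, where you split on $d<\om_{\set S}(o')$ versus $d=\om_{\set S}(o')$ and lean on Lemma~\ref{lesser-operator-wins}, while the paper directly argues that the minimizers over $\set R\cap\set O'$ and over the symmetric difference intersected with $\set O$ coincide. The delicate point you isolate — that $o$ must survive the cancellation, which requires $o\notin\set X$ — is genuine: the paper's own proof tacitly assumes it in the step ``$o\notin\set O'$ and so $o\notin\set R$'' (valid only if $o\notin\set X$, since $\set R\subseteq\set O'\cup\set X$), and your standing hypothesis $\set O\cap\set X=\emptyset$ is indeed harmless because it holds by construction ($\set O:=\unpack(\lst P)\backslash\set X$) in every invocation of the lemma.
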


\begin{proof}
Take any subset of elements $\set R\subseteq\set O'\cup\set X$ such that $\set R\cap\set O'\ne\emptyset$, and let $x := \Pi(\set R)$.  We need to show that $\om_{\set S}(x) \le \min_{a\in\set R\cap\set O}\om_{\set S}(a)$.  If $o'\notin\set R$, then this follows immediately from the fact that $\set R\subseteq\set O\cup \set X$ and $\set R\cap\set O\ne\emptyset$ and $\set O$ is an unimprovable set that extends to $\set X$, so assume that that $o'\in\set R$.  Since $o'=\Pi(\set Z)$ and $\set Z\subseteq\set O\cup\set X$, we conclude that the set $\set T\subseteq\set O\cup\set X$ which is the symmetric difference of $\set Z$ and $\set R$ satisfies the property that $x=\Pi(\set T)$.  Note that $o\in\set T$ since $o\in \set Z$ and $o\notin \set O'$ and so $o\notin \set R$.  Thus, $\set T \cap\set O\ne\emptyset$, and so $\om_{\set S}(x)\le \min_{a\in\set T\cap\set O}\om_{\set S}(a)$ since $\set O$ is an unimprovable set that extends to $\set X$.  Thus, for us to show that $\om_{\set S}(x) \le \min_{a\in\set R\cap\set O'}\om_{\set S}(a)$, it suffices for us to show that $\min_{a\in\set R\cap\set O'}\om_{\set S}(a)=\min_{a\in\set T\cap\set O}\om_{\set S}(a)$.

First note that since $\om_{\set S}(o')=\om_{\set S}(o)$, there is no element $z\in\set Z\cap\set O$ such that $\om_{\set S}(z)<\om_{\set S}(o)$.  Thus, any operator $t\in\set T\cap\set O$ such that $\om_{\set S}(t)<\om_{\set S}(o)$ must also appear in $\set R\cap\set O'$, and vice versa;  put another way, any operator that is less robust to errors than $o$ must be present in both $\set T\cap\set O$ and $\set R\cap\set O'$ together or neither.  Thus, if at least one such operator exists, then we conclude that $\min_{a\in\set R\cap\set O'}\om_{\set S}(a)=\min_{a\in\set T\cap\set O}\om_{\set S}(a)$ since in this case any minimizer of $\om_{\set S}$ must be shared between the two sets.  If no such operator exists, then since $o\in\set T\cap\set O$ and $o'\in\set R\cap\set O'$ and there is no other operator present in either set with a smaller minimum weight undetectable error, we conclude that $o$ is the minimizer of $\om_{\set S}$ over $\set T\cap\set O$ and $o'$ is the minimizer over $\set R\cap\set O'$ and since $\om_{\set S}(o')=\om_{\set S}(o)$ we have that $\min_{a\in\set R\cap\set O'}\om_{\set S}(a)=\om_{\set S}(o)=\min_{a\in\set T\cap\set O}\om_{\set S}(a)$.

Thus we have shown that $\min_{a\in\set R\cap\set O'}\om_{\set S}(a)=\min_{a\in\set T\cap\set O}\om_{\set S}(a)$, and since our choice of $\set R$ was arbitrary we conclude that $\set O'$ is an unimprovable set that extends to $\set X$.
\end{proof}
%@nonl
%@-node:gcross.20100422160536.1408:Lemma: Replacing element with product preserves unimprovability
%@+node:gcross.20100422160536.1411:Lemma: Recombining extension elements preserves extension
\begin{remark}
In both cases of the algorithm we replace a set to which an unimprovable set extends with a new set that a product of elements in the old set.  We want to show that the new set is also an extension of the unimprovable set, and this is proved by the following Lemma.
\end{remark}

\begin{lemma}
\label{lemma:recombining extension elements preserves extension}
If $\set O$ is an unimprovable set with respect to $\set S$ that extends to $\set X$, and $\set X'$ is a set such that $\set X'\subseteq\genfun(\set X)$, then $\set O$ also extends to $\set X'$.
\end{lemma}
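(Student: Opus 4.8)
The plan is to check the defining condition for ``$\set O$ extends to $\set X'$'' directly, reducing every product taken over $\set O\cup\set X'$ to a product over $\set O\cup\set X$ and then invoking the hypothesis that $\set O$ extends to $\set X$. Concretely, fix an arbitrary subset $\set R\subseteq\set O\cup\set X'$ with $\set A:=\set R\cap\set O\ne\emptyset$; the goal is to show $\om_{\set S}(\Pi(\set R))\le\min_{a\in\set A}\om_{\set S}(a)$. Write $\set R=\set A\cup\set B$ with $\set B:=\set R\backslash\set O\subseteq\set X'$. Because $\pauligroup$ modulo phases is abelian with every element of order two, and because each member of $\set B$ lies in $\genfun(\set X)$, the product $\Pi(\set B)$ is itself a product of operators drawn from $\set X$; that is, $\Pi(\set B)=\Pi(\set Y)$ for some $\set Y\subseteq\set X$. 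Hence $\Pi(\set R)=\Pi(\set A)\cdot\Pi(\set Y)$, a product of operators taken entirely from $\set O\cup\set X$.

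First I would dispatch the clean case in which the chosen representative $\set Y$ is disjoint from $\set A$. Then $\set A\cup\set Y$ is a subset of $\set O\cup\set X$ whose intersection with $\set O$ contains $\set A$ and is therefore nonempty, and $\Pi(\set A\cup\set Y)=\Pi(\set R)$. Applying the extension of $\set O$ to $\set X$ to this subset gives $\om_{\set S}(\Pi(\set R))\le\min_{x\in(\set A\cup\set Y)\cap\set O}\om_{\set S}(x)\le\min_{a\in\set A}\om_{\set S}(a)$, the last inequality holding because $\set A\subseteq(\set A\cup\set Y)\cap\set O$ and enlarging the set we minimize over can only lower the minimum. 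Since $\set R$ was arbitrary, this yields the conclusion.

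The main obstacle is the possibility that the representative $\set Y\subseteq\set X$ overlaps $\set A$ --- that is, that an operator of $\set O$ reappears inside the $\set X$-expansion of the recombined members of $\set B$. When $\set A\cap\set Y\ne\emptyset$ one has $\Pi(\set A\cup\set Y)=\Pi(\set R)\cdot\Pi(\set A\cap\set Y)$ rather than $\Pi(\set A\cup\set Y)=\Pi(\set R)$, so the clean argument bounds $\om_{\set S}(\Pi(\set R)\cdot\Pi(\set A\cap\set Y))$ instead of $\om_{\set S}(\Pi(\set R))$ itself. To peel off the stray factor I would invoke Lemmas~\ref{combinations-can't-make-things-worse} and~\ref{lesser-operator-wins} together with the unimprovability of $\set O$: by unimprovability $\om_{\set S}(\Pi(\set A\cap\set Y))=\min_{a\in\set A\cap\set Y}\om_{\set S}(a)\ge\min_{a\in\set A}\om_{\set S}(a)$, while the union bound gives $\om_{\set S}(\Pi(\set A\cup\set Y))\le\min_{a\in\set A}\om_{\set S}(a)$; whenever these two weights are strictly ordered, Lemma~\ref{lesser-operator-wins} forces $\om_{\set S}(\Pi(\set R))$ down to the smaller of them, which is at most $\min_{a\in\set A}\om_{\set S}(a)$. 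I expect the residual equal-weight case to be the delicate point of the whole argument, since there a product of two equal-weight operators can a priori be worse than either factor; closing it requires the extra structure guaranteed in the algorithm's invocation of this Lemma, namely that the recombinations producing $\set X'$ leave the first members intact, so that the representative $\set Y$ can be taken disjoint from $\set A$ and the overlap never actually occurs.
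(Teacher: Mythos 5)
Your opening reduction and the disjoint (``clean'') case are exactly the paper's entire proof: the paper asserts that $\set X'\subseteq\genfun(\set X)$ yields a set $\set B\subseteq\set X$ with $\Pi(\set R')=\Pi(\set A\cup\set B)$ and immediately applies the hypothesis that $\set O$ extends to $\set X$, never mentioning the overlap you worry about. That overlap requires an operator lying in both $\set A\subseteq\set O$ and the representative $\set Y\subseteq\set X$, hence requires $\set O\cap\set X\ne\emptyset$; in the lemma's sole use (Proposition~\ref{proposition:properties of the algorithm}) one has $\set O:=\unpack(\lst P)\backslash\set X$, so the two sets are disjoint and the equal-weight subcase you leave open never occurs. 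So your proof is complete in the setting where the lemma is actually applied, and your residual worry points at a (harmless) looseness in how generally the lemma is stated rather than at a gap in your argument.
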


\begin{proof}
Take any subset $\set R' \subseteq \set O\cup\set X'$ such that $\set A := \set R'\cap \set O \ne \emptyset$.  We need to show that $\om_{\set S}(x)\le\min_{y\in\set A}\om_{\set S}(y)$, where $x := \Pi(\set R')$.  Note that since $\set X'\subseteq\genfun(\set X)$, there exists a set $\set B\subseteq \set X$ such that $x = \Pi\paren{\set A\cup\set B}$, and so since $\set O$ extends to $\set X$ we conclude that $\om_{\set S}(x)\le\min_{y\in\set A}\om_{\set S}(y)$.  Since our choice of $\set R'$ was arbitrary, we conclude that that $\set O$ extends to $\set X'$.
\end{proof}
%@-node:gcross.20100422160536.1411:Lemma: Recombining extension elements preserves extension
%@+node:gcross.20100422160536.1396:Proposition: The set generated is an invariant
\begin{remark}
Most of the heavy lifting in this section is performed in the following Proposition, which uses induction to prove a number of properties about the output of the algorithm at every step.
\end{remark}

\begin{proposition}
\label{proposition:properties of the algorithm}
Given a set of Pauli operators $\set S$ and a set of pairs $\set L\subseteq\centralizer_\pauligroup(\set S)$ conjugal in relation to $\set U(\set L)\cup\set S$, for every $(\set Q,\lst P,\lst s)\in\optimizer(\set S,\set L)$ we have that
\begin{enumerate}
\item $\genfun\paren{\set U(\set Q)\cup\set U(\lst P)}=\genfun(\set L)$;
\item $\set U(\set Q)\cup\set U(\lst P)\subseteq\centralizer_\pauligroup(\set S)$;
\item $2(|\set Q|+|\lst P|)=|\set L|$;
\item $|\set U(\set Q)\cup\set U(\lst P)|=|\lst L|$;
\item $\set U(\set Q)\cap\set U(\lst P)=\emptyset$, and no operator appears in more than one pair in either $\set Q$ or $\lst P$;
\item $\set O$ is an unimprovable set of operators that extends to $\set X$;
\item $\max_{o\in\set O}\om_{\set S}(o)\le\min_{x\in\set X}(x)$;
\item $(\om_{\set S}\circ p_1)(q)=m_{\set S}(q)$ for all $q\in\lst P$;
\item $\lst M(\lst P)$ is ordered;
\item $\lst P$ is a choice of qubits stabilized by $\set S$;
\item $\lst P$ is an optimal choice of qubits;
\end{enumerate}
where $\set X := \unpack(\set Q)\cup\{p_2(\lst P_i):1 \le i \le |\lst P|, \lst s_i=1\}$ if $\lst P$ is non-empty and $\set X := \unpack(\set Q)$ otherwise, and $\set O:=\unpack(\lst P)\backslash\set X$.
\end{proposition}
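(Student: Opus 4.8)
The plan is to induct on the index $i$ of the sequence $\optimizer(\set S,\set L)$, checking that all eleven invariants survive one application of the inductive definition. The base case $i=0$ is immediate: there $(\set Q,\lst P,\lst s)=(\set L',\lst\emptyset,\lst\emptyset)$, so $\set U(\lst P)=\emptyset$, $\set X=\set U(\set L')=\set L$, and $\set O=\emptyset$. Invariants 1--5 then reduce to the defining fact that $\set L'$ merely repackages $\set L$ into conjugal pairs without repetition, while invariants 6--11 hold vacuously since $\set O$ and $\lst P$ are empty (an empty unimprovable set extends to anything, and an empty choice is optimal by Theorem \ref{theorem:optimality-condition}). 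For the inductive step I would assume the invariants at step $i$ with $\set Q\ne\emptyset$ and treat Case 1 and Case 2 in parallel, since both arise from the same move: one operator $o$ of weight $w(h)$ migrates from $\set X$ into $\set O$ while the functions $f$, $g$, $\alpha$, $\beta$ recombine the survivors so as to confine the error $h$.

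I would dispose of the structural invariants 1--5 first. Each operator of $\set Q'$ and $\lst P'$ is a product of operators already in $\centralizer_\pauligroup(\set S)$ together with $h\in\centralizer_\pauligroup(\set S)$, so invariant 2 survives; invariant 1 (equality of generated groups) and the conjugality needed later follow by repeated application of Lemma \ref{combining-pairs}, whose recipe is exactly what the recombination formulas implement. The counting invariant 3 is preserved because Case 1 only modifies pairs in place ($|\set Q'|=|\set Q|$, $|\lst P'|=|\lst P|$) while Case 2 transfers a single pair from $\set Q$ to $\lst P$, and invariants 4 and 5 then follow from independence, which is inherited through the Gaussian-elimination-style products.

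The heart of the argument is invariants 6 and 7. In both cases the migrating operator satisfies $\om_{\set S}(o)=w(h)$: since $h$ acts on $o\in\set X$ we get $\om_{\set S}(o)\le w(h)$, and since $h$ is the minimal-weight error over $\set X\ni o$ we get $w(h)\le\om_{\set S}(o)$. I would first invoke Lemma \ref{lemma:recombining extension elements preserves extension} to pass the extension property from the old $\set X$ to the recombined collection, then use the definition of $f$ to arrange that $h$ acts on $o$ but on no other operator of the new extension set, and finally apply Lemma \ref{lemma:move-it-over} to conclude that $\set O\cup\{o\}$ is unimprovable and extends to $\set X'$. In Case 1 the old first member of the $k$-th pair is additionally replaced by $s'=p_1(\lst P_k)\cdot\alpha\cdot\beta$; here Lemma \ref{lemma:replacing element with product preserves unimprovability} applies once I verify $\om_{\set S}(s')=\om_{\set S}(p_1(\lst P_k))$, which follows from Lemmas \ref{combinations-can't-make-things-worse} and \ref{lesser-operator-wins} because every factor adjoined in $\alpha$ and $\beta$ has weight at least $w(h)\ge\om_{\set S}(p_1(\lst P_k))$, the last inequality being invariant 7 at step $i$. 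Invariant 7 itself is maintained because the only new member of $\set O$ has weight $w(h)$, while Lemma \ref{combinations-can't-make-things-worse} shows every surviving member of $\set X'$ still has weight at least $w(h)$; this simultaneously gives that $w(h)$ is non-decreasing from one step to the next, which is the fact I will need for invariant 9.

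The remaining invariants follow by assembling these pieces. Invariant 8 (the first member realizes the pair minimum $m_{\set S}$) reduces to the weight bookkeeping above together with invariant 7, since first members live in $\set O$ and reserved second members in $\set X$, making the first member the weaker operator. Invariant 9 is the monotonicity of $w(h)$ combined with the observation that Case 2 appends the new qubit, of distance $w(h)$, at the end. Invariant 10 is a direct check of the five clauses defining a choice of qubits: clauses 1--3 are invariants 4, 2, and the conjugality from Lemma \ref{combining-pairs}, clause 4 is invariant 8, and clause 5 is invariant 9. Finally invariant 11 is immediate from invariants 6 and 10: since $\{p_1(\lst P_i)\}_i\subseteq\set O$ is unimprovable and $\set O$ extends to $\set X\supseteq\set U(\lst P)\backslash\set O$, the hypotheses of Theorem \ref{theorem:optimality-condition} are met and $\lst P$ is optimal. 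I expect the genuine difficulty to lie entirely in invariants 6 and 7 --- specifically in checking that the recombination formulas really do leave $h$ acting on $o$ alone and preserve each operator's robustness --- since the structural and optimality invariants are then essentially corollaries of those two.
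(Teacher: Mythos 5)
Your proposal is correct and follows essentially the same route as the paper: induction along $\optimizer(\set S,\set L)$ with a trivial base case, the structural invariants 1--5 dispatched by generation/counting arguments, the unimprovability invariants 6--7 carried through Cases 1 and 2 via Lemmas \ref{lemma:recombining extension elements preserves extension}, \ref{lemma:move-it-over}, and \ref{lemma:replacing element with product preserves unimprovability}, and optimality concluded from Theorem \ref{theorem:optimality-condition}. The only small wobble is your appeal to Lemma \ref{lesser-operator-wins} to get $\om_{\set S}(s')=\om_{\set S}(p_1(\lst P_k))$ --- that lemma needs a strict weight inequality, whereas the upper bound $\om_{\set S}(s')\le\om_{\set S}(p_1(\lst P_k))$ comes directly from the extension property of $\set O$ (as the paper does it) and the lower bound from Lemma \ref{combinations-can't-make-things-worse} together with invariant 7.
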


\begin{proof}
Proof by induction.  It is easy to see that these properties hold for $\optimizer(\set S,\set L)_0=(\set L',\lst\emptyset,\lst\emptyset)$, so now assume that they hold for $(\set Q,\lst P,\lst s):=\optimizer(\set S,\set L)_i$, and let $(\set Q',\lst P',\lst s'):=\optimizer(\set S,\set L)_{i+1}$.  For convenience, define $\set X:=\unpack(\set Q)\cup\{p_2(\lst P_i):1 \le i \le |\lst P|, \lst s_i=1\}$ (or $\set X:=\unpack(\set Q)$ if $\lst P$ is empty), $\set X':=\unpack(\set Q')\cup\{p_2(\lst P_i'):1 \le i \le |\lst P'|, \lst s'_i=1\}$, $\set O:=\unpack(\lst P)\backslash\set X$ and $\set O':=\unpack(\lst P')\backslash\set X'$.

We now prove each of the conclusion above;  note that in each conclusion we may assume that the conclusions prior to it have already been established, so we do so implicitly.

Also, when we say that we are assuming we are in ``Case 1'' or ``Case 2'', we mean that we are assuming that $(\set Q',\lst P',\lst s')$ followed from respectively Case 1 or Case 2 in the definition of $\optimizer$.

\begin{enumerate}
%@+others
%@+node:gcross.20100422160536.1401:The set generated is an invariant
\item

Examination of the definition reveals that $\set Q'$ and $\lst P'$ are constructed entirely from products of elements in $\set Q$ and $\lst P$ so that $\genfun\paren{\set U(\set Q')\cup\set U(\lst P')}\subseteq\genfun\paren{\set U(\set Q)\cup\set U(\lst P)}$.

If $\optimizer(\set S,\set L)_{i+1}$ was defined using Case 1 then let $(b,a):=\lst P_k$ and $(b',a'):=\lst P'_k$ where $k$ is the integer described in Case 1;  otherwise let $(a,b):=q$ be the pair selected from $\set Q$ in the definition and $(a',b')$ be the last element of $\lst P'$.  Note that in either case, $a=a'$.

In both cases, observe that for every operator $o\in\set U(\set Q)\cup\set U(\lst P)\backslash\{b\}$ we have that either $o$ or $o\cdot a$ is contained in $\set U(\set Q')\cup\set U(\lst P')$, and since $a$ is also contained in this set we see immediately that any operator in $\set U(\set Q)\cup\set U(\lst P)\backslash\{b\}$ can be obtained from products of elements in $\set U(\set Q')\cup\set U(\lst P')$ (i.e., from an operator in $\set U(\set Q')\cup\set U(\lst P')$ times possibly $a$).  Thus we conclude that $\genfun\paren{\set U(\set Q)\cup\set U(\lst P)\backslash\{b\}}\subseteq\genfun\paren{\set U(\set Q')\cup\set U(\lst P')}$.  Since $b'$ is the product of $b$ with elements in $\genfun\paren{\set U(\set Q)\cup\set U(\lst P)\backslash\{b\}}$, and $\genfun\paren{\set U(\set Q)\cup\set U(\lst P)\backslash\{b\}}\subseteq\genfun\paren{\set U(\set Q')\cup\set U(\lst P')}$, we conclude that $b\in\genfun\paren{\set U(\set Q')\cup\set(\lst P')}$ and so $\genfun\paren{\set U(\set Q)\cup\set U(\lst P)}\subseteq\genfun\paren{\set U(\set Q')\cup\set U(\lst P')}$.

Thus we have proven that $\genfun\paren{\set U(\set Q')\cup\set U(\lst P')}=\genfun\paren{\set U(\set Q)\cup\set U(\lst P)}=\genfun(\set L)$, and so we are done.
%@-node:gcross.20100422160536.1401:The set generated is an invariant
%@+node:gcross.20100422160536.1415:The set generated stays in the centralizer
\item

This follows from the fact that $\set L\subseteq\centralizer_\pauligroup(\set S)\Rightarrow\genfun(\set L)\subseteq\centralizer_\pauligroup(\set S)$ and $\set U(\set Q)\cup\set U(\lst P)\subseteq\genfun\paren{\set U(\set Q)\cup\set U(\lst P)}=\genfun(\set L)$ (which we just proved).
%@-node:gcross.20100422160536.1415:The set generated stays in the centralizer
%@+node:gcross.20100422160536.1402:|L| = 2(|Q| + |P|)
\item

By construction, either $|\set Q'|=|\set Q|$ and $|\lst P'|=|\lst P$ (in Case 1) or $|\set Q'|=|\set Q|-1$ and $|\lst P'|=|\lst P|+1$ (in Case 2).  In either case we have that $2(|\set Q'|+|\lst P'|)=2(|\set Q|+|\lst P|)=|\set L|$.
%@-node:gcross.20100422160536.1402:|L| = 2(|Q| + |P|)
%@+node:gcross.20100422160536.1403:|L| = |U(Q)+U(P)|
\item

Since the elements in $\set L$ are members of conjugal pairs, they are therefore independent, and so we see that we need at least $|\set L|$ operators to generate $\genfun(\set L)$.  Thus we need $|\set L|\le|\set U(\set Q')\cap\set U(\lst P')|\le 2(|\set Q'|+|\lst P'|)=|\lst L|$,  where the second inequality comes from the fact that a pair can unpack to at most two operators, and the last equality comes from the previous conclusion.  We thus conclude that $|\set U(\set Q')\cap\set U(\lst P')|=|\lst L|$.
%@-node:gcross.20100422160536.1403:|L| = |U(Q)+U(P)|
%@+node:gcross.20100422160536.1404:No operator is repeated
\item

By combining the previous two conclusions we see that $|\set U(\set Q')\cup\set U(\lst P')|=2(|\set Q'|+|\lst P'|)$;  if this conclusion were false (i.e., an operator were repeated somewhere) then we would have that $|\set U(\set Q')\cup\set U(\lst P')|<2(|\set Q'|+|\lst P'|)$, which is contradicts our earlier results.
%@-node:gcross.20100422160536.1404:No operator is repeated
%@+node:gcross.20100422160536.1406:Unimprovability is preserved
\item

First assume that we are in Case 1.  Let $k$ be the integer described in this case, $(a,b):=\lst P_k$, and $(a',b'):=\lst P'_k$.  Note that $a\in\set O$, and by construction $a':=\Pi(\set A')$ where $\set A'\subseteq \set O\cup\set X$ and $\set A'\cap\set O=\{a\}$, and so since by the inductive hypothesis we know that $\set O$ is a unimprovable set that extends to $\set X$ we know that $\om_{\set S}(a')\le\om_{\set S}(a)$.  Since by the inductive hypothesis we also know that $\max_{o\in\set O}\om_{\set S}(o)\le\min_{x\in\set X}(x)$, by Lemma \ref{combinations-can't-make-things-worse} we conclude that $\om_{\set S}(a')=\om_{\set S}(a)$.  Lemma \ref{lemma:replacing element with product preserves unimprovability} thus applies to our situation and allows us to conclude that $\set O'':=\paren{\set O\backslash\{a\}}\cup\{a'\}$ is an unimprovable set that extends to $\set X$.  Furthermore, since by construction $\set X'\subseteq\genfun(\set X)$ and $b\in\set X$, Lemma \ref{lemma:recombining extension elements preserves extension} allows us to conclude that $\set O''$ extends to $\{b\}\cup\set X'$.  By construction, there is an error of minimal weight that acts on $b$ but not on any other operator in $\set X'$, which means that by Lemma \ref{lemma:move-it-over} we conclude that $\set O''\cup\{b\}\equiv\set O'$ extends to $\set X'$.

Now assume that we are in Case 2.  Note that the only change from $\set O$ to $\set O'$ is the addition of a single element $o$ that has an error that acts only on it but not on any other operator in $\set X'$.  Note that since $\{o\}\cup\set X'\subseteq\genfun(\set X)$ we conclude from Lemma \ref{lemma:recombining extension elements preserves extension} that $\set O$ extends to $\{o\}\cup\set X'$, and from Lemma \ref{lemma:move-it-over} we conclude that $\set O'$ extends to $\set X'$.
%@-node:gcross.20100422160536.1406:Unimprovability is preserved
%@+node:gcross.20100422160536.1409:max w(O) <= min w(X)
\item

First observe that since $\set X'\subseteq\genfun(\set X)$, we conclude from Lemma \ref{combinations-can't-make-things-worse} that $\min_{x\in\set X}\om_{\set S}(x)\le\min_{x'\in\set X'}\om_{\set S}(x')$.

The difference between $\set O$ and $\set O'$ is the addition of a minimizer of $\om_{\set S}$ over $\set X$, $o$, and possible also the replacement of a single element.  Since $\max_{a\in\set O}\om_{\set S}(a)\le\min_{a\in\set X}\om_{\set S}(a)\le\min_{a\in\set X'}\om_{\set S}(a)$, we conclude that since $\om_{\set S}(o)=\min_{a\in\set X}\om_{\set S}(a)$ that therefore $\max_{a\in\set O\cup\{o\}}\om_{\set S}(a)\le\min_{a\in\set X'}\om_{\set S}(a)$.  If $\set O'=\set O\cup\{o\}$ then we are done.  Otherwise, we are in Case 1 which means that we have also replaced an element in $\set O$;  however, the operator we have replaced it with is the product of an operator from $\set O$ and operators from $\set X$, and since $\set O$ is an unimprovable set that extends to $\set X$ we conclude that the replacement can be no better than the operator it is replacing.  Thus, $\max_{a\in\set O'}\om_{\set S}(a)\le\min_{a\in\set X'}\om_{\set S}(a)$.
%@-node:gcross.20100422160536.1409:max w(O) <= min w(X)
%@+node:gcross.20100422160536.1405:om . p_1 = m
\item

Since $\max_{a\in\set O'}\om_{\set S}(a)\le\min_{a\in\set X'}\om_{\set S}(a)$, we immediately conclude that $(\om_{\set S}\circ p_1)(\lst P'_i)=m_{\set S}(\lst P'_i)$ when $\lst a'_i=1$.  By the inductive hypothesis, we know that $(\om_{\set S}\circ p_1)(\lst P'_i)=m_{\set S}(\lst P'_i)$ where $\lst P'_i=\lst P_i$;  furthermore, in both cases the pairs at the locations where $\lst a_i=0$ are unchanged from $\lst P$ to $\lst P'$, and in each case this turns out to leave just a single location that we still need to examine.

In Case 1, this location is the index $k$ described in that case, where $\lst a_k=1$ and $\lst a'_k=0$.  Since $p_1(\lst P'_k)$ is the product of a single element of $\set O$ and elements from $\set X$, we conclude from the fact that $\set O$ is an unimprovable set that extends to $\set X$ that $(\om_{\set S}\circ p_1)(\lst P'_k)\le(\om_{\set S}\circ p_1)(\lst P_k)$.  By the inductive hypothesis we know that $\max_{a\in\set O}\om_{\set S}(a)\le\min_{a\in\set X}\om_{\set S}(a)$.  Because $p_2(\lst P'_k)$ is a product of elements from $\set X$ we conclude from Lemma \ref{combinations-can't-make-things-worse} that $\min_{a\in\set X}\om_{\set S}(a)\le (\om_{\set S}\circ p_2)(\lst P'_k)$.  Since $p_1(\lst P)\in\set O$, we conclude that $(\om_{\set S}\circ p_1)(\lst P_k)\le\min_{a\in\set X}\om_{\set S}(a)$.  Combining all of these inequalities we reach the conclude that $(\om_{\set S}\circ p_1)(\lst P'_k)\le(\om_{\set S}\circ p_2)(\lst P'_k)$ and hence $(\om_{\set S}\circ p_1)(\lst P'_k)=m_{\set S}(\lst P'_k)$.

In Case 2, this location is the end of the sequence $\lst P'$, but since the addition to the sequences is a pair of operators from $\set X$ such that the first member is a minimizer of $\om_{\set S}$ over $\set X$ we conclude that $(\om_{\set S}\circ p_1)(\lst P'_{|\lst P'|})=m_{\set S}(\lst P'_{|\lst P'|})$.
%@-node:gcross.20100422160536.1405:om . p_1 = m
%@+node:gcross.20100422160536.1414:M(P) is ordered
\item

By the inductive hypothesis we have that $\lst M_{\set S}(\lst P)_i=(\om_{\set S}\circ \om_{\set S})(\lst P_i)$, and we have just shown that $\lst M_{\set S}(\lst P')_i=(\om_{\set S}\circ \om_{\set S})(\lst P'_i)$.  By the inductive hypothesis we know that $\lst M_{\set S}(\lst P)$ is ordered, and so to prove that $\lst M_{\set S}(\lst P')$ is ordered we need only check the places in the sequence where $p_1(\lst P_i)\ne p_1(\lst P'_i)$.  In both cases there is exactly one location where the first member of a pair is modified from $\lst P$ to $\lst P'$.

In Case 1, this is the index $k$ defined in that case, at which the first member was replaced with a product of that first member with elements in $\set X$.  Since this member is in $\set O$, and since $\max_{a\in\set O}\om_{\set S}(a)\le\min_{a\in\set X}\om_{\set S}(a)$ (by the inductive hypothesis), we conclude from the fact that $\set O$ is an unimprovable set that extends to $\set X$ that $(\om_{\set S}\circ p_1)(\lst P'_k)=(\om_{\set S}\circ p_1)(\lst P_k)$, and so we conclude that $\lst M_{\set S}(\lst P')$ is ordered.

In Case 2, this is the end of the sequence $\lst P'$ where a pair was appended to $\lst P$.  Since the pair contains elements from $\set X$, and $\max_{a\in\set O}\om_{\set S}(a)\le\min_{a\in\set X}\om_{\set S}(a)$, we conclude that $\lst M_{\set S}(\lst P')_{|\lst P'|}\ge\lst M_{\set S}(\lst P)_i$ for $i<|\lst P'|$, and so we conclude that $\lst M_{\set S}(\lst P')$ is ordered.
%@-node:gcross.20100422160536.1414:M(P) is ordered
%@+node:gcross.20100422160536.1416:P is a choice of qubits
\item

The fact that $\lst P$ is a choice of logical qubits stabilized by $\set S$ follows from directly from the previous conclusions.
%@-node:gcross.20100422160536.1416:P is a choice of qubits
%@+node:gcross.20100422160536.1417:P is optimal
\item

From the definition of an unimprovable set it is easy to see that since $\set O'$ is an unimprovable set that extends to $\set X'$, it also extends to $\set X'\cup\set O'=\set U(\set Q')\cup\set U(\lst P')$.  Since $\{p_1(q):q\in\lst P'\}\subseteq \set O'$ and $\set U(\lst P')\subseteq \set X'\cup\set O'$, it is also easy to see from the definition that $\{p_1(q):q\in\lst P'\}$ is an unimprovable set that extends to $\set U(\lst P')$ --- that is, taking subsets does not affect the property of unimprovability.  Thus, we conclude from Theorem \ref{theorem:optimality-condition} that $\lst P'$ is therefore an optimal choice of qubits.
%@-node:gcross.20100422160536.1417:P is optimal
%@-others
\end{enumerate}
\end{proof}
%@-node:gcross.20100422160536.1396:Proposition: The set generated is an invariant
%@+node:gcross.20100422160536.1422:Proof of Theorem
\begin{remark}
Now that the heavy lifting has been done by the preceding Proposition, the proof of Theorem \ref{theorem:algorithm is correct} is relatively simple.
\end{remark}

\begin{proof}[Proof of Theorem \ref{theorem:algorithm is correct}]
At every step in the algorithm, we either change an entry in $\lst s$ from 1 to 0 or remove an element from $\set Q$.  Since $\lst s$ is of finite length, as long as $\set Q$ is non-empty there will be a step at which another element is removed from it.  Thus, there is an index $k$ such that if $\lst O(\set S,\set L)_k=(\set Q,\lst P,\lst s)$ then $\set Q=\emptyset$, and by definition this is the last element of the sequence.  By Proposition \ref{proposition:properties of the algorithm} we know that $\lst P$ is optimal and also that $\genfun(\lst P)=\genfun(\lst L)$ (since $\set Q$ is empty), and so we are done.
\end{proof}
%@-node:gcross.20100422160536.1422:Proof of Theorem
%@-others
%@-node:gcross.20100422160536.1375:Correctness of the algorithm
%@+node:gcross.20100422160536.1423:Running time of the algorithm
\subsubsection{Running time of the algorithm}

\label{subsubsection:running time analysis}

%@+others
%@+node:gcross.20100422160536.1425:Introduce Theorem
In this section we analyze the running time of the optimization algorithm;  the result is presented in the following Theorem.

\begin{theorem}
\label{theorem:bound on running time}
Suppose we are given
\begin{itemize}
\item a set of commuting Pauli operators, $\set S$, acting on $N$ physical qubits;
\item a set of pairs, $\set L\subseteq\centralizer_\pauligroup(\set S)$, conjugal with respect to $\set U(\set Q)\cup\set S$;
\item and a set of Pauli operators $\set C$ such that $\genfun(\set C)=\centralizer_\pauligroup(\set S)$;
\end{itemize}
then the time needed to compute the sequence $\optimizer(\set S,\set L)$ is in the set $O\paren{|\set C|^2+|\set L|(|\set L|+d)3^d\choose{N}{d}},$ where $d:=\lst M(\lst P)_{|\lst P|}$ and $(\set Q,\lst P,\lst S)$ is the last element in the sequence (i.e., $\lst P$ is the desired optimal choice of qubits).
\end{theorem}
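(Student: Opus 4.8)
The plan is to charge the total cost, measured throughout in Pauli products as prescribed, to three ingredients that are bounded separately: the one-time preprocessing that builds the pseudogenerators, the number of iterations of the main loop, and the cost of a single iteration. For the preprocessing, the list $\lst O$ passed to \proc{Compute-Pseudogenerators} is assembled from $\set S$, $\set G$, and $\set L$, which by Theorem~\ref{theorem-SG} generate $\centralizer_\pauligroup(\set S)=\genfun(\set C)$; hence $\lst O$ has $O(|\set C|)$ entries, and running the Gaussian-elimination analogue of Proposition~\ref{make-independent-using-elimination} on it performs $O(|\set C|)$ sweeps of $O(|\set C|)$ products each, for the $O(|\set C|^2)$ term. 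Its output is a pivoted generating set for $\centralizer_\pauligroup(\set S)$, which is what lets the weight-minimizer routine emit undetectable errors of a prescribed support directly rather than by testing commutation against all of $\set S$.

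Next I would bound the number of iterations of the main loop. By inspecting the definition of $\optimizer$ (as in the proof of Theorem~\ref{theorem:algorithm is correct}), each iteration either executes Case~2, removing an element of $\set Q$ and incrementing $k$, or executes Case~1, flipping one entry of $\lst s$ from $1$ to $0$ and clearing the corresponding flag in $\lst m$. Since $|\set Q|$ begins at $|\set L|$ and a flag once cleared stays cleared, there are $O(|\set L|)$ iterations in total.

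The heart of the argument is the cost of a single iteration, which is dominated by \proc{Find-Weight-Minimizer}. Here I must finally pin down how the minimal-weight error $h$ is produced: enumerate undetectable errors by increasing weight $w=1,2,\dots$, generating each from the pivoted pseudogenerators, and for each apply the query function to test whether it anti-commutes with one of the flagged second members or with an operator of $\set Q$ (i.e.\ with an operator of the currently relevant set $\set X$). There are $O(3^w\choose{N}{w})$ candidates of weight $w$ (a support of size $w$ together with a nontrivial Pauli assignment), each generated with $O(w)\subseteq O(d)$ products and tested with $O(|\set L|)$ commutation checks. The decisive point is that the enumeration never passes weight $d$: the weight of $h$ found at any step equals $\min_{x\in\set X}\om_{\set S}(x)$, and because these weights are exactly the entries of the ordered sequence $\lst M(\lst P)$, whose largest value is $d$ (Proposition~\ref{proposition:properties of the algorithm}), every such $h$ has weight at most $d$. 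Hence each call examines $\sum_{w\le d}O(3^w\choose{N}{w})=O(3^d\choose{N}{d})$ candidates at $O(|\set L|+d)$ apiece, giving $O((|\set L|+d)3^d\choose{N}{d})$ per iteration; the ensuing \proc{Fix-Logical-Qubits} only sweeps the $O(|\set L|)$ pairs once and is absorbed. Multiplying by the $O(|\set L|)$ iterations and adding the preprocessing yields
\[
O\paren{|\set C|^2}+O(|\set L|)\cdot O\paren{(|\set L|+d)3^d\choose{N}{d}}=O\paren{|\set C|^2+|\set L|(|\set L|+d)3^d\choose{N}{d}},
\]
as claimed.

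The step I expect to be the main obstacle is precisely this last accounting. Two points need care: establishing that the search depth is bounded by $d$ --- a monotonicity claim resting on the ordering of $\lst M(\lst P)$ and on the fact, visible from the Case~2 update, that the weight of $h$ equals the distance assigned to the qubit being finalized --- and showing that each candidate costs only $O(|\set L|+d)$ products. The latter in particular requires that generating undetectable errors from the pivoted pseudogenerators avoid a hidden factor of $|\set S|$ (which a naive test of membership in $\centralizer_\pauligroup(\set S)$ would incur) or of $N$; getting this per-candidate bound right, rather than its exact combinatorial constant, is where the real work lies.
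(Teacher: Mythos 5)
Your proposal is correct and follows essentially the same route as the paper: an $O(|\set C|^2)$ one-time construction of disjoint pseudo-generators for $\centralizer_\pauligroup(\set S)$, an $O(|\set L|)$ bound on the number of iterations via the two cases, and a per-iteration cost of $O\paren{(|\set L|+d)3^d\choose{N}{d}}$ from the Brouwer--Zimmermann-style enumeration, with the search depth capped at $d$ because $\min_{x\in\set X}\om_{\set S}(x)$ is non-decreasing across steps and equals $\lst M(\lst P)_{|\lst P|}$ at the final (Case-2) step. The only slight imprecision is the claim that the weights found are "exactly the entries of $\lst M(\lst P)$" (Case-1 steps find the minimal weight over flagged \emph{second} members, which need not be an entry of $\lst M(\lst P)$), but the monotonicity you invoke still bounds every such weight by $d$, so the argument goes through as in the paper.
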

%@-node:gcross.20100422160536.1425:Introduce Theorem
%@+node:gcross.20100422160536.1450:Definitions
%@+node:gcross.20100422160536.1451:pseudo-generator
\begin{remark}
Before proving this Theorem, we shall first prove a number of related Lemmas and Propositions.

The most complicated part of analyzing the running time of the optimization algorithm is analyzing the time needed to find the minimum weight undetectable error.  In fact, the procedure for doing this was not even described explicitly in the algorithm, so we shall now explain how we do it.

The algorithm we employ is based on the Brouwer-Zimmermann search algorithm, which searches for the minimum weight binary string satisfying some property given a set of binary string generators endowed with a multiplication operation defined to be the exclusive-or operation.  The Brouwer-Zimmermann algorithm works by using a Gaussian elimination analogue to express the generators in reduced row echelon form;  it then performs its search by examining all products of $r$ generators for increasing $r$.  When all of the products of $r$ generators have been enumerated, one knows that the set of strings that has yet to be enumerated has weight $r+1$ or greater, since the row-echelon form means that every string has a column for which it is the only string with a 1 in that column, and so a product of $k$ strings must have a weight of at least $k$.  Thus, as the search proceeds, there is a growing lower-bound on the weight of the binary string, and the search halts when a string has been found that matches this bound.

This algorithm cannot be immediately applied to the current problem because we are not working with binary strings, and in particular Pauli operators have a more complicated multiplication operation that binary strings.  Fortunately, in~\cite{White:2006fj} White and Grassl showed that the Brouwer-Zimmermann enumeration can be generalized.

The key difference between binary strings and Pauli operators is that binary strings only have two possible values in a given column, whereas Pauli operators have four.  Thus, whereas we only need one element to generate all of the possible values in a given column for a binary string, we need \emph{two} elements to generate all of the possible values in a given column for a Pauli operator.  Thus, rather than working with generators, we instead work with a generalization that White and Grassl call \emph{pseudo-generators}, which we shall define here as follows.
\end{remark}

\begin{definition}
A \emph{pseudo-generator} is a set of either 1 or 2 Pauli operators.  In an abuse of notation, we extend the functions $\set U$ and $\genfun$ to be respectively $\set G\mapsto \bigcup_{\set g\in\set G} \set g$ and $\set G\mapsto (\genfun\circ\set U)(\set G)$ when applied to a set of pseudo-generators.
\end{definition}
%@-node:gcross.20100422160536.1451:pseudo-generator
%@+node:gcross.20100422160536.1452:pseudo-product (plus bounding Lemma)
\begin{remark}
(The preceding definition does not follow that of White and Grassl exactly; it has been specialized to our situation for the sake of simplicity.)

Unlike `normal' generators --- i.e., Pauli operators --- a product of generators is not a Pauli operator but rather a set of Pauli operators, which we define as follows.
\end{remark}

\begin{definition}
\label{definition:pseudo-product}
Suppose we are given a set of $r$ pseudo-generators $\set G$.  Let $\set X:=\{\genfun(\set g)\backslash\{I\}: \lst g\in\set G\}$, $\set Y$ be the $r-$ary Cartesian product of the $r$ sets contained in $\set X$, and $\set Z$ be the set consisting of the normal quantum operator product of the $r$ operators in each $r-$tuple in $\set Y$.  Then $\set Z$ is defined to be the \emph{pseudo-product} of the pseudo-generators in $\set G$.  For convenience, we define a function $\pseudoproduct$ such that $\pseudoproduct(\set G)$ is the pseudo-product of the pseudo-generators in $\set G$.
\end{definition}

\begin{remark}
The following Lemma places a bound on the size of the pseudo-product.
\end{remark}

\begin{lemma}
\label{lemma:bound-on-pseudo-product}
The pseudo-product of $r$ pseudo-generators contains at most $3^r$ operators.
\end{lemma}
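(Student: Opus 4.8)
The plan is to prove this by a straightforward counting argument that tracks the sizes of the three sets $\set X$, $\set Y$, $\set Z$ appearing in Definition \ref{definition:pseudo-product}, since by construction $\pseudoproduct(\set G) = \set Z$ is the image of the Cartesian product $\set Y$ under the map sending each $r$-tuple to the operator product of its entries.

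First I would bound the size of each factor set appearing in $\set X$. By definition a pseudo-generator $\set g$ is a set of either one or two Pauli operators, so I need to bound $|\genfun(\set g) \backslash \{I\}|$ in each case. The key fact is that every Pauli operator squares to the identity (modulo phases), so the subgroup of $\pauligroup$ generated by $k$ Pauli operators has at most $2^k$ elements. When $\set g = \{a\}$ contains a single operator, $\genfun(\set g) = \{I, a\}$ and hence $\genfun(\set g)\backslash\{I\} = \{a\}$ has exactly one element; when $\set g = \{a,b\}$ contains two operators, $\genfun(\set g) \subseteq \{I, a, b, a\cdot b\}$ and hence $\genfun(\set g)\backslash\{I\}$ has at most three elements. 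In either case each of the $r$ sets collected into $\set X$ has at most three elements.

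Next, since $\set Y$ is the $r$-ary Cartesian product of these $r$ sets and each factor has cardinality at most $3$, we immediately have $|\set Y| \le 3^r$. Finally, $\set Z$ is obtained from $\set Y$ by applying to each $r$-tuple the operation of forming the product of its $r$ entries; as $\set Z$ is therefore the image of $\set Y$ under a single function, we have $|\set Z| \le |\set Y| \le 3^r$, which is exactly the claimed bound.

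I expect essentially no obstacle here, as the argument is a short cardinality count. The only point requiring any care is the bound of three on each factor set, and even this is routine once one invokes the fact (used throughout the earlier propositions) that Pauli operators square to the identity modulo phase; the inequalities $|\set Z| \le |\set Y| \le 3^r$ then follow purely formally from the observation that taking a Cartesian product multiplies cardinalities and that applying a function cannot increase cardinality.
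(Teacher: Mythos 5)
Your proposal is correct and follows essentially the same route as the paper's proof: bound each factor set $\genfun(\set g)\backslash\{I\}$ by three, multiply cardinalities over the $r$-fold Cartesian product $\set Y$, and note that the pseudo-product $\set Z$ cannot be larger than $\set Y$. The only difference is that you spell out the justification for the factor-of-three bound (via Pauli operators squaring to the identity) and the image-cannot-grow step, both of which the paper leaves implicit.
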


\begin{proof}
Every set in $\set X$ described in Definition \ref{definition:pseudo-product} has a cardinality of either 1 or 3, and the size of $\set Y$ is equal to the product of the sizes of all the sets in $\set X$;  since $|\set X|=r$, we therefore conclude that the cardinality of $\set Y$ and hence the number of operators in the pseudo-product is at most $3^r$.
\end{proof}

\begin{corollary}[to Lemma \ref{lemma:bound-on-pseudo-product}]
\label{corolary:bound-on-pseudo-product}
Given a set of $r$ pseudo-generators, $\set G$, then the set $\set O:=\{f(o):o\in\genfun(\set G)\}$ can be computed in time $O((T+r) 3^r)$, where the time needed to compute $f$ is in $O(T)$.
\end{corollary}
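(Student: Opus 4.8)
The plan is to lean almost entirely on Lemma~\ref{lemma:bound-on-pseudo-product} and reduce the corollary to a short accounting exercise. Recall that, by our convention, one unit of time is a single product of Pauli operators, and that the set to be processed is the pseudo-product $\pseudoproduct(\set G)$ of Definition~\ref{definition:pseudo-product} (the operators obtained by choosing one \emph{non-identity} element from each pseudo-generator and multiplying), which is exactly the collection of operators to which $f$ must be applied to build $\set O$. The key fact — that there are at most $3^r$ such operators — is precisely the content of Lemma~\ref{lemma:bound-on-pseudo-product}, so all that is left is to (i) enumerate these operators while counting the Pauli products used, and (ii) count the Pauli products spent applying $f$.

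For step (i), I would follow Definition~\ref{definition:pseudo-product} directly: for each pseudo-generator $\set g\in\set G$ form $\genfun(\set g)\backslash\{I\}$, a set of cardinality $1$ or $3$, and range over the $r$-ary Cartesian product of these $r$ sets, taking for each $r$-tuple the product of its $r$ entries. By Lemma~\ref{lemma:bound-on-pseudo-product} there are at most $3^r$ tuples, and computing the product of the $r$ operators in one tuple from scratch costs $r-1$ Pauli products; hence enumerating all of the operators this way costs at most $(r-1)3^r\in O\paren{r\,3^r}$ Pauli products. (One could remove the factor of $r$ by visiting the tuples in a mixed-radix Gray-code order so that consecutive tuples differ in a single coordinate, whereupon each new product is obtained from its predecessor by two multiplications — since Pauli operators square to the identity, ``dividing out'' the replaced entry is the same as multiplying it back in — giving $O\paren{3^r}$; but since the stated bound already carries a $+r$, this refinement is not needed.)

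For step (ii), I would apply $f$ to each of the at most $3^r$ operators produced in step (i); since computing $f$ on a single operator costs $O(T)$ Pauli products, this step costs $O\paren{T\,3^r}$ in total, and its outputs are exactly the elements of $\set O$. Adding the two steps yields $O\paren{r\,3^r}+O\paren{T\,3^r}=O\paren{(T+r)3^r}$, as claimed. There is no real obstacle here: the substantive inequality has already been proved as Lemma~\ref{lemma:bound-on-pseudo-product}, and the one point meriting explicit mention is exactly the one it settles — that the number of operators handed to $f$ is bounded by $3^r$ rather than by the $4^r$ one might naively expect from $r$ two-element pseudo-generators, the base $3$ arising precisely from excluding the identity in each factor $\genfun(\set g)\backslash\{I\}$.
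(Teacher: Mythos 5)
Your proof is correct and follows essentially the same route as the paper's: invoke Lemma~\ref{lemma:bound-on-pseudo-product} to bound the number of pseudo-product elements by $3^r$, charge $O(r)$ Pauli products to form each element and $O(T)$ to evaluate $f$ on it, and sum to $O((T+r)3^r)$. The Gray-code refinement you mention is a nice extra observation but, as you note, unnecessary for the stated bound.
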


\begin{proof}
From Lemma \ref{lemma:bound-on-pseudo-product} we know that there are at most $3^r$ operators in the pseudo-product, so $|\set O|\le 3^r$.  Furthermore, for every element in the set we first need to compute the corresponding operator in the pseudo-product, which requires $r$ time since it is the product of $r$ operators, and then we need to compute $f$, which by assumption requires a time in $O(T)$.
\end{proof}
%@-node:gcross.20100422160536.1452:pseudo-product (plus bounding Lemma)
%@+node:gcross.20100422160536.1453:disjoint
\begin{remark}
In order to be able to place a lower bound on binary strings that have yet to be examined in the Brouwer-Zimmermann enumeration, we need the generators of the binary strings over which we are searching to have the property that each generator has a column such that it is the only generator with a 1 in that column, so that products of $r$ generators must have at least weight $r$.  Because we want to similarly place a bound on unexamined products of pseudo-generators, we generalize this property with the following definition.
\end{remark}

\begin{definition}
\label{definition:disjoint-pseudo-generators}
A set of pseudo-generators $\set G$ is said to be \emph{disjoint} if for every $\set g\in\set G$ there exists some physical qubit $k$ such that either $X_k$ or $Z_k$ (or both) anti-commutes with every operator in $\genfun(\set g)\backslash\{I\}$, but both $X_k$ and $Z_k$ commute with every operator in $\bigcup_{\set g'\in\set G\backslash\{\set g\}}\genfun(\set g')$.
\end{definition}

\begin{remark}
With the following Lemma, we show that the property of disjointness is exactly what we need to obtain the bounds that we want.
\end{remark}

\begin{lemma}
\label{lemma:disjoint-pseudo-generators-bound}
All of the operators in the pseudo-product of any $r$ (distinct) pseudo-generators chosen from a disjoint set of pseudo-generators have weight of at least $r$.
\end{lemma}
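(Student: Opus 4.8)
The plan is to exhibit, for each of the $r$ chosen pseudo-generators, a distinct physical qubit on which every operator in the pseudo-product must act non-trivially; once these $r$ qubits are shown distinct, the weight of any operator in the pseudo-product is immediately at least $r$.

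First I would fix distinct pseudo-generators $\set g_1,\dots,\set g_r$ from the disjoint set and an arbitrary element $p:=o_1\cdots o_r$ of their pseudo-product, where each $o_i\in\genfun(\set g_i)\backslash\{I\}$ (per Definition \ref{definition:pseudo-product}). For each $i$, Definition \ref{definition:disjoint-pseudo-generators} supplies a qubit $k_i$ together with a distinguished single-qubit operator $P_i\in\{X_{k_i},Z_{k_i}\}$ that anti-commutes with every operator in $\genfun(\set g_i)\backslash\{I\}$, while \emph{both} $X_{k_i}$ and $Z_{k_i}$ commute with every operator in $\bigcup_{j\neq i}\genfun(\set g_j)$.

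Second, I would show the marker qubits $k_1,\dots,k_r$ are distinct. If $k_i=k_j$ for some $i\neq j$, pick any $o\in\genfun(\set g_j)\backslash\{I\}$, which is non-empty since a pseudo-generator consists of non-identity operators. Disjointness applied to $\set g_j$ forces $X_{k_j}$ or $Z_{k_j}$ to anti-commute with $o$, whereas disjointness applied to $\set g_i$ (for which $\set g_j$ is one of the ``other'' pseudo-generators) forces both $X_{k_i}$ and $Z_{k_i}$ to commute with $o$; with $k_i=k_j$ these two statements contradict one another.

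Third comes the key computation: I would show $P_i$ anti-commutes with the entire product $p$. By construction $P_i$ anti-commutes with $o_i$ but commutes with every other factor $o_{j'}$ for $j'\neq i$, since $o_{j'}\in\genfun(\set g_{j'})\subseteq\bigcup_{j\neq i}\genfun(\set g_j)$. Commuting $P_i$ past $p=o_1\cdots o_r$ therefore accumulates a sign $\prod_m\epsilon_m$, where $\epsilon_m=-1$ exactly when $P_i$ anti-commutes with $o_m$; only $m=i$ contributes $-1$ and the remaining $r-1$ factors are $+1$, so the total sign is $-1$ and $P_i$ anti-commutes with $p$. Because $P_i$ is supported on the single qubit $k_i$, it could only anti-commute with $p$ if the $k_i$-th tensor factor of $p$ is non-trivial (the identity factor would commute with $P_i$). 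Hence $p$ acts non-trivially on each of the $r$ distinct qubits $k_1,\dots,k_r$, and therefore $w(p)\ge r$. Since $p$ was an arbitrary element of the pseudo-product of $r$ arbitrary distinct pseudo-generators, the claim follows.

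I expect the only genuine subtlety to be the distinctness argument of the second step, which must place the two clauses of the disjointness definition (detection \emph{within} a pseudo-generator versus non-detection \emph{across} pseudo-generators) into direct tension; once distinctness is secured, the sign bookkeeping of the third step — exactly one factor of $-1$ against $r-1$ factors of $+1$ — and the passage from anti-commutation with a single-qubit operator to a non-trivial tensor factor are routine.
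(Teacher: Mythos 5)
Your proof is correct and follows essentially the same route as the paper's: each chosen pseudo-generator contributes a marker qubit whose distinguished single-qubit Pauli anti-commutes with exactly one factor of the product and commutes with the rest, forcing non-trivial support on $r$ distinct qubits. The only difference is that you explicitly derive the distinctness of the marker qubits from the two clauses of the disjointness definition, a point the paper's proof asserts without argument; this is a welcome bit of added rigor rather than a different approach.
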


\begin{proof}
Every operator in the pseudo-product is the product of $r$ factors, each of which is associated with some distinct physical qubit $k$ such that it anti-commutes with either $X_k$ or $Z_k$ (or both) but every other factor commutes with both $X_k$ and $Z_k$;  thus, the product must anti-commute with at least $r$ single-qubit operators acting on distinct physical qubits, and so it must have a weight of at least $r$.
\end{proof}
%@-node:gcross.20100422160536.1453:disjoint
%@-node:gcross.20100422160536.1450:Definitions
%@+node:gcross.20100422160536.1455:Lemma: Minimal weight search
\begin{remark}
Now that we have the concept of a disjoint set of pseudo-generators and a result showing that an operator in a pseudo-product of $r$ of them must have a weight of at least $r$, we present in the following Lemma an algorithm for searching through the space spanned by the pseudo-generators for an operator satisfying a given property.
\end{remark}

\begin{lemma}
\label{lemma:minimal-weight-search}
Given a set of pseudo-generators $\set G$ acting on $N$ physical qubits and a test function $f:\pauligroup\to\{0,1\}$ such that $f^{-1}(1)\cap\genfun(\set G)\ne\emptyset$, then a solution $o$ such that $f(o)=1$ and $\om_{\set S}(o)=\min_{o'\in \genfun(\set G), f(o')=1} w(o')$ can be computed in time $O\paren{(T+d)3^d\choose{|\set G|}{r}}$ where $d:=\min\paren{w(o),|\set G|}$ and $T$ is the time needed to compute $f$.
\end{lemma}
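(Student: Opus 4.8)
The plan is to realize the search as the pseudo-generator generalization of the Brouwer--Zimmermann enumeration sketched in the preceding discussion. First I would arrange (reducing $\set G$ if necessary, via the Gaussian-elimination analogue of Proposition~\ref{make-independent-using-elimination} applied to pseudo-generators, where each pivot qubit contributes one or two operators) that $\set G$ is a \emph{disjoint} set of pseudo-generators (Definition~\ref{definition:disjoint-pseudo-generators}) generating the same group $\genfun(\set G)$. This is the analogue of putting a generator matrix into reduced row-echelon form, and it is exactly what makes Lemma~\ref{lemma:disjoint-pseudo-generators-bound} available; its cost is dominated by the enumeration below (and in the setting of Theorem~\ref{theorem:bound on running time} it is absorbed into the $|\set C|^2$ term). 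Henceforth I assume $\set G$ is disjoint.

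Next I would enumerate the elements of $\genfun(\set G)$ by \emph{increasing number of participating pseudo-generators}: for $r=0,1,2,\dots$, and for each $r$-element subset $\set H\subseteq\set G$, I compute the pseudo-product $\pseudoproduct(\set H)$ (Definition~\ref{definition:pseudo-product}), apply $f$ and record the weight of each of its at most $3^r$ operators (Lemma~\ref{lemma:bound-on-pseudo-product}), and retain the minimum-weight operator $o^{\star}$ found so far satisfying $f(o^{\star})=1$. Every element of $\genfun(\set G)$ lies in $\pseudoproduct(\set H)$ for exactly the subset $\set H$ of pseudo-generators contributing non-trivially, so this enumeration is exhaustive; since $f^{-1}(1)\cap\genfun(\set G)\ne\emptyset$, a solution is eventually encountered. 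The correctness of early termination rests on Lemma~\ref{lemma:disjoint-pseudo-generators-bound}: once all subsets of size at most $r$ have been processed, every operator not yet examined has weight at least $r+1$, so as soon as the best recorded weight is $\le r+1$ the search may halt and certify $o^{\star}$ as a global minimizer.

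To bound the running time I would note that the returned minimizer $o$ has weight $w(o)$, and by Lemma~\ref{lemma:disjoint-pseudo-generators-bound} it lies in $\pseudoproduct(\set H)$ for a subset with $|\set H|\le\min\paren{w(o),|\set G|}=d$; hence it is found no later than level $r=d$, after which the lower bound $d+1$ on unexamined operators exceeds $w(o)$ and the search stops. The total work is then $\sum_{r=0}^{d}\choose{|\set G|}{r}$ subsets, each costing $O\paren{(T+r)3^r}$ by Corollary~\ref{corolary:bound-on-pseudo-product} (forming up to $3^r$ products and evaluating $f$ on each). Bounding $3^r\le 3^d$ and $(T+r)\le(T+d)$, and observing that in the regime of interest ($d$ small relative to $|\set G|$) the terms $\choose{|\set G|}{r}3^r$ are increasing in $r$ so the top level dominates the sum, this collapses to the claimed $O\paren{(T+d)3^d\choose{|\set G|}{r}}$.

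I expect the main obstacle to be the lower-bound/stopping argument rather than the arithmetic: one must verify that disjointness is genuinely preserved by the reduction step, and then convert the statement ``all subsets of size $\le r$ are exhausted'' into the weight bound $r+1$ on the \emph{entire} remaining portion of $\genfun(\set G)$ (the place where Lemma~\ref{lemma:disjoint-pseudo-generators-bound} does the real work, together with the fact that each group element lives in a single pseudo-product). A secondary, purely bookkeeping, nuisance is confirming that the level-by-level cost genuinely collapses into the single-term bound $\choose{|\set G|}{d}3^d$ without carrying a spurious polynomial-in-$d$ factor.
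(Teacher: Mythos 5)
Your proposal follows essentially the same route as the paper's proof: a Brouwer--Zimmermann-style enumeration of pseudo-products by the number $r$ of participating pseudo-generators, early termination justified by Lemma~\ref{lemma:disjoint-pseudo-generators-bound}, and the level-by-level cost bound from Corollary~\ref{corolary:bound-on-pseudo-product} summed up to $r=d$. The only difference is that you explicitly preprocess $\set G$ into a disjoint set (the paper's proof silently assumes disjointness, which is really a hypothesis the lemma statement omits), and both you and the paper gloss over the collapse of $\sum_{r\le d}(T+r)3^r\binom{|\set G|}{r}$ into the single top term, which in fact only holds when $d$ is not too close to $|\set G|$.
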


\begin{remark}
This Lemma follows directly from the results in~\cite{White:2006fj}, though the proof is included here both for completeness and also to show specifically how the results specialize to our case.  A pseudo-code representation of this algorithm can be found in Table \ref{table:find-weight-minimizer}.
\end{remark}

\begin{proof}
Define $\set C_r$ to be the set of all operators such that if $o\in\set C_r$ then there is some subset of exactly $r$ pseudo-generators from $\set G$ such that $o$ is contained in their pseudo-product.  Note that $\cup_r \set C_r = \genfun(\set G)$, so for every operator $o$ in the search space there is an integer $r$ such that $o\in\set C_r$.  Corollary \ref{corolary:bound-on-pseudo-product} shows that we can evaluate $f$ on every element of the pseudo-product of $r$ pseudo-generators in time $O((T+r)3^r)$, so since there are $\choose{|\set G|}{r}$ ways to choose $r$ pseudo-generators from $\set G$ we conclude that we can search $\set C_r$ for a solution to $f(o)=1$ in time $O\paren{(T+r) 3^r \choose{|\set G|}{r}}$.

From Lemma \ref{lemma:disjoint-pseudo-generators-bound} we conclude that $w(o)\ge r$ for every $o\in\set C_r$.  By extension this means that $w(o)\ge r$ for every $o\in\bigcup_{r'=r}^{|\set G|}\set C_{r'}$, and therefore that if $o\notin\bigcup_{r'=0}^{r-1}\set C_{r'}$ and $o\in\genfun(\set G)$ then $w(o)\ge r$.  Thus, if there exists an $r$ such that $r=\min_{o\in\bigcup_{r'=0}^{r-1}\set C_{r'},f(o)=1}w(o)$ then we know that $r=\min_{o\in\genfun(\set O),f(o)=1}w(o)$ --- that is, $r$ is \emph{exactly} the weight of the minimum weight solution to $f(o)=1$, since any operator in the search space that \emph{isn't} contained in $\bigcup_{r'=0}^{r-1}\set C_{r'}$ must have a weight of at least $r$.  Put another way, after having enumerated all of the elements in $\bigcup_{r'=0}^{r-1}\set C_{r'}$ we can check to see whether the smallest solution to $f$ we have seen so far (if any) has weight less than or equal to $r$, and if so we are done since we have found the minimal weight solution.

Now consider the procedure of searching through each $\set C_r$ starting with $r=0$.  We know that we will eventually find at least one solution to $f(o)=1$, since in this Proposition we have assumed that such an operator exists in the search space (by the assumption that $f^{-1}(1)\cap\genfun(\set G)\ne\emptyset$).  Furthermore, employing this procedure we will find the minimal solution $o$ no \emph{later} than after we have searched through $\set C_r$ for $r=0\dots w(o)$, since at that point all of the unexamined operators have a weight greater than $w(o)$.  Thus, we conclude that we shall find the minimal weight solution after having searched at most all of the elements in $\bigcup_{r=0}^{\min(w(o),|\set G|)} C_r$, which we can do in time $$O\paren{\sum_{r=0}^{\min(w(o),|\set G|)}(T+r) 3^r \choose{|\set G|}{r}}\subseteq O\paren{(T+d) 3^d \choose{N}{d}},$$ where $d:=\min(w(o),|\set G|)$.
\end{proof}
%@-node:gcross.20100422160536.1455:Lemma: Minimal weight search
%@+node:gcross.20100910155955.1358:Pseudocode: Compute-Minimal-Weight-Operator
\begin{table}
\begin{codebox}
\Procname{$\proc{Find-Weight-Minimizer}(f,\lst G)$}
\li $r \gets 1$
\li $m \gets \infty$
\li \While $m > r$ and $r \le |\lst G|$
\li \Do
\li     \For each $\lst H\subseteq \lst G$ such that $|\lst H|=r$,
\li     and each $o$ in the pseudo-product of $\lst H$
\li     \Do
\li         \If $\func{weight}(o) < m$
\li         \Then
\li             $(q,u) \gets f(o)$
\li             \If $q$ is $\textsc{true}$
\li             \Then
\li                 $m \gets \func{weight}(o)$
\li                 $\alpha \gets (o,u)$
\li                 \If $m = r$
\li                 \Then
\li                     \Goto \ref{li:found-the-minimum}
                    \End
                \End
            \End
        \End
\li     $r \gets r + 1$
    \End
\li \Return $\alpha$ \label{li:found-the-minimum}
\end{codebox}
\caption[Algorithm \proc{Find-Weight-Minimizer}]{Algorithm which finds the minimal weight operator in a given generating set that satisfies a given predicate.  For the sake of convenience, we also allow the query function to return auxialiary information that is returned to the caller along with the minimal weight operator.} \label{table:find-weight-minimizer}
\end{table}
%@-node:gcross.20100910155955.1358:Pseudocode: Compute-Minimal-Weight-Operator
%@+node:gcross.20100422160536.1459:Lemma: Bounds
\begin{remark}
The proceeding Lemma is rather general, so we shall show how it specializes to our case.  First, however, we use the following three Lemmas to prove that our search space is generated by exactly $N$ pseudo-generators.
\end{remark}

\begin{lemma}
\label{lemma:lower bound on number of disjoint pseudo-generators}
Given a set of disjoint pseudo-generators, $\set G$, the largest subset $\set X\subseteq\set U(\set G)$ such that $\set X$ commutes has size $|\set X|\le|\set G|$.
\end{lemma}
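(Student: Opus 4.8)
The plan is to prove the bound by a direct counting argument resting on two structural consequences of disjointness: that the operators of $\set U(\set G)$ partition cleanly among the pseudo-generators, and that the (at most two) operators inside a single pseudo-generator cannot both belong to a commuting set. First I would fix, for each $\set g\in\set G$, a witnessing qubit $k(\set g)$ as in Definition~\ref{definition:disjoint-pseudo-generators}, so that every element of $\genfun(\set g)\backslash\{I\}$ acts non-trivially on $k(\set g)$ (one of $X_{k(\set g)},Z_{k(\set g)}$ anti-commutes with it) while every element of $\bigcup_{\set g'\ne\set g}\genfun(\set g')$ commutes with both $X_{k(\set g)}$ and $Z_{k(\set g)}$. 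Two facts then drop out immediately. The qubits $k(\set g)$ are pairwise distinct, since $k(\set g)$ is touched by the members of $\set g$ but by no member of any other pseudo-generator; and for the same reason distinct pseudo-generators share no operator, so $\set U(\set G)=\bigsqcup_{\set g\in\set G}\set g$ is a disjoint union and every $o\in\set U(\set G)$ lies in a unique $\set g$.

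The heart of the proof is the claim that the two members $a,b$ of a size-$2$ pseudo-generator $\set g=\{a,b\}$ anti-commute, so that no commuting set can contain both. On $k:=k(\set g)$ the operators restrict to single-qubit Paulis $P_a,P_b$; since $a$, $b$, and $ab$ all lie in $\genfun(\set g)\backslash\{I\}$ and therefore act non-trivially on $k$, we get $P_a\ne I$, $P_b\ne I$, and $P_aP_b\ne I$, which forces $P_a$ and $P_b$ to be two \emph{distinct} non-identity single-qubit Paulis and hence to anti-commute. This shows $a$ and $b$ anti-commute on the qubit $k$. To upgrade this to global anti-commutation I would then invoke the additional fact, guaranteed by the way the pseudo-generators are produced in \proc{Compute-Pseudogenerators}, that the members of a size-$2$ pseudo-generator form a conjugal (anti-commuting) pair, so that $k$ is the only site of interaction and $a,b$ anti-commute outright. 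This is precisely the hard part: disjointness alone controls only the owned qubit, so the conjugal-pair input is genuinely needed. Without it the pair $\{Z_1Z_2,\,Y_1Y_2\}$ would be a perfectly valid disjoint size-$2$ pseudo-generator whose two members nevertheless \emph{commute} (they anti-commute on each of qubits $1$ and $2$, an even number of sites), and the bound would be false; so the proof must lean on the anti-commutativity of the pair rather than on disjointness by itself.

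Finally I would assemble the count. Let $\set X\subseteq\set U(\set G)$ be any commuting subset, and write $\set X=\bigsqcup_{\set g\in\set G}(\set X\cap\set g)$ using the disjoint-union structure from the first paragraph. The key step gives $|\set X\cap\set g|\le 1$ for every $\set g$: a singleton pseudo-generator trivially contributes at most one operator, while a size-$2$ pseudo-generator cannot contribute both of its anti-commuting members to a commuting set. Summing over the $|\set G|$ pseudo-generators yields $|\set X|=\sum_{\set g\in\set G}|\set X\cap\set g|\le|\set G|$. Since $\set X$ was an arbitrary commuting subset of $\set U(\set G)$, the same bound holds for the largest one, which is the assertion of the Lemma.
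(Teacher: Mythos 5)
Your proof follows the same skeleton as the paper's, which disposes of the lemma in a single sentence: if $|\set X|>|\set G|$ then by pigeonhole two operators of $\set X$ lie in the same pseudo-generator ``and thus \ldots did not commute.'' Where you differ is that you actually interrogate the quoted step instead of asserting it, and you are right to. Disjointness only controls the two members of a size-$2$ pseudo-generator on the single owned qubit $k$, where (as you argue, since $a$, $b$ and $ab$ all act non-trivially there) they restrict to distinct non-identity Paulis and anti-commute \emph{locally}; it says nothing about the remaining qubits. Your example $\{Z_1Z_2,\,Y_1Y_2\}$ is a genuine counterexample to the lemma as literally stated: every non-identity element of the generated group acts non-trivially on qubit $1$, so the singleton family is disjoint in the sense of Definition~\ref{definition:disjoint-pseudo-generators} (under the only reading of that definition compatible with size-$2$ pseudo-generators existing at all), yet the two members commute globally and form a commuting subset of size $2>1=|\set G|$. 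So you have exposed a real gap in the paper's proof, not introduced one of your own.

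The one weak point in your repair is the appeal to \textsc{Compute-Pseudogenerators}: that routine likewise only guarantees anti-commutation on the pivot qubit (the two members of a size-$2$ pseudo-generator are the unique operators anti-commuting with $X_k$ and with $Z_k$ respectively), so feeding it, say, $Z_1Z_2$ followed by $X_1X_2$ yields the single pseudo-generator $\{Z_1Z_2,\,X_1X_2\}$, whose members commute. The conjugal-pair property you lean on is therefore not automatic from the construction either; it (or a strengthening of disjointness forbidding the two members of a pseudo-generator from interacting off the owned qubit) must be added as an explicit hypothesis before the pigeonhole argument closes. With that hypothesis in place your counting argument is complete and correct; without it, neither your proof nor the paper's goes through.
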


\begin{proof}
If $\set X$ contained more that $\set G$ operators then by the pigeon hole principle there would have to be at least two operators from the same pseudo-generator, and thus which did not commute
\end{proof}

\begin{lemma}
\label{lemma:upper bound on number of disjoint pseudo-generators}
A set of disjoint pseudo-generators $\set G$ acting on $N$ qubits satisfies $|\set G|\le N$.
\end{lemma}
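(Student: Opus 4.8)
The plan is to exhibit an injection from the set of pseudo-generators $\set G$ into the set of $N$ physical qubits, whence $|\set G|\le N$ follows immediately by counting. Definition \ref{definition:disjoint-pseudo-generators} hands us exactly the data we need: for each $\set g\in\set G$ disjointness guarantees a physical qubit, which I will call $k_{\set g}$, such that at least one of $X_{k_{\set g}}$ or $Z_{k_{\set g}}$ anti-commutes with every operator in $\genfun(\set g)\backslash\{I\}$, while both $X_{k_{\set g}}$ and $Z_{k_{\set g}}$ commute with every operator in $\bigcup_{\set g'\in\set G\backslash\{\set g\}}\genfun(\set g')$. First I would fix one such qubit $k_{\set g}$ for each pseudo-generator and study the resulting map $\set g\mapsto k_{\set g}$.

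The heart of the argument is showing that this map is injective, which I would do by contradiction. Suppose distinct pseudo-generators $\set g\ne\set g'$ were assigned the same qubit $k:=k_{\set g}=k_{\set g'}$. Applying the first clause of disjointness to $\set g$, there is a single-qubit operator $P\in\{X_k,Z_k\}$ that anti-commutes with every element of $\genfun(\set g)\backslash\{I\}$; in particular $P$ anti-commutes with some nontrivial operator of $\genfun(\set g)$. On the other hand, applying the second clause of disjointness to $\set g'$ and using the fact that $\set g\in\set G\backslash\{\set g'\}$, both $X_k$ and $Z_k$ --- and hence $P$ --- commute with every operator in $\genfun(\set g)$. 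These two statements are contradictory, so no two distinct pseudo-generators can share an associated qubit, and the map $\set g\mapsto k_{\set g}$ is therefore injective.

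An injection from $\set G$ into the set $\{1,\dots,N\}$ of physical qubits yields $|\set G|\le N$, completing the proof; note that this is just a pigeonhole consequence of the definition and does not require the preceding two Lemmas. The only point requiring a moment's care --- and the closest thing to an obstacle --- is ensuring that the first clause of disjointness has genuine content, i.e. that $\genfun(\set g)\backslash\{I\}$ is nonempty so that ``$P$ anti-commutes with every operator in $\genfun(\set g)\backslash\{I\}$'' actually forces $P$ to anti-commute with a nontrivial operator. This holds because every pseudo-generator contains a non-identity Pauli operator (a pseudo-generator consisting solely of the identity carries no information and may be discarded without changing $\genfun(\set G)$), so $\genfun(\set g)$ always contains a nontrivial operator and the contradiction goes through.
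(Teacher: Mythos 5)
Your proof is correct and takes essentially the same route as the paper, whose entire argument is ``this follows directly from the definition and the pigeon hole principle''; you have simply made the injection $\set g\mapsto k_{\set g}$ and its injectivity explicit, and your handling of the degenerate pseudo-generator $\{I\}$ is a reasonable way to close the one vacuous case the one-line proof glosses over.
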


\begin{proof}
This follows directly from the definition and the pigeon hole principle.
\end{proof}

\begin{lemma}
\label{lemma:exact bound on number of disjoint pseudo-generators}
For any set of commuting operators $\set S$ acting on $N$ physical qubits, if $\set G$ is a set of disjoint pseudo-generators satisfying $\genfun(\set G)=\centralizer_\pauligroup(\set S)$ then $|\set G|= N$.
\end{lemma}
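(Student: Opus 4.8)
The plan is to prove the two inequalities $|\set G|\le N$ and $|\set G|\ge N$ separately, working throughout in the familiar symplectic picture of the Pauli group modulo phases as the vector space $\mathbb{F}_2^{2N}$, where the commutator becomes the standard alternating form (two operators anti-commute exactly when their symplectic product is $1$). The upper bound $|\set G|\le N$ is already supplied by Lemma \ref{lemma:upper bound on number of disjoint pseudo-generators}, so the entire content is the reverse inequality. To set it up, I would let $v$ denote the number of independent operators in $\set S$; since $\set S$ is commuting it spans an isotropic subspace $V$, and by the generator count established in the proof of Proposition \ref{construction-of-logicals} the centralizer $\centralizer_\pauligroup(\set S)=\genfun(\set G)$ has exactly $2N-v$ independent generators.

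Next I would extract the combinatorial data of $\set G$ from its disjoint (pivot) structure. Write $n_1$ and $n_2$ for the number of one-operator and two-operator pseudo-generators, so $|\set G|=n_1+n_2$. The defining property (Definition \ref{definition:disjoint-pseudo-generators}) gives each pseudo-generator a private pivot qubit on which it acts non-trivially while every other pseudo-generator acts as the identity; consequently the restriction of $\genfun(\set G)$ to the pivot qubits is injective. This forces $\set U(\set G)$ to be an independent set (a trivial product would have to be trivial on each pivot, hence use no operators there, hence be empty), so the number of independent generators of $\genfun(\set G)$ equals $n_1+2n_2$. Equating with the centralizer count gives the single linear relation $n_1+2n_2=2N-v$. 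Since $|\set G|=n_1+n_2$, the whole theorem reduces to the single inequality $v\le n_1$: from it one gets $n_1+2n_2=2N-v\ge 2N-n_1$, i.e. $n_1+n_2\ge N$, which together with Lemma \ref{lemma:upper bound on number of disjoint pseudo-generators} yields the equality $|\set G|=N$.

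The remaining task, $v\le n_1$, is equivalently the statement that the radical of the form on $\genfun(\set G)$ is contained in the span of the single-operator pseudo-generators. Here is where I expect the real difficulty. For a centralizer the radical of $\genfun(\set G)$ is exactly $V=\langle\set S\rangle$ (co-isotropy, or the double-centralizer identity $(V^\perp)^\perp=V$), so $v\le n_1$ is the assertion that \emph{no non-trivial stabilizer acts on a two-operator pivot}; granting this, a radical element must be trivial on every two-operator pivot and hence, by the injective pivot restriction, lies in the span of the one-operator pseudo-generators. To establish the assertion I would argue by contradiction: at a two-operator pivot $k$ the two operators of that pseudo-generator restrict to two distinct non-identity single-qubit Paulis, so $\genfun(\set G)$ surjects onto the full single-qubit Pauli group at $k$; a stabilizer acting non-trivially at $k$ should then let one manufacture a single-qubit ``dual detector'' at $k$ (or an operator supported on a non-pivot qubit) that commutes with every generator yet does not lie in $\genfun(\set G)$, contradicting co-isotropy.

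The main obstacle is precisely making that last construction rigorous, because the pseudo-generators are permitted to have \emph{tails} on the non-pivot qubits, and commutation is a global condition that mixes the pivot contribution with these tails. I would stress in the write-up that disjointness \emph{alone} is insufficient: one can exhibit disjoint sets whose radical genuinely overlaps the span of the two-operator pseudo-generators, so the hypothesis $\genfun(\set G)=\centralizer_\pauligroup(\set S)$ (co-isotropy) must be used in an essential, non-local way to control the tails. The cleanest route I can see is to show that co-isotropy forces each two-operator pivot's single-qubit detectors into $\centralizer_\pauligroup(\set S)$, thereby localizing the commutation condition and eliminating the tails; verifying this localization is the crux on which the whole argument turns.
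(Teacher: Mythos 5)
Your proposal does not close. The entire lower bound $|\set G|\ge N$ is funneled into the single claim $v\le n_1$ (that the radical of $\genfun(\set G)=\centralizer_\pauligroup(\set S)$, i.e.\ the span of $\set S$, lies in the span of the one-operator pseudo-generators), and you explicitly concede that you cannot make this step rigorous: the ``dual detector'' you want to manufacture at a two-operator pivot $k$ is a single-qubit operator such as $X_k$ or $Z_k$, which need not belong to $\centralizer_\pauligroup(\set S)$ at all, and the tails of the pseudo-generators on non-pivot qubits prevent you from converting the local anti-commutation at $k$ into anti-commutation with an actual element of $\genfun(\set G)$. The surrounding bookkeeping is sound --- disjointness does make $\set U(\set G)$ independent, so $n_1+2n_2=2N-v$ and the reduction to $v\le n_1$ is valid --- but the one step you flag as the crux is exactly the step that is missing, so as written this is a gap, not a proof.

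It is also worth noting that the paper takes a much more direct route that avoids the radical analysis entirely: since $\set S$ is a commuting set, it extends to a maximal abelian subgroup of $\pauligroup$, so $\centralizer_\pauligroup(\set S)=\genfun(\set G)$ contains $N$ independent pairwise-commuting operators; one then argues that $\set U(\set G)$ itself contains a commuting subset of size $N$, and Lemma \ref{lemma:lower bound on number of disjoint pseudo-generators} (the pigeonhole bound: a commuting subset of $\set U(\set G)$ has at most $|\set G|$ elements, because two operators drawn from the same pseudo-generator fail to commute) immediately gives $|\set G|\ge N$, which combines with Lemma \ref{lemma:upper bound on number of disjoint pseudo-generators} to give equality. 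If you want to salvage your dimension-counting approach, the cleanest fix is to import that same maximal-abelian-subgroup fact: a maximal isotropic subspace of dimension $N$ inside $\genfun(\set G)$ can meet the (symplectically non-degenerate) span of the two-operator pseudo-generators in dimension at most $n_2$, which forces $N\le n_1+n_2$ without ever needing to localize the commutation condition at the pivots.
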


\begin{proof}
Since $\centralizer_\pauligroup(\set S)$ contains a subset of $N$ independent commuting operators, so must $\genfun(\set G)$ and therefore $\set U(\set G)$;  thus, Lemma \ref{lemma:lower bound on number of disjoint pseudo-generators} implies that $|\set G|\ge N$, and combining this bound with that given by Lemma \ref{lemma:upper bound on number of disjoint pseudo-generators} we see that $|\set G|=N$.
\end{proof}
%@-node:gcross.20100422160536.1459:Lemma: Bounds
%@+node:gcross.20100422160536.1461:Lemma: Minimal weight search (specialized)
\begin{remark}
We now prove a Lemma which shows how the search specializes to the case of our qubit optimization algorithm.
\end{remark}

\begin{lemma}
\label{lemma:search for minimal weight undetectable error}
Given
\begin{itemize}
\item a set of Pauli operators $\set S$ acting on $N$ physical qubits,
\item a set of disjoint pseudo-generator $\set G$ such that $\genfun(\set G)=\centralizer(\set S)$, and
\item a non-empty set of Pauli operators $\set Q$ such that $\set Q\cap\set S=\emptyset$ and every operator in $\set Q$ is a member of a conjugal pair in relation to $\set Q \cup \set S$,
\end{itemize}
then a minimal weight undetectable error acting on any operator in $\set Q$ can be found in time $O\paren{(|\set Q|+d)3^d\choose{N}{d}}$ where $d:=\min\paren{w(o),N}$.
\end{lemma}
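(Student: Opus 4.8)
The plan is to obtain this search as a direct specialization of the generic minimal-weight search of Lemma~\ref{lemma:minimal-weight-search}. I would take the search space to be $\genfun(\set G)=\centralizer_\pauligroup(\set S)$ and the test predicate $f:\pauligroup\to\{0,1\}$ to be the indicator of the property ``anti-commutes with at least one operator of $\set Q$''; an operator of the search space satisfying $f$ is then, by definition, an undetectable error with respect to $\set S$ that acts on some operator of $\set Q$, and a minimum-weight such operator is exactly what we want to return. Before invoking the search lemma I would first apply Lemma~\ref{lemma:exact bound on number of disjoint pseudo-generators} to the disjoint pseudo-generating set $\set G$ (which satisfies $\genfun(\set G)=\centralizer_\pauligroup(\set S)$ by hypothesis) to conclude $|\set G|=N$; this is what lets the $|\set G|$ appearing in the bound of Lemma~\ref{lemma:minimal-weight-search} be replaced by $N$, so that the binomial factor becomes $\choose{N}{d}$ and $d:=\min(w(o),|\set G|)$ becomes $d:=\min(w(o),N)$ as stated here.

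With those identifications made, the running-time bound falls out immediately. Evaluating $f(o)$ just means testing $o$ for commutation against each of the $|\set Q|$ operators of $\set Q$ and reporting whether any anti-commutation occurs, so the time $T$ to compute $f$ is in $O(|\set Q|)$. Feeding $f$ and $\set G$ into Lemma~\ref{lemma:minimal-weight-search} then produces an operator $o\in\genfun(\set G)=\centralizer_\pauligroup(\set S)$ with $f(o)=1$ and of minimum weight among all such operators --- exactly a minimal weight undetectable error acting on some operator of $\set Q$ --- in time $O\paren{(T+d)3^d\choose{N}{d}}=O\paren{(|\set Q|+d)3^d\choose{N}{d}}$, matching the claim.

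The one hypothesis of Lemma~\ref{lemma:minimal-weight-search} that needs genuine checking, and which I expect to be the only real (if short) obstacle, is its non-triviality requirement $f^{-1}(1)\cap\genfun(\set G)\ne\emptyset$: we must exhibit an actual undetectable error in $\centralizer_\pauligroup(\set S)$ that acts on some operator of $\set Q$. This is where the hypotheses on $\set Q$ are used. Since $\set Q$ is non-empty, I would pick any $q\in\set Q$; by assumption $q$ lies in a conjugal pair in relation to $\set Q\cup\set S$, so its conjugal partner $q'$ anti-commutes with $q$ yet commutes with every other operator of $\set Q\cup\set S$. Because $\set Q\cap\set S=\emptyset$ forces $q\notin\set S$, the operator $q$ is not the exceptional conjugal partner of any element of $\set S$, so $q'$ commutes with all of $\set S$ and hence $q'\in\centralizer_\pauligroup(\set S)=\genfun(\set G)$; and since $\{q',q\}=0$ with $q\in\set Q$ we have $f(q')=1$. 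Thus $q'\in f^{-1}(1)\cap\genfun(\set G)$, the precondition holds, and the desired conclusion follows at once from Lemma~\ref{lemma:minimal-weight-search}.
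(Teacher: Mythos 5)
Your proposal is correct and follows essentially the same route as the paper's own proof: invoke Lemma~\ref{lemma:exact bound on number of disjoint pseudo-generators} to get $|\set G|=N$, define $f$ as the anti-commutation indicator computable in time $O(|\set Q|)$, verify the non-emptiness precondition using the conjugal-pair structure of $\set Q$ together with $\set Q\cap\set S=\emptyset$, and conclude via Lemma~\ref{lemma:minimal-weight-search}. Your witness for $f^{-1}(1)\cap\genfun(\set G)\ne\emptyset$ (the conjugal partner $q'$) is argued slightly more carefully than the paper's, but the substance is identical.
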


\begin{proof}
First observe that by Lemma \ref{lemma:exact bound on number of disjoint pseudo-generators} we know that $|\set G|=N$.

Define the function $f:\genfun(\set G)\to\{0,1\}$ by
$$f(o):=
\begin{cases}
1 & \exists\,\, q\in\set Q \,\,\text{such that}\,\, \{o,q\}=0\\
0 & \text{otherwise},
\end{cases}
$$
Note that solutions to $f$ are undetectable errors acting on $\set Q$, and also that this function can be computed in time $O(|\set Q|)$ by checking the commutator for each element in $\set Q$.  Furthermore note that for every operator in $\set Q$ there is another operator in $\set Q$ which anti-commutes with it, and also that $\set Q\subset \centralizer_\pauligroup(\set S)$.  Thus, since $\set Q$ is non-empty, there is at least one operator $o\in\genfun(\set G)$ such that $f(o)=1$.  Thus, by Lemma \ref{lemma:minimal-weight-search}, we know that we can compute a minimal weight solution to $f$ in time $O\paren{(|\set Q|+d)3^d\choose{N}{d}}$ where $d:=\min\paren{w(o),N}$.
\end{proof}
%@-node:gcross.20100422160536.1461:Lemma: Minimal weight search (specialized)
%@+node:gcross.20100422160536.1457:Lemma: Computing disjoint pseudo-generators
\begin{remark}
In order to make use of the preceding result, we need to have a set of disjoint pseudo-generators whose pseudo-product covers our search space.  However, we usually start instead with a set of ordinary Pauli operators that generate this space.  Thus, we shall now show that the former can be computed from the latter --- i.e., that given a set of Pauli operators, we can compute a set of disjoint pseudo-generators that spans the same space.  First we present a Lemma that provides a criteria sufficient to show that a set of pseudo-generators is distinct.
\end{remark}

\begin{lemma}
\label{lemma:disjointness-equvalence}
Given a set of pseudo-generators, $\set G$, if there exists a map $p:\set U(\set G)\to \paren{\{X_k\}_k \cup \{Z_k\}_k}$ such that
\begin{enumerate}
\item for every $o\in\set U(\set G)$, $o$ is the unique operator in $\genfun(\set G)$ that anti-commutes with $p(o)$ and
\item for every $\set g\in\set G$, the operators in $\set g$ are both mapped by $p$ to single-qubit operators acting on the same physical qubit $k$, and they are the only such operators in $\genfun(\set G)$ that are mapped by $p$ to operators acting on $k$,
\end{enumerate}
then $\set G$ is disjoint.
\end{lemma}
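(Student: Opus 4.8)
The plan is to unpack Definition~\ref{definition:disjoint-pseudo-generators} and verify its two halves for each pseudo-generator $\set g\in\set G$. Fix $\set g$ and let $k$ be the physical qubit onto which the second hypothesis maps the operators of $\set g$. I will use throughout the standard fact that a Pauli operator commutes with both $X_k$ and $Z_k$ exactly when its tensor factor at qubit $k$ is the identity, and anti-commutes with at least one of them exactly when that factor is nontrivial. Disjointness for $\set g$ then amounts to two claims: (I) every operator in $\genfun(\set g)\setminus\{I\}$ acts nontrivially on qubit $k$, and (II) every operator in $\bigcup_{\set g'\ne\set g}\genfun(\set g')$ acts trivially on qubit $k$. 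First I would record that the second hypothesis forces distinct pseudo-generators to carry distinct qubits, and hence to have disjoint operator sets: if an operator lay in both $\set g$ and $\set g'$, its image under $p$ would act on both $k(\set g)$ and $k(\set g')$, so these qubits coincide, and then the second hypothesis applied to each of $\set g,\set g'$ identifies both with the full collection of operators of $\set U(\set G)$ mapped onto that qubit, giving $\set g=\set g'$.

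The next step is to extract independence from the uniqueness hypothesis. I claim $\set U(\set G)$ is an independent set: if some nonempty product $\prod_{o\in\set F}o$ equalled the identity, I pick $o_0\in\set F$; by the uniqueness hypothesis every other member of $\set U(\set G)\subseteq\genfun(\set G)$ commutes with $p(o_0)$, while $o_0$ alone anti-commutes with it, so the relation $o_0=\prod_{o\in\set F\setminus\{o_0\}}o$ would make $o_0$ both commute and anti-commute with $p(o_0)$, a contradiction. Independence, together with the disjointness of operator sets just established, yields $\genfun(\set g)\cap\genfun(\set g')=\{I\}$ whenever $\set g\ne\set g'$; this is the fact I will invoke repeatedly to guarantee that a nonidentity element of $\genfun(\set g')$ can never coincide with an operator of $\set g$.

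I would then prove (I) and (II) by splitting on $|\set g|$. When $\set g=\{a\}$ with, say, $p(a)=X_k$, claim (I) is immediate since $a$ anti-commutes with $X_k$. For (II), take $o'\in\genfun(\set g')\setminus\{I\}$ with $\set g'\ne\set g$; since $o'\ne a$, the uniqueness hypothesis forces $o'$ to commute with $X_k$, so its factor at qubit $k$ is $I$ or $X$. The delicate point is ruling out $X$: the operator $Z_k$ is \emph{not} in the image of $p$, so the uniqueness hypothesis constrains it only indirectly. Here I use the trick that if the factor were $X$, then $a\cdot o'$ would be a \emph{second} element of $\genfun(\set G)$ anti-commuting with $X_k=p(a)$ (its factor at qubit $k$ being $Y$ or $Z$), contradicting uniqueness; hence the factor is $I$ and $o'$ commutes with both $X_k$ and $Z_k$. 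When $\set g=\{a,b\}$ with $p(a)=X_k$ and $p(b)=Z_k$, uniqueness gives (II) directly, since any $o'\in\genfun(\set g')\setminus\{I\}$ differs from both $a$ and $b$ and therefore commutes with $X_k$ and with $Z_k$; for (I), uniqueness pins the qubit-$k$ factors of $a$ and $b$ to $Z$ and $X$ respectively, so that $a\cdot b$ restricts to $Y$, and all three nontrivial elements of $\genfun(\set g)$ act nontrivially on qubit $k$.

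I expect the main obstacle to be precisely the one-operator case of (II): the companion single-qubit operator $Z_k$ is never a pivot in the image of $p$, so the commutation of the other pseudo-generators with it cannot simply be read off the uniqueness hypothesis and must instead be forced through the uniqueness clause applied to the auxiliary product $a\cdot o'$. Everything else is routine bookkeeping; in particular, the two-operator case of (I) needs only the short verification that $a$ and $b$ restrict to two distinct nontrivial single-qubit Paulis so that their product again restricts nontrivially, which follows mechanically from the same uniqueness statement.
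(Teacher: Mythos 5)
Your proof is correct and follows essentially the same route as the paper's: property 2 supplies the qubit $k$ attached to each $\set g$, property 1 gives the required anti-commutation of every element of $\genfun(\set g)\backslash\{I\}$ with $X_k$ or $Z_k$, and the uniqueness clause forces the other pseudo-generators' spans to commute with both. You are in fact more careful than the paper at the one genuine subtlety --- when $|\set g|=1$ the companion operator ($Z_k$ when $p(a)=X_k$) need not lie in the image of $p$, and your device of applying uniqueness to the auxiliary product $a\cdot o'$ is exactly what is needed to rule out an $X$ factor at qubit $k$; the paper's own proof dismisses this step by citing property 2, which does not by itself justify it.
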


\begin{proof}
For every $\set g\in\set G$, we conclude from property 2 of $p$ that there is some physical qubit $k$ such that every operator in $\set g$ is mapped by $p$ to either $X_k$ or $Z_k$, and hence by property 1 this means that every operator in $\genfun(g)$ anti-commutes with either $X_k$ or $Z_k$.  Since by property 1 we know that the choice of $X_k$ or $Z_k$ is different for each operator in $\genfun(\set g)$, we conclude that if there is more than one operator in $\set g$ then the product anti-commutes with \emph{both} $X_k$ or $Z_k$.  Finally, by property 2 we know that every operator in $\set g'$ for $\set g'\in\set \set G\backslash\{g\}$ commutes with $X_k$ and $Z_k$.
\end{proof}

\begin{remark}
We now show that any set of operators that we are using to generate a search space can be expressed equivalently as a set of disjoint pseudo-generators.
\end{remark}

\begin{lemma}
\label{lemma:computing-disjoint-pseudo-generators}
Given any a set of Pauli operators, $\set O$, there exists a set $\set G$ of psuedo-generators such that
\begin{enumerate}
\item $|\set G|\le|\set O|$,
\item $\genfun(\set O)=\genfun(\set G)$,
\item the map $p$ described in Lemma \ref{lemma:disjointness-equvalence} exists for $\set G$,
\end{enumerate}
and $\set G$ can be computed in time $O(|\set O|^2)$.
\end{lemma}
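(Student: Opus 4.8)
The plan is to argue by induction on $|\set O|$, realizing the construction as a symplectic analogue of reduced-row-echelon Gaussian elimination over the Pauli group --- essentially the same machinery already used in Proposition \ref{make-independent-using-elimination} and the procedure of Table \ref{table:gaussian-elimination}, but now producing \emph{pseudo}-generators so that a single pivot qubit can carry both an $X$-type and a $Z$-type pivot. The invariant I would maintain is a set $\set G$ of pseudo-generators together with a map $p:\set U(\set G)\to\paren{\{X_k\}_k\cup\{Z_k\}_k}$ satisfying properties 1 and 2 of Lemma \ref{lemma:disjointness-equvalence}. By that Lemma, exhibiting such a $p$ is exactly what is required, so I never verify disjointness directly; I only track the pivots.

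For the base case $\set O=\emptyset$ take $\set G:=\emptyset$. For the inductive step, remove one operator $o$ and apply the hypothesis to $\set O':=\set O\backslash\{o\}$, obtaining $\set G'$ and $p'$ with $|\set G'|\le|\set O'|$ and $\genfun(\set G')=\genfun(\set O')$. I would then \emph{reduce} $o$ against the existing pivots: for each $o'\in\set U(\set G')$ with $\{o,p'(o')\}=0$, replace $o$ by $o\cdot o'$. Because each $o'$ is the unique generator anti-commuting with its own pivot, clearing one pivot does not disturb any other, so after the sweep $o$ commutes with every $p'(o')$. If $o$ has become the identity then $o\in\genfun(\set G')$ and $\set G:=\set G'$ already works. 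Otherwise $o$ is a nontrivial Pauli operator, and the key observation is that it must anti-commute with some \emph{available} single-qubit operator $P\in\{X_k,Z_k\}$ that is not yet a pivot: on a fully-pivoted qubit $o$ now acts trivially, on a half-pivoted qubit $o$ commutes with the occupied pivot and so is either trivial there or anti-commutes with the free one, and on a fresh qubit any nontrivial action exposes a free pivot; since $o\ne I$ it acts nontrivially somewhere, furnishing $P$.

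Having set $P:=p(o)$, I would perform the back-substitution that makes the echelon form reduced: for every $o'\in\set U(\set G')$ with $\{o',P\}=0$, replace $o'$ by $o'\cdot o$. Since $o$ commutes with all pre-existing pivots, this changes neither which operator anti-commutes with each old pivot nor the old operators' commutation relations with those pivots, so property 1 is preserved for the old generators while $o$ becomes the unique generator anti-commuting with $P$. Finally I fold $o$ into $\set G$ according to the case that produced $P$: if $P$ lives on a fresh qubit, adjoin the singleton pseudo-generator $\{o\}$, giving $|\set G|=|\set G'|+1\le|\set O|$; if $P$ completes a qubit that already carried one pivot, merge $o$ into that pseudo-generator to make a two-element pseudo-generator on the same qubit, giving $|\set G|=|\set G'|\le|\set O|$. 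Set $p(o):=P$; property 2 holds because the (at most) two pivots on that qubit are $X_k$ and $Z_k$ and no other generator maps there. Reversibility of every multiplication yields $\genfun(\set G)=\genfun(\set O'\cup\{o\})=\genfun(\set O)$.

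For the time bound I count only Pauli products, per the section's convention: each of the $|\set O|$ operators is processed with one reduction sweep and one back-substitution sweep, each touching the $O(|\set U(\set G')|)=O(|\set O|)$ generators, for $O(|\set O|)$ products per operator and $O(|\set O|^2)$ products overall, while locating $P$ is a qubit scan that costs no Pauli products. The main obstacle --- and the part deserving the most care in the full write-up --- is the pivot bookkeeping in the inductive step: proving that the reduced nontrivial $o$ is guaranteed to expose an available pivot, and then checking that the back-substitution preserves properties 1 and 2 for the pre-existing pseudo-generators while correctly distinguishing the fresh-qubit case (new singleton) from the half-pivoted-qubit case (merge into a pair). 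Once that case analysis is pinned down, the size bound, the generating-set equality, and the time bound all follow routinely.
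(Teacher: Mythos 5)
Your proposal follows essentially the same route as the paper's proof: induction on $|\set O|$, a reduction sweep of the new operator against the existing pivots, location of a free single-qubit pivot on which the reduced operator acts nontrivially, a back-substitution pass, and the fresh-qubit versus half-pivoted-qubit case split that folds the new operator in as either a singleton or a merged two-element pseudo-generator, with the same $O(|\set O|^2)$ count of Pauli products. The only difference is cosmetic: you spell out explicitly why a free anti-commuting pivot must exist after reduction, a step the paper merely asserts.
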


\begin{remark}
The structure of this proof bears some similarities to Proposition \ref{make-independent-using-elimination}.  In constrast with Proposition \ref{make-independent-using-elimination}, however, in the setting of this Lemma we are working with operators that in general will not commute.
\end{remark}

\begin{proof}
Proof by induction.  For the base case, we observe that if $\set O$ is empty, then the trivial set $\set G:=\emptyset$ and the trivial function $p$ whose domain is the empty set satisfy the requirements.

Now assume that this Lemma has been proven for sets of cardinality $n-1$, and suppose we are given a (non-empty) set $\set O$ of cardinality $n$.  Take any operator $o\in\set O$.  By recursive application of this Lemma, we know that we can construct the set $\set G':=\set G$ and the function $p':=p$ described in this Lemma given $\set O:=\set O\backslash\{o\}$ in time $O\paren{(n-1)^2}=O(n^2)$.

Let $$o':=o\cdot \prod_{x\in\set U(\set G'),\atop\{o,p(x)\}=0} x.$$  Note that for every $x\in\set U(\set G')$, it must be that $o'$ commutes with $p'(x)$, since $o'$ is formed from a product that has either two factors that anti-commute with $p'(x)$ (namely, $o$ and $p'(x)$) or no operators that anti-commute with $p'(x)$.  If $o'$ is the identity operator, then let $\set G:=\set G'$ and $p:=p'$ and we are done.  Otherwise, there must be some operator $z\in\paren{\{X_k\}_k \cup \{Z_k\}_k}\slash\{p'(x):x\in\set U(\set G')\}$ that anti-commutes with $o$.  Define the function $f$ by
$$f(x) :=
\begin{cases}
x \cdot o' & \{x,z\}=0, \\
x          & \text{otherwise},
\end{cases}
$$
and let $\set G'':=\left\{\{f(x):x\in\set g\}:\set g\in\set G'\right\}$ and $p'':=p'\circ f^{-1}$.  Note that $o'$ must be independent of the operators in $\set U(\set G')$, because $o'$ is not the identity and the product of $o'$ with any subset of operators $\set A\subset\set U(\set G')$ cannot be the identity since it must anti-commute with $p'(a)$ for every $a\in\set A$.  Thus, $f$ is a bijective map from $\set U(\set G')$ to $\set U(\set G'')$ and hence is invertible, and so we conclude that $p''$ is well-defined.  Since, as previously discussed, $o'$ commutes with $p(y)$ for every $y\in\set U(\set G')$, we conclude that multiplication by $o'$ does not change whether any operator $x\in\set U(\set G')$ commutes or anti-commutes with $p(y)$ for any $y\in\set U(\set G')$, and so we conclude that the properties listed in Lemma \ref{lemma:disjointness-equvalence} that $p'$ has in relation to $\set G'$ (from the inductive hypothesis) are preserved in the transformation by $f$ so that $p''$ also has the same properties in relation to $\set G''$.  Furthermore, since every operator $x\in\set U(\lst G')$ was multiplied by a factor of $o'$ if and only if it anti-commutes with $z$, we conclude that $f(z)$ must commute with $z$, and thus every operator in $\set U(\lst G'')$ must commute with $z$.

There are two cases to consider:  either there is no operator $x\in\set U(\set G'')$ such that $p''(x)$ acts on the same qubit as $z$, or there is exactly one, since if there were more than two then it would violate the properties of $p''$, and if there were exactly two then by construction $o'$ would commute with $z$ leading to a contradiction.  In the first case, let $\set G:=\set G''\cup\left\{\{o'\}\right\}$.  In the second case, let $\set G:=\paren{\set G''\backslash\left\{\{x\}\right\}}\cup\left\{\{x,o'\}\right\}$, where $x$ is the single operator in $\set U(\set G'')$ such that $p(x)$ acts on the same qubit as $z$.   In either case, define
$$
p(x) :=
\begin{cases}
p''(x) & x\in \set U(\set G''), \\
z & x=o',
\end{cases}
$$
observing that it is well-defined since $\set U(\set G)=\set U(\set G'')\cup\{o'\}$.

To prove conclusion 1, we note that $\set G$ has at most one more element than $\set G'$ and $\set O$ always has one more element than $\set O\backslash\{o\}$, so conclusion 1 follows from this fact combined with the inductive hypothesis.

To prove conclusion 2, we note that since $o'$ (and hence $o$) is independent with respect to $\set U(\set G')$, then because of how $\set G$ was constructed and the inductive hypothesis we have that $\genfun(\set G)=\genfun\paren{\set U(\set G'')\cup\{o'\}}=\genfun\paren{\set U(\set G')\cup\{o\}}=\genfun\paren{(\set O\backslash\{o\})\cup\{o\}}=\genfun(\set O)$.

To prove conclusion 3, we need to show that $p$ satisfies the properties listed in Lemma \ref{lemma:disjointness-equvalence}.  To prove the first property, we note that for every $x\in\set U(\set G)$ we have that either $x\in\set U(\set G'')$, in which case we have already shown that it is the unique operator that commutes with $p(x)$ as this is true for the operators in $\set U(\set G'')$ as well as for $o'$ (by construction), or $x=o'$, in which case this is still true since by construction $o'$ is the only operator in $\set U(\set G)$ that anti-commutes with $z$.  To prove the second property, we note that due to the inductive hypothesis we need only consider the single change from $\set G''$ to $\set G$, which consisted of either adding or replacing a pseudo-generator;  in the first case (adding a generator), observe that we showed earlier that no operator $y\in\set G''$ is such that $p''(y)$ acts on the same qubit as $z$, and in the second case (replacing a generator), note that we added $o'$ to the only generator in $\set G''$ containing an operator $y$ such that $p''(y)$ acts on the same qubit as $z$;  in either case, we see that the second property holds for $\set G$.

Finally, we consider the running time.  In addition to the $O(n^2)$ time required to construct $\set G'$, we required an additional $O(n)$ multiplication operations to construct $o'$ and $\set G''$;  hence the total running-time is $O(n^2)$.
\end{proof}

\begin{corollary}
\label{corolary:computing-disjoint-pseudo-generators}
Given any set of Pauli operators, $\set O$, there exists a disjoint set of pseudo-generators $\set G$ such that $|\set G|\le|\set O|$, $\genfun(\set O)=\genfun(\set G)$ and $\set G$ can be computed in time $O(|\set O|^2)$.
\end{corollary}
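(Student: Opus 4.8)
The plan is to chain together the two immediately preceding results, since the corollary is essentially a repackaging of Lemma \ref{lemma:computing-disjoint-pseudo-generators} in which its third conclusion (the existence of the map $p$) is upgraded to the stronger-sounding property of disjointness. First I would invoke Lemma \ref{lemma:computing-disjoint-pseudo-generators} directly on the given set $\set O$. This yields a set of pseudo-generators $\set G$ that simultaneously satisfies $|\set G|\le|\set O|$, satisfies $\genfun(\set O)=\genfun(\set G)$, and admits the map $p$ described in Lemma \ref{lemma:disjointness-equvalence}, all while being computable in time $O(|\set O|^2)$.

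The remaining step is to translate property 3 into disjointness. Here I would apply Lemma \ref{lemma:disjointness-equvalence}: its hypothesis is precisely the existence of a map $p$ with the two listed properties, and its conclusion is that $\set G$ is disjoint. Since Lemma \ref{lemma:computing-disjoint-pseudo-generators} guarantees that such a $p$ exists for the $\set G$ it produces, the hypothesis of Lemma \ref{lemma:disjointness-equvalence} is met and we may conclude that $\set G$ is disjoint.

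Combining these two invocations gives exactly the claimed statement: the $\set G$ returned by Lemma \ref{lemma:computing-disjoint-pseudo-generators} carries over its cardinality bound $|\set G|\le|\set O|$, its spanning property $\genfun(\set O)=\genfun(\set G)$, and its $O(|\set O|^2)$ construction time, and by Lemma \ref{lemma:disjointness-equvalence} it is additionally disjoint. I do not anticipate any genuine obstacle here; the corollary carries no new content beyond observing that the conjunction of the two lemmas supplies every clause of the conclusion. The only point requiring a moment of care is simply noting that the \emph{same} $\set G$ witnessing properties 1--3 of Lemma \ref{lemma:computing-disjoint-pseudo-generators} is the object fed into Lemma \ref{lemma:disjointness-equvalence}, so that the cardinality bound, the spanning equality, the running-time bound, and the disjointness all hold for a single set of pseudo-generators rather than for different ones.
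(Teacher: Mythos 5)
Your proposal is correct and is exactly the paper's argument: the paper proves this corollary by the same one-line chaining of Lemma \ref{lemma:computing-disjoint-pseudo-generators} (which supplies the cardinality bound, spanning equality, running time, and the map $p$) with Lemma \ref{lemma:disjointness-equvalence} (which converts the existence of $p$ into disjointness).
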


\begin{proof}
This follows immediately from Lemmas \ref{lemma:disjointness-equvalence} and \ref{lemma:computing-disjoint-pseudo-generators}.
\end{proof}
%@-node:gcross.20100422160536.1457:Lemma: Computing disjoint pseudo-generators
%@+node:gcross.20100910155955.1362:Pseudocode: Compute-Pseudo-Generators
\begin{table}
\begin{codebox}
\Procname{$\proc{Compute-Pseudogenerators}(\lst O)$}
\li $\lst p \gets []$
\li $i \gets 0$
\li \While $i < |\lst O|$ \label{li:next-operator}
\li \Do
\li     $o \gets \lst O[i]$
\li     \For $j \gets 0$ \To $i-1$
\li     \Do
\li         $(n,z) \gets \lst p[j]$
\li         \If $z = 0$
\li         \Then
\li             \If $\func{anti}(o,X_n)$
\li             \Then $o \gets o \cdot \lst O[j]$
                \End
\li         \Else
\li             \If $\func{anti}(o,Z_n)$
\li             \Then $o \gets o \cdot \lst O[j]$
                \End
            \End
        \End
\li     \If $o$ is identity
\li     \Then
\li         delete $\lst O[i]$
\li         \Goto \ref{li:next-operator}
        \End
\li     \For $n \gets 0$ \kw{to} number of physical qubits
\li     \Do
\li         \If $\func{anti}(o,X_n)$
\li         \Then
\li             $z \gets 0$
\li             \Goto \ref{li:found-anti-commuting-pauli}
\li         \ElseIf $\func{anti}(o,Z_k)$
\li         \Then
\li             $z \gets 1$
\li             \Goto \ref{li:found-anti-commuting-pauli}
            \End
        \End
\li     \If $z = 0$ \label{li:found-anti-commuting-pauli}
\li     \Then
\li         \For $j \gets 0$ \kw{to} $i-1$
\li         \Do
\li             \If $\func{anti}(\lst O[k],X_n)$
\li             \Then $\lst O[j] \gets \lst O[j] \cdot o$
                \End
            \End
\li     \Else
\li         \For $j \gets 0$ \kw{to} $i-1$
\li         \Do
\li             \If $\func{anti}(\lst O[j],Z_n)$
\li             \Then $\lst O[j] \gets \lst O[j] \cdot o$
                \End
            \End
        \End
\li     append $(n,z)$ to $\lst p$
\li     $\lst O[i] \gets o$
\li     $i \gets i + 1$
    \End
\li $\lst G \gets []$
\li \For $n\gets 0$ \To number of physical qubits
\li \Do
\li     $\lst g \gets []$
\li     \For $i \gets 0$ \To $|\lst O|-1$
\li     \Do
\li         $(n',\_)\gets \lst p[i]$
\li         \If $n=n'$
\li         \Then append $\lst O[i]$ to $\lst g$
            \End
        \End
\li     \If $\lst g \ne []$
\li     \Then append $g$ to $\lst G$
        \End
    \End
\li \Return $\lst G$
\end{codebox}
\caption[Algorithm \proc{Compute-Pseudogenerators}]{Algorithm which computes a set of disjoint pseudo-generators that generates the input set of operators.} \label{table:compute-pseudo-generators}
\end{table}
%@-node:gcross.20100910155955.1362:Pseudocode: Compute-Pseudo-Generators
%@+node:gcross.20100422160536.1464:Proof of Theorem
\begin{remark}
We now have the tools that we need to analyze the running time of the algorithm.
\end{remark}

\begin{proof}[Proof of Theorem \ref{theorem:bound on running time}]
First observe that from Lemma \ref{lemma:disjointness-equvalence} we conclude that we can compute a set of disjoint pseudo-generators $\set G$ such that $\genfun(\set G)=\genfun(\set C)=\centralizer_\pauligroup(\set S)$ in time $O(|\set C|^2)$.  We will assume that this set of pseudo-generators is implicitly available to us throughout the algorithm so that we do not need to compute it more than once.

At each step of the algorithm, we first need to find an operator, $o$, that has an undetectable error of minimal weight inside a set which we know from Proposition \ref {proposition:properties of the algorithm} has at most $|\set L|$ elements.  By Lemma \ref{lemma:search for minimal weight undetectable error}, we conclude that this operator can be found in time $O\paren{(|\set L|+d)3^d\choose{N}{d}}$ where $d:=\min\paren{w(o),N}=w(o)$ (since the weight of any operator cannot be greater than $N$).  After this has been found, examination of the algorithm reveals that the computation performed afterward takes a running time in $O(|\set Q|+|\lst P|)=O(|\set L|)$, where the equality comes from Proposition \ref {proposition:properties of the algorithm}.  Thus, the total time needed for each step is in $O\paren{(|\set L|+d)3^d\choose{N}{d}+|\set L|}=O\paren{(|\set L|+d)3^d\choose{N}{d}}$.

Let $(\set Q',\lst P',\lst s')$ be the second to last element of $\optimizer(\set S,\set L)$ and $(\set Q,\lst P,\lst s)$ the last element.  In the final step of the algorithm, we move the last remaining pair in $\set Q'=\{q\}$ over to $\lst P'$, which means that the operator $o$ with the minimal weight is a member of $q$.  From Proposition \ref {proposition:properties of the algorithm}, we know that that $\om_{\set S}(o)\ge m_{\set S}(\lst P'_i)$ for any $i$.  Thus, at each step of the algorithm before this one we know that we spent a time in $O\paren{(|\set L|+d)3^d\choose{N}{d}}$ where $d:=w(o)$ --- i.e., a time no greater than the time spent on the last step.  Since the algorithm requires at most $|\set L|$ steps we conclude that the total running time, including that needed to compute the set of pseudo-generators, is in $O\paren{|\set G|^2+|\set L|(|\set L|+d)3^d\choose{N}{d}}$.  Since $\om_{\set S}(o)=\lst M(\lst P)_{|\lst P|}$, we conclude that $d\equiv\lst M(\lst P)_{|\lst P|}$ (since there can be at most $N$ qubits in the choice), and so we are done.
\end{proof}
%@-node:gcross.20100422160536.1464:Proof of Theorem
%@-others
%@-node:gcross.20100422160536.1423:Running time of the algorithm
%@-others
%@nonl
%@-node:gcross.20100422160536.1366:Optimization
%@-node:gcross.20090423002455.2:Theory
%@+node:gcross.20100413085158.2885:Practice
\section{Practice} \label{sec:lattice}
%@+node:gcross.20100805134725.1339:Methodology
\subsection{Methodology} \label{sec:methodology}

In the previous section we presented an algorithm that computes the optimal subsystem code that can be implemented using a given set of measurements.  The procedure for optimizing the code requires an exponential amount of time, but fortunately the power of the exponential is a function of the distance of the best qubit in the code.   Because of this property, this algorithm can be effectively applied to search over a set of choices of measurement operators to see if there is any good choice for implementing a code, since it can (relatively) quickly skip over the bad choices of measurements.

In this section, we shall present an example of applying this algorithm to search for codes on quantum systems with the structure of a graph.  That is, we assume that we have a system of qubits, 2-body Pauli measurement operators and a graph such that there is a bijection between the qubits and vertices and between the edges and measurement operators, and also such that each measurement only acts on the two qubits corresponding to the vertices adjacent to its associated edge.  Specifying a particular graph constrains the number of qubits and the types of measurement operators, but it still allows a great deal of freedom in the choice of the measurement operator at each edge.  In Figure \ref{figure:2labelings} we illustrate an example of a graph with two possible such choices of measurement operator labelings;  note that for the sake of generality we do not impose the constraint that the two operators in the 2-body measurement be identical.

\begin{figure}
\begin{center}
\includegraphics[width=1.5in]{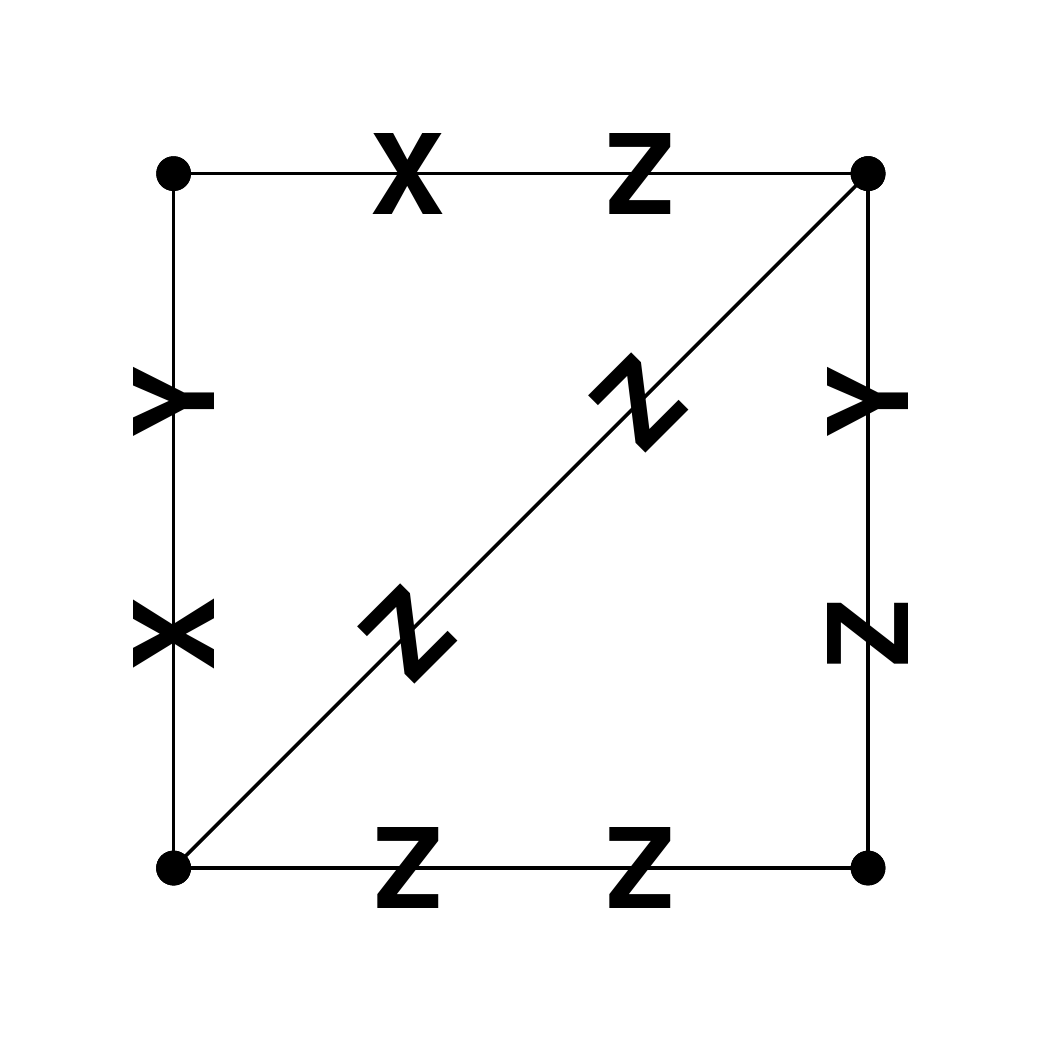}
\includegraphics[width=1.5in]{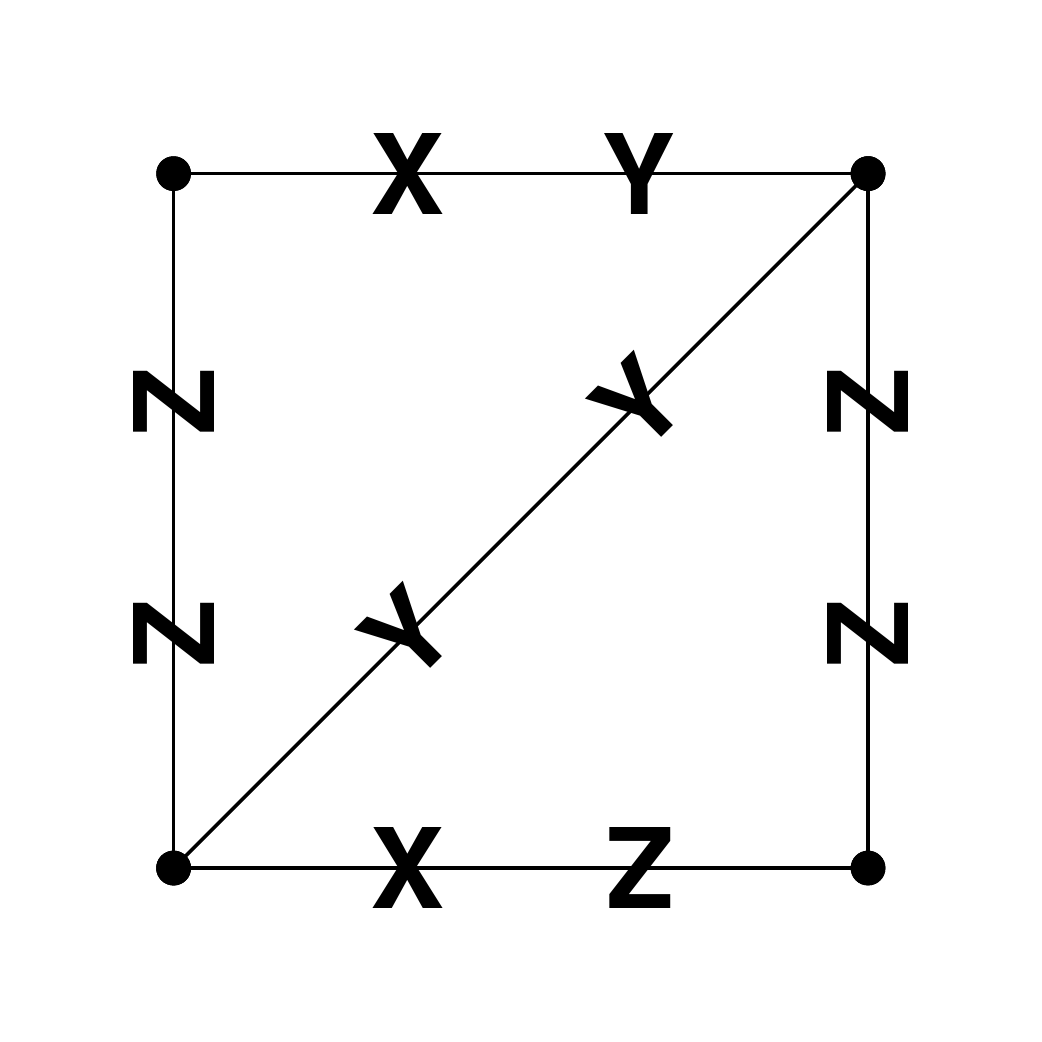}
\end{center}
\caption{
\label{figure:2labelings}
A figure illustrating two possible labelings of a graph, which correspond to two possible choices of 2-body measurements.  In both graphs we see that there are four vertices and five edges, which indicates that our system is constrained to have four qubits and five 2-body measurement operators.  The edges (without the labels) indicate the pairs of qubits on which the 2-body measurement operators are constrained to act within our system.  Within these constraints, we see in this figure two possible choices of measurement operators as specified by the two labelings of the edges of the graph.
}
\end{figure}

For reasons that will become clear, it turns out to be useful to specify choices of measurement operators in terms of ray labelings rather than edge labelings since the former is associated with vertices.  Define a \emph{ray} of a graph to be a pair consisting of a vertex and an edge adjacent to the vertex;  note that every ray can be uniquely associated with an edge, and every edge can be associated uniquely with a ray for each of its two incident vertices.  Thus, we can define a particular choice of measurement operators by labeling each ray in the graph with a single-qubit Pauli operator acting on the qubit of the incident vertex, and then letting the measurement operator associated with each edge be equal to the product of the single-qubit operators in the edge's two rays.

There is a natural symmetry of quantum codes that can be factored out to reduce the search space:  the relevant properties of the code are invariant under single-qubit rotations.  That is, transformations such as swapping the $X$ and $Z$ operators at the location of a single physical qubit in every stabilizer, gauge qubit, and logical qubit operator does not affect the code.  Thus, when labeling the rays of a vertex, exactly which ray is labeled $X$, $Y$, and $Z$ is not important;  what matters is which rays commute and which rays anti-commute.  We see therefore that we need only search over the possible ways to divide the rays into three indistinguishable groups, so that a vertex with $n$ rays only has $1+\frac{3^{n-1}-1}{2}$ relevant labelings that need to be examined.

The specific graphs we shall examine in this section are lattices generated by nine of the eleven convex vertex-uniform (also known as the ``Archimedean'') tilings of the plane --- that is, those tilings with the property that every face is convex and every vertex has the same sequence of faces.  Since these tilings have many translational symmetries, we intentionally narrow our search to the set of labelings that share the translational symmetries of the lattice\footnote{This is not to claim that there are no interesting codes that break these translational symmetries;  however, the investigation of such codes is outside the scope of this particular study.}.  Since the ray labelings must be preserved under these symmetries, we can partition the rays of the graph into equivalence classes such that two rays are equivalent if and only if they are related by a translation symmetry;  thus we see that our narrowed search space is equivalent to the space of possible labelings of each \emph{class} of rays in the lattice examined.  Since there is a symmetry that can be factored out at each vertex (as discussed previously), we note that we can likewise partition the vertices into equivalence classes of vertices related by translation symmetries.  If there are $m$ vertex equivalence classes, and every vertex has $n$ rays, then our search space consists of $\paren{1+\frac{3^{n-1}-1}{2}}^m$ total possible labelings.  In Table \ref{table:combinatorics}, we list the eleven convex vextex-uniform tilings with the number of vertex equivalence classes, the number of rays at each vertex, and the total number of labelings.  (Two of the eleven tilings, ``truncated hexadeltille'' and ``snub hextille'', had such a large number of possible labelings that we decided to exclude them from our search.)

\begin{table}
\begin{tabular}{lrrr} \toprule
Archimedean Tiling & \# Classes & \# Rays & \# Labelings \\ \midrule
quadrille & 1 & 4 & 14\\
truncated quadrille & 4 & 3 & 625\\
snub quadrille & 4 & 5 & 2,825,761\\
isosnub quadrille & 2 & 5 & 1681\\
hextille & 2 & 3 & 25\\
truncated hextille & 6 & 3 & 15,625\\
snub hextille & 6 & 5 & 4,750,104,241\\
deltille & 1 & 6 & 122\\
hexadeltille & 3 & 4 & 2744\\
truncated hexadeltille & 12 & 3 & 244,140,625\\
rhombihexadeltille & 6 & 4 & 7,529,536\\ \bottomrule
\end{tabular}
\caption[Combinatorics of the tilings]{
\label{table:combinatorics}
A table listing the number of vertex equivalence classes, the number of rays at each vertex, and the total number of labelings for each of the 11 convex vertex-uniform tilings.  By our scheme the number of labelings is equal to $\paren{1+\frac{3^{n-1}-1}{2}}^m$, where $m$ is the number of vertex equivalence classes and $n$ is the number of rays at each vertex.}
\end{table}

Note that we could furthermore refine our search to consist of those codes which also share the \emph{rotational} symmetries of the lattice.  We explicit avoid making this refinement because the existence of such codes as the quantum compass model code~\cite{Bacon:06a} indicates that there are good codes on lattices that require breaking the rotational symmetry of the lattice.  However, we can use the rotational symmetries in a different way to reduce the search space as follows.  Partition the labelings into equivalence classes such that two labelings are in the same class if and only if there is a rotational symmetry that relates them, and observe that all of the labelings in each class will give rise quantum codes with identical properties.  Thus, we can reduce our search space to ignore redundant labelings by only examining one labeling in each equivalence class.

Our search algorithm thus works in the following manner.  We start by putting a total ordering on all of the lattice labelings (after having factored out the symmetry at each vertex.)  We enumerate these labelings in order.  For each labeling, we generate new labelings by applying each rotational symmetry to the current labeling.  If any of these new labelings is less than the current labeling under our ordering, then we skip the current labeling because we know that we have already previously examined an equivalent labeling.  Although this algorithm proceeds serially through the search space, it can be parallelized by making use of $n$ walkers, each of which starts at a different labeling (from $0$ to $n-1$) and which proceed by examining the current labeling and then skipping directly to the $n^{\text{th}}$ labeling after the current one.  In Table \ref{table:count-of-labels-scanned} we list the number of non-redundant labelings for each tiling.

\begin{table}
\begin{tabular}{lrrr} \toprule
Archimedean Tiling & \# Non-redundant & \# Total \\ \midrule
quadrille & 10 & 14\\
truncated quadrille & 155 & 625\\
snub quadrille & 706,881 & 2,825,761\\
isosnub quadrille & 743 & 1681\\
hextille & 11 & 25\\
truncated hextille & 2392 & 15,625\\
deltille & 58 & 122\\
hexadeltille & 594 & 2744\\
rhombihexadeltille & 904,741 & 7,529,536\\ \bottomrule
\end{tabular}
\caption[Number of non-redundant labelings in each tiling]{
\label{table:count-of-labels-scanned}
A table listing the number of labelings for each tiling that were not redundant under rotational symmetry transformations.  This number was obtained by placing an ordering on the labelings and counting the number of labelings such that no symmetry transformation obtained a labeling less than the current labeling.  Also listed for the sake of comparison are the total number of labelings from Table \ref{table:combinatorics}.}
\end{table}

In order to preserve the rotational symmetries of the tiling, it is important that the lattice be constructed such that the center of the lattice is at a point of rotational symmetry.  There is not a single unique center point that preserves all of the rotational symmetries of a given tiling, and furthermore for many tilings there are multiple rotational symmetry groups (known as ``wallpaper'' symmetry groups), each of which has a different set of center points.  We thus chose the center of our lattice by picking the largest of the wallpaper groups present in the tiling and choosing the center to give rise to the rotational symmetries in that group\footnote{Note that this approach does not mean that we have eliminated redundant labelings resulting from \emph{all} of the symmetries in the lattice.  For example, we have not eliminated labelings which are equivalent under rotations around a different point, nor which are equivalent under a glide-reflection symmetry.  It is certainly possible to eliminate these labelings, but we choose not to do in this case in the interest of simplicity.  An implication of this is that for many codes we expect to see many labelings giving rise to them that are equivalent under symmetry transformations but not eliminated by our approach.}.  In table \ref{table:symmetries} we list the wallpaper symmetries for each of the 11 convex vertex-uniform tilings along with (where applicable) the particular symmetry group that we chose to utilize.

\begin{table}
\begin{tabular}{llr}
\toprule
Archimedean Tiling & Symmetries & Chosen\\
quadrille & p4m & p4m \\
truncated quadrille & p4m & p4m \\
snub quadrille & p4g, p4, and pg & p4 \\
isosnub quadrille & cmm & cmm \\
hextille & p6m & p6m \\
truncated hextille & p6m and p3m1 & p6m \\
snub hextille & p6 & N/A \\
deltille & p6m and p3m1 & p6m \\
hexadeltille & p6m and p3m1 & p6m \\
truncated hexadeltille & p6m & N/A \\
rhombihexadeltille & p6m & p6m \\
\bottomrule
\end{tabular}
\caption[Symmetry groups of the tilings]{
\label{table:symmetries}
A table listing the wallpaper symmetry groups (using crystallographic notation) for each of the 11 convex vertex-uniform tilings, along with the particular group that we chose for our search.  For two of the tilings no symmetry group was chosen because we decided not to search the tiling.
}
\end{table}

As is usually the case in physical systems, it is important to pay careful attention to the boundary conditions of the lattice.  In order to minimize boundary effects, we decided to put periodic boundary conditions on our lattices;  care had to be taken to impose the periodic boundary conditions in such a way as to preserve the rotational symmetry group.  For example, a boundary that only wraps from left to right and from top to bottom breaks some of the rotational symmetries for hexagonal tilings.  In Figure \ref{figure:boundaries} we illustrate how we placed the centers and the boundaries of the tilings.

\begin{figure*}
\centering
\subfloat[quadrille]{{\includegraphics[width=2in]{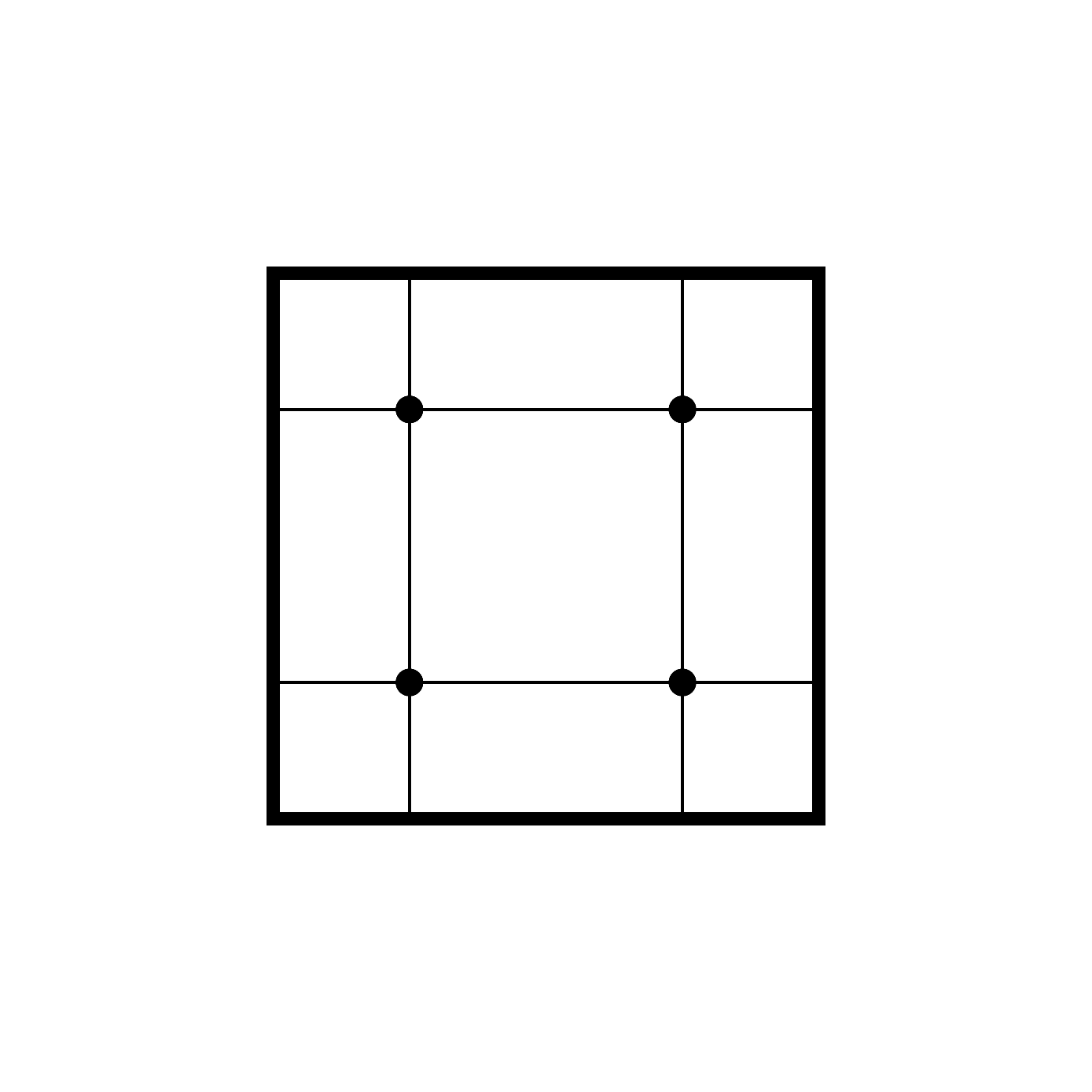}}}
\subfloat[truncated quadrille]{{\includegraphics[width=2in]{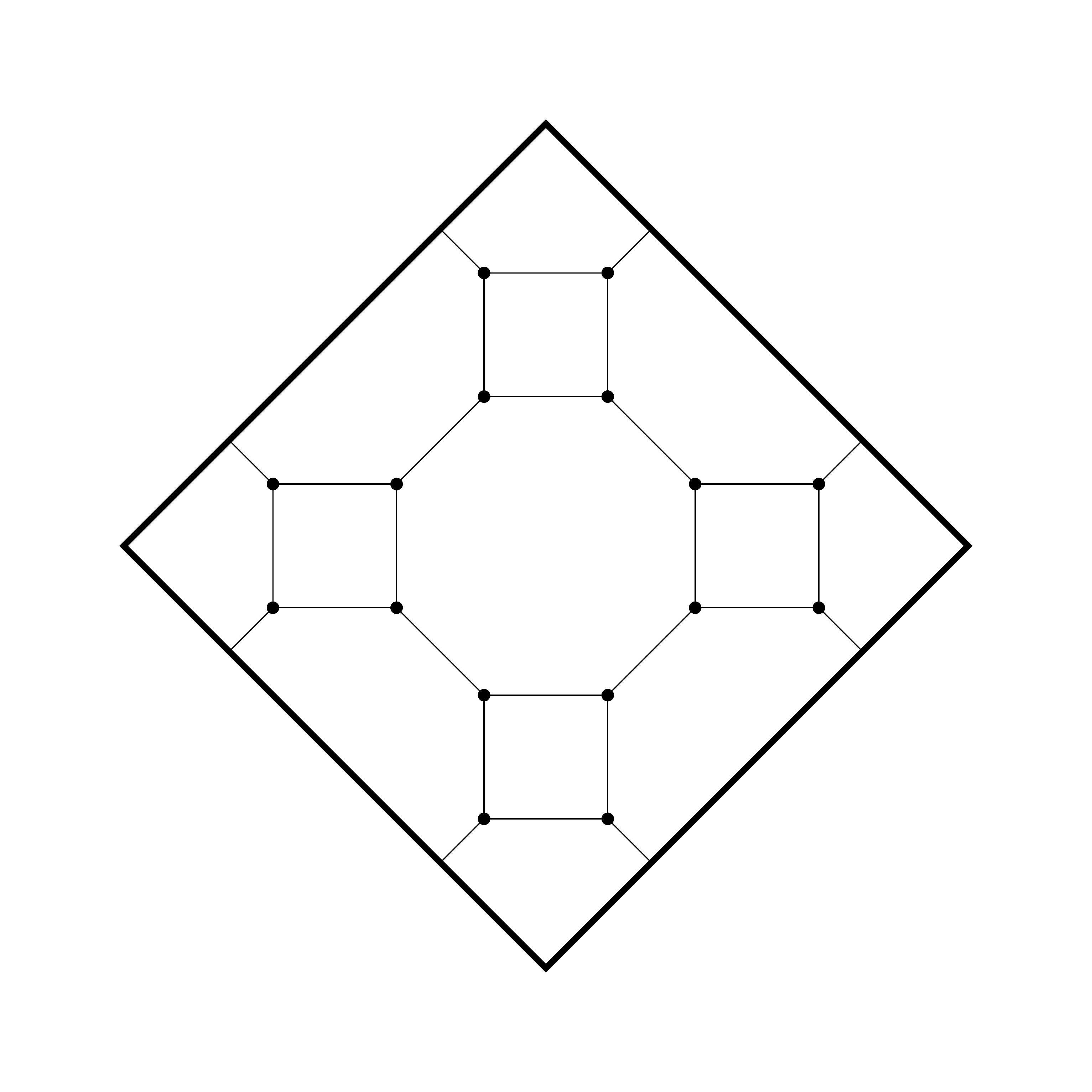}}}
\subfloat[snub quadrille]{{\includegraphics[width=2in]{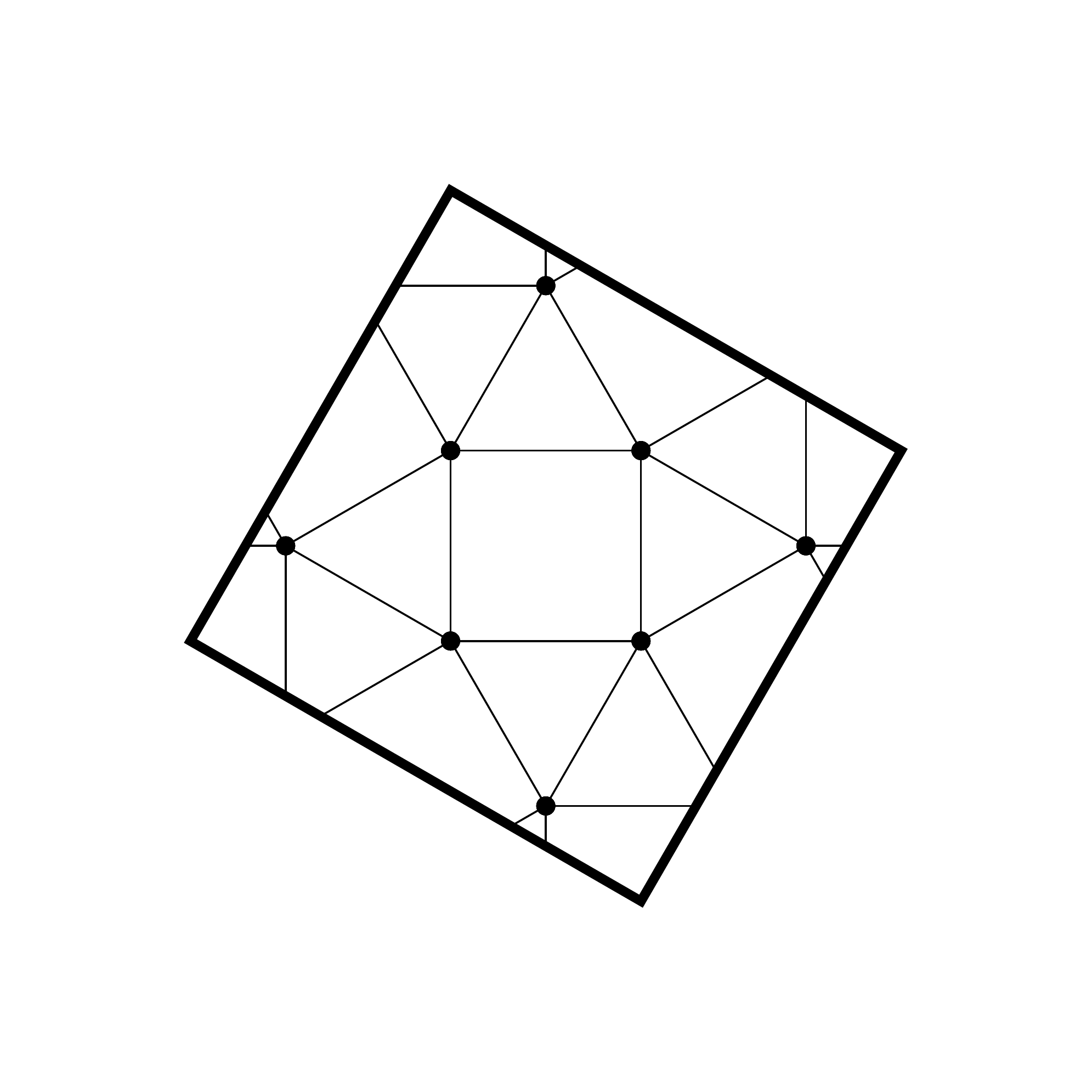}}}\\
\subfloat[isosnub quadrille]{{\includegraphics[width=2in]{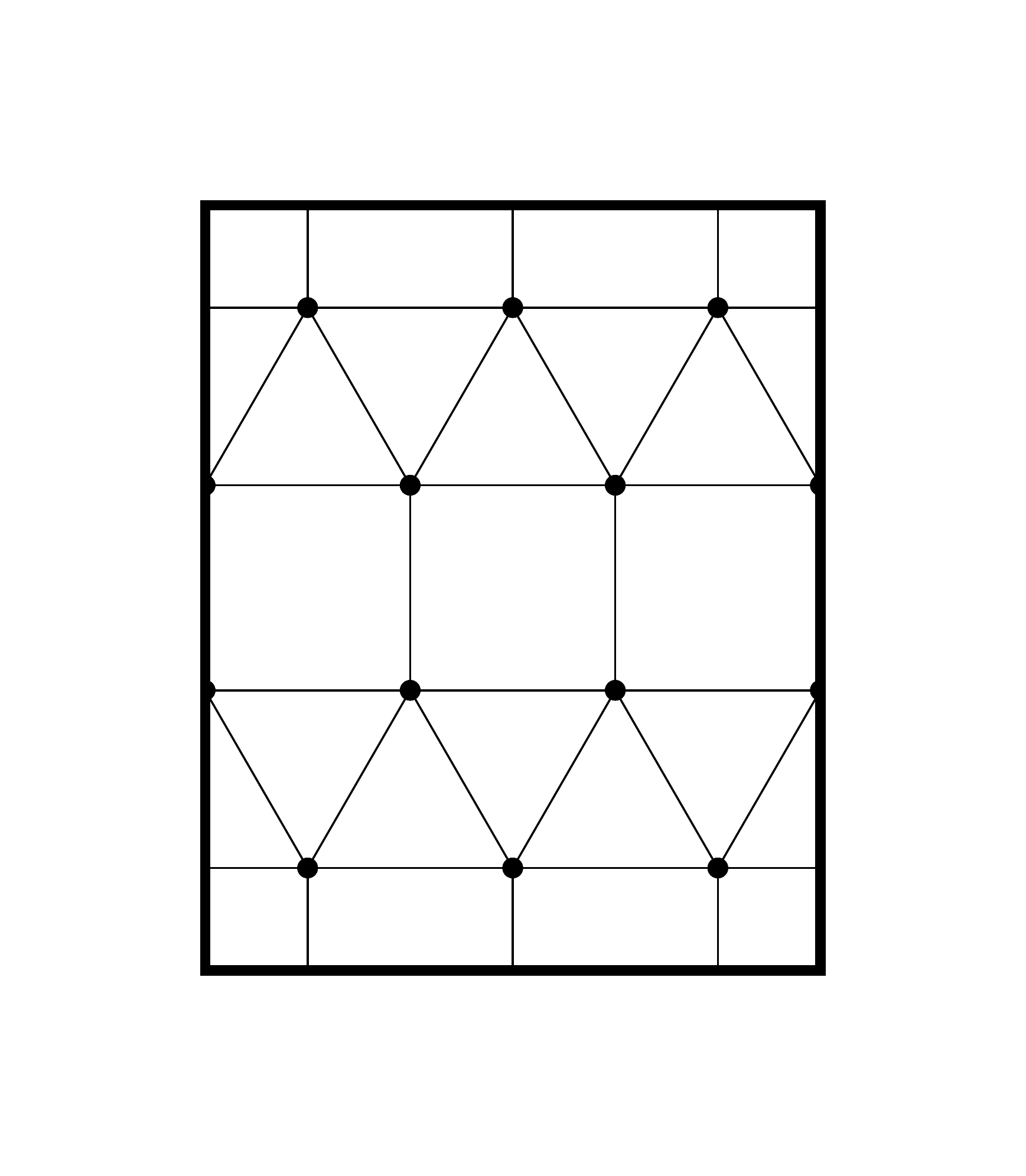}}}
\subfloat[hextille]{{\includegraphics[width=2in]{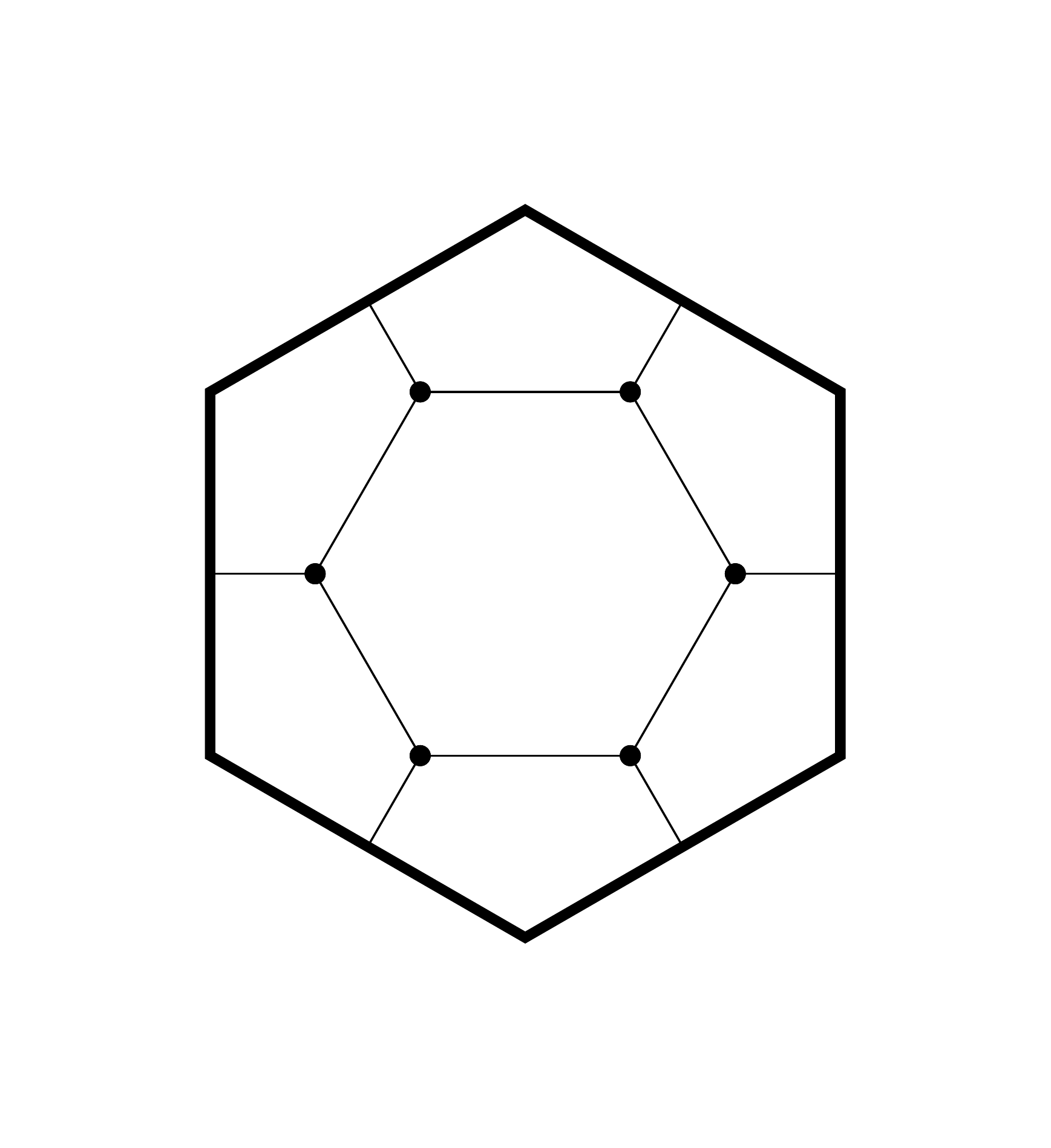}}}
\subfloat[truncated hextille]{{\includegraphics[width=2in]{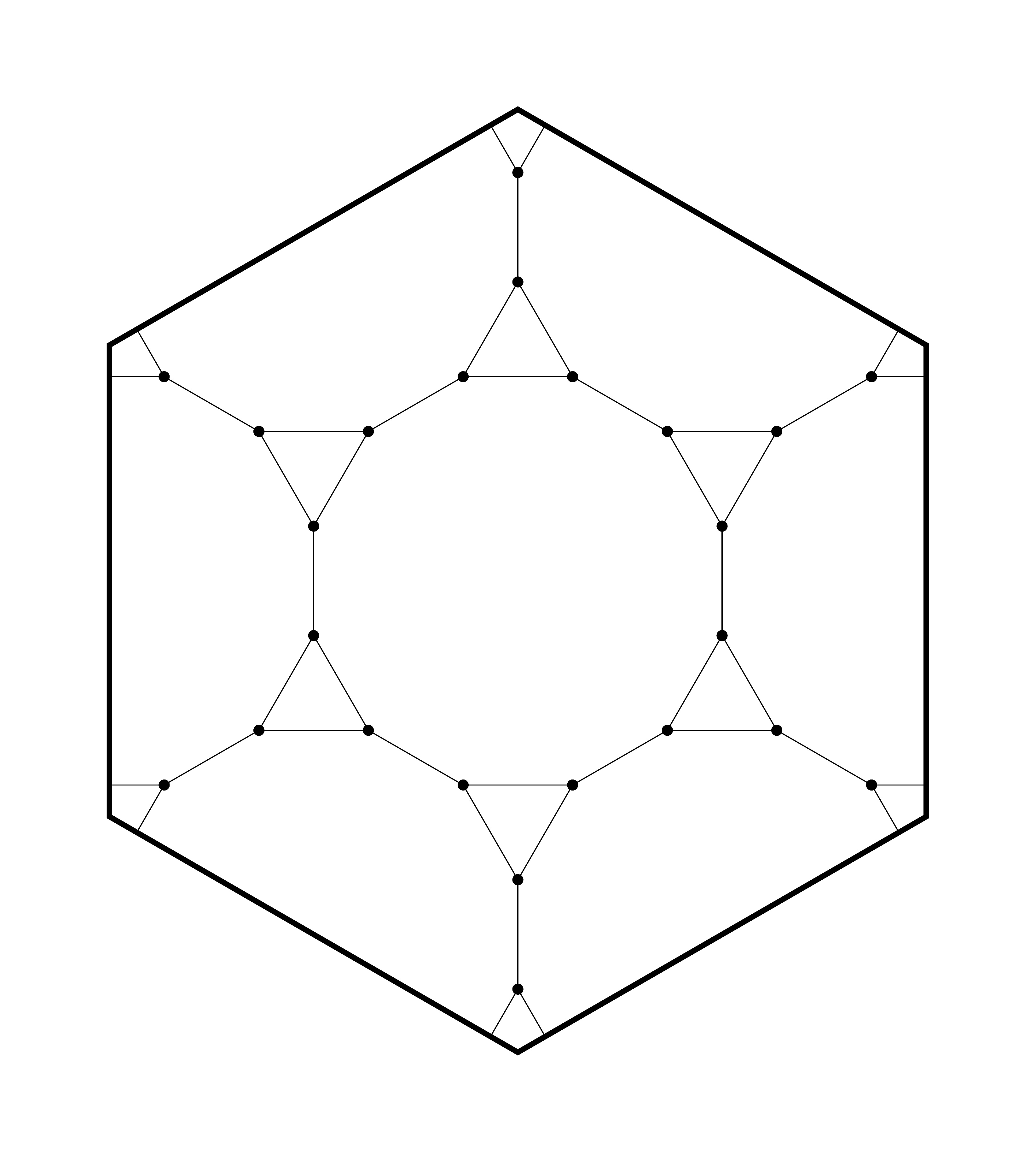}}}\\
\subfloat[deltille]{{\includegraphics[width=2in]{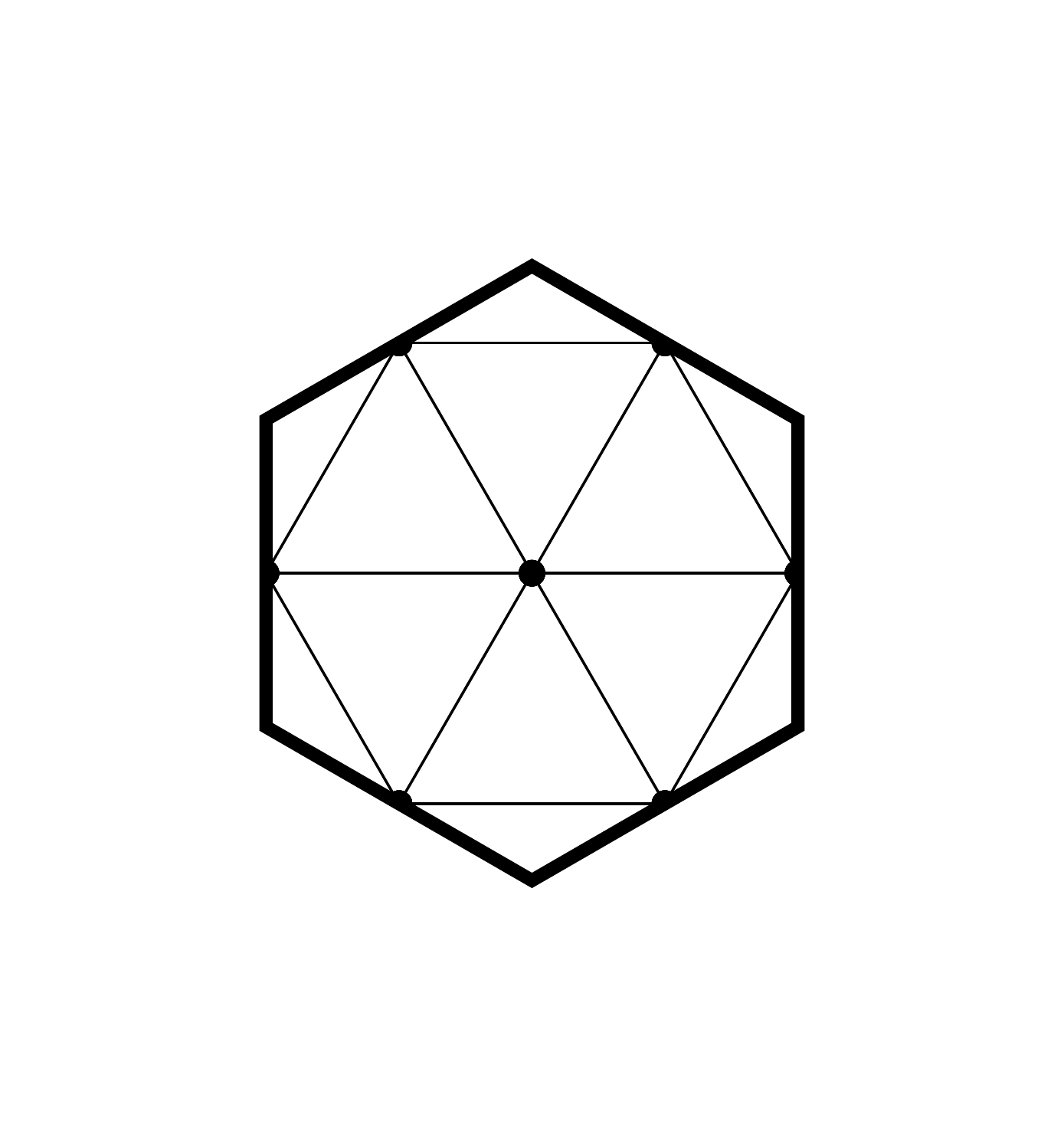}}}
\subfloat[hexadeltille]{{\includegraphics[width=2in]{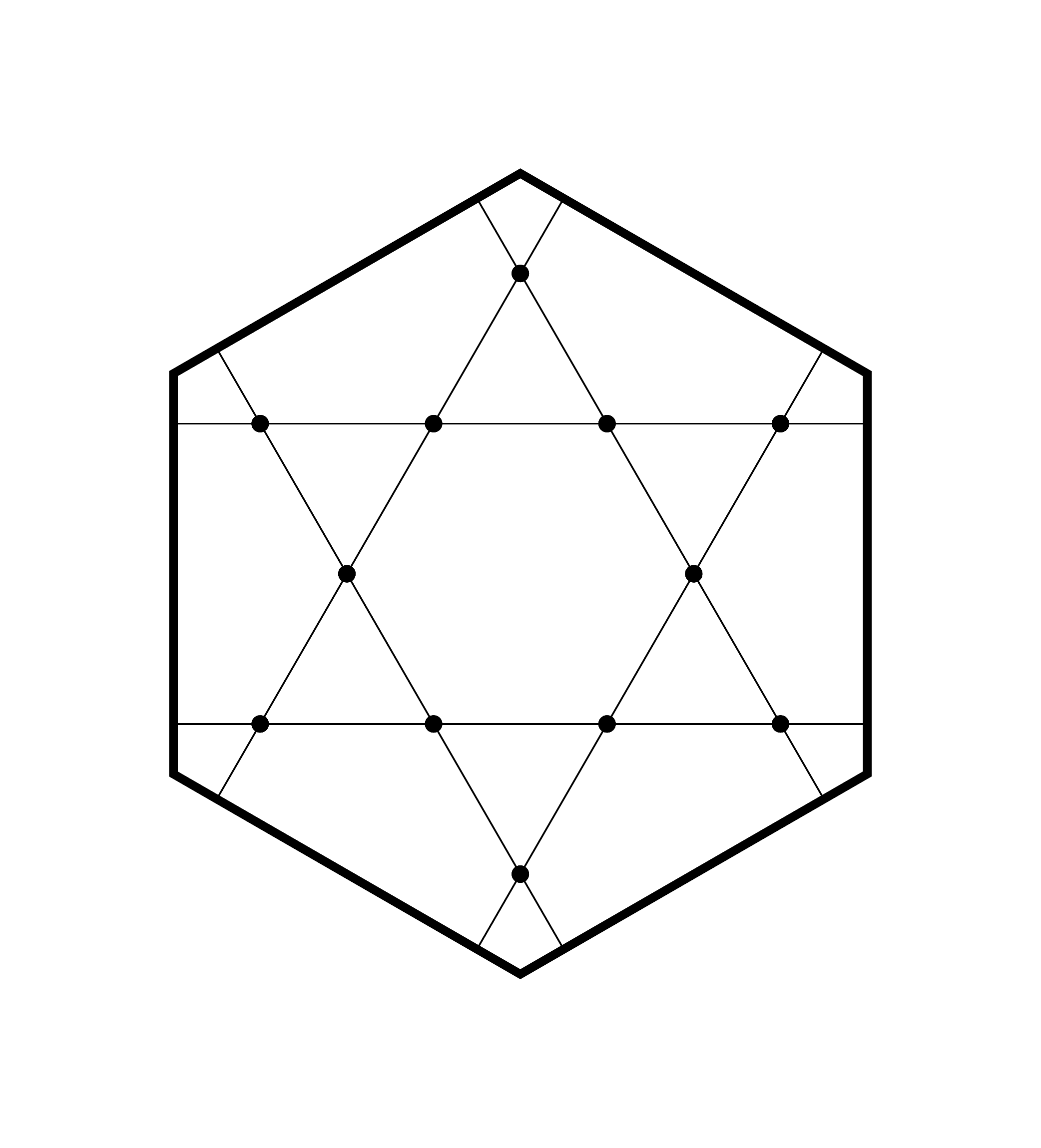}}}
\subfloat[rhombihexadeltille]{{\includegraphics[width=2in]{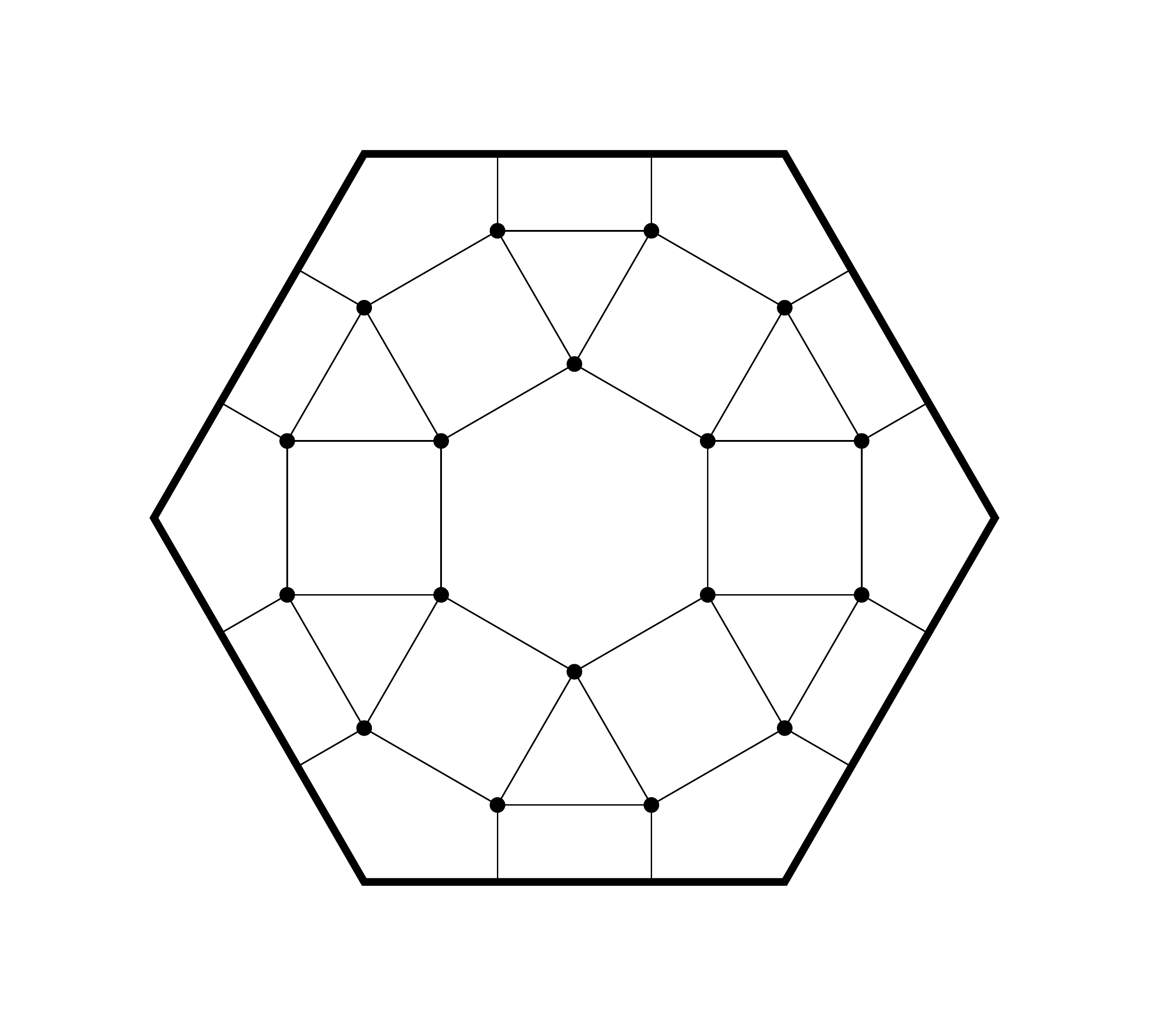}}}
\caption{
\label{figure:boundaries}
A figure illustrating how we placed the centers and boundaries for each of the tilings that we scanned.  The boundaries are periodic, so that edges that pass through one side of the boundary wrap around to the opposite side.  Edges and vertices \emph{on} a boundary are merged with the corresponding edges and vertices on the opposite edge.
Note that under this scheme there can be no vertices on a corner.  For most tilings this will never happen, but it turns out that in the case of the deltille tiling there are vertices on the corners when the radius (smallest distance from the center to the boundary) is three times the radius of the unit cell;  we thus ignore deltille tilings of these sizes.
}
\end{figure*}

Due to limits on our computational resources, we were limited in the size of the lattices that we could search with the algorithm.  We describe the size of the lattices using a quantity we call the `radius', which is an integral quantity equal to the length of the lattice divided by the length of the smallest lattice defined for that tiling;  the unit radius lattices are those illustrated in Figure {figure:boundaries}.  In Table \ref{table:maximum-radius-scanned} we show the maximum radius lattice that was completely scanned (i.e., such that every possible labeling was examined by the algorithm) for each tiling.

\begin{table}
\begin{tabular}{lcr}
Tiling & Maximum Radius & \# Qubits\\
quadrille & 4 & 64\\
truncated quadrille & 6 & 576\\
snub quadrille & 5 & 200\\
isosnub quadrille & 8 & 768\\
hextille & 10 & 600\\
truncated hextille & 5 & 600\\
deltille & 8 & 256\\
hexadeltille & 3 & 108\\
rhombihexadeltille & 3 & 162\\
\end{tabular}
\caption[Lattice sizes scanned for each tiling]{
\label{table:maximum-radius-scanned}
A table in which we show the maximum radius lattice that was completely scanned (i.e., such that every possible labeling was examined by the algorithm) for each tiling.  To give a sense of the size of the lattices involved, we also list the number of physical qubits (corresponding to vertices) for the lattice with the maximum radius.
}
\end{table}

Since each labeling of every lattice results in a quantum code, we had to provide some criteria for our search algorithm to decide whether a code was interesting enough to log.  We set our criteria relatively low:  a code was deemed to be interesting if there was at least one logical qubit with distance three, that is if there was at least one logical qubit such that a single arbitrary error on that qubit can be corrected.  This was done under the reasoning that as long as some of the logical qubits in a code are sufficiently useful to us to make implementing the code worthwhile, then we should not be troubled by the fact that there might be other logical qubits that are not useful because we can always ignore them (or, equivalently, classify them as gauge qubits).
%@-node:gcross.20100805134725.1339:Methodology
%@+node:gcross.20100805134725.1340:Results
\subsection{Results} \label{sec:results}

In the previous subsection we described the search space to which we applied the algorithm in order to computationally find possible codes that can be implemented using systems with 2-body interactions and a lattice structure following nine of the eleven convex vertex-uniform tilings.  In this subsection we present the results of this search.  The codes that we found are shown in the plots appearing in Figures \ref{figure:results-quadrille} and \ref{figure:results-hextille}.  No plot appears for the deltille tiling because no codes were found for that tiling.  It is worth emphasizing that these codes indicated in these figures are \emph{all} of the (useful) codes that exist for the scanned lattices of that tiling given our constraints, since we scanned every possible labeling that was not redundant under a rotational symmetry transformation about the center.

%@<< Figures >>
%@+node:gcross.20100810123019.1784:<< Figures >>
\begin{figure*}
\centering
\subfloat[]{{\includegraphics[width=2.5in]{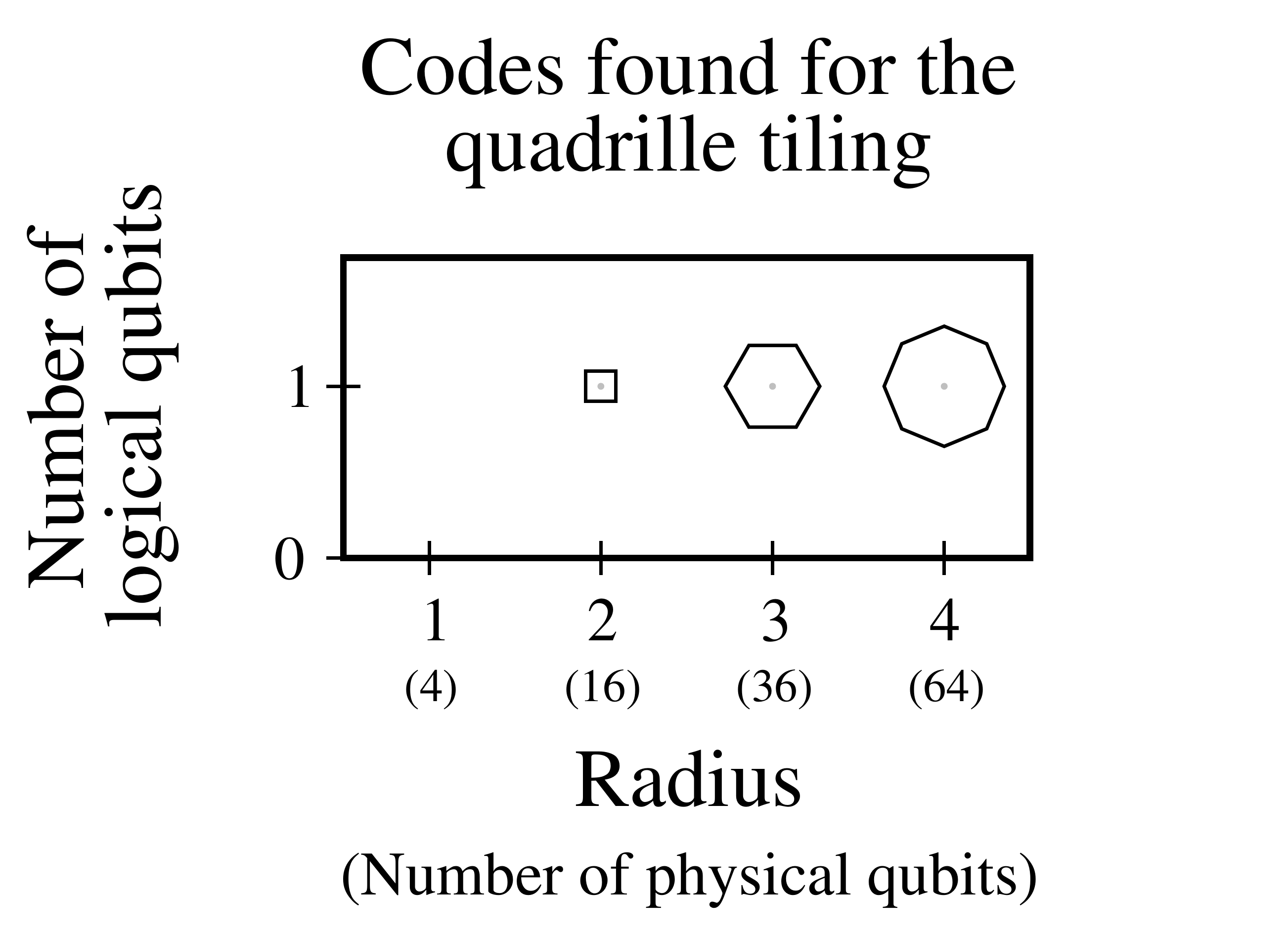}}}
\subfloat[]{{\includegraphics[width=3.85in]{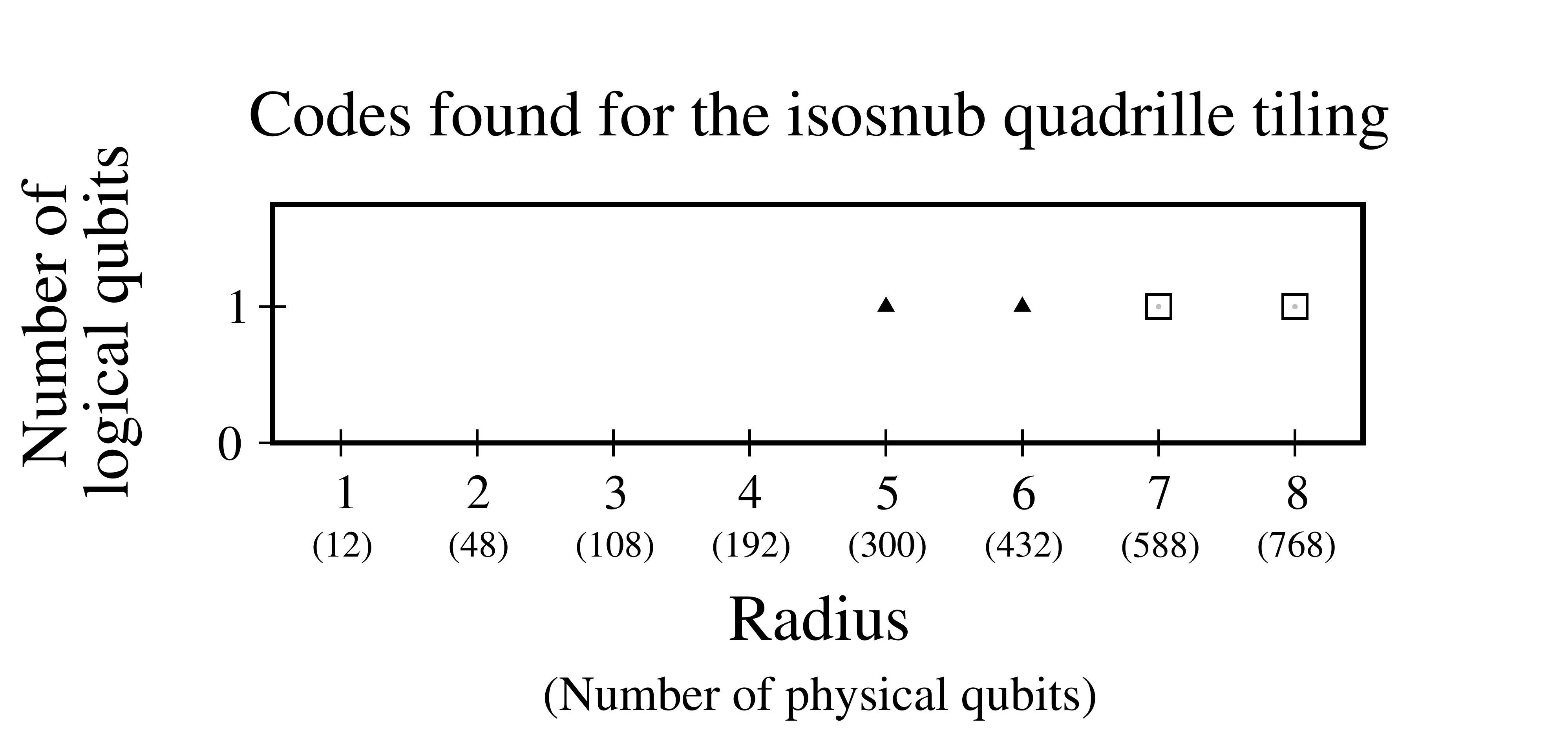}}}\\
\vspace{0.5in}
\subfloat[]{{\includegraphics[width=3.17in]{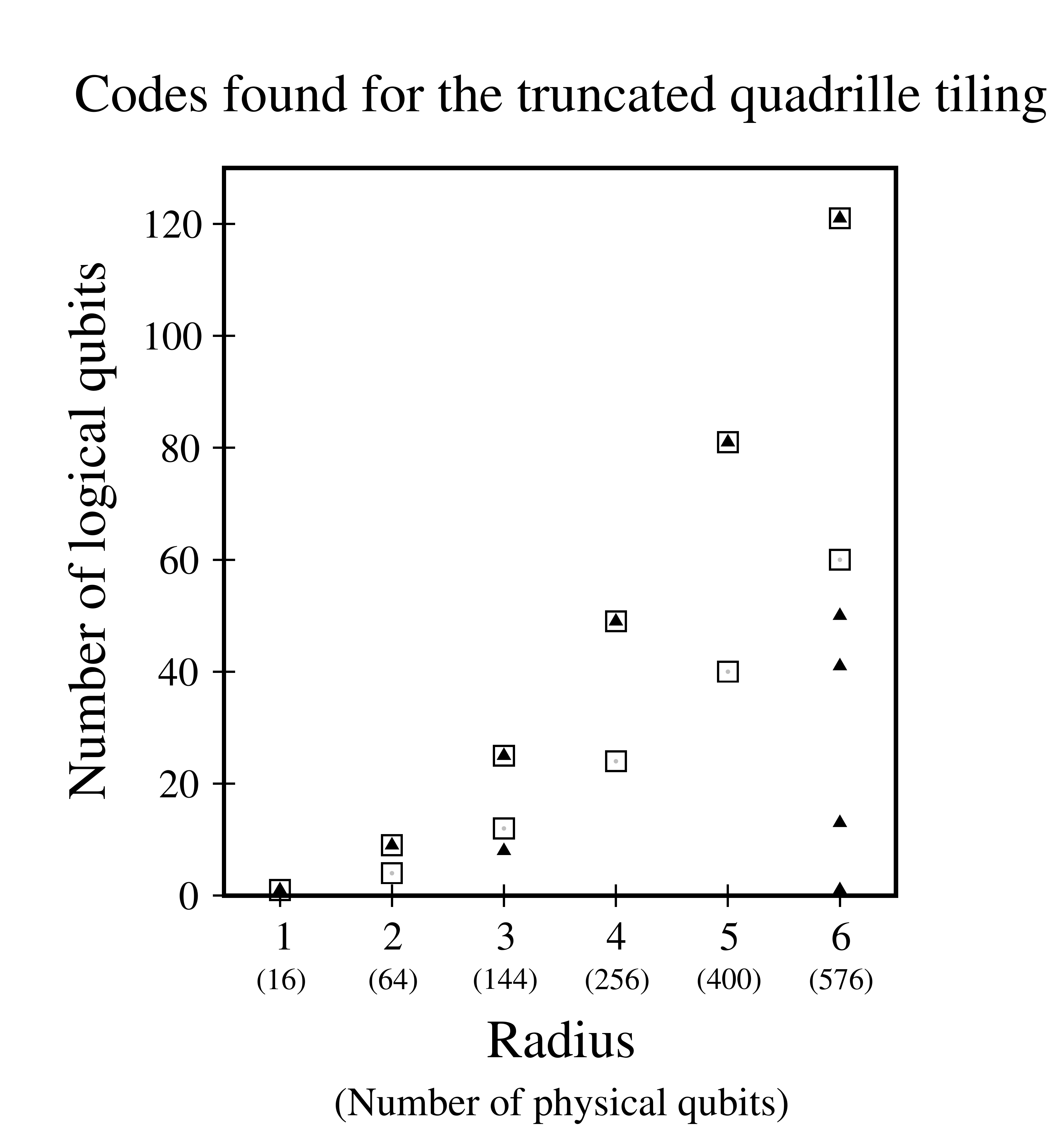}}}
\hspace{0.35in}
\subfloat[]{{\includegraphics[width=2.83in]{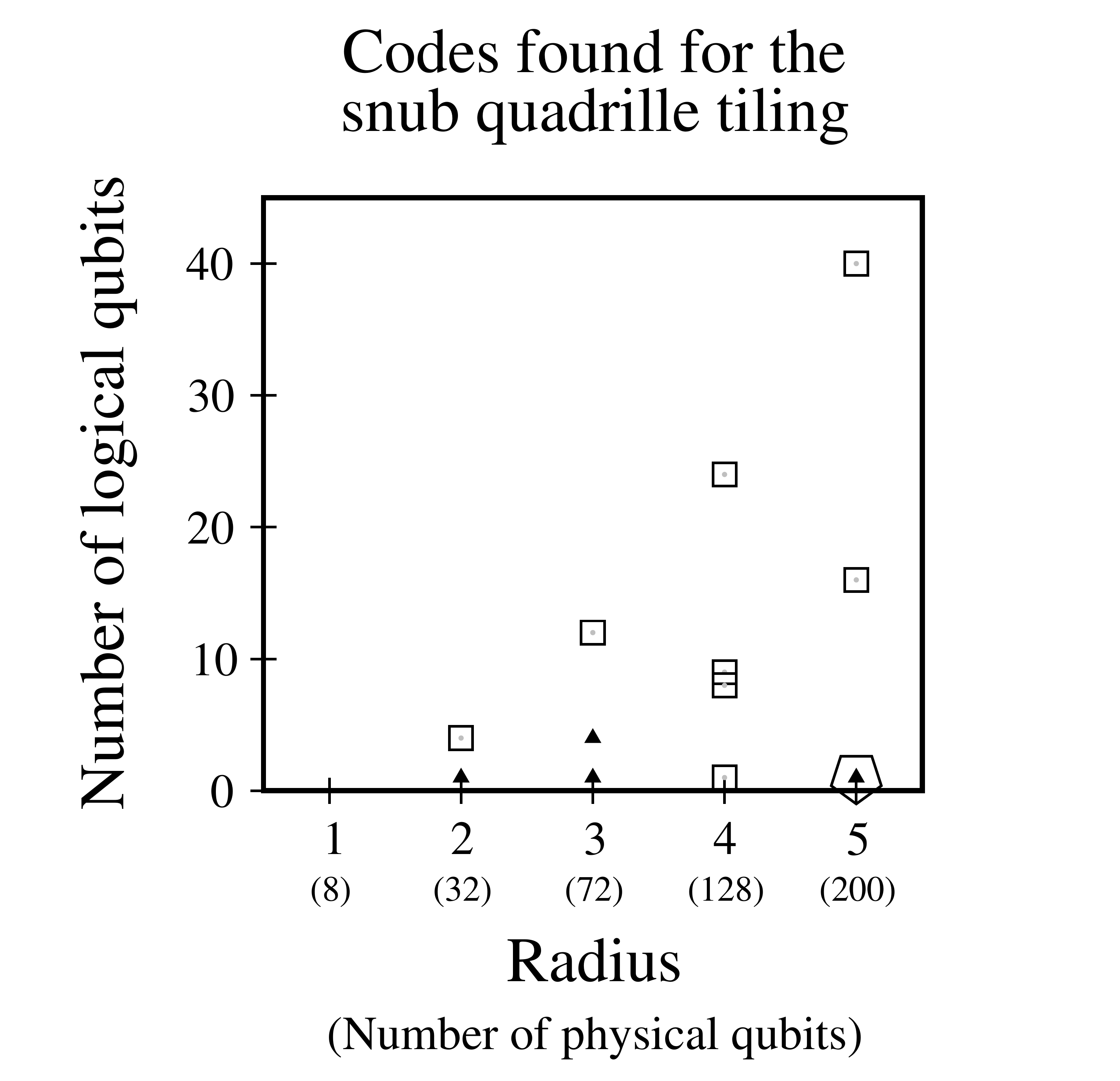}}}\\
\caption{
\label{figure:results-quadrille}
A figure containing plots of the results from scanning the (a) quadrille, (b) isosnub quadrille, (c) truncated quadrille and (d) snub quadrille tilings.  Every polygon in the plot corresponds to a code that was found with a distance equal to the number of sides of the polygon, so that triangles indicate distance three codes, squares indicate distance four codes, etc.  The position along the $x$-axis indicates the radius of the lattice where the code was found, where the radius is an integer defined to be the length of the lattice divided by the length of the smallest possible periodic lattice for the tiling;  it also indicates the number of physical qubits in the lattice where the code was found, which appears on the $x$-axis just under the value of the radius.  The position along the $y$-axis indicates the number of logical qubits with that distance in the code.
Note that in cases where multiple codes were found for the same radius and with the same number of logical qubits, multiple polygons are drawn, so that for example in plot (c) we see several cases in which a distance 3 code (triangle) and a distance 4 code (square) were found that were in lattices with the same radius and also had the same number of logical qubits.
}
\end{figure*}

\begin{figure*}
\centering
\subfloat[]{{\includegraphics[width=4.5in]{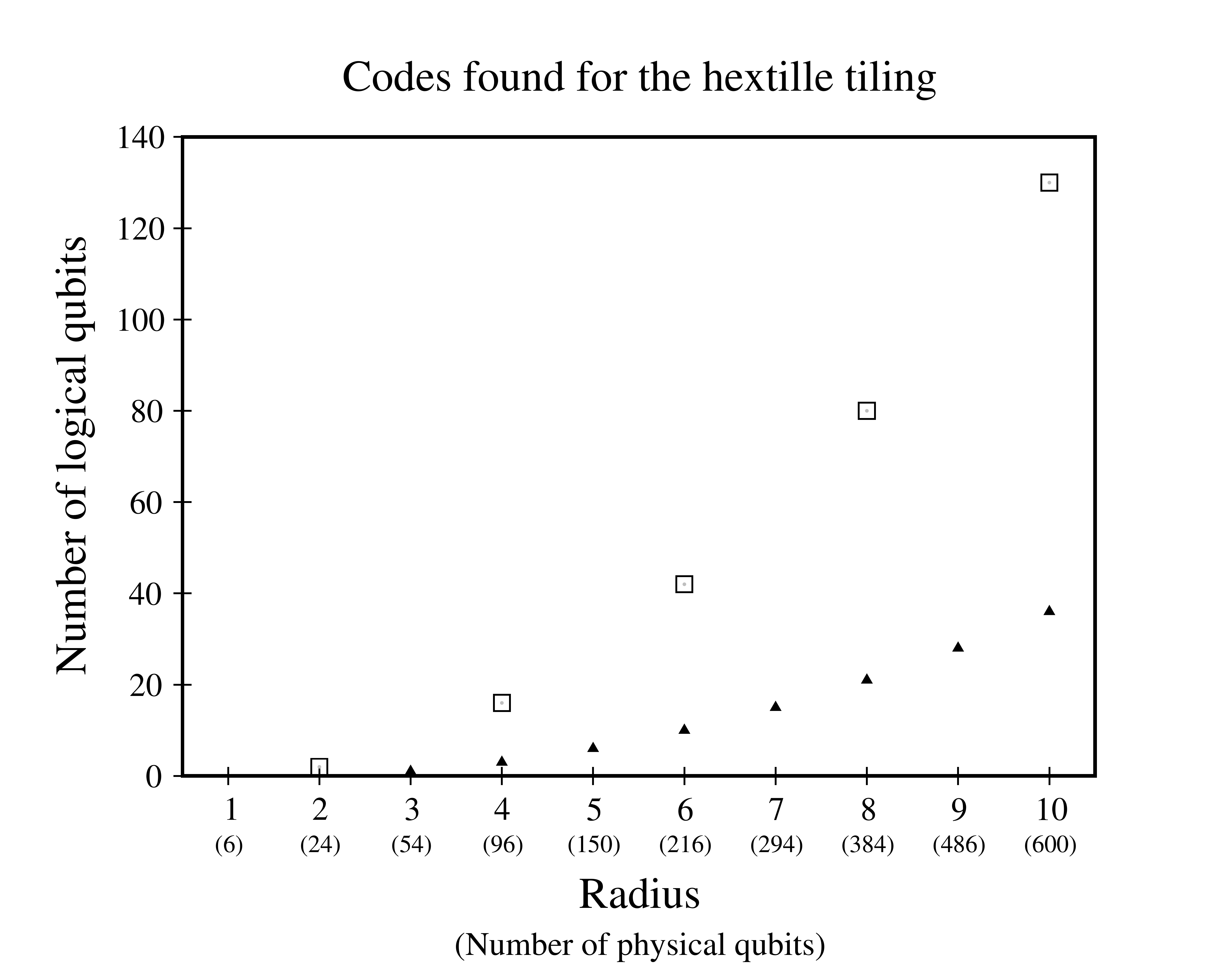}}}
\subfloat[]{{\includegraphics[width=2.17in]{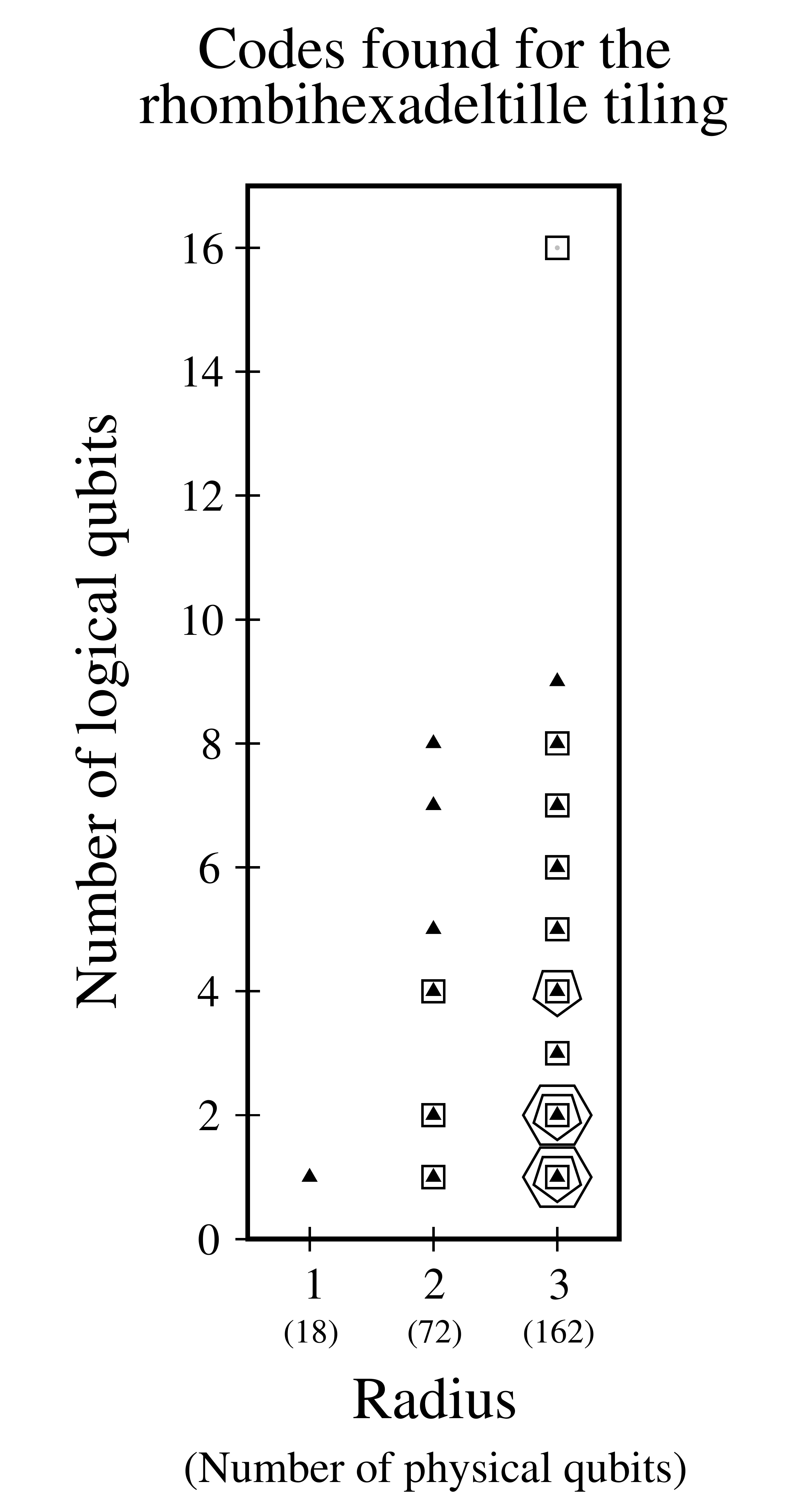}}} \\
\vspace{0.5in}
\subfloat[]{{\includegraphics[width=2.83in]{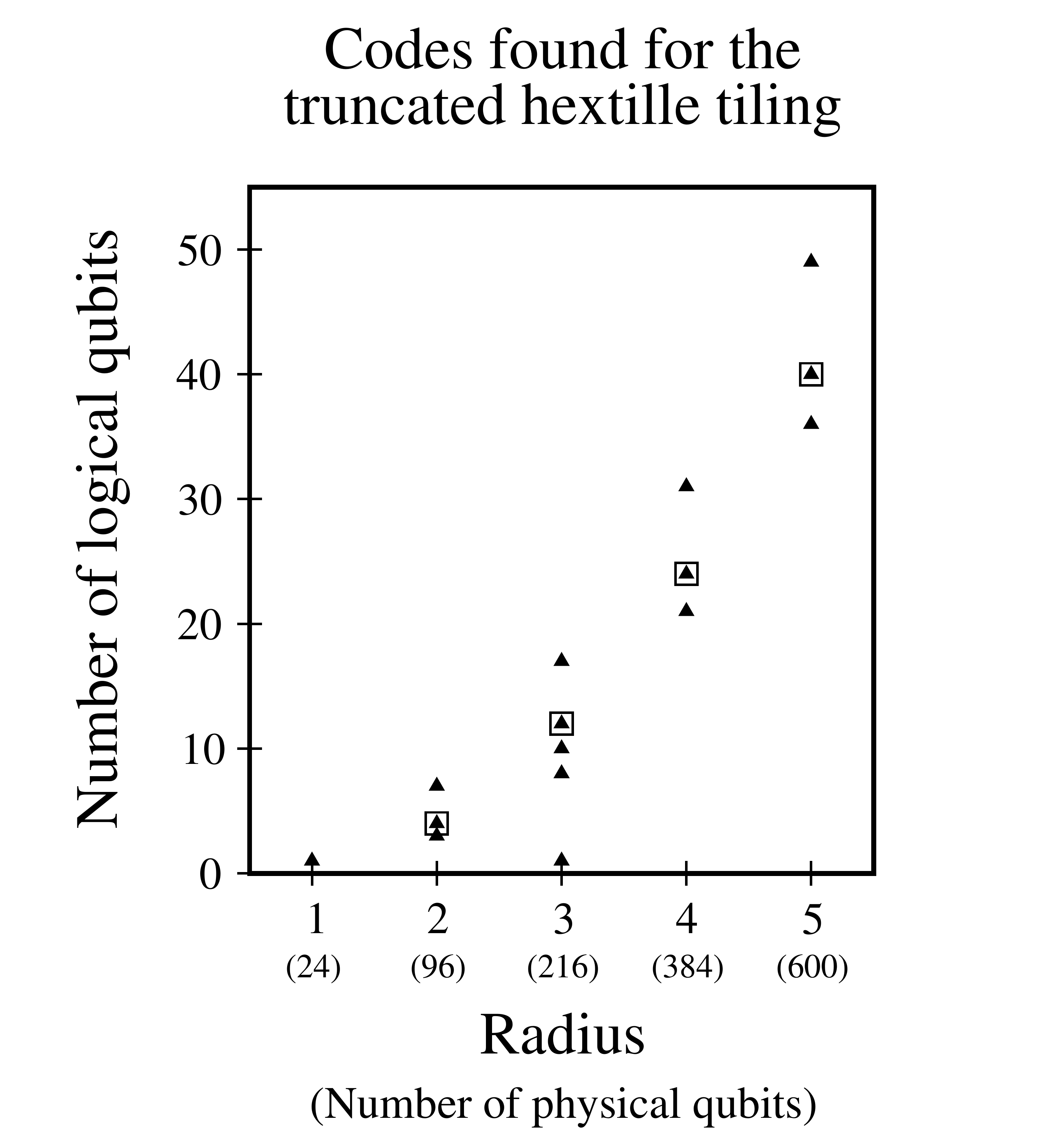}}}
\hspace{1.67in}
\subfloat[]{{\includegraphics[width=2.17in]{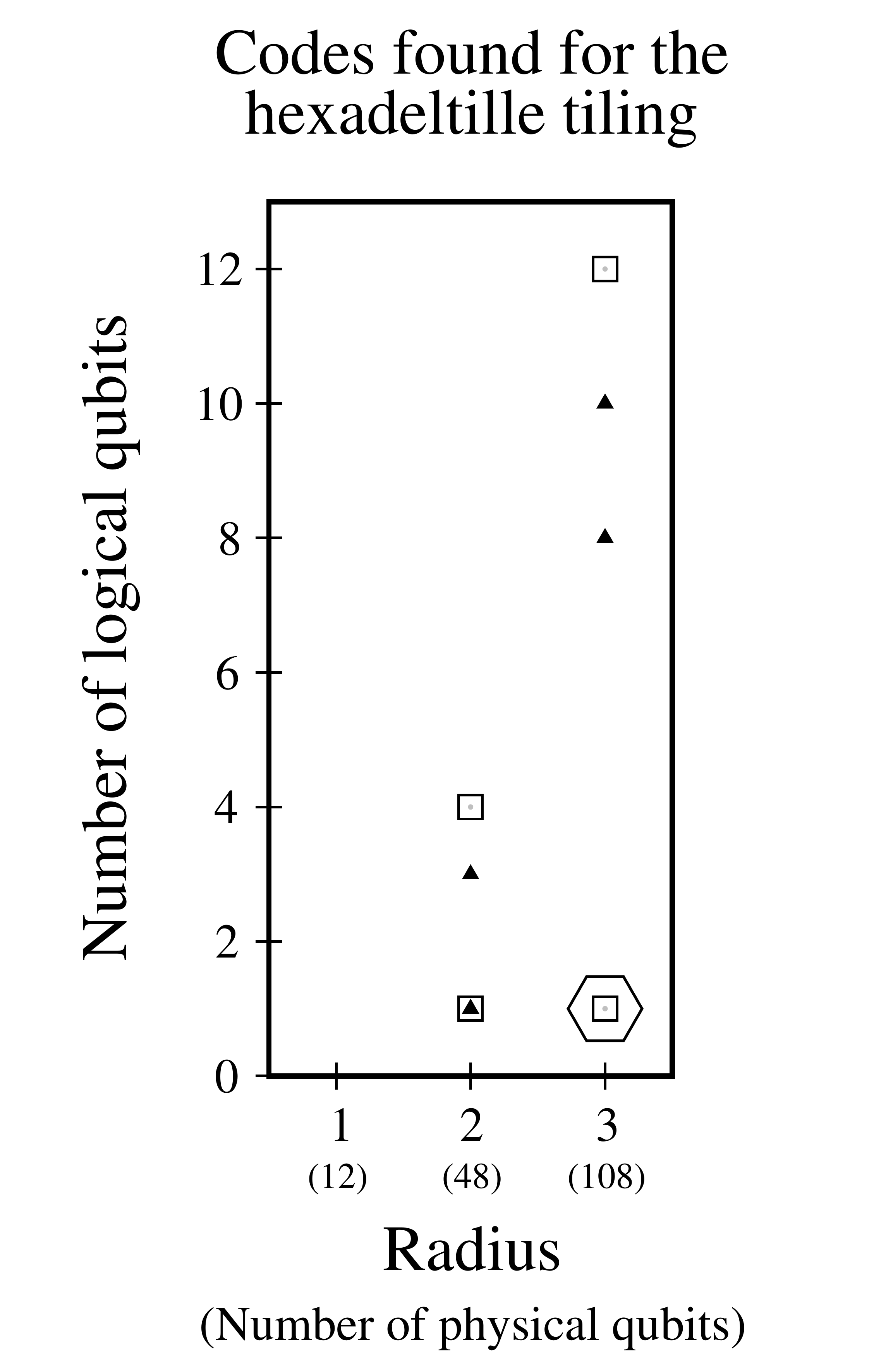}}} \\
\caption{
\label{figure:results-hextille}
A figure containing plots of the results from scanning the (a) hextille, (b) rhombihexadeltille, (c) truncated hextille and (d) hexadeltille tilings.  See the caption of Figure \ref{figure:results-quadrille} for an explanation of how to interpret these plots.
}
\end{figure*}
%@-node:gcross.20100810123019.1784:<< Figures >>
%@nl

Observe that two kinds of trends appear frequently in the results:  codes that grow in distance but remain constant in the number of logical qubits as the radius increase, and codes that remain constant in distance but grow in the number of logical qubits as the radius increases.  The former trend appears in the quadrille, snub quadrille, isosnub quadrille, hexadeltille, and rhombihexadeltille tilings\footnote{Note that where the former trend was present, the maximum radius that we scanned was often quite limited;  this is due to the exponential explosion in the cost of finding the optimal code as a function of the distance of the code.}.  The latter trend appears in the truncated quadrille, snub quadrille, hextille, truncated hextille, hexadeltille, and rhombihexadeltille tilings.  In many of the tilings there are also codes that were found that do not seem to belong to an obvious trend.

In the follow subsections we will focus on some specifics of the results for each of the tilings.

%@+others
%@+node:gcross.20100805174159.1347:quadrille
\subsubsection{quadrille}

For the quadrille lattice, we only saw one labeling, illustrated in Figure \ref{figure:quadrille-code-labeling}, that resulted in an interesting code.  This labeling corresponds to the compass model code, and the algorithm correctly found that the distance of the code grows linearly with the radius of the lattice and is exactly equal to the square root of the number of qubits in the lattice.  This result is not terribly surprising, but it is good to see that our search technique employing the algorithm can correctly duplicate known results.

\begin{figure}
\includegraphics[width=3in]{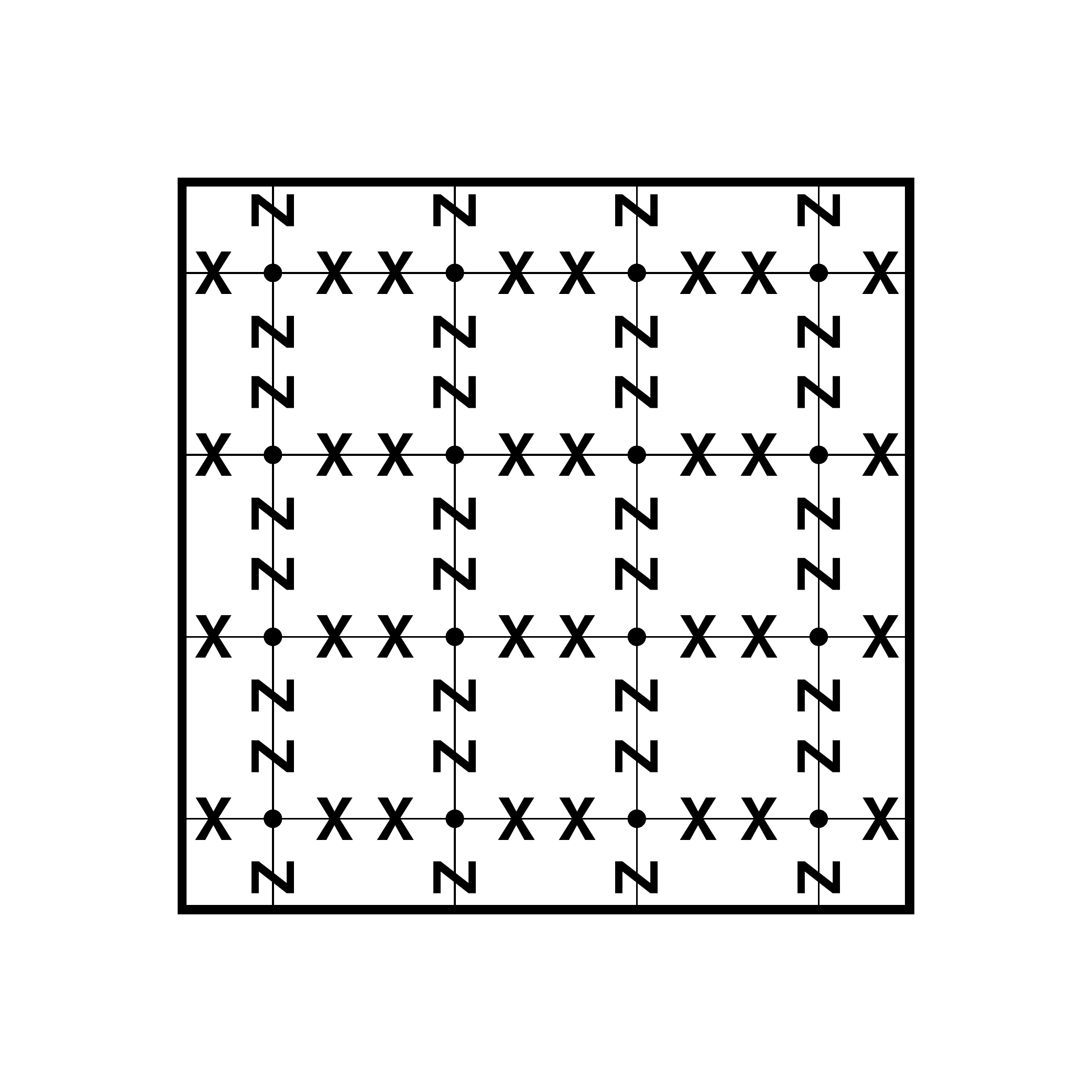}
\caption{
\label{figure:quadrille-code-labeling}
A figure illustrating the single labeling of the quadrille tiling (on a radius 2 lattice) that results in a useful quantum code.
}
\end{figure}
%@-node:gcross.20100805174159.1347:quadrille
%@+node:gcross.20100810123019.1785:truncated quadrille
\subsubsection{truncated quadrille}

There are three kinds of codes that appear in this tiling where the number of qubits increases with the radius:  two where the distance is fixed at 4, and one where the distance is fixed at 3.  For the best two of these three kinds of codes, the number of logical qubits ($l$) is related to the radius ($r$) by $l=(2r-1)^2$.    Since the number of physical qubits ($n$) is given by $n=(4r)^2$, the number of logical qubits per physical qubit is thus given by $\frac{l}{n}=\paren{\half-\frac{1}{r}}^2$, a quantity which converges to $\frac{1}{4}$ as $r\to\infty$.  There were four labelings with this property that we saw in our search:  two with distance 3 qubits (illustrated in Figure \ref{figure:truncated-quadrille-code-3-labelings}), and two with distance 4 qubits (illustrated in Figure \ref{figure:truncated-quadrille-code-4-labelings}).

\begin{figure*}
\subfloat[distance 3 code labelings]{
\label{figure:truncated-quadrille-code-3-labelings}
\includegraphics[width=3.5in]{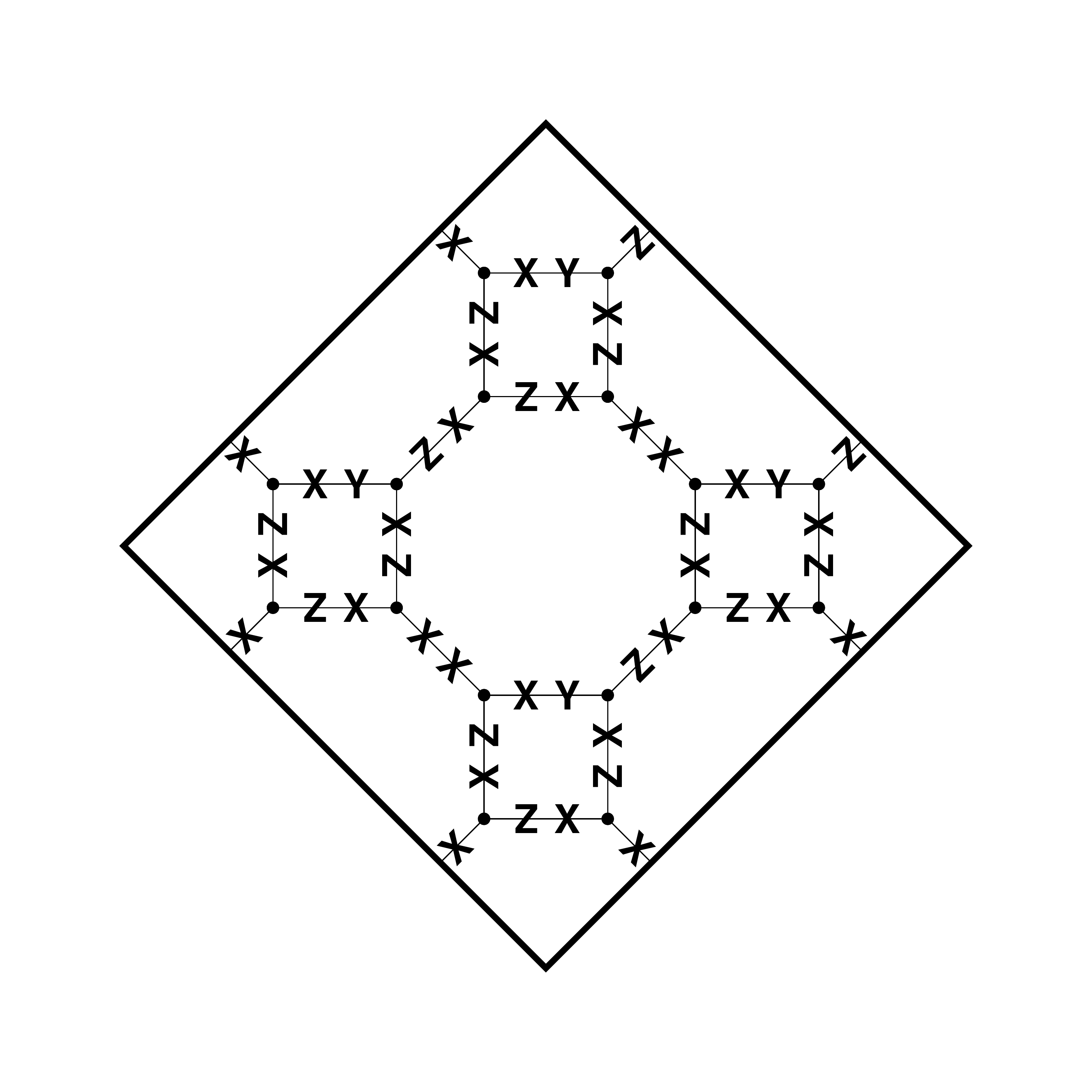}
\includegraphics[width=3.5in]{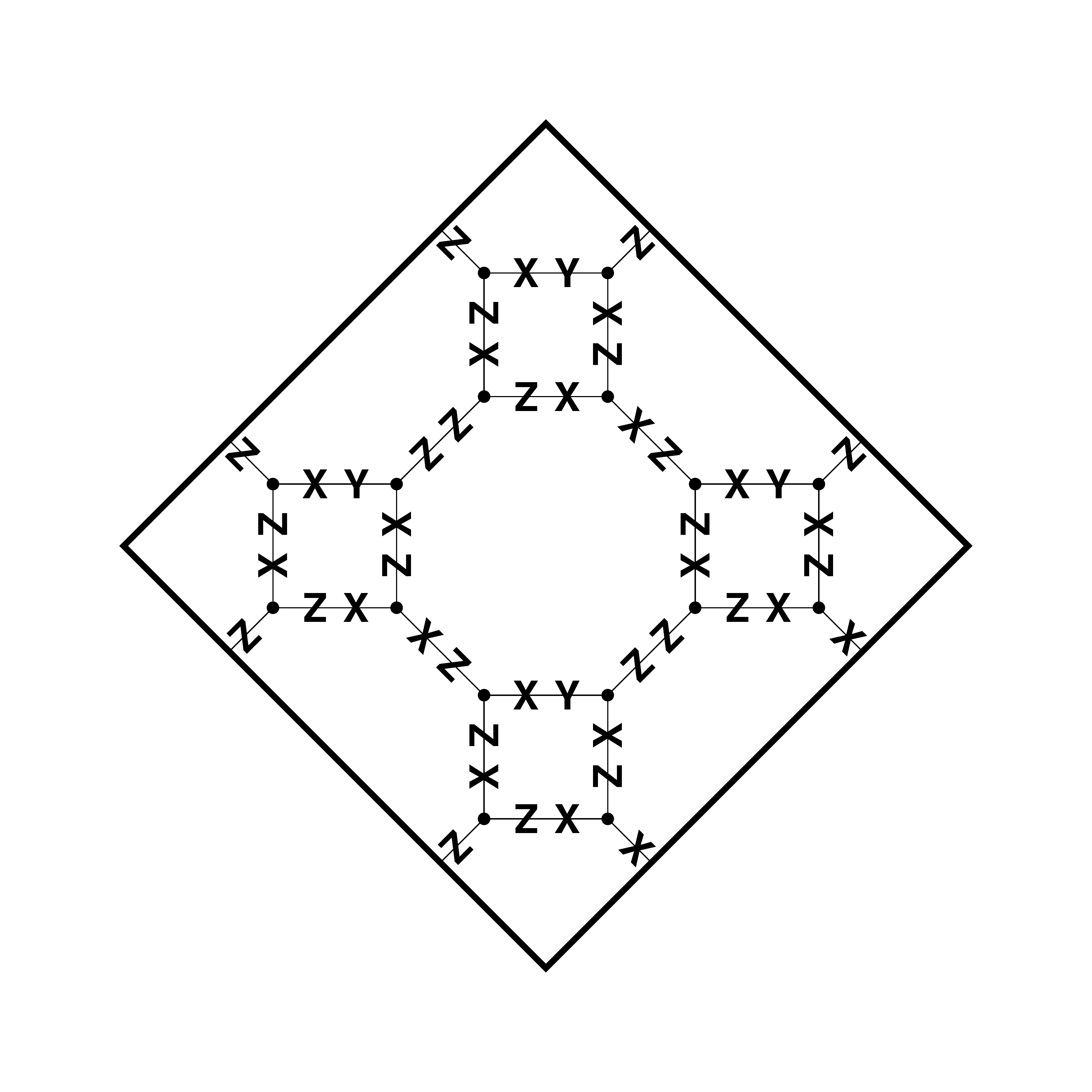}
}\\
\subfloat[distance 4 code labelings]{
\label{figure:truncated-quadrille-code-4-labelings}
\includegraphics[width=3.5in]{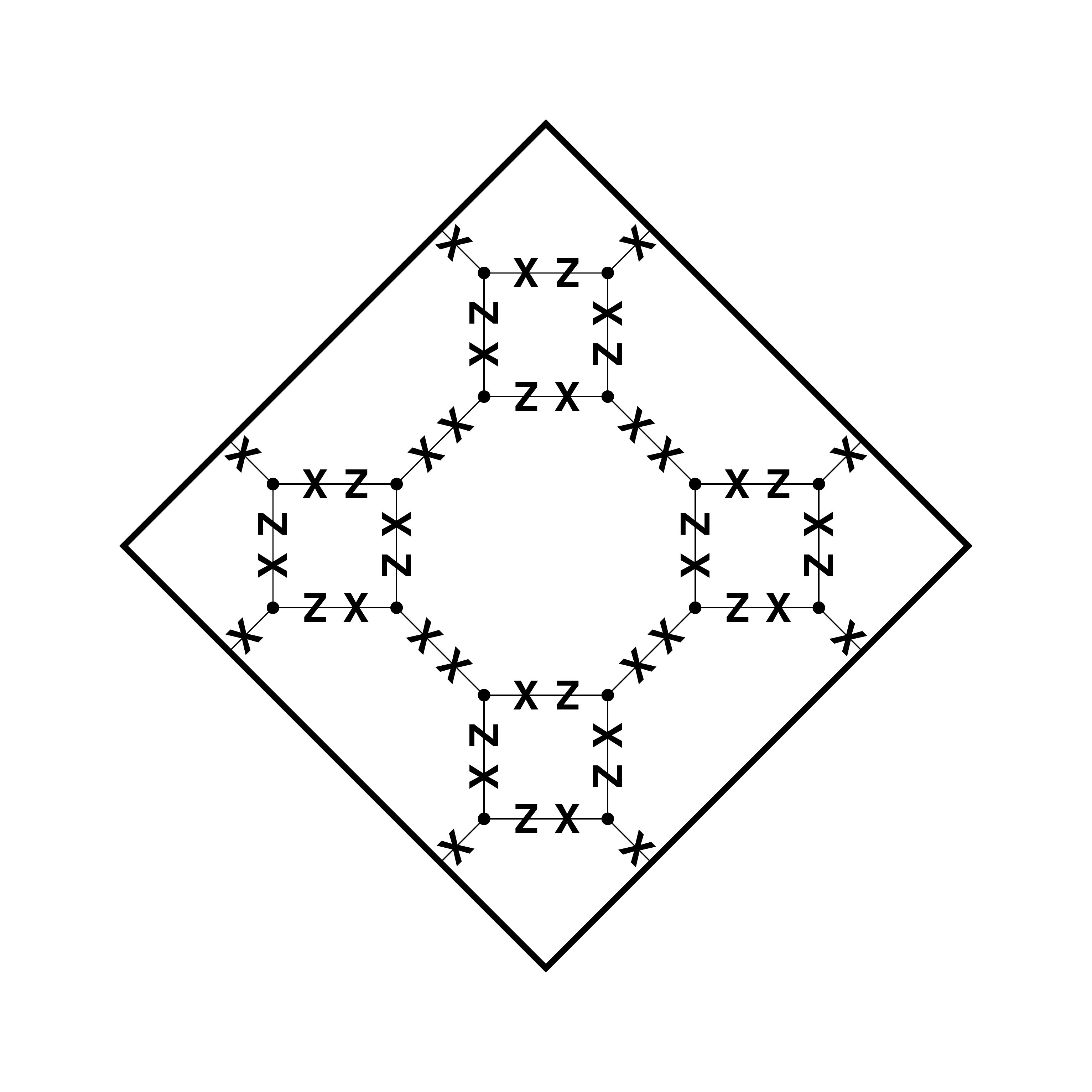} 
\includegraphics[width=3.5in]{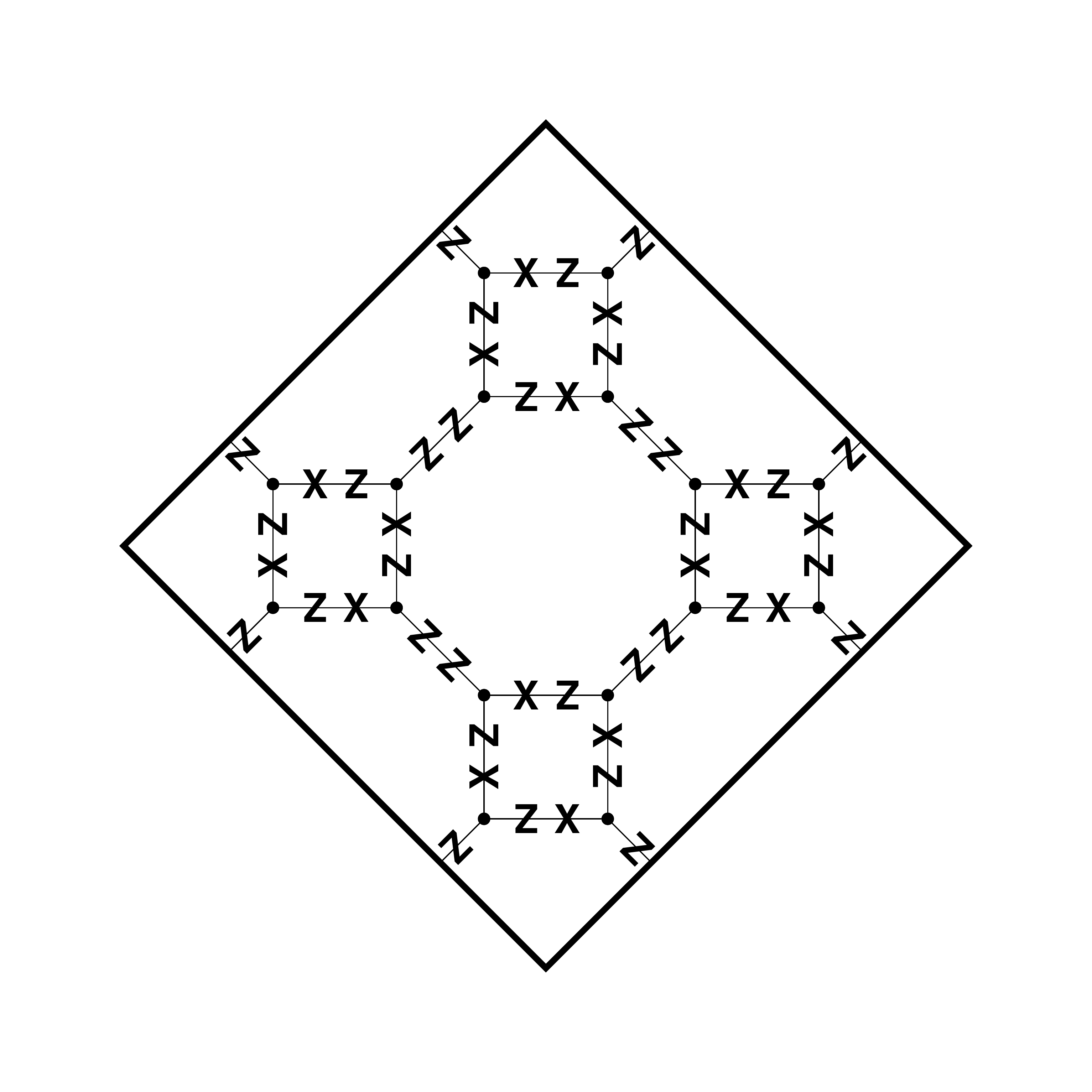}}\\
\caption{
\label{figure:truncated-quadrille-code-labelings}
A figure illustrating four labelings of the truncated quadrille tiling (on a radius 1 lattice) that result in a quantum code with distance 3/4 (in respectively \ref{figure:truncated-quadrille-code-3-labelings}/\ref{figure:truncated-quadrille-code-4-labelings}), and a number of distance logical qubits proportional to the square of the radius of the labeling.
}
\end{figure*}
%@-node:gcross.20100810123019.1785:truncated quadrille
%@+node:gcross.20100810123019.1786:snub quadrille
\subsubsection{snub quadrille}

%@+at
% (select labeling from code_best_distances where tiling = 'snub quadrille' 
% and radius = 3 and distance = 3 and number_of_qubits=1) intersect
% (select labeling from code_best_distances where tiling = 'snub quadrille' 
% and radius = 4 and distance = 4 and number_of_qubits=1) intersect
% (select labeling from code_best_distances where tiling = 'snub quadrille' 
% and radius = 5 and distance = 5 and number_of_qubits=1)
%  order by labeling;
%@-at
%@@c

There are two kinds of interesting codes found in this filing.  First, we saw exactly one labeling that has the property that the distance is four and the number of logical qubits ($l$) is given by $l=2r(r-1)$, where $r$ is the radius of the lattice.  Since the number of physical qubits ($n$) is given by $n=8r^2$, this means that the number of logical qubits per physical qubit is given by $\frac{l}{n}=\frac{2r(r-1)}{8r^2}=\frac{1}{4}\paren{1-\frac{1}{r}}\to\frac{1}{4}$ as $r\to\infty$.  This labeling is illustrated in Figure \ref{figure:snub-quadrille-code-4-labeling}.

\begin{figure}
\includegraphics[width=3.5in]{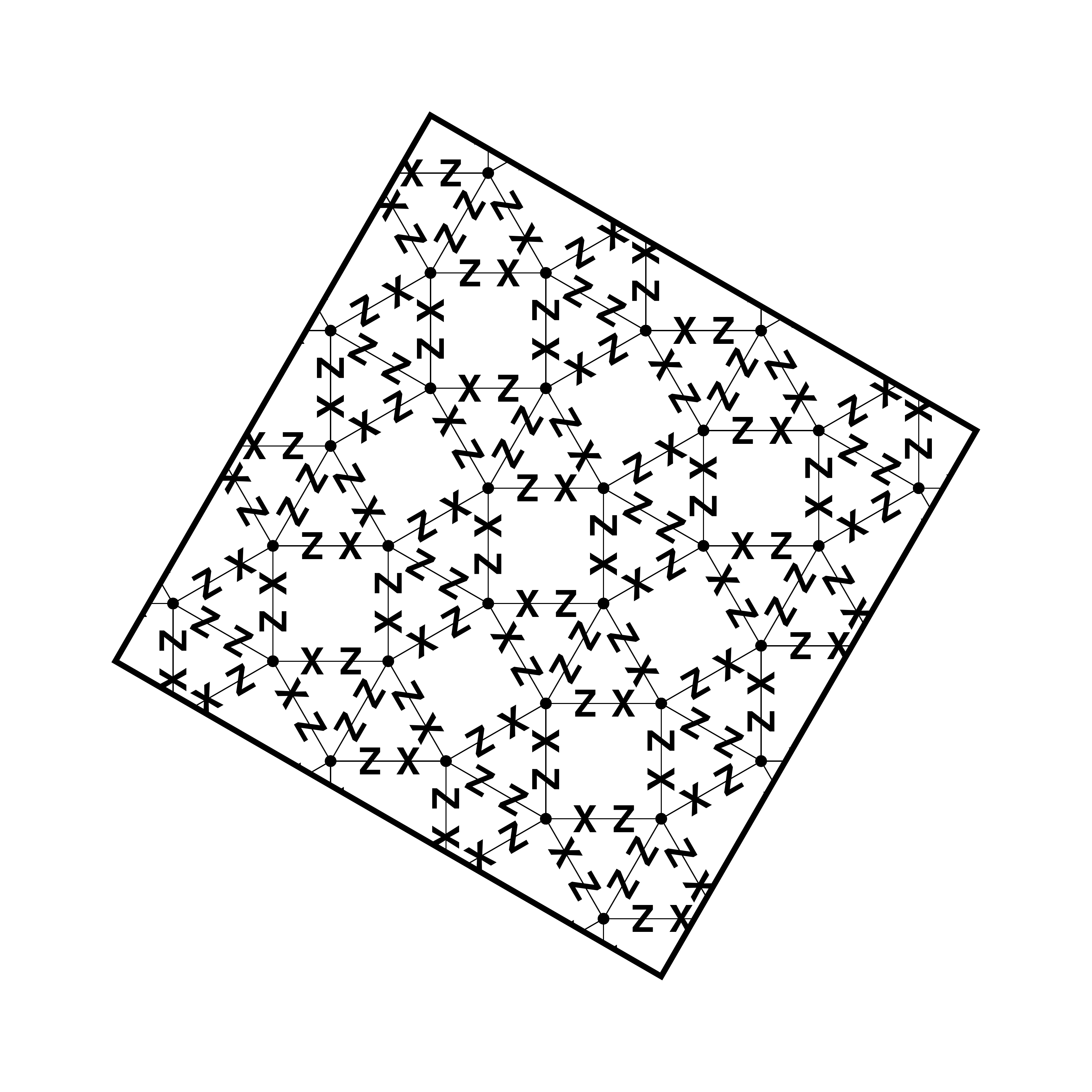} % labeling #1836744
\caption{
\label{figure:snub-quadrille-code-4-labeling}
A figure illustrating a labeling of the snub quadrille tiling (on a radius 2 lattice) that results with distance 4 and a number of distance logical qubits proportional to the square of the radius of the labeling.
}
\end{figure}

Second, more usefully, we saw twelve labelings which result in a code that has one qubit whose distance grows with the size of the lattice.  Two of these labelings are illustrated in Figure \ref{figure:snub-quadrille-code-r-labeling}.

\begin{figure*}
\includegraphics[width=3.5in]{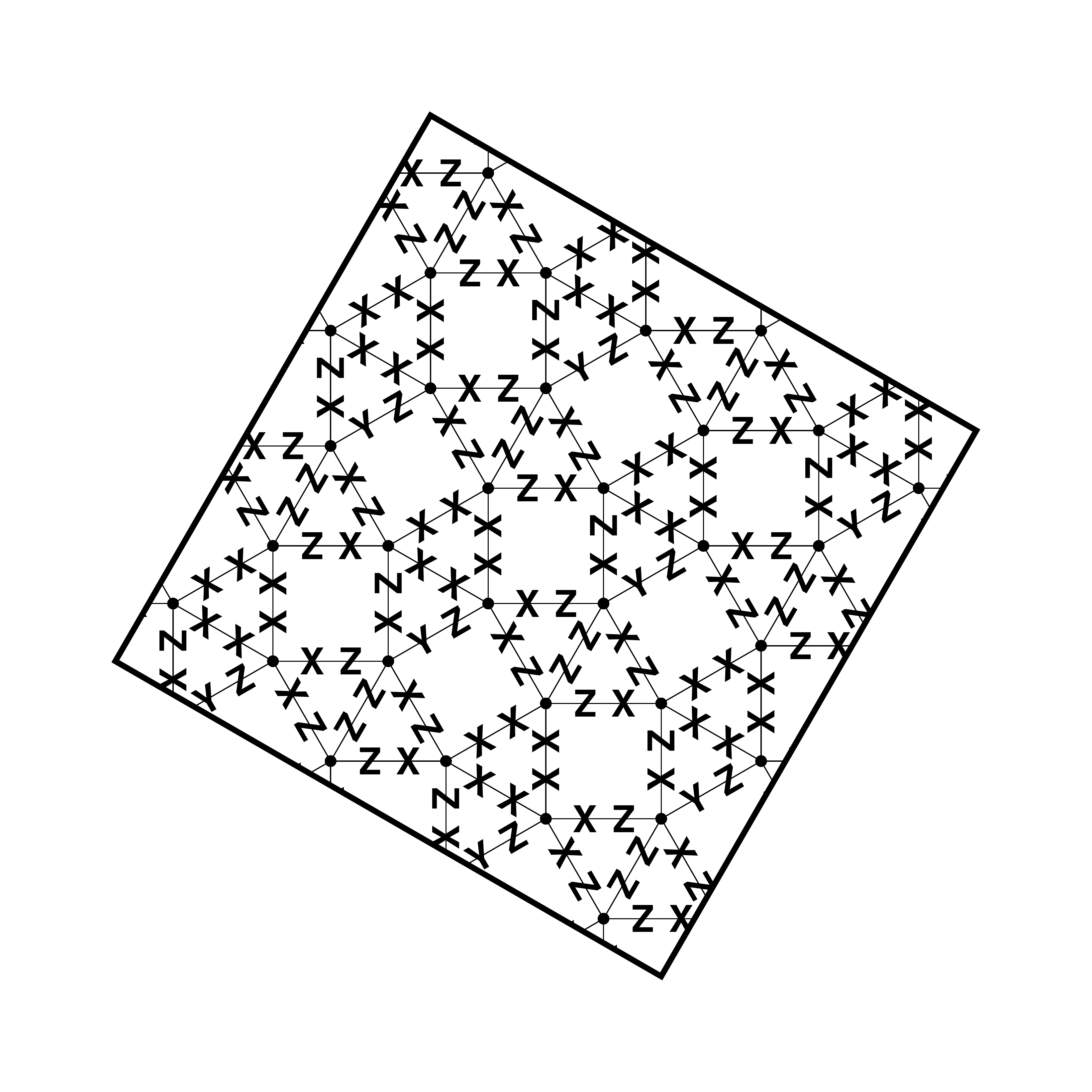} % labeling #180508
\includegraphics[width=3.5in]{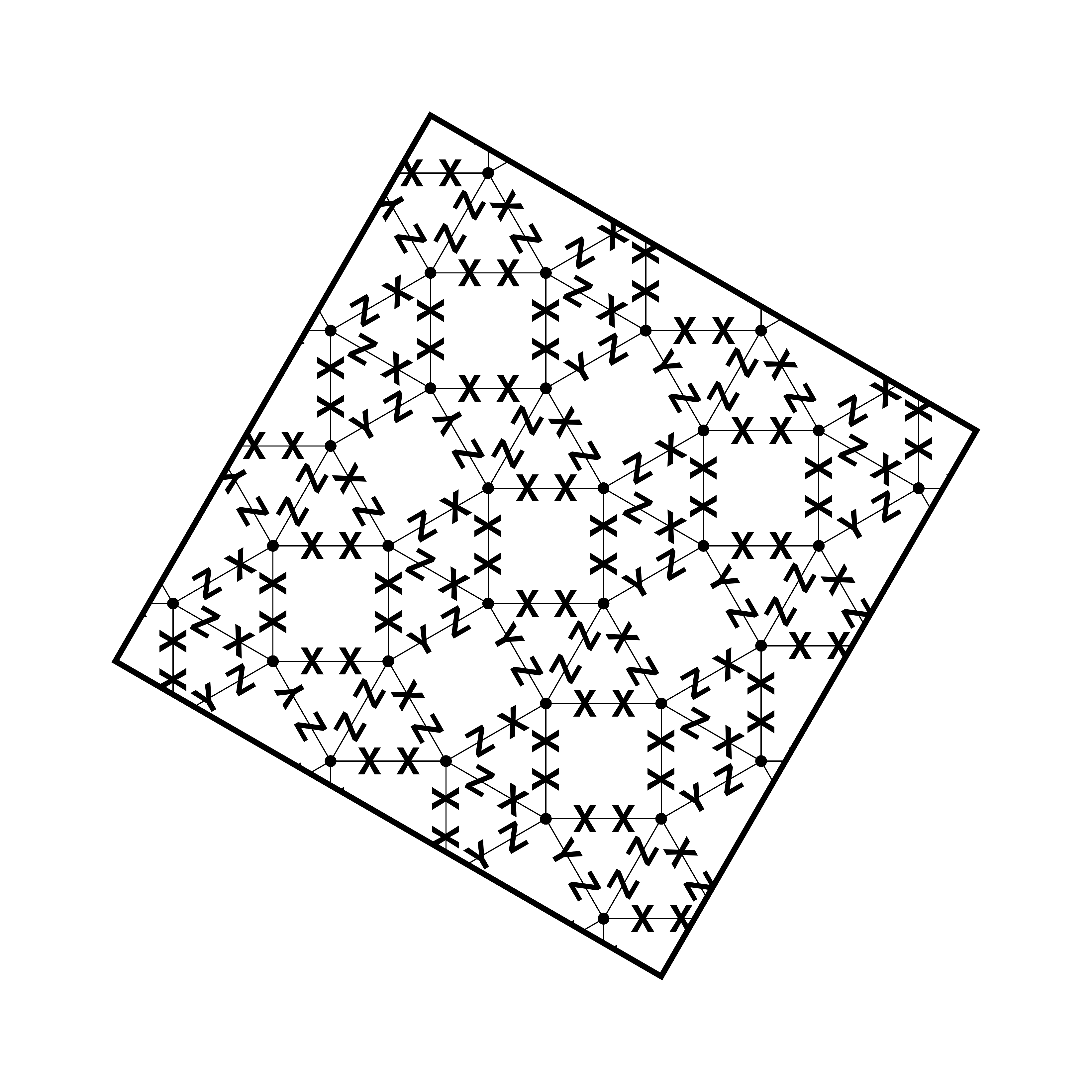} % labeling #287828
\caption{
\label{figure:snub-quadrille-code-r-labeling}
A figure illustrating two of the labelings of the snub quadrille tiling (on a radius 2 lattice) that result in a quantum code with a single qubit whose distance grows with the radius of the lattice.
}
\end{figure*}
%@-node:gcross.20100810123019.1786:snub quadrille
%@+node:gcross.20100810123019.1787:isosnub quadrille
\subsubsection{isosnub quadrille}

We only saw two labelings of the isosnub lattice that result in useful codes, both of which only have a single qubit that seems (assuming that the trend seen in Figure \ref{figure:results-hextille} can be extrapolated) to have a distance that grows with the radius of the lattice.  These two labeling are illustrated in Figure \ref{figure:isosnub-quadrille-code-r-labeling}.

\begin{figure*}
\includegraphics[width=3.5in]{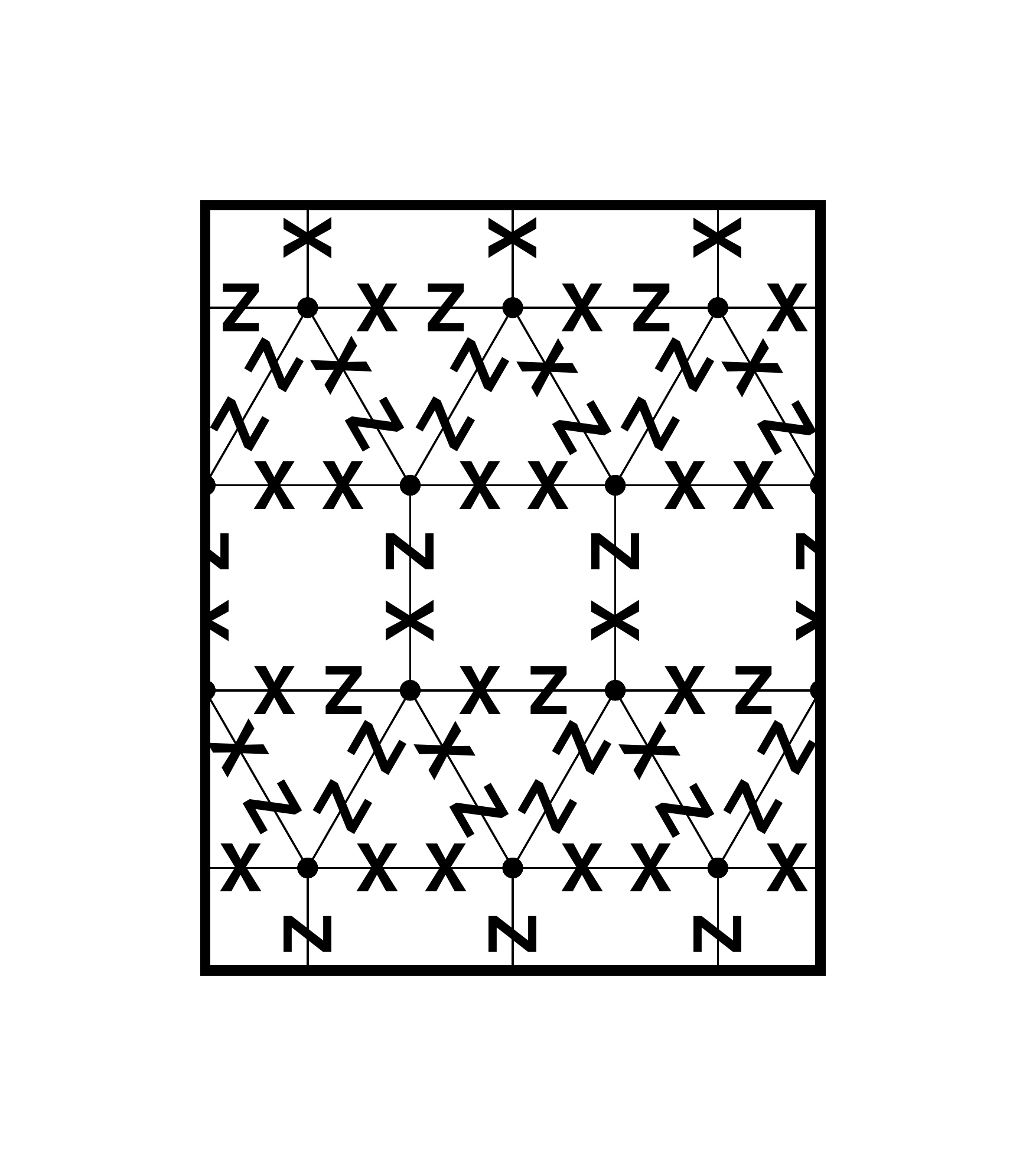} % labeling #346
\includegraphics[width=3.5in]{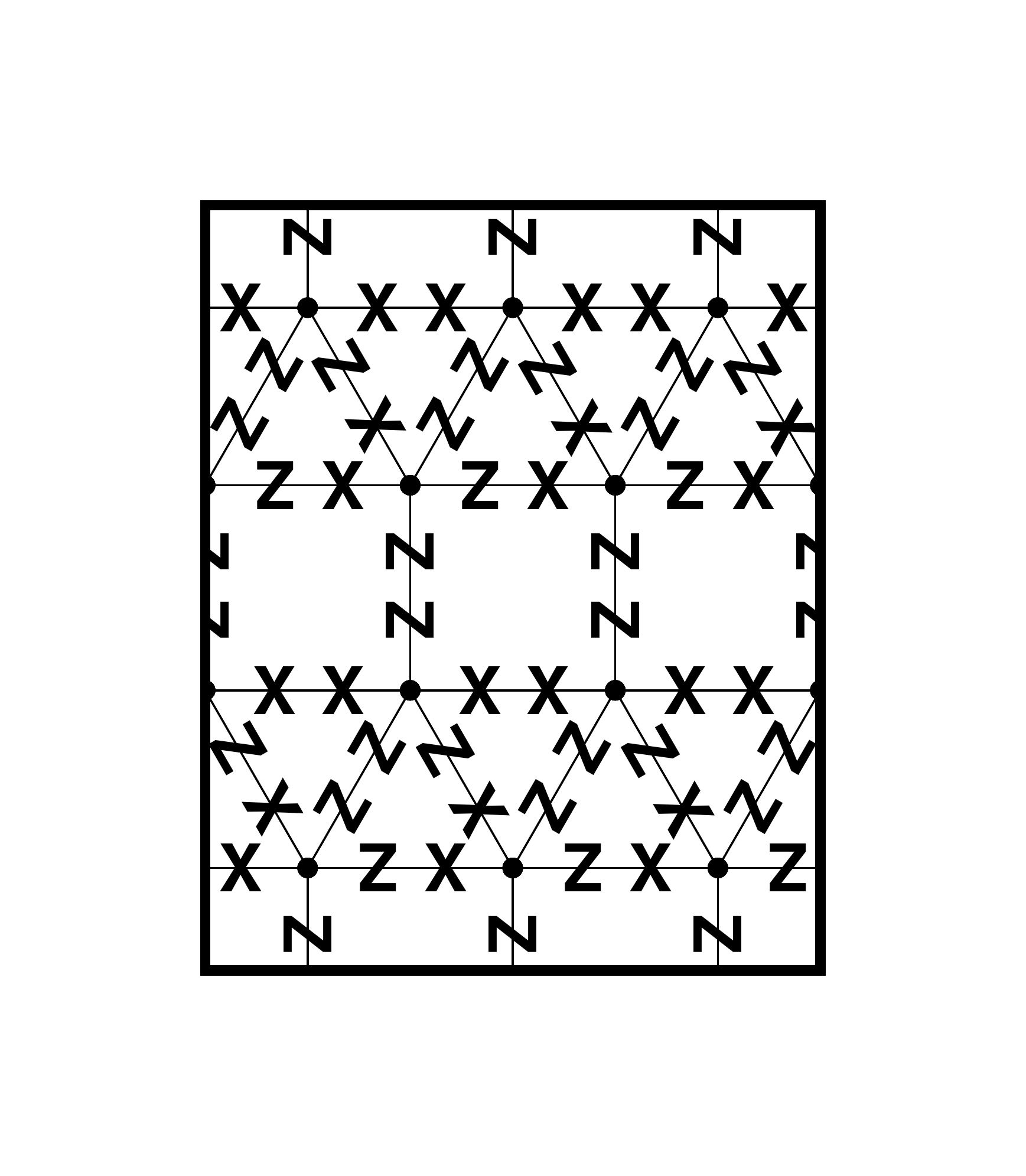} % labeling #764
\caption{
\label{figure:isosnub-quadrille-code-r-labeling}
A figure illustrating two of the labelings of the isosnub quadrille tiling (on a radius 1 lattice) that result in a quantum code with a single qubit whose distance grows with the radius of the lattice.
}
\end{figure*}
%@nonl
%@-node:gcross.20100810123019.1787:isosnub quadrille
%@+node:gcross.20100806112455.1342:deltille
\subsubsection{deltille}

We scanned this tiling up to a radius of eight;  no interesting codes were found in any of the 122 labelings.
%@nonl
%@-node:gcross.20100806112455.1342:deltille
%@+node:gcross.20100811005016.1353:hextille
\subsubsection{hextille}

In this tiling we saw four labelings which resulted in two kinds of interesting codes:  two of the labelings (illustrated in Figure \ref{figure:hextille-code-3-labelings}) resulted in codes of distance 3 that were present for every value of the radius, and two of the labelings (illustrated in Figure \ref{figure:hextille-code-4-labelings}) resulted in codes of distance 4 that were only present for \emph{even} values of the radius.  The former resulted in codes which had a number of logical qubits ($l_3$) given by $l_3=\frac{(r-1)(r-2)}{2}$, where $r$ is the radius, and the latter resulted in codes which had a number of logical qubits ($l_4$) given by $l_4=r(r+3)$.  Since the number of qubits ($n$) is given by $n=6r^2$, we have that the number of logical qubits per physical qubit for the distance and distance 4 codes were given respectively by $d_3=\frac{l_3}{n}=\frac{1}{12}\paren{1-\frac{1}{r}}\paren{1-\frac{2}{r}}$ and $d_4=\frac{l_4}{n}=\frac{1}{6}\paren{1+\frac{3}{r}}$;  as $r\to\infty$, we have that $d_3\to\frac{1}{12}$ and $d_4\to\frac{1}{6}$.

It is interesting to observe that there is no distance/qubit count trade-off in this tiling.  As long as the radius is even, the distance 4 code is superior in \emph{both} distance \emph{and} logical qubit count over the distance 3 code.

\begin{figure*}
\subfloat[distance 3 code labelings] {
\label{figure:hextille-code-3-labelings}
\includegraphics[width=3.5in]{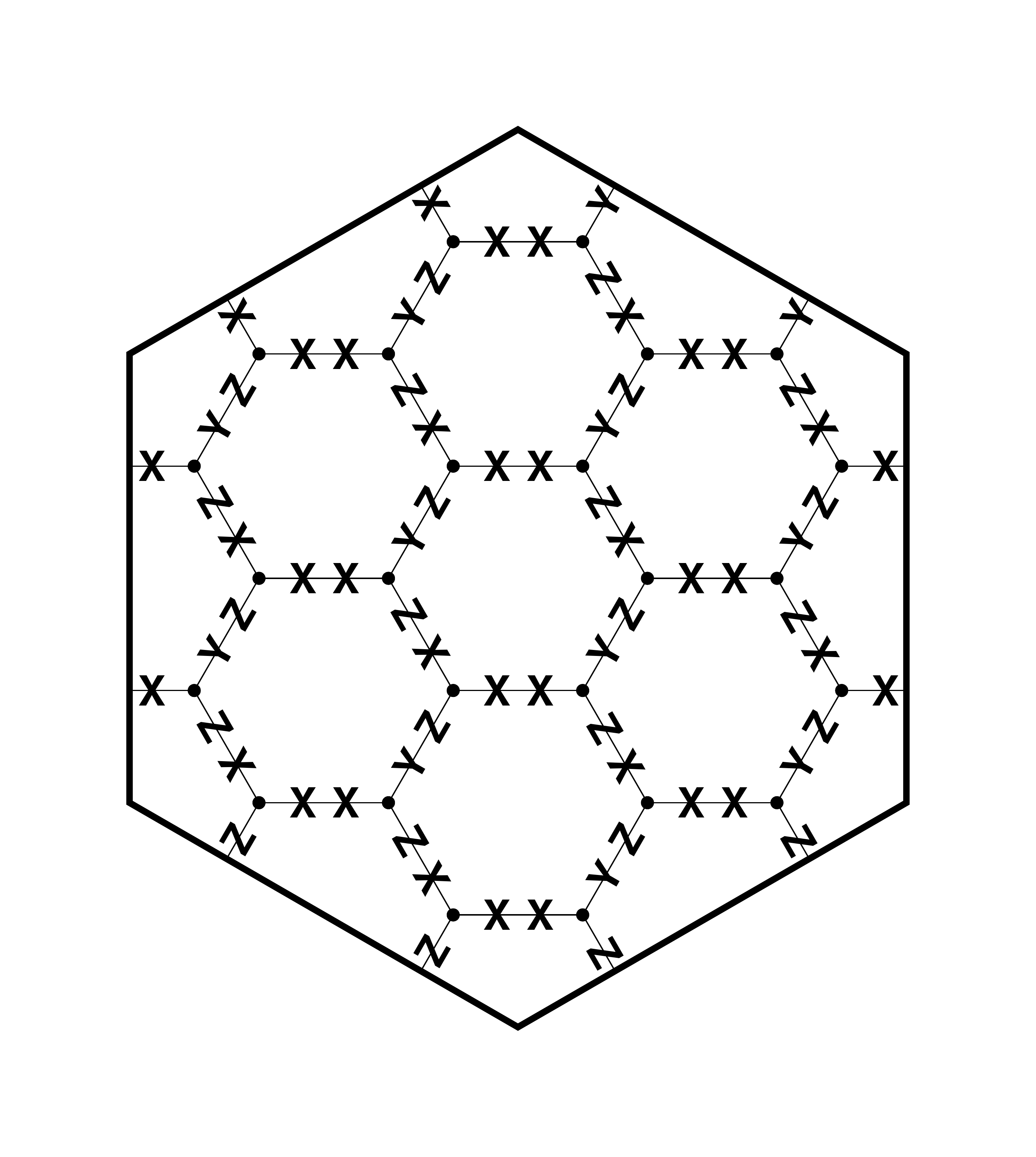} % labeling #9
\includegraphics[width=3.5in]{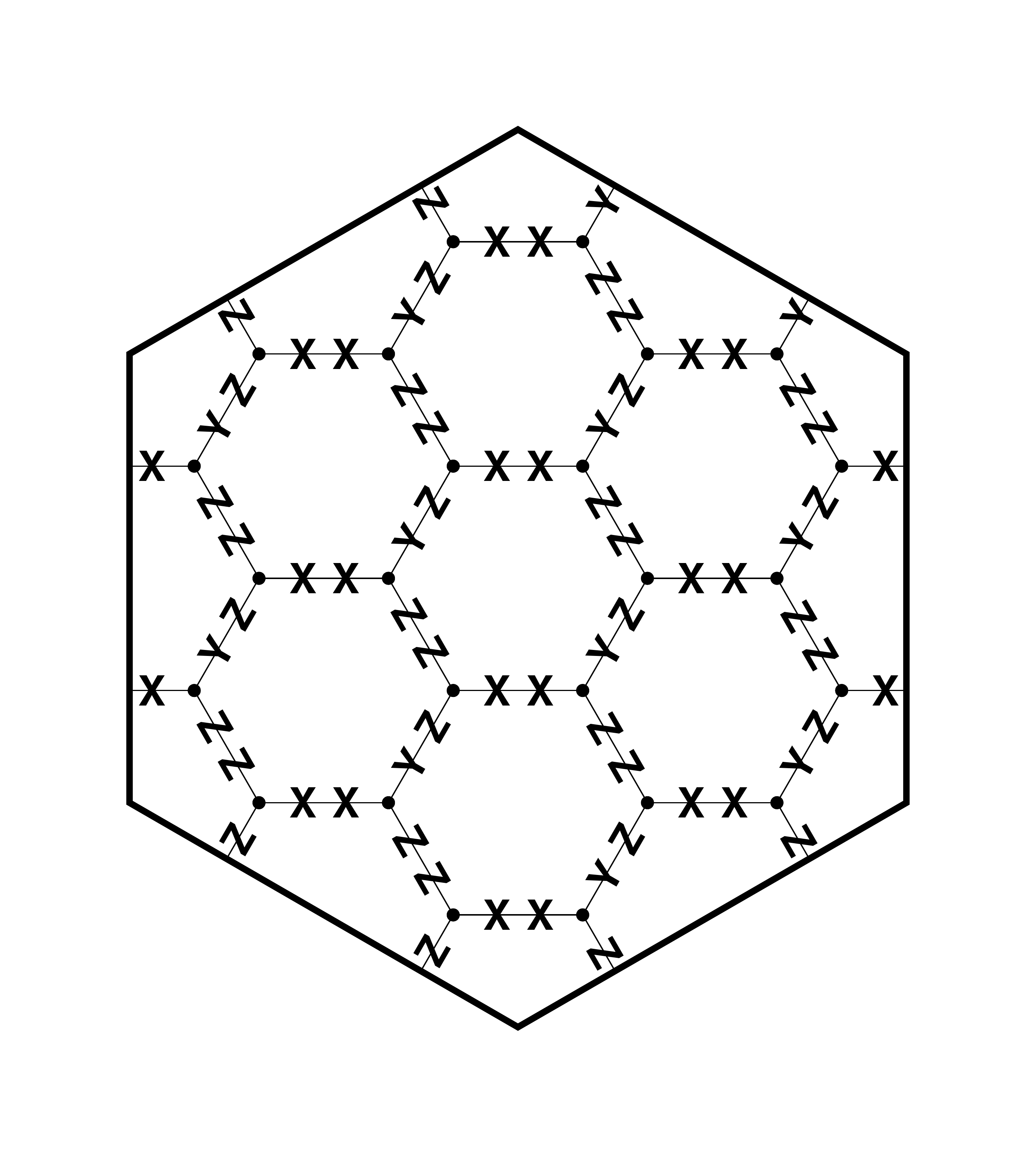} % labeling #19
}\\
\subfloat[distance 4 code labelings] {
\label{figure:hextille-code-4-labelings}
\includegraphics[width=3.5in]{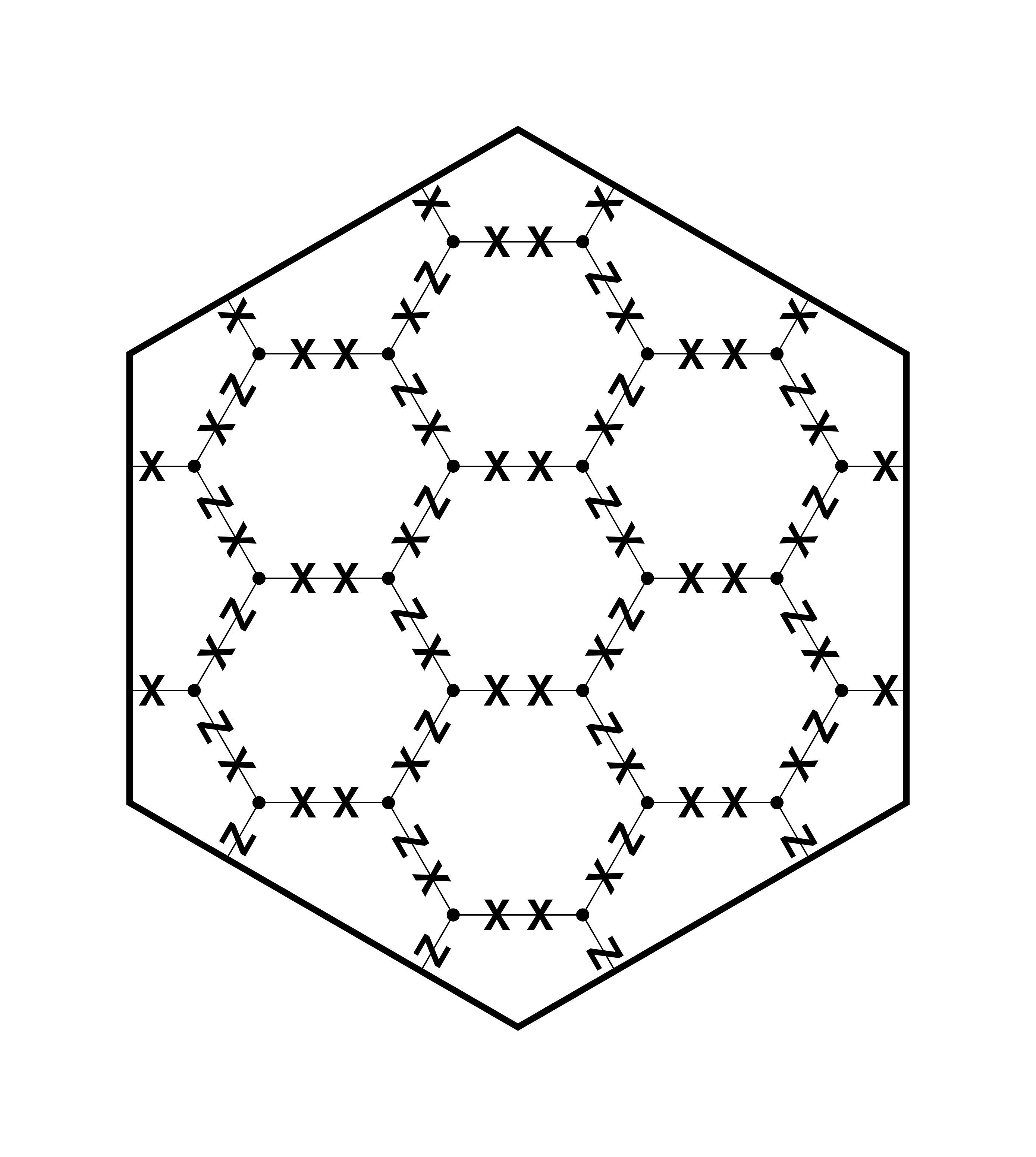} % labeling #7
\includegraphics[width=3.5in]{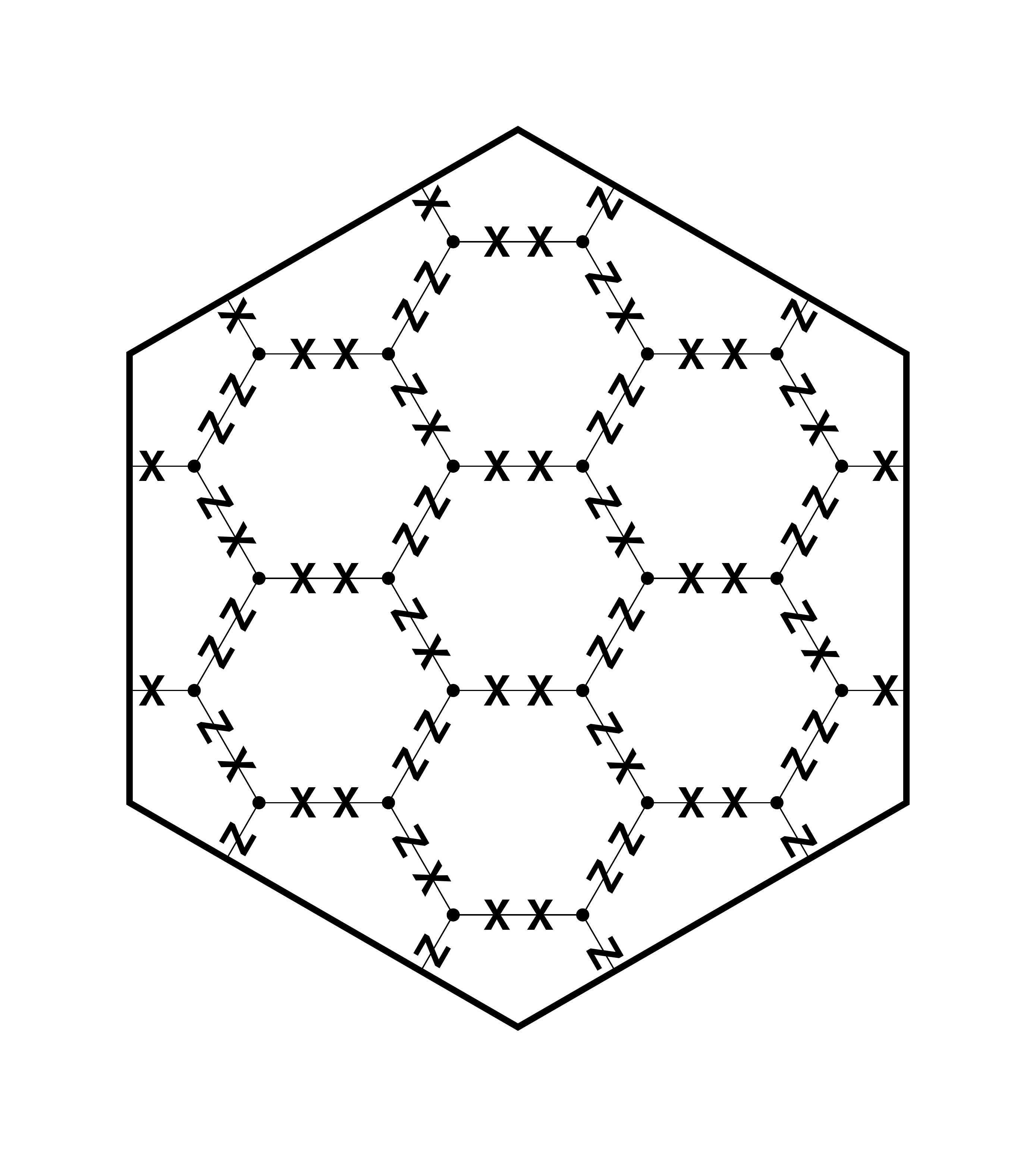} % labeling #8
}
\caption{
\label{figure:hextille-code-labelings}
A figure illustrating four labelings of the hextille tiling (on a radius 2 lattice) that result in a quantum code with distance 3/4 (in respectively \ref{figure:hextille-code-3-labelings}/\ref{figure:hextille-code-4-labelings}), and a number of distance logical qubits proportional to the square of the radius of the labeling.
}
\end{figure*}
%@-node:gcross.20100811005016.1353:hextille
%@+node:gcross.20100811134608.1348:truncated hextille
\subsubsection{truncated hextille}

In the truncated hextille there are four kinds of codes where the number of qubit increases with the radius:  three with the distance fixed at 3, and one with the distance fixed at 4.

The best of the distance 3 codes has the number of logical qubits ($l_3$) given by $l_3=2r^2-1$, where $r$ is the radius of the code.  Since the number of physical qubits ($n$) is given by $n=24r^2$, the number of logical qubits per physical qubit is thus given by $\frac{l_3}{n}=\frac{1}{12}\paren{1-\frac{1}{24r}}^2$, a quantity which converges to $\frac{1}{12}$ as $r\to\infty$.  The two labelings we saw which give rise to this code are illustrated in Figure \ref{figure:truncated-hextille-code-3-labelings}.

The best of the distance 4 codes has the number of logical qubits ($l_4$) given by $l_4=2r(r-1)$, and the number of logical qubits per physical qubit is thus given by $\frac{l_4}{n}=\frac{1}{12}\paren{1-\frac{1}{r}}$, a quantity which converges to $\frac{1}{12}$ as $r\to\infty$.  We see from this analysis that although the best distance 4 code contains fewer logical qubits than the best distance 3 code, they both converge to the same number of logical qubits per physical qubit in the large radius limit.  One of the nine labelings we saw which give rise to this distance 4 code are illustrated in Figure \ref{figure:truncated-hextille-code-4-labelings}.

\begin{figure*}
\includegraphics[width=3.5in]{truncated-quadrille-code-3-labeling-1} % labeling #5859
\includegraphics[width=3.5in]{truncated-quadrille-code-3-labeling-2} % labeling #12369
\caption{
\label{figure:truncated-hextille-code-3-labelings}
A figure illustrating two labelings of the truncated hextille tiling (on a radius 1 lattice) that result in a quantum code with distance 3 and a number of distance logical qubits proportional to the square of the radius of the labeling.
}
\end{figure*}

\begin{figure}
\includegraphics[width=3.5in]{truncated-quadrille-code-4-labeling-1} % labeling #4792
\caption{
\label{figure:truncated-hextille-code-4-labelings}
A figure illustrating two labelings of the truncated hextille tiling (on a radius 1 lattice) that result in a quantum code with distance 4 and a number of distance logical qubits proportional to the square of the radius of the labeling.
}
\end{figure}
%@-node:gcross.20100811134608.1348:truncated hextille
%@+node:gcross.20100811134608.1349:hexadeltille
\subsubsection{hexadeltille}

There are many codes that appear in the hexadeltille tiling, but it is difficult to draw conclusions about trends due to the limit on the size of the lattices that were scanned.  The good news, though, is that the reason why scanning larger radii was difficult is because there is a code in this tiling with a qubit whose distance grows with the radius of the lattice.  One of the nine labelings that we saw with this property is illustrated in Figure \ref{figure:hexadeltille-code-labeling}. 

\begin{figure}
\includegraphics[width=3in]{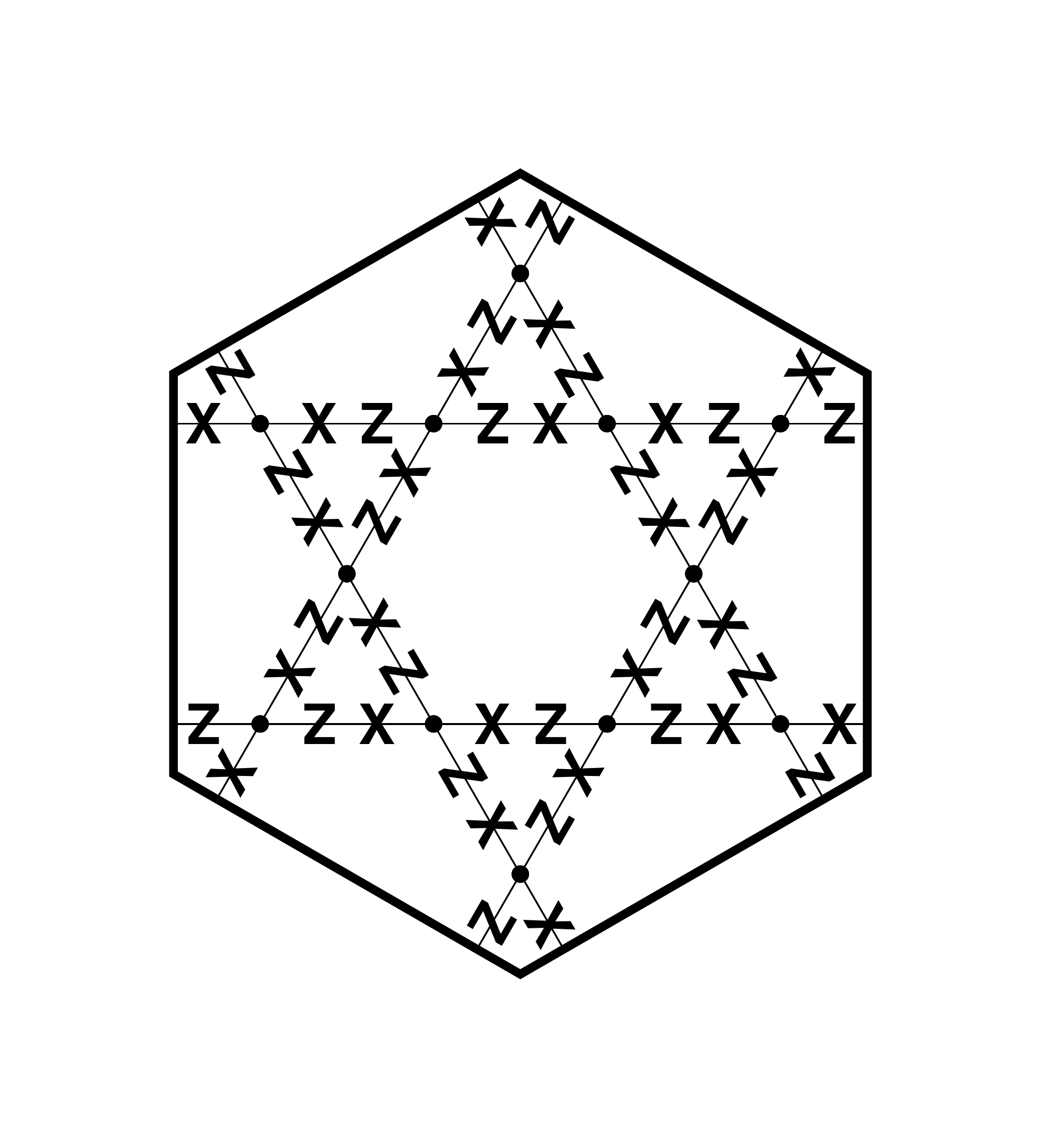}
\caption{
\label{figure:hexadeltille-code-labeling}
A figure illustrating one of the labelings of the hexadeltille tiling (on a radius 1 lattice) that results in quantum code whose distance grows with the size of the tiling.
}
\end{figure}
%@-node:gcross.20100811134608.1349:hexadeltille
%@+node:gcross.20100812150117.1356:rhombihexadeltille
\subsubsection{rhombihexadeltille}

This tiling is interesting because it had many more labelings that resulted in codes  than all of the other tilings combined;  specifically, for the rhombihexadeltille tiling we saw 48,807 labelings that resulted in useful codes, whereas for all of the other tilings combined we saw only 421 labelings that resulted in useful codes.  This is even more remarkable considering that the largest lattice we were able to scan for the rhombihexadeltille tiling was smaller than that the for most of the other tilings.

As can be seen in Figure \ref{figure:results-hextille}, this tiling is also interesting because it features so many different kinds of codes, including both codes that seem to grow in the number of logical qubits with radius and codes that grow in distance with size.  It is the only tiling that features a lattice that contains a labeling resulting in a code for every distance up to 6.

The rhombihexadeltille tiling is the only tiling we have seen which has code both with a distance greater than 4 and multiple qubits;  we saw six labelings which resulted in codes with distance 6 and two qubits, and four labelings which resulted in codes with distance 5 and four qubits.  In Figure \ref{figure:rhombihexadeltille-code-56-labelings} we show an example of each of these labelings.

\begin{figure*}
\subfloat[distance 5 code labeling]{
\label{figure:rhombihexadeltille-code-5-labeling}
\includegraphics[width=4.75in]{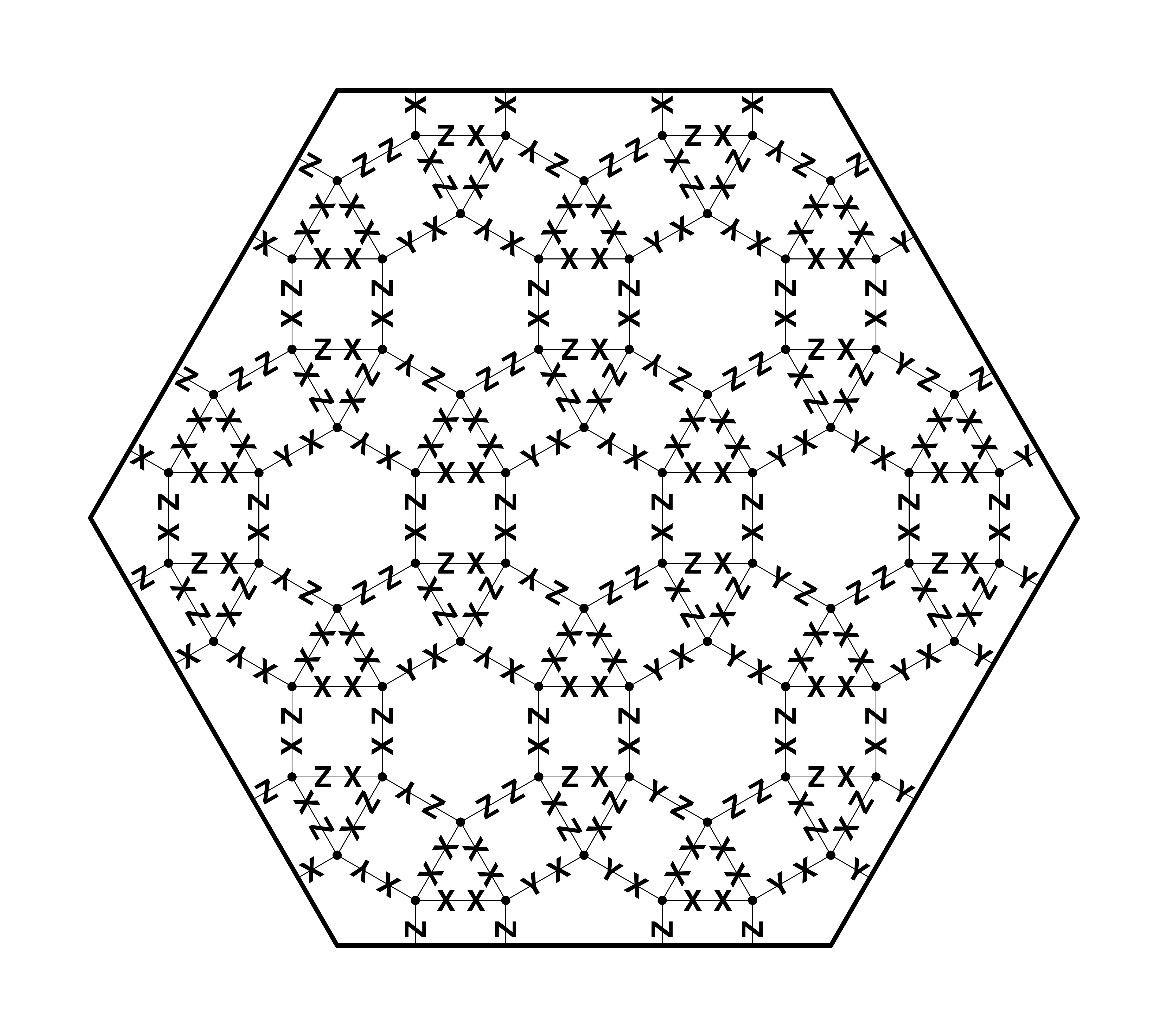} % labeling #819916
}\\
\subfloat[distance 6 code labeling]{
\label{figure:rhombihexadeltille-code-6-labeling}
\includegraphics[width=4.75in]{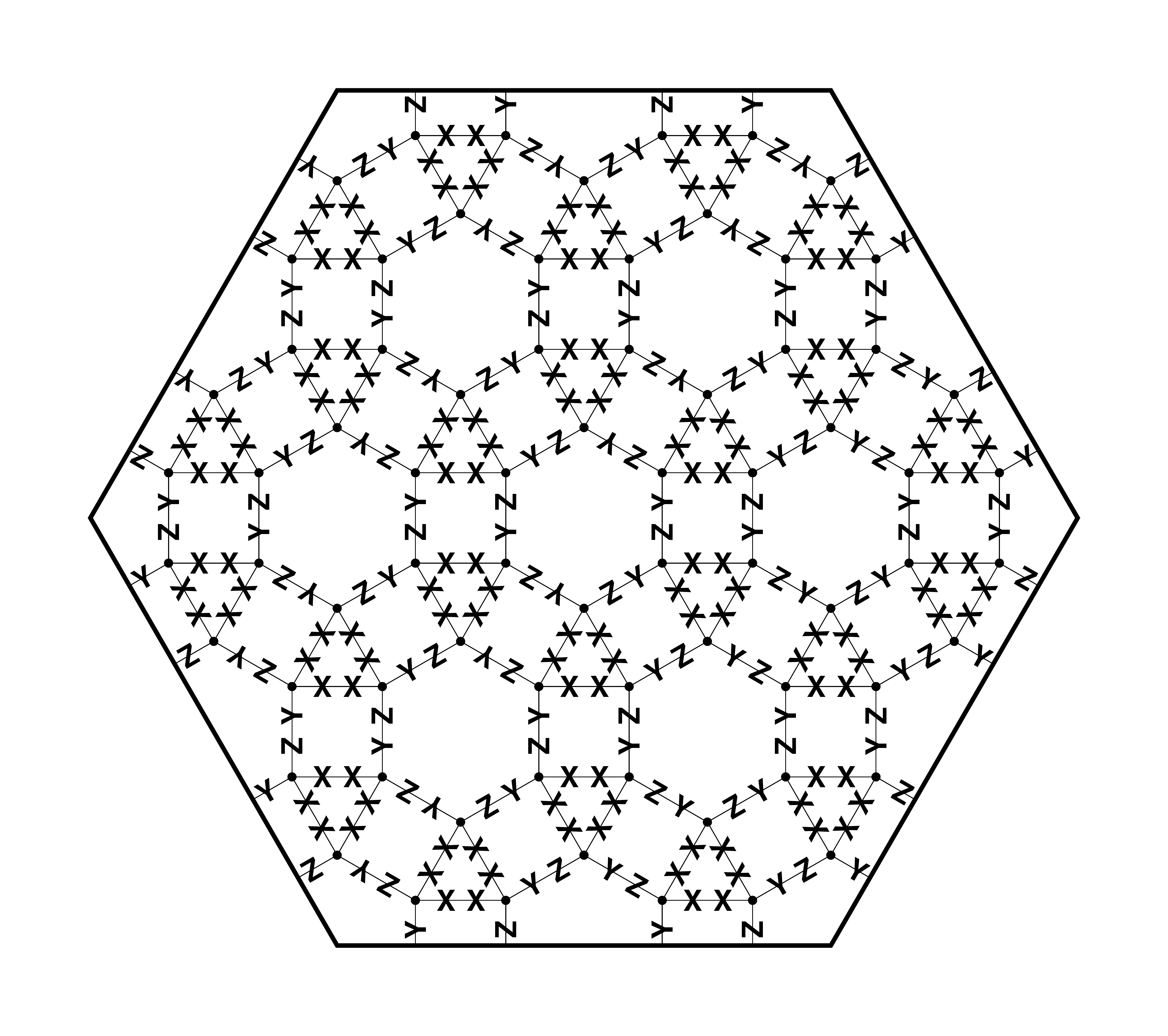} % labeling #2316780
}
\caption{
\label{figure:rhombihexadeltille-code-56-labelings}
A figure illustrating labelings of the rhombihexadeltille tiling (on a radius 2 lattice) that result in a quantum code with distance 5 and four qubits (\ref{figure:rhombihexadeltille-code-5-labeling}) and a quantum code with distance 6 and two qubits (\ref{figure:rhombihexadeltille-code-6-labeling}) when applied to a radius 3 lattice.
}
\end{figure*}

Two of labelings resulted in codes with the highest number of qubits -- 16 logical qubits at distance 4 for a lattice of radius three.  These two labelings are illustrated in Figure \ref{figure:rhombihexadeltille-code-4-labelings}.

\begin{figure*}
\includegraphics[width=4.75in]{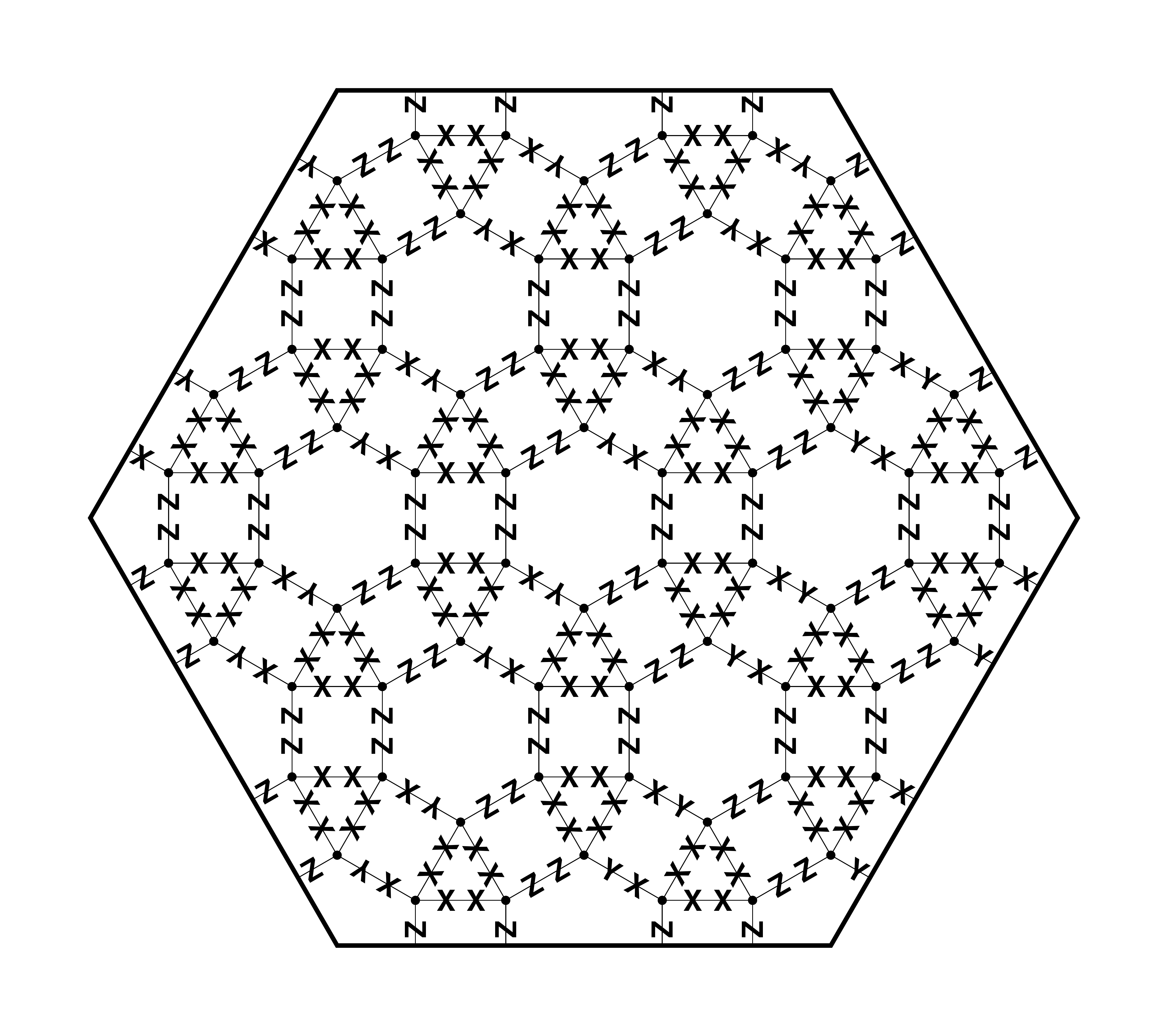} % labeling #699975
\\
\includegraphics[width=4.75in]{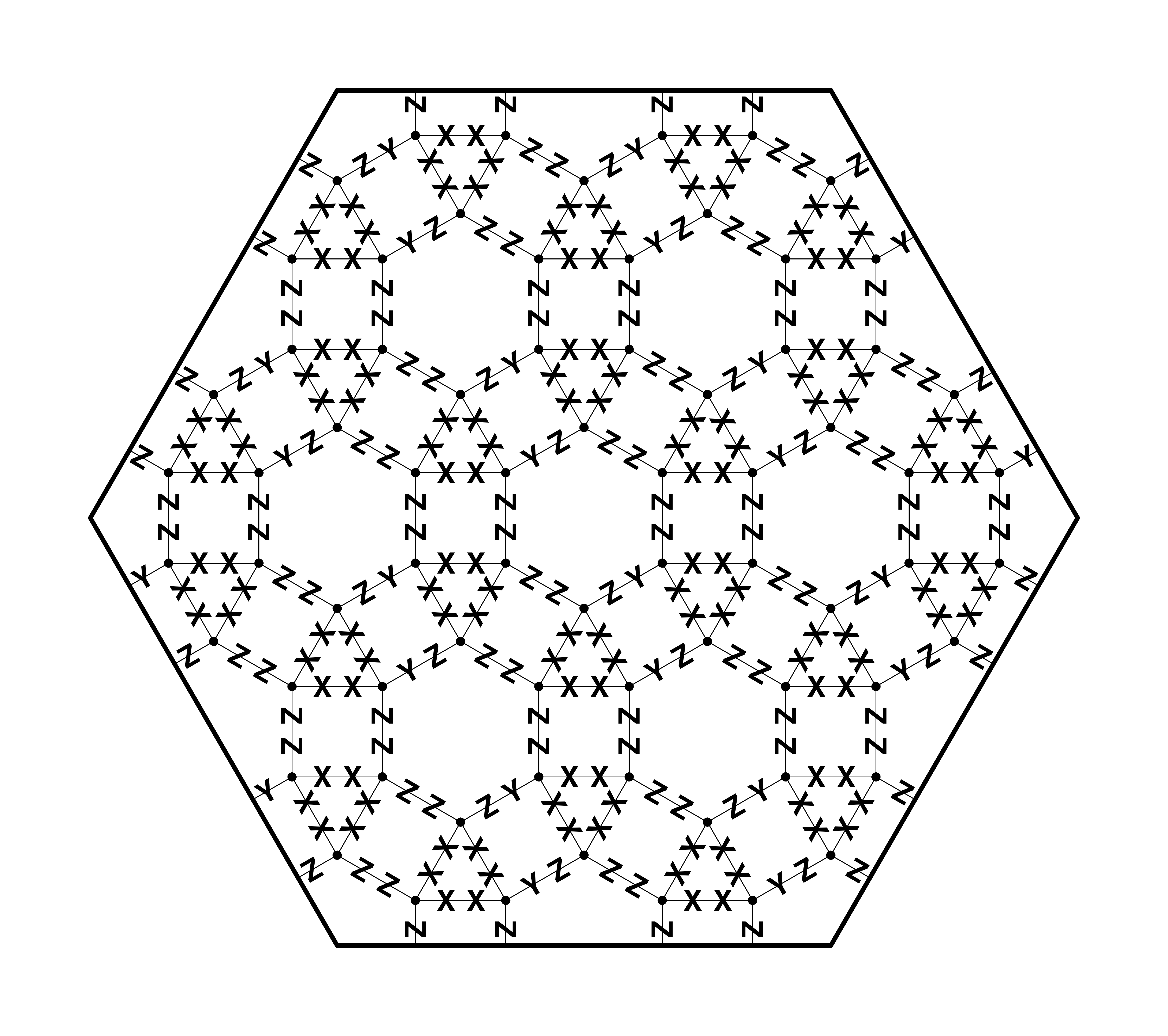} % labeling #1740330
\caption{
\label{figure:rhombihexadeltille-code-4-labelings}
A figure illustrating two labelings of the rhombihexadeltille tiling (on a radius 2 lattice) that result in a quantum code with distance 4 and sixteen when applied to a radius 3 lattice.
}
\end{figure*}
%@-node:gcross.20100812150117.1356:rhombihexadeltille
%@-others
%@nonl
%@-node:gcross.20100805134725.1340:Results
%@+node:gcross.20100820103141.1357:Discussion
\subsection{Discussion} \label{sec:discussion}

There are few surprises in our results.  For example, the best code that we found that maximized the logical qubit distance per physical qubit was the compass model code in the quadrille tiling, which is already well-known.  Furthermore, all of the codes obeyed the upper bounds $kd\in O(n)$ and $d^2\in O(n)$ --- where $k$ is the number of logical qubits in the code, $d$ is the distance of the code, and $n$ is the number of physical qubits implementing the code --- that were derived in \cite{Bravyi:10a} for codes having spatially local generators.

Some of the observed differences between the tilings are an artifact of the search space.  For example, every code found on the hextille tiling could also be implemented on the deltille tiling, but although we found two kinds of codes for the hextille tiling we found no codes for the deltille tilings.  This is because our search space included no way for the deltille tilings to ``knock out'' the middle qubits in each hexagonal tiling, and furthermore the hextille tiling search space included two classes of vertices which could have independent labelings whereas the deltille tiling search space had only one class of vertices.

Although it is not clear how many of the codes we found will have practical applications, the success of this search demonstrates the feasibility of using brute-force computation to find useful codes within a constrained search space.
%@nonl
%@-node:gcross.20100820103141.1357:Discussion
%@-node:gcross.20100413085158.2885:Practice
%@+node:gcross.20100413085158.2886:Conclusion
\section{Conclusion}
\label{sec:conclusion}

In this paper we have presented an algorithm for computing the optimal quantum subsystem code that can be implemented using a given set of measurements.  We have shown that although this algorithm requires exponential time in the worst case, this exponential is a function of the code distance, and so the algorithm terminates (relatively) quickly when the optimal code has low distance.  Because of this, the algorithm can be used to perform a brute-force search through a space of possible measurements in order to see which give rise to ``useful'' (high-distance) codes.  We demonstrated the feasibility of this approach by applying the algorithm to search for codes implemented on systems with lattice structures corresponding to nine of the eleven convex vertex-uniform tilings, and on all but one of these nine tilings we found useful codes.

This algorithm should prove helpful in two kinds of ways in particular.  First, it can be applied in an exploratory setting to do the tedious work of computing the code resulting from a set of measurements so that the researcher can experiment with new ideas for choices of measurement to see how well they work.  Second, it can be applied to hone a `rough' idea for how a code might be implemented (such as a particular lattice configuration) into a concrete idea by scanning through the possible choices of the degrees of freedom to see if any result in useful codes; of course, cleverness can often come up with an answer more quickly than a computationally intensive search, but it is good to have the alternative of brute-force computation to fall back on when brute-force cleverness fails.
%@nonl
%@-node:gcross.20100413085158.2886:Conclusion
%@-others

\bibliography{CodeQuest}

%Merlin.mbs v4.21 2009-07-09.
\begin{thebibliography}{10}%
\makeatletter
\providecommand \@ifxundefined [1]{%
 \ifx #1\undefined \expandafter \@firstoftwo
 \else \expandafter \@secondoftwo
\fi
}%
\providecommand \@ifnum [1]{%
 \ifnum #1\expandafter \@firstoftwo
 \else \expandafter \@secondoftwo
\fi
}%
\providecommand \enquote [1]{``#1''}%
\providecommand \bibnamefont  [1]{#1}%
\providecommand \bibfnamefont [1]{#1}%
\providecommand \citenamefont [1]{#1}%
\providecommand\href[0]{\@sanitize\@href}%
\providecommand\@href[1]{\endgroup\@@startlink{#1}\endgroup\@@href}%
\providecommand\@@href[1]{#1\@@endlink}%
\providecommand \@sanitize [0]{\begingroup\catcode`\&12\catcode`\#12\relax}%
\@ifxundefined \pdfoutput {\@firstoftwo}{%
 \@ifnum{\z@=\pdfoutput}{\@firstoftwo}{\@secondoftwo}%
}{%
 \providecommand\@@startlink[1]{\leavevmode\special{html:<a href="#1">}}%
 \providecommand\@@endlink[0]{\special{html:</a>}}%
}{%
 \providecommand\@@startlink[1]{%
  \leavevmode
  \pdfstartlink
   attr{/Border[0 0 1 ]/H/I/C[0 1 1]}%
   user{/Subtype/Link/A<</Type/Action/S/URI/URI(#1)>>}%
  \relax
 }%
 \providecommand\@@endlink[0]{\pdfendlink}%
}%
\providecommand \url  [0]{\begingroup\@sanitize \@url }%
\providecommand \@url [1]{\endgroup\@href {#1}{\urlprefix}}%
\providecommand \urlprefix [0]{URL }%
\providecommand \Eprint[0]{\href }%
\@ifxundefined \urlstyle {%
  \providecommand \doi [1]{doi:\discretionary{}{}{}#1}%
}{%
  \providecommand \doi [0]{doi:\discretionary{}{}{}\begingroup
  \urlstyle{rm}\Url }%
}%
\providecommand \doibase [0]{http://dx.doi.org/}%
\providecommand \Doi[1]{\href{\doibase#1}}%
\providecommand \bibAnnote [3]{%
  \BibitemShut{#1}%
  \begin{quotation}\noindent
    \textsc{Key:}\ #2\\\textsc{Annotation:}\ #3%
  \end{quotation}%
}%
\providecommand \bibAnnoteFile [2]{%
  \IfFileExists{#2}{\bibAnnote {#1} {#2} {\input{#2}}}{}%
}%
\providecommand \typeout [0]{\immediate \write \m@ne }%
\providecommand \selectlanguage [0]{\@gobble}%
\providecommand \bibinfo [0]{\@secondoftwo}%
\providecommand \bibfield [0]{\@secondoftwo}%
\providecommand \translation [1]{[#1]}%
\providecommand \BibitemOpen[0]{}%
\providecommand \bibitemStop [0]{}%
\providecommand \bibitemNoStop [0]{.\EOS\space}%
\providecommand \EOS [0]{\spacefactor3000\relax}%
\providecommand \BibitemShut [1]{\csname bibitem#1\endcsname}%
%</preamble>
\bibitem{Shor:95a}%
  \BibitemOpen
  \bibfield{author}{%
  \bibinfo {author} {\bibfnamefont{P.~W.}\ \bibnamefont{Shor}},\ }%
  \bibfield{journal}{%
  \bibinfo {journal} {Phys. Rev. A}\ }%
  \textbf{\bibinfo {volume} {52}},\ \bibinfo {pages} {R2493} (\bibinfo {year}
  {1995})%
  \bibAnnoteFile{NoStop}{Shor:95a}%
\bibitem{Steane:96a}%
  \BibitemOpen
  \bibfield{author}{%
  \bibinfo {author} {\bibfnamefont{A.~M.}\ \bibnamefont{Steane}},\ }%
  \bibfield{journal}{%
  \bibinfo {journal} {Phys. Rev. Lett.}\ }%
  \textbf{\bibinfo {volume} {77}},\ \bibinfo {pages} {893} (\bibinfo {year}
  {1996})%
  \bibAnnoteFile{NoStop}{Steane:96a}%
\bibitem{Steane:96b}%
  \BibitemOpen
  \bibfield{author}{%
  \bibinfo {author} {\bibfnamefont{A.~M.}\ \bibnamefont{Steane}},\ }%
  \bibfield{journal}{%
  \bibinfo {journal} {Phys. Rev. A}\ }%
  \textbf{\bibinfo {volume} {54}},\ \bibinfo {pages} {4741} (\bibinfo {year}
  {1996})%
  \bibAnnoteFile{NoStop}{Steane:96b}%
\bibitem{Steane:96c}%
  \BibitemOpen
  \bibfield{author}{%
  \bibinfo {author} {\bibfnamefont{A.~M.}\ \bibnamefont{Steane}},\ }%
  \bibfield{journal}{%
  \bibinfo {journal} {Proc. R. Soc. Lond. A}\ }%
  \textbf{\bibinfo {volume} {452}},\ \bibinfo {pages} {2551} (\bibinfo {year}
  {1996})%
  \bibAnnoteFile{NoStop}{Steane:96c}%
\bibitem{Knill:97a}%
  \BibitemOpen
  \bibfield{author}{%
  \bibinfo {author} {\bibfnamefont{E.}~\bibnamefont{Knill}}\ and\ \bibinfo
  {author} {\bibfnamefont{R.}~\bibnamefont{Laflamme}},\ }%
  \bibfield{journal}{%
  \bibinfo {journal} {Phys. Rev. A}\ }%
  \textbf{\bibinfo {volume} {55}},\ \bibinfo {pages} {900} (\bibinfo {year}
  {1997})%
  \bibAnnoteFile{NoStop}{Knill:97a}%
\bibitem{Gottesman:97a}%
  \BibitemOpen
  \bibfield{author}{%
  \bibinfo {author} {\bibfnamefont{D.}~\bibnamefont{Gottesman}},\ }%
  \emph{\bibinfo {title} {Stabilizer Codes and Quantum Error Correction}},\
  Ph.D. thesis,\ \bibinfo {school} {California Insitute of Technology},
  \bibinfo {address} {Pasadena, CA} (\bibinfo {year} {1997})%
  \bibAnnoteFile{NoStop}{Gottesman:97a}%
\bibitem{Shor:96a}%
  \BibitemOpen
  \bibfield{author}{%
  \bibinfo {author} {\bibfnamefont{P.~W.}\ \bibnamefont{Shor}},\ }%
  in\ \emph{\bibinfo {booktitle} {Proceedings of the 37th Symposium on the
  Foundations of Computer Science}}\ (\bibinfo {publisher} {IEEE},\ \bibinfo
  {address} {Los Alamitos, CA},\ \bibinfo {year} {1996})\ pp.\ \bibinfo {pages}
  {56--65}%
  \bibAnnoteFile{NoStop}{Shor:96a}%
\bibitem{Aharonov:97a}%
  \BibitemOpen
  \bibfield{author}{%
  \bibinfo {author} {\bibfnamefont{D.}~\bibnamefont{Aharonov}}\ and\ \bibinfo
  {author} {\bibfnamefont{M.}~\bibnamefont{Ben-Or}},\ }%
  in\ \emph{\bibinfo {booktitle} {Proceedings of the twenty-ninth annual ACM
  symposium on Theory of computing}}\ (\bibinfo {publisher} {ACM Press},\
  \bibinfo {year} {1997})\ pp.\ \bibinfo {pages} {176--188}%
  \bibAnnoteFile{NoStop}{Aharonov:97a}%
\bibitem{Knill:98a}%
  \BibitemOpen
  \bibfield{author}{%
  \bibinfo {author} {\bibfnamefont{E.}~\bibnamefont{Knill}}, \bibinfo {author}
  {\bibfnamefont{R.}~\bibnamefont{Laflamme}},\ and\ \bibinfo {author}
  {\bibfnamefont{W.~H.}\ \bibnamefont{Zurek}},\ }%
  \bibfield{journal}{%
  \bibinfo {journal} {Science}\ }%
  \textbf{\bibinfo {volume} {279}},\ \bibinfo {pages} {342} (\bibinfo {year}
  {1998})%
  \bibAnnoteFile{NoStop}{Knill:98a}%
\bibitem{Knill:98b}%
  \BibitemOpen
  \bibfield{author}{%
  \bibinfo {author} {\bibfnamefont{E.}~\bibnamefont{Knill}}, \bibinfo {author}
  {\bibfnamefont{R.}~\bibnamefont{Laflamme}},\ and\ \bibinfo {author}
  {\bibfnamefont{W.~H.}\ \bibnamefont{Zurek}},\ }%
  \bibfield{journal}{%
  \bibinfo {journal} {Proc. Roy. Soc. London Ser. A}\ }%
  \textbf{\bibinfo {volume} {454}},\ \bibinfo {pages} {365} (\bibinfo {year}
  {1998})%
  \bibAnnoteFile{NoStop}{Knill:98b}%
\bibitem{Preskill:98a}%
  \BibitemOpen
  \bibfield{author}{%
  \bibinfo {author} {\bibfnamefont{J.}~\bibnamefont{Preskill}},\ }%
  in\ \emph{\bibinfo {booktitle} {Introduction to quantum computation and
  information}}\ (\bibinfo {publisher} {World Scientific},\ \bibinfo {address}
  {New Jersey},\ \bibinfo {year} {1998})\ pp.\ \bibinfo {pages} {213--269}%
  \bibAnnoteFile{NoStop}{Preskill:98a}%
\bibitem{Aliferis:05a}%
  \BibitemOpen
  \bibfield{author}{%
  \bibinfo {author} {\bibfnamefont{P.}~\bibnamefont{Aliferis}}, \bibinfo
  {author} {\bibfnamefont{D.}~\bibnamefont{Gottesman}},\ and\ \bibinfo {author}
  {\bibfnamefont{J.}~\bibnamefont{Preskill}},\ }%
  \bibfield{journal}{%
  \bibinfo {journal} {Quantum Inform. Compu.}\ }%
  \textbf{\bibinfo {volume} {6}},\ \bibinfo {pages} {97} (\bibinfo {year}
  {2006})%
  \bibAnnoteFile{NoStop}{Aliferis:05a}%
\bibitem{Cross:07a}%
  \BibitemOpen
  \bibfield{author}{%
  \bibinfo {author} {\bibfnamefont{A.~W.}\ \bibnamefont{Cross}}, \bibinfo
  {author} {\bibfnamefont{D.~P.}\ \bibnamefont{DiVincenzo}},\ and\ \bibinfo
  {author} {\bibfnamefont{B.~M.}\ \bibnamefont{Terhal}},\ }%
  \enquote{\bibinfo {title} {Title: A comparative code study for quantum
  fault-tolerance},}\  (\bibinfo {year} {2007}),\
  \Eprint{http://arxiv.org/abs/arXiv:0711.1556}{arXiv:0711.1556}%
  \bibAnnoteFile{NoStop}{Cross:07a}%
\bibitem{Poulin:05a}%
  \BibitemOpen
  \bibfield{author}{%
  \bibinfo {author} {\bibfnamefont{D.}~\bibnamefont{Poulin}},\ }%
  \bibfield{journal}{%
  \bibinfo {journal} {Phys. Rev. Lett.}\ }%
  \textbf{\bibinfo {volume} {95}},\ \bibinfo {pages} {230504} (\bibinfo {year}
  {2005})%
  \bibAnnoteFile{NoStop}{Poulin:05a}%
\bibitem{Kribs:05a}%
  \BibitemOpen
  \bibfield{author}{%
  \bibinfo {author} {\bibfnamefont{D.}~\bibnamefont{Kribs}}, \bibinfo {author}
  {\bibfnamefont{R.}~\bibnamefont{Laflamme}},\ and\ \bibinfo {author}
  {\bibfnamefont{D.}~\bibnamefont{Poulin}},\ }%
  \bibfield{journal}{%
  \bibinfo {journal} {Phys. Rev. Lett.}\ }%
  \textbf{\bibinfo {volume} {94}},\ \bibinfo {pages} {180501} (\bibinfo {year}
  {2005})%
  \bibAnnoteFile{NoStop}{Kribs:05a}%
\bibitem{Kribs:05b}%
  \BibitemOpen
  \bibfield{author}{%
  \bibinfo {author} {\bibfnamefont{D.}~\bibnamefont{Kribs}}, \bibinfo {author}
  {\bibfnamefont{R.}~\bibnamefont{Laflamme}}, \bibinfo {author}
  {\bibfnamefont{D.}~\bibnamefont{Poulin}},\ and\ \bibinfo {author}
  {\bibfnamefont{M.}~\bibnamefont{Lesosky}},\ }%
  \bibfield{journal}{%
  \bibinfo {journal} {Quantum Information \& Computation}\ }%
  \textbf{\bibinfo {volume} {6}},\ \bibinfo {pages} {382} (\bibinfo {year}
  {2005})%
  \bibAnnoteFile{NoStop}{Kribs:05b}%
\bibitem{Kribs:06a}%
  \BibitemOpen
  \bibfield{author}{%
  \bibinfo {author} {\bibfnamefont{D.}~\bibnamefont{Kribs}}\ and\ \bibinfo
  {author} {\bibfnamefont{R.~W.}\ \bibnamefont{Spekkens}},\ }%
  \enquote{\bibinfo {title} {Quantum error correcting subsystems as unitarily
  recoverable subsystems},}\  (\bibinfo {year} {2006})%
  \bibAnnoteFile{NoStop}{Kribs:06a}%
\bibitem{Kitaev:97c}%
  \BibitemOpen
  \bibfield{author}{%
  \bibinfo {author} {\bibfnamefont{A.}~\bibnamefont{Kitaev}},\ }%
  \bibfield{journal}{%
  \bibinfo {journal} {Ann. of Phys.}\ }%
  \textbf{\bibinfo {volume} {303}},\ \bibinfo {pages} {2} (\bibinfo {year}
  {2003}),\
  \Eprint{http://arxiv.org/abs/arXiv:quant-ph/9707021}{arXiv:quant-ph/9707021}%
  \bibAnnoteFile{NoStop}{Kitaev:97c}%
\bibitem{Kitaev:03a}%
  \BibitemOpen
  \bibfield{author}{%
  \bibinfo {author} {\bibfnamefont{A.}~\bibnamefont{Kitaev}},\ }%
  \bibfield{journal}{%
  \bibinfo {journal} {Ann. of Phys.}\ }%
  \textbf{\bibinfo {volume} {303}},\ \bibinfo {pages} {2} (\bibinfo {year}
  {2003})%
  \bibAnnoteFile{NoStop}{Kitaev:03a}%
\bibitem{Barnes:00a}%
  \BibitemOpen
  \bibfield{author}{%
  \bibinfo {author} {\bibfnamefont{J.~P.}\ \bibnamefont{Barnes}}\ and\ \bibinfo
  {author} {\bibfnamefont{W.~S.}\ \bibnamefont{Warren}},\ }%
  \bibfield{journal}{%
  \bibinfo {journal} {Phys. Rev. Lett.}\ }%
  \textbf{\bibinfo {volume} {85}},\ \bibinfo {pages} {856} (\bibinfo {year}
  {2000})%
  \bibAnnoteFile{NoStop}{Barnes:00a}%
\bibitem{Bacon:01b}%
  \BibitemOpen
  \bibfield{author}{%
  \bibinfo {author} {\bibfnamefont{D.}~\bibnamefont{Bacon}}, \bibinfo {author}
  {\bibfnamefont{K.~R.}\ \bibnamefont{Brown}},\ and\ \bibinfo {author}
  {\bibfnamefont{K.~B.}\ \bibnamefont{Whaley}},\ }%
  \bibfield{journal}{%
  \bibinfo {journal} {Phys. Rev. Lett.}\ }%
  \textbf{\bibinfo {volume} {87}},\ \bibinfo {pages} {247902} (\bibinfo {year}
  {2001})%
  \bibAnnoteFile{NoStop}{Bacon:01b}%
\bibitem{Jordan:05a}%
  \BibitemOpen
  \bibfield{author}{%
  \bibinfo {author} {\bibfnamefont{S.~P.}\ \bibnamefont{Jordan}}, \bibinfo
  {author} {\bibfnamefont{E.}~\bibnamefont{Farhi}},\ and\ \bibinfo {author}
  {\bibfnamefont{P.~W.}\ \bibnamefont{Shor}},\ }%
  \bibfield{journal}{%
  \bibinfo {journal} {Phys. Rev. A}\ }%
  \textbf{\bibinfo {volume} {74}},\ \bibinfo {pages} {052322} (\bibinfo {year}
  {2006})%
  \bibAnnoteFile{NoStop}{Jordan:05a}%
\bibitem{Weinstein:05b}%
  \BibitemOpen
  \bibfield{author}{%
  \bibinfo {author} {\bibfnamefont{Y.~S.}\ \bibnamefont{Weinstein}}\ and\
  \bibinfo {author} {\bibfnamefont{C.~S.}\ \bibnamefont{Hellberg}},\ }%
  \bibfield{journal}{%
  \bibinfo {journal} {Phys. Rev. A}\ }%
  \textbf{\bibinfo {volume} {72}},\ \bibinfo {pages} {022319} (\bibinfo {year}
  {2005})%
  \bibAnnoteFile{NoStop}{Weinstein:05b}%
\bibitem{Bacon:06a}%
  \BibitemOpen
  \bibfield{author}{%
  \bibinfo {author} {\bibfnamefont{D.}~\bibnamefont{Bacon}},\ }%
  \bibfield{journal}{%
  \bibinfo {journal} {Phys. Rev. A}\ }%
  \textbf{\bibinfo {volume} {73}},\ \bibinfo {pages} {012340} (\bibinfo {year}
  {2006})%
  \bibAnnoteFile{NoStop}{Bacon:06a}%
\bibitem{Bacon:08b}%
  \BibitemOpen
  \bibfield{author}{%
  \bibinfo {author} {\bibfnamefont{D.}~\bibnamefont{Bacon}},\ }%
  \bibfield{journal}{%
  \bibinfo {journal} {Phys. Rev. A}\ }%
  \textbf{\bibinfo {volume} {78}},\ \bibinfo {pages} {042324} (\bibinfo {year}
  {2008})%
  \bibAnnoteFile{NoStop}{Bacon:08b}%
\bibitem{Nayak:08a}%
  \BibitemOpen
  \bibfield{author}{%
  \bibinfo {author} {\bibfnamefont{C.}~\bibnamefont{Nayak}}, \bibinfo {author}
  {\bibfnamefont{S.~H.}\ \bibnamefont{Simon}}, \bibinfo {author}
  {\bibfnamefont{A.}~\bibnamefont{Stern}}, \bibinfo {author}
  {\bibfnamefont{M.}~\bibnamefont{Freedman}},\ and\ \bibinfo {author}
  {\bibfnamefont{S.~D.}\ \bibnamefont{Sarma}},\ }%
  \bibfield{journal}{%
  \Doi{10.1103/RevModPhys.80.1083}{\bibinfo {journal} {Reviews of Modern
  Physics}}\ }%
  \textbf{\bibinfo {volume} {80}},\ \bibinfo {eid} {1083} (\bibinfo {year}
  {2008})%
  \bibAnnoteFile{NoStop}{Nayak:08a}%
\bibitem{Bombin:09a}%
  \BibitemOpen
  \bibfield{author}{%
  \bibinfo {author} {\bibfnamefont{H.}~\bibnamefont{Bombin}}, \bibinfo {author}
  {\bibfnamefont{R.}~\bibnamefont{Chhajlany}}, \bibinfo {author}
  {\bibfnamefont{M.}~\bibnamefont{Horodecki}},\ and\ \bibinfo {author}
  {\bibfnamefont{M.}~\bibnamefont{Martin-Delgado}},\ }%
  \enquote{\bibinfo {title} {Self-correcting quantum computers},}\
  \Eprint{http://arxiv.org/abs/arXiv:0908.5228}{arXiv:0908.5228}%
  \bibAnnoteFile{NoStop}{Bombin:09a}%
\bibitem{Chesi:10a}%
  \BibitemOpen
  \bibfield{author}{%
  \bibinfo {author} {\bibfnamefont{S.}~\bibnamefont{Chesi}}, \bibinfo {author}
  {\bibfnamefont{D.}~\bibnamefont{Loss}}, \bibinfo {author}
  {\bibfnamefont{S.}~\bibnamefont{Bravyi}},\ and\ \bibinfo {author}
  {\bibfnamefont{B.~M.}\ \bibnamefont{Terhal}},\ }%
  \bibfield{journal}{%
  \bibinfo {journal} {New Journal of Physics}\ }%
  \textbf{\bibinfo {volume} {12}},\ \bibinfo {pages} {025013} (\bibinfo {year}
  {2010})%
  \bibAnnoteFile{NoStop}{Chesi:10a}%
\bibitem{Bacon:06b}%
  \BibitemOpen
  \bibfield{author}{%
  \bibinfo {author} {\bibfnamefont{D.}~\bibnamefont{Bacon}}\ and\ \bibinfo
  {author} {\bibfnamefont{A.}~\bibnamefont{Casaccino}},\ }%
  in\ \emph{\bibinfo {booktitle} {Proceedings of the 44th Annual Alerton
  Conference}}\ (\bibinfo {year} {2006})\
  \Eprint{http://arxiv.org/abs/arXiv:quant-ph/0610088}{arXiv:quant-ph/0610088}%
  \bibAnnoteFile{NoStop}{Bacon:06b}%
\bibitem{Bravyi:10a}%
  \BibitemOpen
  \bibfield{author}{%
  \bibinfo {author} {\bibfnamefont{S.}~\bibnamefont{Bravyi}},\ }%
  \enquote{\bibinfo {title} {Subsystem codes with spatially local
  generators},}\  (\bibinfo {year} {2010}),\ \bibinfo {note} {arXiv:1008.1028}%
  \bibAnnoteFile{NoStop}{Bravyi:10a}%
\bibitem{Bombin:10a}%
  \BibitemOpen
  \bibfield{author}{%
  \bibinfo {author} {\bibfnamefont{H.}~\bibnamefont{Bombin}},\ }%
  \bibfield{journal}{%
  \bibinfo {journal} {Phys. Rev. A}\ }%
  \textbf{\bibinfo {volume} {81}},\ \bibinfo {pages} {032301} (\bibinfo {year}
  {2010})%
  \bibAnnoteFile{NoStop}{Bombin:10a}%
\bibitem{Shor:94a}%
  \BibitemOpen
  \bibfield{author}{%
  \bibinfo {author} {\bibfnamefont{P.~W.}\ \bibnamefont{Shor}},\ }%
  in\ \emph{\bibinfo {booktitle} {Proceedings of the 35th Annual Symposium on
  the Foundations of Computer Science}},\ \bibinfo {editor} {edited by\
  \bibinfo {editor} {\bibfnamefont{S.}~\bibnamefont{Goldwasser}}}\ (\bibinfo
  {publisher} {IEEE Computer Society},\ \bibinfo {address} {Los Alamitos, CA},\
  \bibinfo {year} {1994})\ pp.\ \bibinfo {pages} {124--134}%
  \bibAnnoteFile{NoStop}{Shor:94a}%
\bibitem{Gottesman:96a}%
  \BibitemOpen
  \bibfield{author}{%
  \bibinfo {author} {\bibfnamefont{D.}~\bibnamefont{Gottesman}},\ }%
  \bibfield{journal}{%
  \bibinfo {journal} {Phys. Rev. A}\ }%
  \textbf{\bibinfo {volume} {54}},\ \bibinfo {pages} {1862} (\bibinfo {year}
  {1996})%
  \bibAnnoteFile{NoStop}{Gottesman:96a}%
\bibitem{Calderbank:97a}%
  \BibitemOpen
  \bibfield{author}{%
  \bibinfo {author} {\bibfnamefont{A.~R.}\ \bibnamefont{Calderbank}}, \bibinfo
  {author} {\bibfnamefont{E.~M.}\ \bibnamefont{Rains}}, \bibinfo {author}
  {\bibfnamefont{P.~W.}\ \bibnamefont{Shor}},\ and\ \bibinfo {author}
  {\bibfnamefont{N.~J.}\ \bibnamefont{Sloane}},\ }%
  \bibfield{journal}{%
  \bibinfo {journal} {Phys. Rev. Lett.}\ }%
  \textbf{\bibinfo {volume} {78}},\ \bibinfo {pages} {405} (\bibinfo {year}
  {1997})%
  \bibAnnoteFile{NoStop}{Calderbank:97a}%
\bibitem{Calderbank:97b}%
  \BibitemOpen
  \bibfield{author}{%
  \bibinfo {author} {\bibfnamefont{A.~R.}\ \bibnamefont{Calderbank}}, \bibinfo
  {author} {\bibfnamefont{E.~M.}\ \bibnamefont{Rains}}, \bibinfo {author}
  {\bibfnamefont{P.~W.}\ \bibnamefont{Shor}},\ and\ \bibinfo {author}
  {\bibfnamefont{N.~J.}\ \bibnamefont{Sloane}},\ }%
  \bibfield{journal}{%
  \bibinfo {journal} {IEEE Trans. Inform. Theory}\ }%
  \textbf{\bibinfo {volume} {44}},\ \bibinfo {pages} {1369} (\bibinfo {year}
  {1998})%
  \bibAnnoteFile{NoStop}{Calderbank:97b}%
\bibitem{Nielsen:05a}%
  \BibitemOpen
  \bibfield{author}{%
  \bibinfo {author} {\bibfnamefont{M.}~\bibnamefont{Nielsen}}\ and\ \bibinfo
  {author} {\bibfnamefont{D.}~\bibnamefont{Poulin}},\ }%
  \enquote{\bibinfo {title} {Algebraic and information-theoretic conditions for
  operator quantum error-correction},}\  (\bibinfo {year} {2005})%
  \bibAnnoteFile{NoStop}{Nielsen:05a}%
\bibitem{Aliferis:07a}%
  \BibitemOpen
  \bibfield{author}{%
  \bibinfo {author} {\bibfnamefont{P.}~\bibnamefont{Aliferis}}\ and\ \bibinfo
  {author} {\bibfnamefont{A.~W.}\ \bibnamefont{Cross}},\ }%
  \bibfield{journal}{%
  \bibinfo {journal} {Phys. Rev. Lett.}\ }%
  \textbf{\bibinfo {volume} {98}},\ \bibinfo {pages} {220502} (\bibinfo {year}
  {2007})%
  \bibAnnoteFile{NoStop}{Aliferis:07a}%
\bibitem{Bacon:01a}%
  \BibitemOpen
  \bibfield{author}{%
  \bibinfo {author} {\bibfnamefont{D.}~\bibnamefont{Bacon}},\ }%
  \emph{\bibinfo {title} {Decoherence, Control, and Symmetry in Quantum
  Computers}},\ Ph.D. thesis,\ \bibinfo {school} {University of Calfornia at
  Berkeley}, \bibinfo {address} {Berkeley, CA} (\bibinfo {year} {2001})%
  \bibAnnoteFile{NoStop}{Bacon:01a}%
\bibitem{Dorier:05a}%
  \BibitemOpen
  \bibfield{author}{%
  \bibinfo {author} {\bibfnamefont{J.}~\bibnamefont{Dorier}}, \bibinfo {author}
  {\bibfnamefont{F.}~\bibnamefont{Becca}},\ and\ \bibinfo {author}
  {\bibfnamefont{F.}~\bibnamefont{Mila}},\ }%
  \bibfield{journal}{%
  \bibinfo {journal} {Phys. Rev. B}\ }%
  \textbf{\bibinfo {volume} {72}},\ \bibinfo {pages} {024448} (\bibinfo {year}
  {2005})%
  \bibAnnoteFile{NoStop}{Dorier:05a}%
\bibitem{Dennis:02a}%
  \BibitemOpen
  \bibfield{author}{%
  \bibinfo {author} {\bibfnamefont{E.}~\bibnamefont{Dennis}}, \bibinfo {author}
  {\bibfnamefont{A.}~\bibnamefont{Kitaev}}, \bibinfo {author}
  {\bibfnamefont{A.}~\bibnamefont{Landahl}},\ and\ \bibinfo {author}
  {\bibfnamefont{J.}~\bibnamefont{Preskill}},\ }%
  \bibfield{journal}{%
  \bibinfo {journal} {J. Math. Phys.}\ }%
  \textbf{\bibinfo {volume} {43}},\ \bibinfo {pages} {4452} (\bibinfo {year}
  {2002})%
  \bibAnnoteFile{NoStop}{Dennis:02a}%
\bibitem{Bravyi:09a}%
  \BibitemOpen
  \bibfield{author}{%
  \bibinfo {author} {\bibfnamefont{S.}~\bibnamefont{Bravyi}}\ and\ \bibinfo
  {author} {\bibfnamefont{B.}~\bibnamefont{Terhal}},\ }%
  \bibfield{journal}{%
  \bibinfo {journal} {New Journal of Physics}\ }%
  \textbf{\bibinfo {volume} {11}},\ \bibinfo {pages} {043029} (\bibinfo {year}
  {2009})%
  \bibAnnoteFile{NoStop}{Bravyi:09a}%
\bibitem{Bravyi:10b}%
  \BibitemOpen
  \bibfield{author}{%
  \bibinfo {author} {\bibfnamefont{S.}~\bibnamefont{Bravyi}}, \bibinfo {author}
  {\bibfnamefont{D.}~\bibnamefont{Poulin}},\ and\ \bibinfo {author}
  {\bibfnamefont{B.}~\bibnamefont{Terhal}},\ }%
  \bibfield{journal}{%
  \bibinfo {journal} {Phys. Rev. Lett.}\ }%
  \textbf{\bibinfo {volume} {104}},\ \bibinfo {pages} {050503} (\bibinfo
  {month} {Feb}\ \bibinfo {year} {2010})%
  \bibAnnoteFile{NoStop}{Bravyi:10b}%
\bibitem{White:2006fj}%
  \BibitemOpen
  \bibfield{author}{%
  \bibinfo {author} {\bibfnamefont{G.}~\bibnamefont{White}}\ and\ \bibinfo
  {author} {\bibfnamefont{M.}~\bibnamefont{Grassl}},\ }%
  in\ \emph{\bibinfo {booktitle} {IEEE International Symposium on Information
  Theory}}\ (\bibinfo {year} {2006})%
  \bibAnnoteFile{NoStop}{White:2006fj}%
\end{thebibliography}%

\end{document}